\newcommand{\IGNORE}[1]{}
\tikzstyle{block}=[draw opacity=0.7,line width=1.4cm]
\tikzstyle{graphnode}=[circle, draw, fill=black!20, inner sep=0pt, minimum width=6pt]
\tikzstyle{point}=[circle, draw, fill=black!30, inner sep=0pt, minimum width=1pt]
\tikzstyle{input}=[rectangle, draw, fill=black!75,inner sep=3pt, inner ysep=3pt, minimum width=4pt]
\tikzstyle{unmatched}=[graphnode,fill=black!0]
\tikzstyle{shaded}=[graphnode,fill=black!20]
\tikzstyle{matched}=[graphnode,fill=black!100]  	
\tikzstyle{matching} = [ultra thick]
\tikzset{
    >=stealth',
    pil/.style={
           ->,
           thick,
           shorten <=2pt,
           shorten >=2pt,}
}
\tikzset{->-/.style={decoration={
  markings,
  mark=at position .5 with {\arrow{>}}},postaction={decorate}}}
\newtheorem{theorem}{Theorem}[section]
\newtheorem{claim}[theorem]{Claim}
\newtheorem{proposition}[theorem]{Proposition}
\newtheorem{lemma}[theorem]{Lemma}
\newtheorem{corollary}[theorem]{Corollary}
\newtheorem{observation}[theorem]{Observation}
\theoremstyle{definition}
\newtheorem{example}[theorem]{Example}
\newtheorem{defn}[theorem]{Definition}
\newcommand{\poly}{\operatorname{poly}}
\newcounter{note}[section]
\newcommand{\REV}{\ifmmode\mathsf{Rev}\else\textsf{Rev}\fi}
\newcommand{\BREV}{\ifmmode\mathsf{BRev}\else\textsf{BRev}\fi}
\newcommand{\SREV}{\ifmmode\mathsf{SRev^*}\else\textsf{SRev^*}\fi}
\newcommand{\VAL}{\ifmmode\mathsf{Val}\else\textsf{Val}\fi}
\newcommand{\MBGN}{\ifmmode\mathsf{M}_{BGN}\else\textsf{M}_{BGN}\fi}
\newcommand{\MUS}{\ifmmode\mathsf{M}_{KMSSW}\else\textsf{M}_{KMSSW}\fi}
\newcommand{\mc}{\textsc{MC}}
\newcommand{\wsmc}{\textsc{WSMC}}
\newcommand{\ssmc}{\textsc{SSMC}}
\newcommand{\notshow}[1]{}
\newcommand{\revmax}{\textsc{RevMax}}
\newcommand{\modrevmax}{\textsc{ModRevMax}}
\newcommand{\VAR}{\ifmmode\mathsf{Var}\else\textsf{Var}\fi}
\newcommand{\srev}{\ifmmode\mathsf{SRev}\else\textsf{SREV}\fi}
\newtheorem{informal}{Main Result}
\newtheorem*{informal2}{Informal Theorem}
\newtheorem{openq}{Open Question}
\title{Approximation Schemes for a Unit-Demand Buyer with Independent Items via Symmetries}
	\author{Pravesh K. Kothari\thanks{
	(praveshk@cs.cmu.edu)
	Department of Computer Science,
        Carnegie Mellon University.
        }
        \and Divyarthi Mohan\thanks{
	(dm23@cs.princeton.edu)
	Department of Computer Science,
        Princeton University.
        }
        \and Ariel Schvartzman\thanks{
        (acohenca@cs.princeton.edu)
        Department of Computer Science,
        Princeton University.
        }
	\and Sahil Singla\thanks{
        (singla@cs.princeton.edu)
        Princeton University and Institute for Advanced Study.
        }
	\and S. Matthew Weinberg\thanks{
        (smweinberg@princeton.edu)
        Department of Computer Science,
        Princeton University.
        }
}
\date{ \today}
\begin{document}
\maketitle

\setlength{\abovedisplayskip}{2pt}
\setlength{\belowdisplayskip}{2pt}

\pagenumbering{roman}

\begin{abstract}{
We consider a revenue-maximizing seller with $n$ items facing a single buyer. We introduce the notion of \emph{symmetric menu complexity} of a mechanism, which counts the number of distinct options the buyer may purchase, up to permutations of the items. Our main result is that a mechanism of quasi-polynomial symmetric menu complexity suffices to guarantee a $(1-\varepsilon)$-approximation when the buyer is unit-demand over independent items, even when the value distribution is unbounded, and that this mechanism can be found in quasi-polynomial time.

Our key technical result is a polynomial time, (symmetric) menu-complexity-preserving black-box reduction from achieving a $(1-\varepsilon)$-approximation for unbounded valuations that are subadditive over independent items to achieving a $(1-O(\varepsilon))$-approximation when the values are bounded (and still subadditive over independent items). We further apply this reduction to deduce approximation schemes for a suite of valuation classes beyond our main result.

Finally, we show that selling separately (which has exponential menu complexity) can be approximated up to a $(1-\varepsilon)$ factor with a menu of \emph{efficient-linear} ($f(\varepsilon) \cdot n$) symmetric menu complexity.
}\end{abstract}

\thispagestyle{empty}


\setcounter{tocdepth}{3}


\clearpage

\pagenumbering{arabic}
\setcounter{page}{1}


\section{Introduction}\label{sec:intro}

Multi-item mechanism design has been at the forefront of Mathematical Economics since Myerson's seminal work resolved the single-item case~\cite{Myerson81}. Once it became clear that optimal multi-item mechanisms were prohibitively complex, even with just a single buyer (e.g.,~\cite{RochetC98,Thanassoulis04,ManelliV07}), the problem also entered the Theory of Computation through the lens of approximation. As a result, there is now a long line of work developing auctions which are simple, computationally-efficient, and \emph{approximately} optimal~\cite{ChawlaHK07, ChawlaHMS10, ChawlaMS15, KleinbergW12, HartN17, LiY13, BabaioffILW14, Yao15, RubinsteinW15, ChawlaM16, CaiDW16, CaiZ17}. 

These works take a binary view on simplicity, and aim to discover the best approximation guarantees achievable by simple mechanisms. Only recently have works begun to explore the tradeoff between simplicity and optimality, aiming instead to discover how complex a mechanism must be, as a function of $\varepsilon$, in order to guarantee a $(1-\varepsilon)$-approximation to the optimum. This question is studied formally through the lens of \emph{computational complexity} (how much computation is required to find a mechanism guaranteeing a $(1-\varepsilon)$-approximation on a given instance?) and \emph{menu complexity} (how many distinct outcomes need a mechanism induce in order to guarantee a $(1-\varepsilon)$-approximation?).\footnote{See Section~\ref{sec:prelim} for a formal definition of menu complexity, and formal statement of the computational problem.} Prior to our work, neither subexponential upper bounds nor superpolynomial lower bounds were known in any multi-dimensional setting for either measure. Our main results provide the first subexponential upper bound through both lenses.

\subsection{Main Result Part 1: Quasi-Polynomial Computational Complexity}

Our main results concern a single \emph{unit-demand buyer} with independently drawn values for $n$ items, the same setting considered in seminal work of Chawla, Hartline and Kleinberg which introduced this  domain to TCS~\cite{ChawlaHK07}. Specifically, there is a single seller with $n$ heterogeneous items facing a single buyer with value $v_i$ for each item $i$ and value $\max_{i \in S} \{v_i\}$ for any set $S$. The seller has independent Bayesian priors $D_i$ over each $v_i$ (so we say that the buyer is drawn from $D:= \prod_i D_i$). The seller presents the buyer with a menu of (randomized allocation, price) pairs $(S,p)$, and the buyer purchases whichever option maximizes her expected utility ($v(S) - p$).\footnote{Throughout this paper, we abuse notation and write $v(S):=\mathbb{E}[v(S)]$ when $S$ is a set-valued random variable and $v$ is fixed.} The seller's goal is to find, over all menus, the one which optimizes her expected revenue.

Our first main result considers the computational complexity of this problem, which is known to be computationally hard to solve exactly (unless $P^{NP}=P^{\#P}$), even when each $D_i$ has support three~\cite{ChenDOPSY15}. On the other hand, works of~\cite{ChawlaHK07,ChawlaHMS10, ChawlaMS15} establish that a $1/4$-approximation can be found in polynomial time. It was previously unknown whether a $(1-\varepsilon)$-approximation (or even a $(1/4+\varepsilon)$-approximation) could be achieved in subexponential time. Part one of our main result provides the first subexponential-time approximation scheme.

\begin{informal}[Informal, see Theorem~\ref{thm:unit}] For all $\varepsilon > 0$, a $(1-\varepsilon)$-approximation to the optimal revenue for a unit-demand buyer with independent values for $n$ items can be found in time quasi-polynomial in $n$. 
\end{informal}

\subsection{Main Result Part 2: Quasi-Polynomial Symmetric Menu Complexity}
Part two of our main result considers the same problem through the lens of menu complexity. Menu complexity, first defined in~\cite{HartN13}, is widely regarded as an insightful yet imperfect measure. Imagine for example the menu which allows the buyer to purchase any desired set $S$ for a price of $|S|$ (``selling separately''). This is ubiquitously accepted as a fairly simple menu (perhaps it has ``intrinsic complexity'' $n$, since there are $n$ non-trivial ``kinds'' of possible outcomes), yet technically it has menu complexity $2^n$ (because there are $2^n$ different sets the buyer can purchase). 

Prior work addresses this concern in two ways. The first simply proposes alternative definitions, such as additive menu complexity~\cite{HartN13}.\footnote{A formal definition is not relevant to this discussion, but essentially the definition is designed to address the specific concern raised: the menu contains a list of (randomized allocation, price) pairs, and the buyer may (adaptively or non-adaptively) select any subset of options to purchase. So ``selling separately'' has additive menu complexity $n$.} This particular definition, however, was later shown to be ill-defined (in the sense that there exist optimal menus for which the additive menu complexity is undefined)~\cite{BabaioffNR18}, and there are no prior alternative proposals. The second approach is to argue that while selling separately may technically have menu complexity $2^n$, it is always well-approximated by a menu of polynomial size~\cite{BabaioffGN17}. The implication is that while the definition is still imperfect, it is at least impossible for a distribution for which selling separately is optimal to witness a super-polynomial lower bound on the menu complexity required for a $(1-\varepsilon)$-approximation.

We propose the first alternative which is well-defined for all menus, the \emph{symmetric menu complexity}. Informally, a menu respects a permutation group $\Sigma$ if whenever $(S,p)$ is an option on the menu, $(\sigma(S),p)$ is an option as well for all $\sigma \in \Sigma$. A menu has symmetric menu complexity $C$ if there exists a $\Sigma$ such that the menu is symmetric with respect to $\Sigma$ and contains at most $C$ distinct equivalence classes under $\Sigma$.\footnote{For ease of exposition, this definition is slightly imprecise, see Section~\ref{sec:menudefs} for a formal definition.} 

Observe that in our motivating example where a buyer could pick $S$ at price $|S|$ has symmetric menu complexity $n$ (the menu is invariant under all permutations), and that importantly the symmetric menu complexity is well-defined for all menus (take $\Sigma$ to contain only the identity permutation). Of course, the definition is still imperfect, as selling separately at $n$ distinct prices still has  symmetric menu complexity $2^n$, but the improvement over standard menu complexity is significant (discussed in Section~\ref{sec:future}). Part two of our main result establishes that quasi-polynomial symmetric menu complexity suffices for a $(1-\varepsilon)$-approximation. Importantly, note that both parts of our main result hold even for {unbounded} distributions.

\begin{informal}[Informal, see Theorem~\ref{thm:unit}] For all $\varepsilon > 0$, a $(1-\varepsilon)$-approximation to the optimal revenue for a unit-demand buyer with independent values for $n$ items exists with symmetric menu complexity quasi-polynomial in $n$. 
\end{informal}

\subsection{Main Result 3: A Reduction from Unbounded to Almost-Bounded}
Our proof of the above main results is cleanly broken down into two steps, the first of which we now overview. We provide a black-box reduction from proving computational/menu/symmetric menu complexity bounds for {unbounded} distributions to proving the same bounds for \emph{almost-bounded} distributions. Roughly, a distribution $D$ is almost-bounded if for each $i$, distribution $D_i$ is supported on $[0,1]\cup \{W\}$ (think of $W$ as some large number $\gg n$). That is, $D_i$ has at most one value in its support exceeding $1$.

This step in our proof applies quite generally, in fact to any distribution which is \emph{subadditive over independent items} (see Section~\ref{sec:prelim} for definition). This constitutes a key result in its own right due to significant gaps in tractability between unbounded and almost-bounded instances. For example, it was only recently shown that \emph{some} $f(n,\varepsilon)<\infty$ menu complexity suffices for a $(1-\varepsilon)$-approximation on all unbounded distributions which are additive\footnote{A valuation function is additive if $v(S)=\sum_{i \in S} v(\{i\})$.} over $n$ independent items~\cite{BabaioffGN17} (and the proof is quite involved), whereas the analogous result follows for almost-bounded distributions by a folklore discretization argument.

\begin{informal}[Informal, see Theorem~\ref{thm:reduction}]\label{inf:reduction} There is a poly-time reduction from a multiplicative $(1-\varepsilon)$-approximation for unbounded distributions which are subadditive over independent items to an additive $O(\varepsilon^5)$-approximation for almost-bounded distributions which are subadditive over independent items. If the $O(\varepsilon^5)$-approximation produced on the almost-bounded instance has (symmetric) menu complexity $C$, the $(1-\varepsilon)$-approximation for the unbounded instance has (symmetric) menu complexity $\leq nC+n$. 
\end{informal}

Readers familiar with~\cite{BabaioffGN17} may notice a relationship to their main result, and a detailed comparison is warranted. The main result of~\cite{BabaioffGN17} asserts that a $(1-\varepsilon)$-approximation for an additive buyer over independent items can be achieved with bounded ($(\ln(n)/\varepsilon)^{O(n)}$) menu complexity, which can now alternatively be deduced from Theorem~\ref{thm:reduction} plus the aforementioned folklore discretization argument (e.g., Corollary~\ref{cor:expectedNudge}, also called a ``nudge-and-round''). In comparison to~\cite{BabaioffGN17}, the main qualitative improvement in Theorem~\ref{thm:reduction} is that we provide a true reduction from unbounded to almost-bounded distributions.\footnote{In contrast,~\cite{BabaioffGN17} wraps up their additive approximation on a bounded distribution via a specific nudge-and-round tailored to the rest of their proof, and are explicit that care is required in this step.} The main quantitative improvements are an extension to subadditive over independent items (versus additive) and that the approximation required on the almost-bounded distribution is independent of $n$ (versus $O(\varepsilon^3/n^3)$). It is worth noting that this quantitative improvement is necessary for our  previous quasi-polynomial results (see discussion following Theorem~\ref{thm:reduction}), so the removal of dependence on $n$ is significant. It is also worth noting that our proof of Theorem~\ref{thm:reduction} indeed makes use of several ideas developed in~\cite{BabaioffGN17}, and we identify the connections where appropriate. 

\subsection{Approximating Almost-Bounded Distributions via Symmetries}
Theorem~\ref{thm:reduction} takes care of reducing unbounded distributions to almost-bounded ones, but we still need to figure out how to get an additive $O(\varepsilon^5)$-approximation on almost-bounded distributions that are unit-demand over independent items. The folklore discretization argument (roughly: round all values down to the nearest multiple of $O(\varepsilon^{10})$) establishes only that an exponential $1/\varepsilon^{O(n)}$ computational/menu complexity suffices. Perhaps shockingly, no better bounds were previously known, so our remaining task is to improve this.

The unique special case where progress was previously made is if $D$ is heavily symmetric (that is, $D$ is i.i.d., or there are only $o(n)$ distinct marginals of $D$)~\cite{DaskalakisW12}. In this case,~\cite{DaskalakisW12} establish that an additive $\varepsilon$-approximation with symmetric menu complexity $n^{O(s/\varepsilon^2)}$ can be found in time $n^{O(s/\varepsilon^2)}$ for any distribution $D$ that is unit-demand over independent items with at most $s$ distinct marginals. Of course, our given $D$ may have $n$ distinct marginals, rendering a direct application of their theorem useless. So our key argument here (captured mostly by Lemma~\ref{lem:carefulcoupling}) is to show that every $D$ which is almost-bounded and unit-demand over independent items is ``close'' in a precise metric (Definition~\ref{def:advancedCoupling}) to some $D'$ which is almost-bounded and unit-demand over independent items with at most $\ln(n)^{1/\varepsilon^{O(1)}}$  distinct marginals.

\subsection{Extensions}
Beyond our main results, Theorem~\ref{thm:reduction} also allows us to conclude the following corollaries:
\begin{itemize}
\item For all $\varepsilon>0$ and $n \in \mathbb{N}$, there exists a finite $f(n,\varepsilon)$ s.t. for all subadditive $D$ over $n$ independent items, a $(1-\varepsilon)$-approximation  can be found in $f(n,\varepsilon)$ time which has menu complexity $f(n,\varepsilon)$ (Theorem~\ref{thm:subadd}).
\item For all $\varepsilon>0$ and all unit-demand $D$ over \emph{i.i.d.} items, a $(1-\varepsilon)$-approximation can be found in polynomial time which has polynomial symmetric menu complexity (Theorem~\ref{thm:iiddemand}).
\item For all $\varepsilon>0$ and all additive $D$ over \emph{i.i.d.} items, a $(1-\varepsilon)$-approximation can be found in quasi-polynomial time which has quasi-polynomial symmetric menu complexity (Theorem~\ref{thm:iiddemand}).
\item For all $\varepsilon > 0$ and all additive $D$ over independent items \emph{where for all $i$, $|\text{support}(D_i)|= O(1)$}, a $(1-\varepsilon)$-approximation can be found in quasi-polynomial time which has quasi-polynomial symmetric menu complexity (Theorem~\ref{thm:constant}). Note that the supports of each $D_i$ may be distinct.
\end{itemize}

The proofs of the first three bullets follow by first applying our reduction (Theorem~\ref{thm:reduction}), and then applying standard (albeit somewhat subtle) nudge-and-round arguments (e.g., Corollary~\ref{cor:expectedNudge}) on the resulting almost-bounded distribution. Note that exploiting symmetries for i.i.d. distributions is considerably simpler than for non-i.i.d. distributions as the input is already symmetric (so~\cite{DaskalakisW12} can be applied almost immediately). The final bullet considers a non-i.i.d. setting, but again establishes that all distributions whose marginals have constant support are close (by Definition~\ref{def:advancedCoupling}) to symmetric ones. Note that exact solutions for this setting are computationally intractable~\cite{DaskalakisDT14, ChenDOPSY15}, and that no subexponential-time approximation schemes were previously known (and posed as an open problem in~\cite{DaskalakisDT14}, even for marginals with support two).

Additionally, like~\cite{BabaioffGN17}, we also consider the (symmetric) menu complexity necessary to approximate the revenue achieved by selling separately. For the standard menu complexity,~\cite{BabaioffGN17} establishes that there is always a menu of size $n^{1/\varepsilon^{O(1)}}$ which guarantees a $(1-\varepsilon)$-fraction of the best revenue achievable by selling separately. We establish an even stronger claim for symmetric menu complexity: there is always a menu of \emph{efficient-linear} size which guarantees a $(1-\varepsilon)$-fraction of selling separately.

\begin{informal2}[See Theorem~\ref{thm:apxsrev}] For all menus $M$ which sell separately, and all $D$ which are additive over independent items, there exists a menu with symmetric menu complexity $f(\varepsilon)\cdot n$ which achieves a $(1-\varepsilon)$-approximation to the revenue of $M$ when buyers are drawn from $D$ (here $f(\varepsilon) = 2^{O(1/\varepsilon^3)}$).
\end{informal2}

Again, recall that symmetric menu complexity is still an imperfect definition which assigns complexity $2^n$ to a menu which sells separately at $n$ distinct prices (whereas the ``intrinsic complexity'' of such a menu is $n$). But Theorem~\ref{thm:apxsrev} asserts that for all $\varepsilon$, there is a menu of linear symmetric menu complexity which achieves the same (up to $(1-\varepsilon)$) revenue guarantees. 

Finally, in Section~\ref{sec:approachLimits} we analyze a barrier example for extending our quasi-polynomial bounds for a unit-demand buyer to an additive buyer. Essentially, the distribution is almost-bounded, but (provably) no previous approaches, nor our approach for almost-bounded unit-demand distributions can guarantee better than an $8/9$-approximation. We believe that resolving this example will be a fruitful direction for future work to circumvent current barriers. 

\subsection{Related Work}\label{sec:related}
The most related work to ours is~\cite{BabaioffGN17}, whose main result establishes that finite menu complexity suffices to guarantee a $(1-\varepsilon)$-approximation for an additive buyer over independent items. As discussed above, our black-box reduction provides both qualitative and quantitative improvements on their work, and makes use of tools they develop. There are numerous other works which study the menu complexity of optimal and approximately optimal auctions, but there is not much technical overlap~\cite{BriestCKW15,HartN13,FiatGKK16, SaxenaSW18, Gonczarowski18}. 

(Quasi-)Polynomial Time Approximation Schemes for a single buyer have been considered in prior works from a few different perspectives. For example,~\cite{CaiD11} develops a PTAS for the optimal \emph{deterministic} item pricing for a unit-demand buyer over independent MHR items, and a QPTAS for a unit-demand buyer over independent regular items.\footnote{That is, each $f_i(v)/(1-F_i(v))$ is monotone non-decreasing (MHR) or $v-(1-F_i(v))/f_i(v)$ monotone non-decreasing (regular).}~\cite{Rubinstein16} develops a PTAS for the optimal ``partition mechanism'' for an additive buyer over independent items.\footnote{A partition mechanism partitions the items into disjoint bundles and allows the buyer to purchase any subset of bundles.} The simplest comparison to these works is that we are searching for a good approximation to the optimal (possibly randomized) mechanism, versus a restricted class of mechanisms.~\cite{DaskalakisW12} develop a PTAS for a bounded unit-demand buyer over i.i.d. items by exploiting symmetries. As noted previously, our work provides the first approximation schemes towards the true optimum in unrestricted settings (and also the first application of~\cite{DaskalakisW12} in asymmetric settings). 

A series of works also considers the multi-bidder, multi-item case. Works such as~\cite{DaskalakisW12, CaiH13} consider special cases (such as i.i.d., MHR, etc.), and are able to exploit symmetries or concentration to prove that simple auctions can approach optimal guarantees. In the general case,~\cite{CaiDW12a, CaiDW12b,CaiDW13b} develop fully-polynomial randomized approximation schemes. These works achieve polynomial dependence on the number of bidders, and the size of a single bidder's support (so with independent items, this would be exponential in $n$), and bear no technical similarity. Indeed, one of the open questions left by these works is whether it is possible to improve the dependence on $n$ when the items are independent, and our work resolves this affirmatively in the case of a single unit-demand buyer. 

Finally, note that the interesting questions indeed surround a $(1-\varepsilon)$-approximation, and not exact solutions. For example,~\cite{ManelliV07, DaskalakisDT17} establish that optimal mechanisms may have \emph{uncountable} menu complexity. Moreover, even in the case where the marginal of each item has constant support (two, for additive~\cite{DaskalakisDT14}, three for unit-demand~\cite{ChenDOPSY15}), exact solutions are computationally intractable and subexponential-time approximation schemes were unknown prior to our work (Theorem~\ref{thm:constant}).

\subsection{Discussion and Open Problems}\label{sec:future}
We introduce the notion of symmetric menu complexity, and provide the first subexponential time approximation schemes and subexponential bounds on the symmetric menu complexity of $(1-\varepsilon)$-approximately optimal auctions for an unbounded unit-demand buyer over independent items. Our main technical innovations are: (a) a black-box reduction from computational/menu/symmetric menu complexity bounds on unbounded distributions to almost-bounded ones (Theorem~\ref{thm:reduction}), and (b) establishing that a wide class of (asymmetric) almost-bounded distributions are ``close to'' symmetric distributions in a formal sense (including unit-demand over independent items, or additive over independent items of constant support). We also conclude approximation-schemes for a suite of additional classes of valuation functions.

The notion of symmetric menu complexity itself will likely be of independent interest for future work. Symmetric menu complexity is well defined for any menu, and is always at most the menu complexity. Additionally, an additive or unit-demand buyer can always find their favorite option on a menu of symmetric menu complexity $C$ in $\poly(n,C)$ value queries (see Lemma~\ref{lem:polytimemenu} for a short proof). Moreover, Theorem~\ref{thm:apxsrev} establishes that selling separately can be approximated arbitrarily well by a menu of linear symmetric menu complexity. These arguments suggest that symmetric menu complexity is a convincing simplicity measure for additive/unit-demand buyers, and the following open questions are directly relevant:

\begin{openq} Does there exist a polynomial-time (respectively, polynomial symmetric menu complexity) approximation scheme for a single unit-demand buyer over independent items?

Does there exist a subexponential-time (respectively, subexponential symmetric menu complexity) approximation scheme for a single additive (or subadditive) buyer over independent items? 
\end{openq}

\begin{openq} Does there exist a subexponential-time approximation scheme for \emph{multiple} buyers who are (unit-demand/additive/subadditive) over independent items?
\end{openq}

Still, symmetric menu complexity by no means ``dominates'' the traditional menu complexity (for example, a subadditive buyer can find her favorite option on a menu of menu complexity $C$ in $\poly(n,C)$ value queries, but the same is not necessarily true for a menu of symmetric menu complexity $C$, or additive menu complexity $C$). It is therefore also an important open question (Open Problem~1.6 in~\cite{BabaioffGN17}) to understand the standard menu complexity required to achieve a $(1-\varepsilon)$ approximation in any of the settings considered in this paper. In this direction, note importantly that our Theorem~\ref{thm:reduction} allows future work to restrict attention only to almost-bounded distributions.

\subsection{Roadmap}
Preliminaries are split into two sections: Section~\ref{sec:prelim} contains the minimal notation necessary to formally state and overview our results, and Section~\ref{sec:setup} contains the remaining preliminaries necessary for complete proofs. Section~\ref{sec:reduction} contains a formal statement and brief overview of our reduction from unbounded to almost-bounded instances. Section~\ref{sec:symmetries} overviews our use of symmetries to derive approximation schemes for asymetric distributions. Section~\ref{sec:srev} overviews the connection between selling separately and symmetric menu complexity.

Section~\ref{sec:nudge} starts our proofs with a complete analysis of various ``nudge-and-round'' arguments (including a new one, Proposition~\ref{prop:advancedNudge}).  Sections~\ref{sec:reductionproofs} through~\ref{sec:srevproofs} contain complete proofs of our main results. Section~\ref{sec:approachLimits} overviews a barrier example (details in Appendix~\ref{app:approachLimits}).


\section{Preliminaries} 
\label{sec:prelim}
In the interest of brevity, we first provide the minimal notation necessary to understand our precise statements and proof overviews. Additional notation for detailed proofs is provided in  Section~\ref{sec:setup}.

\subsection{Classes of Distributions}
This paper considers instances with a single buyer and $n$ items. The buyer's valuation function $v(\cdot)$ for the items is drawn from some distribution $D$  (written as $v \leftarrow D$), which will always have \emph{independent items}: $D := \prod_i D_i$. We will consider the following  classes of valuations: 
\begin{itemize}

\item \textbf{$k$-demand over independent items.} Each $D_i$ is a single-dimensional distribution. The buyer's value $v_i$ for item $i$ is drawn independently from $D_i$, and her value for a set $S$ is $\max_{U \subseteq S, |U| \leq k}\{\sum_{i \in U} v_i\}$. When $k=1$ we say the distribution is \textbf{unit-demand} and when $k = n$ we say it is \textbf{additive}. 

\item \textbf{Subadditive over independent items.} Each $D_i$ is an arbitrary distribution.\footnote{\cite{RubinsteinW15} defines $D_i$ to be a distribution over a compact subset of a normed space, but this is not necessary.} Denote by $X_i$ a random variable with distribution $D_i$. There exists a function $V(\cdot, \cdot):\text{support}(D) \times 2^{[n]} \rightarrow \mathbb{R}_+$ for which a buyer with type $\vec{X}$ has valuation function $v_{\vec{X}}(\cdot)$ satisfying $v_{\vec{X}}(S):= V(\vec{X},S)$. Moreover, for all $\vec{X} \in \text{support}(D)$, function $V(\vec{X},\cdot)$ is monotone and subadditive.\footnote{That is, for all $S, T$: $v(S) \leq v(S \cup T) \leq v(S) + v(T)$.} We will often abuse notation and think of the valuation function $v(\cdot)$ as being drawn directly from $D$.\footnote{We refer the reader to~\cite{RubinsteinW15} for some examples of natural distributions satisfying this definition.}
\end{itemize}

We will use $D_S:= \prod_{i\in S} D_i$ to refer to the distribution $D$ restricted to items in $S$. Note that we will often abuse notation and use $v(S)$ to refer to $\mathbb{E}_S[v(S)]$ when $S$ is a randomized allocation.

\subsection{Revenue Benchmarks}
We will also be interested in the following quantities. If the parameter $D$ is clear from context, we may drop it (but sometimes it will not be clear, and we will make sure to include it). 
\begin{itemize}
\item $\REV(D)$: the optimal revenue achievable by any mechanism for a single buyer drawn from $D$ (formally, the supremum of achievable revenues). We will always assume that $\REV(D)$ is finite.
\item $\REV_M(D)$: for mechanism $M$, the expected revenue of $M$ for distribution $D$.
\item $\VAL(D)$: the expected value of $v([n])$, when $v(\cdot)$ is drawn from $D$ (not necessarily finite).
\end{itemize}

\subsection{Bounded and Truncated Distributions}
In order to formally state our results, we will be interested in the following restrictions on $D$.

\begin{itemize}
\item A distribution $D$ is \emph{unbounded} if $\REV(D) < \infty$ (but maybe $\VAL(D) = \infty$, no other constraints).
\item A distribution $D$ is $c$-\emph{bounded} if $\Pr_{v \leftarrow D}\big[v(\{i\}) \leq c\cdot \REV(D) \big] = 1$ for all $i$. 
\item $D$ is \emph{almost $c$-bounded} if $\exists X\in \mathbb{R}$ so that for all $i$, $\Pr_{v \leftarrow D}\big[v(\{i\}) \in [0,c \cdot \REV(D)] \cup \{X\} \big]=1$.
\end{itemize}

Importantly, observe that whenever $D$ is almost $c$-bounded, we can normalize so that all $v(\{i\}) \in [0,1]\cup \{X/(c\cdot \REV(D))\}$ with probability $1$ (by dividing all values by $c\cdot \REV(D)$). Now, an additive $\varepsilon/c$-approximation to $\REV(D)$ immediately implies a multiplicative $(1-\varepsilon)$-approximation to $\REV(D)$. 

Our main results will also involve truncating unbounded distributions into ones which are nearly-bounded. Below we define these truncations formally. The definition below is parameterized by a value $T > 0$ and a vector $\vec{p}$. Intuitively, the truncation operation first replaces all item values $>T$ with exactly $T$, and then for each item $i$ independently sets a huge value $n^2 \cdot (\max\{1,T\})^3$.

\begin{defn}[Canonical truncations] Let $D$ be subadditive over independent items. Let $T \in \mathbb{R}_+$, and let $\vec{p} \in \mathbb{R}^n$ be a vector of probabilities. Denote by $D(T,\vec{p})$ the truncation of $D$ with respect to $T,\vec{p}$. To sample from the distribution $D(T,\vec{p})$:
\begin{enumerate}
\item Draw $v \leftarrow D$. For each item $i$ such that $v(\{i\}) > T$, add $i$ to $S$. These items will have their value truncated at $T$.
\item For each item $i$, independently add $i$ to $W$ with probability $\min \big\{\frac{p_i}{n^2 \cdot(\max\{1,T\})^3},1 \big\}$. Update $S:=S\setminus W$. These items will have their value set at $n^2 \cdot (\max \{1,T\})^3$. 
\item Set $v'(\{i\}): = T$ for all $i \in S$, $v'(\{i\}) = n^2\cdot (\max\{1,T\})^3$ for all $i \in W$.
\item \emph{(Additive truncation)} Output $v'(\cdot)$ with $v'(U):= v(U \cap \bar{S}\cap \bar{W}) +\sum_{i \in U \cap (S \cup W)} v'(\{i\})$.
\item \emph{(Max truncation)} Output $v'(\cdot)$ with $v'(U):= \max\big\{v(U \cap \bar{S}\cap \bar{W}),\max_{i \in U \cap (S \cup W)} \{v'(\{i\})\} \big\}$. 
\end{enumerate}

We also use the notation $D(T):=D(T,\vec{0})$.
\end{defn}

Our reduction from unbounded to almost-bounded requires truncating the original distribution, and holds for either the additive or max truncation (or many others), so we will not emphasize which is used. We quickly parse what is going on in the definition. Both truncations first initialize $v'(\{i\}):= \min\{v(\{i\}),T\}$. For each $i$, both truncations then independently select each $i$ with tiny probability\footnote{In all applications of this definition, we will have $p_i/T \ll \varepsilon$.} and update $v'(\{i\}):=n^2 \cdot \max\{1,T\}^3$. Afterwards, in order to output a complete set function, $v'(\cdot)$ must be defined on all sets (not just the singletons), and the two truncations extend differently. 

Observe that when $D$ is additive over independent items and $\Pr\big[v(\{i\})\leq T\big] =1$, then $D(T) = D$ under the additive truncation. The same holds for unit-demand and the max truncation. If all we know is that $D$ is subadditive (and $\Pr\big[v(\{i\})\leq T\big] = 1$), then $D(T)$ does not necessarily equal $D$ under either truncation (but this is fine from the perspective of our results). More importantly, observe that if $D$ is subadditive (resp. XOS, submodular) over independent items, then $D(T,\vec{p})$ is also subadditive (resp. XOS, submodular) over independent items under both truncations. If $D$ is additive (resp. gross substitutes) over independent items, then $D(T,\vec{p})$ is additive (resp. gross substitutes) over independent items under the additive truncation. If $D$ is unit-demand over independent items, then $D(T,\vec{p})$ is unit-demand over independent items under the max truncation. So all of these classes are ``closed'' under (at least) one of the canonical truncations.

\subsection{Menu Complexity}\label{sec:menudefs}

We will consider two menu complexity measures in this paper. Recall that the Taxation Principle~\cite{Hammond79, Guesnerie81} asserts that any mechanism for a single buyer can be represented as a menu of (randomized allocation, non-negative price) pairs, where the buyer selects their favorite pair from the menu (that is, the pair which maximizes the buyer's expected value for the randomized allocation minus the price paid). We will therefore directly refer to a mechanism $M$ as a menu/list of such pairs (which implicitly includes the pair $(\emptyset, 0)$). The first notion we consider is the standard menu complexity from~\cite{HartN13}.

\begin{defn}[Menu Complexity~\cite{HartN13}] The \emph{menu complexity} of a menu $M$ is simply the size of the list $|M|$. We denote by $\mc(M)$ the menu complexity of $M$.
\end{defn}

The following two definitions introduce our notion of symmetric menu complexity. 

\begin{defn}[Symmetries in a Menu] Let $S$ be a randomized allocation, $p$ be a price, and $\Sigma$ be a subgroup of permutations of $[n]$. Then we denote by $(S, p, \Sigma)$ the set of (randomized allocation, price) pairs $\bigcup_{\sigma \in \Sigma}\{(\sigma(S), p)\}$. That is, the set $(S,p, \Sigma)$ contains, for all $\sigma \in \Sigma$, the option to receive  for price $p$ the randomized allocation which instantiates the random set $S$, and then permutes the items according to $\sigma$.
\end{defn}

\begin{defn}[Weak/Strong Symmetric Menu Complexity] We say that a mechanism $M$ has \emph{strong symmetric menu complexity} equal to the smallest $c$ such that there exists an index set $\mathcal{I}$ of size $c$, collection of (randomized allocation, price) pairs $\{(S_i, p_i)\}_{i \in \mathcal{I}}$, and subgroup $\Sigma$ of item permutations such that $M$ can be written as $\bigcup_{i \in \mathcal{I}} \{(S_i,p_i,\Sigma)\}$. We refer to the strong symmetric menu complexity of $M$ as $\ssmc(M)$.

We say that $M$ has \emph{weak symmetric menu complexity} equal to the smallest $c$ such that there exists an index set $\mathcal{I}$ and menus $\{M_i\}_{i \in \mathcal{I}}$ such that each $\ssmc(M_i) = d_i$ for all $i$, menu $M=\bigcup_i M_i$, and $\sum_i d_i = c$. We will refer to the weak symmetric menu complexity of $M$ as $\wsmc(M)$.
\end{defn}

Above, the idea is that the mechanism designer can present any mechanism $M$ to the buyer with a description of $\Sigma$ via its generating set, together with a list of $\ssmc(M)$ (randomized allocation, price) pairs. Similarly, the designer can present any mechanism $M$ to the buyer with a set of such lists, totaling $\wsmc(M)$ (randomized allocation, price) pairs (again representing each $\Sigma_i$ via its generating set). 

In principle, one might find some subgroups $\Sigma$ to be simpler than others (e.g., the subgroup of all permutations, or all permutations on even elements, etc.), but Jerrum's filter establishes that all subgroups have a generating set of size at most $n$~\cite{Jerrum82}. So while some subgroups may indeed be conceptually simpler than others, from the point of view of how much space is needed to define $\Sigma$, the space is always $n\ln(n)$ (this sanity checks, for instance, that it is not the case that all menus have low symmetric menu complexity simply because they can be cleverly partitioned into few heavily-symmetric parts. See further discussion in Section~\ref{sec:nudgeDiscussion}).

Note also that the weak/strong symmetric menu complexity is well-defined for any menu $M$ (by taking $\Sigma$ to be the trivial subgroup), and that for all $M$, we have $\wsmc(M) \leq \ssmc(M) \leq \mc(M)$. This is in contrast to previously posed notions such as ``additive menu complexity''~\cite{HartN13}, as some menus may simply not admit an additive description (and therefore their additive menu complexity is undefined)~\cite{BabaioffNR18}.

To simplify presentation, we formally define what it means for a class of distributions to have a low $(1-\varepsilon)$-approximation menu complexity.

\begin{defn}[$\varepsilon$-Menu Complexity of a Class of Distributions]
Let $\mathcal{D}$ be a class of distributions. Define the $\varepsilon$-Menu Complexity $\mc(\mathcal{D},\varepsilon)$ of $\mathcal{D}$ to be the minimum $c$ such that for all $D \in \mathcal{D}$ there exists a menu $M$ with $\mc(M) \leq c$ and $\REV_M(D) \geq (1-\varepsilon)\REV(D)$. We also define $\wsmc(\mathcal{D},\varepsilon)$ and $\ssmc(\mathcal{D},\varepsilon)$ similarly.
\end{defn}

\subsection{Computational Problems}
Finally, we define the computational problem we consider for our PTAS/QPTAS. Below, when we describe a distribution $D$ as being input, we do not explicitly specify how the input is given, other than (a)~it is possible to sample from $D$ in time $\poly(n)$ and (b)~for any $T \in \mathbb{R}$, $\varepsilon > 0$, and all items $i$, it is possible to find $\sup_{p\geq T}\big\{p\cdot \Pr[v(\{i\}) \geq p]\big\}$, along with an $r$ satisfying $r \cdot \Pr[v(\{i\})\geq r] \geq (1-\varepsilon) \cdot \sup_{p \geq T}\big\{p \cdot \Pr[v(\{i\})\geq p] \big\}$ in time $\poly_{\varepsilon}(n)$.\footnote{Observe that this supremum is always finite when $\REV(D)$ is finite.} Observe that if the support of each $D_i$ is explicitly listed and of size $\poly(n)$, then both these properties are satisfied (even though the support of $D$ is exponential in $n$).

\begin{defn}[Implicit description of a menu~\cite{DaskalakisDT14}] An implicit description of a menu $M$ is a Turing machine which takes as input a valuation $v(\cdot)$ and outputs $\arg\max_{(S,p) \in M\cup \{(\emptyset,0)\}}\{v(S)-p\}$. The description has overhead $c$ if on input $v(\cdot)$ described using $b$ bits, the Turing machine terminates in time $\poly(c,b)$. 
\end{defn}

\begin{defn}[Computational Revenue Maximization] A $(1-\varepsilon)$ approximation for the problem $\revmax_{\mathcal{D}}$ takes as input $D \in \mathcal{D}$ and outputs an implicit description of a menu $M$ such that $\REV_M(D) \geq (1-\varepsilon)\REV(D)$. Whenever we say that an algorithm for $\revmax_{\mathcal{D}}$ runs in time $c$, we mean both that the implicit description is found in time $c$, and that the implicit description itself has overhead $c$.
\end{defn}



\section{Overview: Reduction from Unbounded to Bounded}
\label{sec:reduction}

In this section, we overview our poly-time (symmetric) menu-complexity preserving reduction from unbounded distributions to almost-bounded distributions. Theorem~\ref{thm:reduction} is the main result of this section.

\begin{theorem}\label{thm:reduction}
For any $n \in \mathbb{N}$ and $\varepsilon > 0$, let $\mathcal{D}$ be a class of distributions that is closed under one of the canonical truncations, such that every $D \in \mathcal{D}$ is subadditive over $n$ independent items. Let $\mathcal{D}_B$ denote the subset of $\mathcal{D}$ that is also almost $1/\varepsilon^4$-bounded. Then there is a $\poly(n,1/\varepsilon)$-time reduction from achieving a $(1-\varepsilon)$-approximation to $\revmax_{\mathcal{D}}$ to achieving a $(1-O(\varepsilon))$-approximation to $\revmax_{\mathcal{D}_B}$. Moreover:
\begin{align*}
\mc(\mathcal{D},\varepsilon) \leq n + n\mc(\mathcal{D}_B,O(\varepsilon))\quad \text{and} \quad \wsmc(\mathcal{D},\varepsilon) \leq n + n\wsmc(\mathcal{D}_B,O(\varepsilon)).
\end{align*}
\end{theorem}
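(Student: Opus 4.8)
The plan is to go through the following sequence of reductions, each of which loses only an $O(\varepsilon)$ multiplicative factor in revenue and blows up (symmetric) menu complexity by at most a factor of $n$ plus an additive $n$.

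First, I would argue that it suffices to approximate the revenue achievable by the subclass of mechanisms that, roughly speaking, "core-vs-tail split" the items: the seller designates a random set $W$ of "high" items (those whose value the buyer reports to be very large) and sells those separately at high prices, while selling the remaining "core" items via some sub-menu. Concretely, given an unbounded $D$ that is subadditive over independent items, let $T$ be a carefully chosen threshold (of order $\REV(D)/\poly(\varepsilon)$, chosen so that the total contribution to $\REV(D)$ from item values exceeding $T$ is only an $\varepsilon$-fraction — this is a standard "the tail is cheap" argument, and the key input is that $\REV(D)<\infty$). Pass to the canonical truncation $D(T,\vec p)$: because $\mathcal D$ is closed under one of the canonical truncations, $D(T,\vec p)\in\mathcal D$, and one shows $\REV(D(T,\vec p))\ge(1-O(\varepsilon))\REV(D)$ for a suitable $\vec p$ — intuitively, truncating the few items with huge values costs little revenue, and re-inserting a tiny-probability huge value $n^2(\max\{1,T\})^3$ on each item lets a menu "recover" essentially all of that lost tail revenue through $n$ extra menu options (one per item, selling item $i$ alone at a price just below its huge value). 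This is where the $+n$ and the extra factor of $n$ in the menu-complexity bounds come from: the final menu is the sub-menu for the core, replicated/combined with these $n$ singleton "sell the huge item" options, and for the symmetric version one checks $\wsmc$ is additive over this union of $O(1)$-many pieces each of which has its own symmetry group, giving $\wsmc(\mathcal D,\varepsilon)\le n+n\,\wsmc(\mathcal D_B,O(\varepsilon))$.

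Second, and this is the technical heart, I would show that the truncated distribution $D(T,\vec p)$ — which has all item values in $[0,T]$ except a tiny-probability atom at $n^2(\max\{1,T\})^3$ — can be made genuinely almost-$1/\varepsilon^4$-bounded, i.e. that its $\REV$ is actually of order $T\varepsilon^4$ so that the bounded part $[0,T]$ really is within the $1/\varepsilon^4$ factor. The point is that $\REV(D(T,\vec p))\ge\REV(D)-O(\varepsilon)\REV(D)\ge\Omega(\varepsilon^4 T)$ by the choice of $T$ (we picked $T=O(\REV(D)/\varepsilon^4)$), so after normalizing by $1/\varepsilon^4\cdot\REV$ all the "core" values lie in $[0,1]$ and the huge atom sits at some $X\gg n$; hence $D(T,\vec p)\in\mathcal D_B$. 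Then, by the normalization remark in Section~\ref{sec:prelim}, an additive $\alpha$-approximation to $\REV(D(T,\vec p))$ for $\alpha=O(\varepsilon^5)$ translates to a multiplicative $(1-O(\varepsilon))$-approximation (since $\alpha/(1/\varepsilon^4)=O(\varepsilon^9)$... here I would track the exact constants so the stated $O(\varepsilon^5)$ additive / $O(\varepsilon^4)$-bounded bookkeeping in Informal~\ref{inf:reduction} matches). Composing: a $(1-O(\varepsilon))$-approximation to $\revmax_{\mathcal D_B}$ on $D(T,\vec p)$ plus the $n$ extra singleton options yields a $(1-\varepsilon)$-approximation to $\revmax_{\mathcal D}$ on $D$, and it all runs in $\poly(n,1/\varepsilon)$ time because computing $T$ and $\vec p$ only requires the per-item quantities $\sup_{p\ge T}\{p\Pr[v(\{i\})\ge p]\}$ promised in the input model, and sampling from $D(T,\vec p)$ is poly-time given sampling access to $D$.

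The main obstacle I expect is the second step done correctly: one must simultaneously (a) choose $T$ large enough that truncation costs only $O(\varepsilon)\REV(D)$, (b) choose the huge atom and its probabilities $\vec p$ so that the $n$ recovery options actually recover the lost tail revenue without the buyer "misreporting" to grab a huge-value option she shouldn't — this incentive-alignment is exactly the subtle "nudge-and-round"/Section~\ref{sec:nudge} issue that~\cite{BabaioffGN17} flagged as delicate — and (c) verify the subadditivity of $V(\vec X,\cdot)$ is preserved so that $D(T,\vec p)$ genuinely lands in $\mathcal D$ and the downstream black-box applies. The symmetric-menu-complexity accounting in (b) is the other delicate point: the $n$ singleton recovery options are \emph{not} symmetric among themselves (they sell different items at different prices), so I would handle them as $n$ separate $\ssmc$-$1$ pieces contributing $n$ to the $\wsmc$, which is precisely why the bound is $\wsmc(\mathcal D,\varepsilon)\le n+n\,\wsmc(\mathcal D_B,O(\varepsilon))$ rather than something symmetric-preserving without the $n$ overhead.
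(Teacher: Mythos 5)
There is a genuine gap, and it sits exactly where your plan leans on intuition. Your Step 1 rests on a ``tail is cheap'' claim: that one can pick $T = \REV(D)/\poly(\varepsilon)$ so that the contribution to $\REV(D)$ from item values exceeding $T$ is an $\varepsilon$-fraction. This is false in general, even for a single item: for an equal-revenue-style distribution ($\Pr[v \geq x] \approx 1/x$ up to some enormous cutoff), essentially \emph{all} of $\REV(D)$ comes from values far above any $\poly(1/\varepsilon)$ threshold, and $\REV(D) < \infty$ does not help. The paper never discards the tail; it proves the much weaker (and true) statements that the \emph{probability} of exceeding $T$ is small (Claim~\ref{claim:sumpi}), that revenue from \emph{two simultaneous} high values is negligible (Proposition~\ref{prop:main1}), and that, after a structural simplification (Proposition~\ref{prop:main4}, proved via Schechtman concentration and Myerson virtual values), the tail revenue can be captured by at most $2n$ expensive \emph{per-item} options. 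That structural step is what makes the benchmark inequality $\REV(D(T,\vec p)) \geq (1-O(\varepsilon))\REV(D)$ (your ``one shows \ldots intuitively'') provable at all, and you omit it entirely.

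The second missing ingredient is the one you yourself flag as ``the main obstacle'': controlling the interaction between the cheap sub-menu and the $n$ recovery options. You do not resolve it, and your description of the recovery (``selling item $i$ alone at a price just below its huge value,'' recovering ``essentially all'' of the tail revenue) is not what happens: a buyer with the huge (or genuinely high) value for item $i$ can defect to a cheap option that already allocates item $i$ with high probability, so the price extractable from her is only about $\ell_i(M)\cdot(\text{high value})$, where $\ell_i(M)$ is the leftover probability of item $i$ in the cheap menu. The paper's actual contribution (Propositions~\ref{prop:reduce1} and~\ref{prop:reduction2}, packaged via $\modrevmax$ and Lemma~\ref{lem:modrev}) is precisely that these leftovers are the \emph{only} coupling between the two parts, and that the inserted point masses at $n^2(\max\{1,T\})^3$ force the bounded-instance solver to internalize the tradeoff $\REV_M(D(T)) + \sum_i \ell_i(M)\,r_i\Pr[v(\{i\})\geq r_i]$; the exclusive options concatenated at prices $q_i + r_i(1-\Pr[i\in S_i])$ (Definition~\ref{def:excloptions}), not ``just below the huge value,'' then realize that bonus on the real distribution. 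Relatedly, your menu-complexity accounting attributes the factor-$n$ blowup to the wrong source: it comes from making the returned menu $E$-exclusive (Definition~\ref{def:exclusive}, splitting each expensive option into $n$ single-item options), while the $+n$ comes from the concatenated exclusive options. Without the structural Proposition~\ref{prop:main4} and the leftovers argument, the chain of inequalities you propose cannot be closed.
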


The proof of Theorem~\ref{thm:reduction} is broken down into two main parts. The first half, captured in Proposition~\ref{prop:main4},\footnote{Note that Proposition~\ref{prop:main4} is a less precise version of Proposition~\ref{prop:main3}.} asserts that for any distribution which is subadditive over independent items, there exists a $(1-\varepsilon)$-approximate menu of a particular form. Readers familiar with~\cite{BabaioffGN17} will notice simliarity to their Lemma~2.4; we discuss the differences shortly after.

\begin{proposition}\label{prop:main4}
Let $E \geq \REV(D)/\varepsilon^3$. Then for all $D$ that are subadditive over independent items, there exists a menu $M$ such that $\REV_M(D) \geq (1-O(\varepsilon))\REV(D)$ and:
\begin{itemize}
\item For all $(S,p) \in M$, either $p \leq E$, or there exists at most one $i$ such that $\Pr[i \in S] > 0$.
\item For each item $i$, there exist at most two distinct $(S,p) \in M$ such that $p > E$ and $\Pr[i \in S] > 0$. 
\end{itemize}
\end{proposition}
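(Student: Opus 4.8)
\textbf{Proof proposal for Proposition~\ref{prop:main4}.}

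The plan is to start from an arbitrary optimal (or near-optimal) menu $M^*$ for $D$ and surgically modify it so that every menu option with a ``large'' price $p > E$ involves essentially a single item, while losing only an $O(\varepsilon)$ fraction of the revenue. The intuition: if a buyer is paying a price exceeding $E \geq \REV(D)/\varepsilon^3$, then conditioned on this purchase, the buyer's value must itself be at least $E$, which is much larger than $\REV(D)$. For such high-value buyers, the revenue extracted across \emph{all} of them is still only $\REV(D)$, so the ``density'' of these buyers is at most $\varepsilon^3$-ish. The strategy is to replace the randomized allocation $S$ in each high-price option by a much simpler allocation that still captures (almost) all the value, using the fact that a subadditive function on a large set is, up to an $\varepsilon$-loss, dominated by its value on a small ``core'' set of coordinates; for unit-demand this core is a single item, and more generally we can afford to keep only the single coordinate contributing the most, at the cost of some revenue that is charged against the $\varepsilon^3$-small probability mass.

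The key steps, in order, would be: (1) Decompose $\REV(D)$ into the revenue collected from options with $p \leq E$ (call these ``cheap'') and options with $p > E$ (``expensive''); the cheap part already satisfies the first bullet, so we only need to fix the expensive options. (2) For each expensive option $(S,p)$, argue via subadditivity and a Lipschitz/exchange argument that there is a single item $i(S)$ such that offering the buyer just a point mass (or a small-support distribution) on $i(S)$ at a slightly reduced price retains a $(1-O(\varepsilon))$ fraction of the value that buyers purchasing $(S,p)$ derived from it — crucially using that the total probability mass purchasing expensive options is $O(\varepsilon^3)$, so the absolute revenue loss here is $O(\varepsilon^3) \cdot (\text{stuff}) = O(\varepsilon)\REV(D)$ after the relevant scaling. (3) For each item $i$, among all expensive options now associated with $i$, keep only the two most ``useful'' ones (e.g., the two achieving the highest revenue, or the two spanning the relevant price range), and argue that a buyer who would have chosen a discarded $i$-option is nearly as happy with one of the two retained ones — this is where the ``at most two per item'' bound in the second bullet comes from, and it should follow from a convexity/interpolation argument on the (price, probability-of-allocation) tradeoff curve for item $i$, analogous to how a single-item revenue curve can be approximated by few points. (4) Verify that the modified menu $M$ is still incentive compatible (i.e., re-interpret it as a menu so the Taxation Principle applies) and that the cumulative revenue loss across (2) and (3) is $O(\varepsilon)\REV(D)$.

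The main obstacle I expect is step (2) combined with step (3): quantifying exactly how much revenue is lost when we collapse a complicated randomized allocation $S$ (which could spread mass over many items in a correlated way) down to a single-item allocation, while ensuring the loss is charged only against the small $O(\varepsilon^3)$ probability mass and not against the full $\REV(D)$. Subadditivity gives $v(S) \leq \sum_i v(\{i\}) \cdot \mathbf{1}[i \in S]$-type bounds in expectation, but controlling the price the buyer is still willing to pay — rather than just the value — requires care, since lowering the allocation lowers the utility of \emph{every} buyer type, not just the intended one, which could cause unintended deviations elsewhere on the menu. I would handle this by the standard trick of only ever \emph{adding} options or \emph{weakly shrinking} allocations while simultaneously lowering prices to preserve the relevant buyers' utilities, and by invoking the "nudge" machinery (e.g., Corollary~\ref{cor:expectedNudge} / the nudge-and-round results referenced for Section~\ref{sec:nudge}) to absorb the small perturbations; but making the accounting airtight, especially to get the clean ``at most two per item'' rather than ``at most $O(1/\varepsilon)$ per item,'' is the delicate part, and is presumably where the ideas borrowed from~\cite{BabaioffGN17}'s Lemma~2.4 do the heavy lifting.
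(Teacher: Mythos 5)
Your outline (split the menu at price $E$, collapse expensive allocations to single-item ones, keep two expensive options per item) does track the skeleton of the paper's argument, but the accounting you yourself flag as the obstacle is a genuine gap, in two places. First, the bound you want in step (2) does not follow from the $O(\varepsilon^3)$ purchase probability alone: the distribution is unbounded, so the expected value sacrificed when a buyer's allocation $S$ is collapsed to a single item, restricted to the expensive-purchase event, is not controlled by the probability of that event. The paper needs two ingredients you do not have: a preliminary step (Proposition~\ref{prop:main1}, using Lemma~\ref{lem:RW15}) showing that the event that \emph{two or more} items exceed $H=\SREV(D)/\varepsilon$ contributes only $O(\varepsilon)\REV(D)$ and can be discarded, so that afterwards the residual value $v([n]\setminus\{i\})$ involves only singletons bounded by $H$; and Schechtman's concentration inequality for subadditive valuations over independent items (Proposition~\ref{prop:schechtman}, via Corollary~\ref{cor:Schechtman} and Proposition~\ref{prop:RW}), which bounds the conditional expectation of that residual by $6\BREV(D)+O(H)$. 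Only then does the allocation-nudge (Proposition~\ref{prop:carefulNudge}), whose loss is this expectation divided by $\varepsilon$, combined with the $O(\varepsilon^3)$-size tail probabilities from Claim~\ref{claim:sumpi}, come out to $O(\varepsilon)\REV(D)$. (Also, the paper does not pick one item $i(S)$ per expensive option: it replaces each expensive option by $n$ single-item options $X_i(S)$, and the relevant item in the analysis is the \emph{buyer's} favorite, not a property of the option.)

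Second, your route to ``at most two per item'' --- keeping the two most useful options and interpolating on the (price, probability) tradeoff curve --- would, as you yourself suspect, naturally yield $O(1/\varepsilon)$ options and only an approximate guarantee, whereas the paper gets exactly two by an exact single-dimensional argument. After Step 2, the expensive options for item $i$ constitute a menu for the one-dimensional distribution $D_i$ restricted to the expensive interval $C_i$; Lemma~\ref{lem:RZ} removes dominated options, Observation~\ref{obs:oneway} and Lemma~\ref{lem:otherway} decouple this one-dimensional problem from the cheap part $M^*$ up to $O(\varepsilon)\REV(D)$ (this decoupling, again via Proposition~\ref{prop:advancedNudge} plus concentration, is what rules out the unintended deviations you worry about --- not a generic ``only shrink allocations and lower prices'' heuristic), and Lemma~\ref{lem:wrapup3} shows via Myerson's ironed virtual values that the optimal one-dimensional menu containing the base option (probability $q_i$ at price $p_i$) consists of exactly that option plus a single full-allocation option priced using the Myerson reserve of $D_i\cdot\mathbb{I}(v(\{i\})\in C_i)$. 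So the ``two'' is a statement about optimal single-dimensional screening with a constrained base option, not an approximation of a tradeoff curve; without this (or an equivalent) argument your step (3) does not deliver the claimed bound.
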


The structure of the promised $M$ is identical to Lemma~2.4 of~\cite{BabaioffGN17}. The key difference is that we take $E \geq \REV(D)/\varepsilon^3$, versus their $E \geq n^3\REV(D)/\varepsilon^2$. This quantitative improvement is crucial for Theorem~\ref{thm:reduction}: without it, instead of reducing to $1/\varepsilon^4$-bounded distributions, we would only reduce to $\poly(n/\varepsilon)$-bounded distributions. Our positive results for $c$-bounded distributions require runtime/symmetric menu complexity exponential in $c$, so the quantitative difference is significant. The second difference is the extension to distributions which are subadditive over independent items (their Lemma~2.4 holds for additive). 

The second half of Theorem~\ref{thm:reduction} is Proposition~\ref{prop:reduction} below.\footnote{Again, Proposition~\ref{prop:reduction} is slightly less precise than Propositions~\ref{prop:reduce1} and~\ref{prop:reduction2}.}  We first need some definitions. Definition~\ref{def:exclusive} describes an operation which appears in~\cite{BabaioffGN17}, which takes a menu $M$ and replaces all ``expensive'' options in $M$ with options which award at most a single item with non-zero probability. Definition~\ref{def:excloptions} defines a new operation, which takes a menu $M$ and concatenates it with $n$ new options which each offer a single item deterministically.

\begin{defn}[Making a menu $E$-exclusive]\label{def:exclusive} For a given menu $M$, let the menu $M|_E$ (``$M$ made exclusive above $E$'') denote the menu constructed from $M$ as follows:
\begin{itemize}
\item For any $(S,p) \in M$ with $p \leq E$, add $(S, (1-\varepsilon)p)$ to $M|_E$.
\item For any $(S,p) \in M$ with $p > E$, and all items $i$, let $X_i(S)$ denote the (randomized) set that is $\{i\}$ with probability $\Pr[i \in S]$ and $\emptyset$ otherwise. Add $(X_i(S),(1-\varepsilon)p)$ to $M|_E$.
\end{itemize}
\end{defn}

\begin{defn}[Concatenating a menu with exclusive options]\label{def:excloptions} Let $\vec{r} \in \mathbb{R}^{[n]}$ be a vector of reserve prices, and $T \in \mathbb{R}$. Let also $(S_i, q_i)$ be the option in $M$ that would be purchased by a buyer with value $v(\cdot)$ satisfying $v(S) = T\cdot \mathbb{I}(i \in S)$.\footnote{Morally, one should think of $(S_i,q_i)$ as the option in $M$ which awards $i$ with highest probability. However, if $M$ has infinite menu complexity, then this option need not be well-defined. The definition is given as such to avoid overly cumbersome notation with supremums.} Then $M^{T,\vec{r}}$ (``$M$ concatenated with exclusive options $\vec{r}$'') is the menu $M$ with the $n$ additional options $\bigcup_i \big\{(\{i\}~,~q_i + r_i \cdot (1-\Pr[i \in S_i])) \big\}$, and then multiplying all prices by $(1-\varepsilon)$.

That is, concatenating a menu with exclusive options $\vec{r}$ adds, for all $i$, an option to purchase item $i$ deterministically. The price-per-additional-probability of getting item $i$ beyond what is already allocated by $S_i$ is $r_i$ (and then all prices are multiplied by $(1-\varepsilon)$).
\end{defn}

\begin{proposition}\label{prop:reduction} Let $M$ be the mechanism promised by Proposition~\ref{prop:main4}, and let $T\geq E/\varepsilon \geq \REV(D)/\varepsilon^4$. Also, let $p_i:=\sup_{r \geq T} \big\{r\cdot \Pr[v(\{i\})\geq r] \big\}$ and let $r_i\geq T$ be such that $r_i\cdot \Pr[v(\{i\})\geq r_i] \geq (1-\varepsilon)p_i$. Then:
\begin{itemize}
\item $(1-O(\varepsilon)) \cdot \REV_M(D) \leq \REV(D(T,\vec{p})).$
\item For any $M'$, we have $\REV_{(M'|_E)^{T,\vec{r}}}(D) \geq \REV_{M'}(D(T,\vec{p})) - O(\varepsilon)\cdot \REV(D)$.
\end{itemize}

\end{proposition}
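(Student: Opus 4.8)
\textbf{Proof plan for Proposition~\ref{prop:reduction}.}

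The plan is to prove the two bullets separately, both by exhibiting an explicit mapping of menu options and tracking how the truncation operation $D \mapsto D(T,\vec p)$ changes the buyer's incentives and the seller's collected revenue. Throughout, I will use the normalization promised after Proposition~\ref{prop:main4}: $M$ has the special structure that every option with $p > E$ touches at most one item, and each item is touched by at most two such expensive options. The parameter choices $T \geq E/\varepsilon \geq \REV(D)/\varepsilon^4$ are exactly what is needed to make the ``loss terms'' from truncation add up to $O(\varepsilon)\REV(D)$: there are $O(1)$ expensive options per item, the probability a truncation event hits item $i$ in a way that matters is tied to $p_i/(n^2(\max\{1,T\})^3)$, and $\sum_i p_i$ itself is $O(\REV(D))$ by a standard single-item-pricing bound (the ``$\srev$'' type estimate: $\sum_i \sup_{r}\{r\Pr[v(\{i\})\geq r]\} \leq$ the revenue of selling separately with reserves, which is $O(\REV(D))$ — or rather $O(\log n \cdot \REV(D))$, but here we only pay for the tail above $T \gg \REV(D)/\varepsilon^4$, which kills the extra factor).

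\textbf{First bullet: $(1-O(\varepsilon))\REV_M(D) \leq \REV(D(T,\vec p))$.} Here I would run $M$ itself (essentially unchanged) against a buyer drawn from $D(T,\vec p)$ and argue that revenue does not drop by more than $O(\varepsilon)\REV(D)$. The buyer from $D(T,\vec p)$ has the same valuation as a buyer from $D$ on the coordinates that were not truncated, and larger values on the truncated/boosted coordinates (truncation only ever raises a value to $T$ or to the huge value $n^2(\max\{1,T\})^3$ when $v(\{i\})\le T$ — wait, it can also lower a value from something $>T$ down to $T$). The key case analysis: (i) an item with $v(\{i\})>T$ is truncated down to $T$; since $M$'s options with $p\le E < T$ can extract value at most $E$ total anyway, and the expensive options touching that item have price $>E$ but the buyer still has value $\ge T \ge E/\varepsilon$ available, the buyer continues to buy something comparable; (ii) the boosting step to the huge value only helps the seller. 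The cleanest route is to couple $v \leftarrow D$ with $v' \leftarrow D(T,\vec p)$ on the same randomness, argue that for every fixed $v$ the option $v$ would have purchased in $M$ is still affordable-and-attractive to $v'$ up to a multiplicative $(1-O(\varepsilon))$ correction (using $T\geq E/\varepsilon$ so the at most two expensive options per item don't cause the buyer to ``switch away'' and lose more than an $\varepsilon$ fraction), and that the event where truncation genuinely distorts behavior has probability small enough that $\sum_i p_i \cdot(\text{that probability})= O(\varepsilon)\REV(D)$.

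\textbf{Second bullet: $\REV_{(M'|_E)^{T,\vec r}}(D) \geq \REV_{M'}(D(T,\vec p)) - O(\varepsilon)\REV(D)$.} This is the direction that actually needs the two new gadget operations (Definitions~\ref{def:exclusive} and~\ref{def:excloptions}), and I expect it to be the main obstacle. The idea: given an arbitrary menu $M'$ that does well against the \emph{truncated} distribution $D(T,\vec p)$, I must convert it into a menu that does comparably well against the \emph{original} $D$. A buyer from $D$ differs from one from $D(T,\vec p)$ precisely on items with $v(\{i\})>T$ (the truncation would cap them at $T$) and on the independent ``boost'' events. Making $M'$ exclusive above $E$ (replacing each expensive, possibly-bundled option $(S,p)$ by the family $(X_i(S),p)$) ensures that when we feed it a $D$-buyer whose value on some item is genuinely huge, she doesn't over-pay for a bundle whose other coordinates are worth less to her than the truncated buyer thought — this is where the subadditivity of the valuation is used, to bound $v(S)$ by $\sum_i v(X_i(S))$-type quantities, so breaking the bundle loses nothing. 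Then concatenating with the exclusive options $\vec r$ (one deterministic ``buy item $i$ at marginal price $r_i$'' per item, stacked on the option $S_i$ the truncated buyer with only item $i$ valuable would have chosen) recaptures the revenue that $M'$ was getting from the \emph{boost} events: on $D(T,\vec p)$, a buyer whose item $i$ got boosted to $n^2(\max\{1,T\})^3$ pays essentially $q_i + r_i$-ish; on $D$ the corresponding high-value event is $v(\{i\})\geq r_i$, which happens with probability $\Pr[v(\{i\})\geq r_i]$, and $r_i\Pr[v(\{i\})\geq r_i]\geq (1-\varepsilon)p_i$ is exactly chosen so the recaptured revenue matches the revenue $M'$ extracted via the boost (whose magnitude is governed by $p_i$ and the boost probability $p_i/(n^2(\max\{1,T\})^3)$, scaled by the huge value — these cancel). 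The global $(1-\varepsilon)$ price shave absorbs the slack so that incentive compatibility is preserved: a $D$-buyer weakly prefers the ``honest'' option over any option intended for a different type, because the $\varepsilon$-discount dominates the $O(\varepsilon^2)$-scale perturbations. The bookkeeping — showing all the error terms ($\sum_i p_i$ truncation losses, the two-expensive-options-per-item switching cost, the $(1-\varepsilon)$ price shave times total revenue) sum to $O(\varepsilon)\REV(D)$ — is the routine-but-delicate part, and is precisely where the improved bound $E \geq \REV(D)/\varepsilon^3$ from Proposition~\ref{prop:main4} (rather than $n^3\REV(D)/\varepsilon^2$) is what prevents a spurious $\poly(n)$ factor from creeping in.
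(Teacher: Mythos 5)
Your plan for the first bullet does not work, and the failure is structural rather than a matter of bookkeeping. You propose to run $M$ itself against $D(T,\vec p)$ and argue its revenue drops by only $O(\varepsilon)\REV(D)$, asserting that a buyer whose value is capped at $T$ ``continues to buy something comparable'' and that the boost ``only helps the seller.'' But the expensive options of $M$ can be priced far above $T$ (their prices are tied to the tail of $D_i$, e.g.\ $z_i(1-b_i)+y_i(b_i-a_i)$ in the paper's notation), and the revenue $M$ collects from values exceeding $T$ can be a constant fraction of $\REV(D)$: think of an item whose value is $10T$ with probability $\Theta(\REV(D)/T)$, so that essentially all of its revenue lives above $T$. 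After capping at $T$ such buyers refuse every option priced above $T$, and the boosted point mass does not rescue $M$, because within $M$ a boosted buyer pays at most the (fixed) price of some existing option, contributing only $\Pr[\text{boost}]\cdot O(p_{i,2})$, which is negligible since the boost probability is $p_i/(n^2(\max\{1,T\})^3)$. In other words, $\REV_M(D(T,\vec p))$ can be much smaller than $(1-O(\varepsilon))\REV_M(D)$; the first bullet is only true because its right-hand side is $\REV(D(T,\vec p))$, the optimum over \emph{all} menus, and the menu witnessing it must charge the boosted type a price scaling with $n^2(\max\{1,T\})^3$ times the ``leftover'' probability $\ell_i$, so that probability times price recovers $\approx p_i\ell_i$. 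This is precisely the content the paper supplies via Proposition~\ref{prop:reduce1} (an upper bound on $(1-O(\varepsilon))\REV_M(D\cdot\mathbb{I}(B))$ by $\sup_{M'}\{\REV_{M'}(D(T)\cdot\mathbb{I}(B))+\sum_i\ell_i(M')\REV(D_i\cdot\mathbb{I}(v(\{i\})\geq T))\}$, proved by first passing to an auxiliary $D'$ that zeroes out the other items when one item is high, then truncating, then un-truncating via Proposition~\ref{prop:advancedNudge}) together with Lemma~\ref{lem:modrev}, which converts the leftover-weighted bonus into actual revenue against the inserted point masses. Your sketch contains no analogue of the leftovers/\modrevmax\ mechanism, and without it the first bullet cannot be established.

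Your second bullet is closer to the paper's route (Proposition~\ref{prop:reduction2}): make $M'$ exclusive above $E$, concatenate the exclusive options at marginal price $r_i$, and use the choice $r_i\Pr[v(\{i\})\geq r_i]\geq(1-\varepsilon)p_i$ to recapture what $M'$ earned from the boosted types. Two cautions, though. First, the claim that subadditivity means ``breaking the bundle loses nothing'' is not right; breaking an expensive bundled option does lose the value of the other coordinates, and the paper controls this loss through concentration of subadditive valuations over independent items (Schechtman's inequality, Proposition~\ref{prop:RW}/Corollary~\ref{cor:Schechtman}), applied after restricting to the event $B$ that at most one item exceeds $H$ — a restriction your sketch never invokes but which is essential for those bounds. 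Second, the comparison between the $D$-buyer and the $D(T,\vec p)$-buyer is not a pointwise incentive argument absorbed by the $(1-\varepsilon)$ shave; it requires the careful coupling and the refined error $\delta_M$ of Proposition~\ref{prop:advancedNudge}, with the per-item tail probabilities summed via the $\SREV$-based bound of Claim~\ref{claim:sumpi}. As written, the proposal identifies the right gadgets for the second bullet but leaves the quantitatively load-bearing steps unproved, and its first bullet rests on a false claim.
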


Proposition~\ref{prop:reduction} establishes both that the optimal revenue for $D$ and $D(T,\vec{p})$ is close, and also that any mechanism for $D(T,\vec{p})$ can be efficiently transformed into one which achieves similar guarantees for $D$. Proposition~\ref{prop:reduction} has no real analogue in~\cite{BabaioffGN17}, but replaces their Lemma~2.5. Their Lemma~2.5 specifies a particular discretization of the ``cheap part'' of the menu promised in Lemma~2.4 (priced $\leq E$) which is compatible with the remaining $2n$ expensive options (priced $>E$). The entire challenge of this process is ensuring that a buyer with value $v(\cdot)$ who chooses to purchase one of the $2n$ expensive options from the promised $M$ does not all of a sudden wish to purchase a cheap option instead after discretization. As a result, one cannot simply view the cheap part and expensive part as separate subproblems. The main insight in~\cite{BabaioffGN17}'s Lemma~2.5 is that a particular discretization of the cheap part does not interfere with the $2n$ expensive options. The main insight in our Proposition~\ref{prop:reduction} is more general: the \emph{only} property of the cheap part which interacts with the expensive part is the maximum probability with which item $i$ is ever allocated (and this is captured in a somewhat roundabout way by the inserted point-masses at $n^2(\max\{1,T\})^3$ --- see Lemma~\ref{lem:modrev}). So in comparison to Lemma~2.5 in~\cite{BabaioffGN17}, the key contribution of Proposition~\ref{prop:reduction} is that it provides a true reduction from unbounded to almost-bounded distributions. Future work (and the remainder of the present work) can simply focus on almost-bounded distributions, rather than separately ensuring that the resulting menu is compatible with Proposition~\ref{prop:main4}. 

Observe that Theorem~\ref{thm:reduction} now follows from Propositions~\ref{prop:main4} and~\ref{prop:reduction}. For any $D \in \mathcal{D}$, $D(T,\vec{p})$ is almost $1/\varepsilon^4$-bounded. So if we can find a $(1-O(\varepsilon))$-approximation for $D(T,\vec{p})$, we can efficiently make it $E$-exclusive and concatenate the expensive options, and these changes increase the (symmetric) menu complexity by at most a factor of $n$, plus an additional $n$. Chaining the inequalities in Propositions~\ref{prop:main4} and~\ref{prop:reduction} establishes that the resulting menu is a $(1-O(\varepsilon))$-approximation. A more detailed outline and complete proofs can be found in Section~\ref{sec:reductionproofs}. 

\section{Overview: Symmetries and Optimal Mechanisms}\label{sec:symmetries}
With Theorem~\ref{thm:reduction} in hand, our task is now to design good menus for almost-bounded distributions. Unfortunately, there is not much prior work in this direction (even for bounded distributions). It is only known that, via an application of nudge-and-round arguments, one can discretize all values into multiples of $\varepsilon/n$ while losing at most an $\varepsilon$ fraction of the optimal revenue, at which point an exponentially large linear program can output an explicit description of the optimal mechanism (for the discretized distribution). Note that the linear program has size polynomial in the support of $D$, which would remain exponential in $n$ even if $D$ is unit-demand/additive and each marginal has support size $2$ (as $D$ is a product distribution). We overview this linear program in Section~\ref{sec:symmetriesproofs}.

Despite this simple result for bounded distributions, prior work of~\cite{BabaioffGN17} was the first to establish even that \emph{some} bounded menu complexity suffices for unbounded additive distributions over independent items, and Theorem~\ref{thm:reduction} now allows us to do the same for unbounded distributions which are subadditive over independent items. A proof of the following theorem appears in Appendix~\ref{app:subadd}, which formalizes the above paragraph and mostly follows from Theorem~\ref{thm:reduction}.

\begin{theorem}\label{thm:subadd} Let $\mathcal{D}$ be the class of distributions which are subadditive over $n$ independent items. Then for all $\varepsilon$, there exists a finite number $C(n,\varepsilon) < \infty$ such that $\mc(\mathcal{D},\varepsilon) \leq C(n,\varepsilon)$.
\end{theorem}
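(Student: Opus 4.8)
The plan is to combine Theorem~\ref{thm:reduction} with a standard ``nudge-and-round'' discretization and an exponential-size linear program. Since $\mathcal{D}$ (distributions subadditive over $n$ independent items) is closed under the canonical truncations (as remarked after the definition of canonical truncations, the subadditive/XOS/submodular classes are all closed under both truncations), Theorem~\ref{thm:reduction} applies directly: it suffices to exhibit, for the subclass $\mathcal{D}_B \subseteq \mathcal{D}$ of distributions that are also almost $1/\varepsilon^4$-bounded, a finite bound $C'(n,\varepsilon)$ with $\mc(\mathcal{D}_B, O(\varepsilon)) \leq C'(n,\varepsilon)$; then Theorem~\ref{thm:reduction} yields $\mc(\mathcal{D},\varepsilon) \leq n + n\cdot C'(n,\varepsilon) =: C(n,\varepsilon) < \infty$.

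So the work reduces to almost-bounded subadditive distributions. First I would normalize (as the excerpt notes one can) so that every $v(\{i\}) \in [0,1] \cup \{X\}$ with probability $1$, where $X$ is the single ``huge'' value; here an additive $O(\varepsilon)$-approximation to $\REV(D)$ translates to a multiplicative $(1-O(\varepsilon))$-approximation. Next, I would handle the huge point-mass $X$ separately from the bounded part: for each item $i$, the event $\{v(\{i\}) = X\}$ occurs with some probability $q_i$, and one can argue (e.g.\ since $\REV(D)$ is finite and by a union-bound / Bonferroni-type argument over which items realize their huge value) that these huge events contribute revenue that is captured, up to an $O(\varepsilon)$ additive loss, by a small number ($\poly(n)$, or even $O(n)$) of ``exclusive'' options each selling a single item at a reserve price near $X$ --- this is morally the same mechanism structure already isolated in Proposition~\ref{prop:main4}/Proposition~\ref{prop:reduction}. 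For the remaining bounded part, all singleton values lie in $[0,1]$, so by monotonicity and subadditivity every valuation is bounded by $n$; I would invoke the folklore discretization (round all values down to the nearest multiple of $\delta := \Theta(\varepsilon/n)$), which loses at most $O(\varepsilon)\cdot n \leq O(\varepsilon)\cdot\REV(D)$-ish in revenue after rescaling --- more carefully, one rounds value \emph{differences} or uses the nudge-and-round machinery of Section~\ref{sec:nudge} to make the loss multiplicative in $\REV(D)$ rather than additive in $n$. After discretization each marginal $D_i$ is supported on $O(n/\varepsilon)$ points (plus the atom at $X$), so the product distribution $D$ has support of size $(n/\varepsilon)^{O(n)}$, which is finite. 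Finally, for a distribution of finite support $N$, the optimal (randomized) mechanism can be computed by a linear program of size $\poly(N)$ whose variables are the menu options indexed by buyer types and whose constraints are incentive compatibility and individual rationality; this LP has an optimal vertex solution whose menu has at most $N = (n/\varepsilon)^{O(n)}$ distinct options. Taking $C'(n,\varepsilon)$ to be this quantity (plus the $\poly(n)$ exclusive options for the huge part) gives the required finite bound.

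The main obstacle I anticipate is the discretization step in the presence of subadditivity and the huge atom $X$: for additive valuations, rounding each $v_i$ independently is transparent, but for a general subadditive $v$ drawn from a product distribution we only have black-box access to $V(\vec X, \cdot)$, so I must be careful that rounding the \emph{marginals} induces a well-defined and still-subadditive discretized valuation, and that the revenue loss is controlled. The cleanest route is to apply the already-developed nudge-and-round arguments of Section~\ref{sec:nudge} (in particular Corollary~\ref{cor:expectedNudge}), which the excerpt explicitly flags as the tool that turns Theorem~\ref{thm:reduction} into concrete menu-complexity bounds; these handle exactly the subtlety of perturbing an almost-bounded subadditive instance while preserving the valuation class and losing only an $O(\varepsilon)$ fraction of revenue. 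With that corollary in hand the rest is bookkeeping: discretize, solve the finite LP, count the menu size, and chain the inequalities through Theorem~\ref{thm:reduction}. Since we only need \emph{existence} of a finite $C(n,\varepsilon)$ and make no claim about its magnitude, we do not need the sharper symmetry-exploiting arguments (Lemma~\ref{lem:carefulcoupling}) used elsewhere for the quasi-polynomial bounds.
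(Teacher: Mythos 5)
Your overall architecture matches the paper's: invoke Theorem~\ref{thm:reduction} to pass to (almost) $1/\varepsilon^4$-bounded subadditive distributions, discretize, and then bound the menu complexity by the (now finite) number of buyer types. However, the step where you actually obtain a finite type space has a genuine gap. You argue that "after discretization each marginal $D_i$ is supported on $O(n/\varepsilon)$ points\ldots so the product distribution $D$ has support of size $(n/\varepsilon)^{O(n)}$." For distributions that are subadditive over independent items, the marginal $D_i$ is an arbitrary distribution over an abstract attribute $X_i$, and the valuation is $v_{\vec X}(S)=V(\vec X,S)$; the value of a non-singleton set is \emph{not} determined by the singleton values $v(\{i\})$. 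So "rounding the marginals" is not even well-defined (the attribute spaces can be continuous and have nothing to do with numbers in $[0,1]$), and even after rounding all singleton values the number of distinct valuation \emph{functions} in the support can remain uncountable. You flag exactly this obstacle yourself, but your proposed resolution --- apply Corollary~\ref{cor:expectedNudge} so that the perturbation "preserves the valuation class" --- does not close it: that corollary only converts a coupling error into a revenue loss; it does not tell you how to produce a finitely supported $D'$, and preserving subadditivity-over-independent-items is neither needed nor what happens in the paper.

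The paper's proof of Theorem~\ref{thm:subadd} resolves this by discretizing the \emph{set function itself} rather than the marginals: draw $v\leftarrow D$ and define $v'(S)$ by rounding $v(S)$ down to the nearest multiple of $\varepsilon^2\REV(D)$ (zeroing values below $\varepsilon^2\REV(D)$), explicitly accepting that $D'$ need not be subadditive over independent items. Since $v(S)\leq n\REV(D)/\varepsilon^4$ for every $S$, each of the $2^n$ subsets takes one of at most $n/\varepsilon^6$ discretized values, so there are at most $(n/\varepsilon^6)^{2^n}$ distinct types, and the menu purchased by the optimal mechanism for $D'$ (with a $(1-\varepsilon)$ price discount, via Proposition~\ref{prop:advancedNudge} and Corollary~\ref{cor:expectedNudge}, each losing only $O(\varepsilon)\REV(D)$ since $v(S)-v'(S)\leq\varepsilon^2\REV(D)$ pointwise) has menu complexity at most that number. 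Note also that no LP is needed once you only want existence, and that your separate treatment of the huge atom via exclusive options is redundant given Theorem~\ref{thm:reduction}. If you replace your marginal-rounding step with this set-function rounding and the per-set coupling bound, your argument becomes essentially the paper's.
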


One special case where progress was made is if the underlying distribution $D$ is \emph{symmetric}~\cite{DaskalakisW12}. Specifically, if $D$ is invariant under all permutations in $\Sigma$,~\cite{DaskalakisW12} shows that the canonical LP referenced above can be simplified to have size only $|\text{support}(D)/ \Sigma|$ (that is, the LP needs to only consider one representative from each equivalence class in the support of $D$ under $\Sigma$), which allowed them to conclude a PTAS when $D$ was bounded and unit-demand over i.i.d. items. We recap (a slight generalization of) their main result below, and include a proof in Section~\ref{sec:symmetriesproofs} for completeness.

\begin{defn}[Invariant under item permutations] We say that $\Sigma$ is an \emph{item permutation group} if there exists a partition $T_1 \sqcup \ldots \sqcup T_s$ of $[n]$ such that $\Sigma$ is the subgroup generated by $\bigcup_i \{(x,y)\}_{x,y \in T_i}$. That is, $\Sigma$ is generated by all swaps of pairs of elements in the same $T_i$.\footnote{Put another way, $\Sigma$ contains exactly permutations which separately permute items in $T_i$, for all $i$.} We further let $s$ (the number of parts necessary for the partition) denote the \emph{partition size} of $\Sigma$.

We say that $D$ is \emph{symmetric with respect to $\Sigma$} if for all $\sigma \in \Sigma$, the distribution which first draws $v(\cdot) \leftarrow D$ and then outputs $v'(\cdot)$ with $v'(S):= v(\sigma(S))$ is itself the distribution $D$.
\end{defn}

For example, if $D$ is $k$-demand over i.i.d. items, then $D$ is symmetric with respect to the group of all permutations on $n$ items, and this permutation group has partition size one. If $D$ is $k$-demand over independent items and all values for even items are drawn i.i.d., and odd items are drawn i.i.d. (but from a different distribution than the even items), $D$ is symmetric with respect to the item permutation group $\Sigma$ with $T_1:=$ even items and $T_2:=$ odd items, and therefore $\Sigma$ has partition size two.

\begin{theorem}[\cite{DaskalakisW12}]\label{thm:symmetries} For any item permutation $\Sigma$ of partition size $s$, let $\mathcal{D}$ be the class of distributions $D$ which are $k$-demand over independent items, with each $|\text{support}(D_i)| \leq c$, and symmetric with respect to $\Sigma$. Then an optimal solution to $\revmax_{\mathcal{D}}$ can be found in time $\poly(n^{cs})$. Moreover, the mechanism output $M$ has $\ssmc(M) \leq n^{cs}$.
\end{theorem}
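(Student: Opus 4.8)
\textbf{Proof proposal for Theorem~\ref{thm:symmetries}.}

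The plan is to exhibit a compact linear program whose variables range over equivalence classes of types rather than individual types, argue that its optimal value equals $\REV(\mathcal{D})$ for any symmetric $D$, and then read off both the running time and the symmetric menu complexity bound from its size. First I would write down the ``canonical'' exponential-size LP for revenue maximization over a product distribution $D$ that is $k$-demand over independent items: variables are, for each type $v$ in $\text{support}(D)$, an allocation $x(v) \in [0,1]^n$ (with the $k$-demand feasibility constraint $\sum_i x_i(v) \le k$, and for $k=1$ just $\sum_i x_i(v) \le 1$) and a price $p(v)$; the constraints are individual rationality $v \cdot x(v) - p(v) \ge 0$ and incentive compatibility $v \cdot x(v) - p(v) \ge v \cdot x(v') - p(v')$ for all pairs $v, v'$; the objective is $\sum_v \Pr_D[v] \cdot p(v)$. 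By the Taxation Principle this LP computes $\REV(D)$ exactly, and a standard argument shows its (explicit) optimal solution yields a menu $M$ with $\mc(M) \le |\text{support}(D)|$.

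Next I would invoke symmetry. Since $D$ is symmetric with respect to the item permutation group $\Sigma$ of partition size $s$, the LP is invariant under the induced action of $\Sigma$ on types (permuting coordinates of $v$ and of $x(v)$ simultaneously), because $\Pr_D[v] = \Pr_D[\sigma(v)]$ for all $\sigma \in \Sigma$. By convexity (averaging a feasible solution over its $\Sigma$-orbit stays feasible and preserves the objective, as the LP is a maximization of a linear functional over a $\Sigma$-invariant polytope), there is an optimal solution that is itself $\Sigma$-symmetric: $x(\sigma(v)) = \sigma(x(v))$ and $p(\sigma(v)) = p(v)$. For such a solution it suffices to keep one representative type per $\Sigma$-orbit, and the IC/IR constraints collapse accordingly (IC between orbit representatives $v, v'$ need only be checked against the best permuted copy of $v'$, which is still a polynomial-sized family of constraints per pair of representatives). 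This gives an LP with one block of variables per equivalence class in $\text{support}(D)/\Sigma$. Since each marginal $D_i$ has support at most $c$, and $\Sigma$ has partition size $s$, the number of $\Sigma$-orbits of $\text{support}(D)$ is at most the number of ways to choose, for each of the $s$ parts, a multiset of values from a size-$c$ support over the part — which is bounded by $\binom{n+c}{c}^s = \poly(n^{cs})$. Hence the reduced LP has $\poly(n^{cs})$ variables and constraints and is solvable in time $\poly(n^{cs})$.

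Finally I would translate the solution back into a menu and count its symmetric menu complexity. The $\Sigma$-symmetric LP solution gives, for each orbit representative $v$, an allocation $x(v)$ and price $p(v)$; the associated menu option is the (randomized) allocation realizing marginals $x(v)$ together with price $p(v)$, and — crucially — for every $\sigma \in \Sigma$ the option $(\sigma(S), p(v))$ is also present (these are exactly the options for the types in the orbit of $v$). So the menu can be written as $\bigcup_{v \in \text{reps}} \{(S_v, p_v, \Sigma)\}$ with the number of representatives $\le n^{cs}$, giving $\ssmc(M) \le n^{cs}$. For the $k>1$ case one should note that a point in the $k$-demand allocation polytope with marginals $x(v)$ can be realized as a distribution over feasible deterministic sets (a Birkhoff-von-Neumann / matroid-polytope decomposition); this is standard and the $\Sigma$-symmetry of the marginals lets one pick a $\Sigma$-equivariant decomposition. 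The step I expect to be the main obstacle — or at least the one requiring the most care — is making the IC constraints genuinely finite after quotienting: a priori IC constraints run over all pairs of types, and after identifying orbits one must argue that checking IC against one ``canonical alignment'' of each orbit (rather than all $|\Sigma|$ permuted copies, which could be super-polynomially many) suffices; this works because for a buyer of type $v$ the best deviation to the orbit of $v'$ is achieved by the permutation aligning $v'$'s allocation with $v$'s values, and identifying that alignment reduces to a polynomially-sized optimization within each of the $s$ parts. The rest is routine LP duality/feasibility bookkeeping and the counting of orbits.
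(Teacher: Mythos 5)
Your proposal follows essentially the same route as the paper: one LP block per $\Sigma$-equivalence class of types, an orbit count of at most $n^{cs}$, and a final menu of the form $\bigcup_{\text{reps}}\{(S_v,p_v,\Sigma)\}$, which immediately gives $\ssmc(M)\le n^{cs}$; the only cosmetic difference is that you derive the reduced LP by symmetrizing an optimal solution of the full LP (valid, since the IC/IR polytope is $\Sigma$-invariant and convex), whereas the paper writes the orbit-reduced LP directly.

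The one step you flag as ``requiring the most care'' is indeed the crux, and as written it is not yet an LP: the ``best permuted copy of $v'$'' depends on the variable $x(v')$, so the constraint $v\cdot x(v)-p(v)\ge \max_{\sigma\in\Sigma}\{v\cdot\sigma(x(v'))\}-p(v')$ is a max of linear functions over potentially super-polynomially many permutations, not a single linear inequality. The paper closes this exactly the way you gesture at, but makes it explicit: choose each orbit representative with values sorted in decreasing order within every part $T_i$, and add the linear monotonicity constraints $x_i(\vec v)\ge x_j(\vec v)$ for all $i\sim j$, $i\le j$ (imposing this is without loss, since the orbit of any symmetric menu option contains its sorted copy). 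With both the values and the allocations sorted within each part, the rearrangement inequality says the identity permutation is every representative type's favorite alignment of every other representative's option, so a single IC constraint per ordered pair of representatives suffices and the program stays a genuine LP of size $\poly(n^{cs})$. With that constraint added, your argument matches the paper's proof.
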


Our applications of Theorem~\ref{thm:symmetries} will first discretize a distribution into one which is symmetric with respect to a $\Sigma$ of low partition size, and also where each marginal is supported on not many values. Theorem~\ref{thm:iiddemand} considers i.i.d. items, where the input is already heavily symmetric and we just need to bound the loss from discretizing values. We treat this case as a warmup, but even here our results make quantitative improvements on~\cite{DaskalakisW12}, and also now extend their work to the unbounded case due to Theorem~\ref{thm:reduction}. In particular, observe that Theorem~\ref{thm:iiddemand} concludes a PTAS for distributions which are unit-demand over i.i.d. items, and a QPTAS for distributions which are additive over i.i.d. items.

\begin{theorem}\label{thm:iiddemand} Let $\mathcal{D}$ be the class of valuations which are $k$-demand over i.i.d. items. Then $\wsmc(\mathcal{D},\varepsilon) \leq n^{O(\ln(k)/\varepsilon^{12})}$, and there exists a $(1-\varepsilon)$-approximation to $\revmax_\mathcal{D}$ which runs in time $n^{O(\ln(k)/\varepsilon^{12})}$.
\end{theorem}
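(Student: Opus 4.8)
The plan is to combine three ingredients already available: (1) the reduction from unbounded to almost-bounded distributions (Theorem~\ref{thm:reduction}), (2) a nudge-and-round discretization that turns an almost-bounded i.i.d.\ distribution into one with few distinct support values while losing only an additive $O(\varepsilon)$ fraction of the optimal revenue, and (3) the symmetric LP of Theorem~\ref{thm:symmetries}, which solves $\revmax$ exactly and outputs a menu of bounded strong symmetric menu complexity when the distribution is $k$-demand over independent items with small marginal support and is invariant under a low-partition-size item permutation group. Since the items are i.i.d., the relevant permutation group is the full symmetric group $S_n$, which has partition size $s=1$, so no re-symmetrization is needed — the input symmetry is free. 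The only thing to engineer is the discretization, and then to bound $n^{cs}$ with $s=1$ and $c$ the number of distinct support values.

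First I would invoke Theorem~\ref{thm:reduction} with the class $\mathcal{D}$ of $k$-demand-over-i.i.d.\ distributions. Note that $\mathcal{D}$ is closed under the max truncation for $k=1$ and the additive truncation for $k=n$; for general $k$ one needs to check that $k$-demand over independent items is closed under (at least) one canonical truncation — the additive truncation preserves the $k$-demand structure since a point-mass at a huge value only ever helps, and the truncated distribution is still i.i.d.\ because the truncation acts identically on each coordinate when $\vec p$ is constant (which it is, as all marginals are equal). This reduces the problem to achieving a $(1-O(\varepsilon))$-approximation on $D(T,\vec p)$, which is almost $1/\varepsilon^4$-bounded and still i.i.d.; equivalently, after normalizing (as in Section~\ref{sec:prelim}), each $D_i$ is supported on $[0,1]\cup\{W\}$ and we need only an additive $O(\varepsilon^5)$-approximation to $\REV$.

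Next, the discretization. On the almost-bounded i.i.d.\ instance, I would apply a nudge-and-round argument (Corollary~\ref{cor:expectedNudge} / the folklore discretization, roughly: nudge all values up slightly and round down to the nearest multiple of $\delta$) to reduce to a distribution whose marginal is supported on $\{0,\delta,2\delta,\dots,1\}\cup\{W\}$, i.e.\ with $c=O(1/\delta)+1$ distinct values, at a cost of an additive $O(\varepsilon^5)$ or so in revenue. The subtle point, which is standard but must be handled, is that the nudge must be calibrated so the loss is genuinely additive in $\REV$ and not multiplicative in $\VAL$ (which could be huge because of the $W$); this is exactly what the "advanced" nudge-and-round of Section~\ref{sec:nudge} is designed for, and it forces $\delta = \varepsilon^{O(1)}$ — tracking through the $\varepsilon^5$ from the reduction and the additional polynomial losses gives $\delta = \varepsilon^{O(1)}$, hence $c = 1/\varepsilon^{O(1)}$. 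The discretized distribution remains i.i.d.\ and $k$-demand over independent items. I expect the bookkeeping of these nested $\varepsilon$-losses — making sure every rounding step is additive against $\REV$ and the exponents compose to the claimed $1/\varepsilon^{12}$ — to be the main obstacle, though it is routine rather than conceptually deep.

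Finally, apply Theorem~\ref{thm:symmetries} with $s=1$ and $c=1/\varepsilon^{O(1)}$: it produces in time $\poly(n^{c})=n^{O(1/\varepsilon^{O(1)})}$ an optimal menu for the discretized instance with $\ssmc \le n^{c}=n^{O(1/\varepsilon^{O(1)})}$. Walking the approximations back: this menu is a $(1-O(\varepsilon))$-approximation on $D(T,\vec p)$, so by the second bullet of Proposition~\ref{prop:reduction} (making it $E$-exclusive and concatenating the $n$ expensive exclusive options, as packaged in Theorem~\ref{thm:reduction}) we get a $(1-O(\varepsilon))$-approximation on the original unbounded $D$ with $\wsmc \le n + n\cdot n^{O(1/\varepsilon^{O(1)})} = n^{O(1/\varepsilon^{O(1)})}$, and the whole computation runs in $n^{O(1/\varepsilon^{O(1)})}$ time. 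Rescaling $\varepsilon$ by a constant and being careful to track that the $\ln(k)$ enters only through the $n^{cs}$ bound — actually through $c$: the relevant support size after discretizing a $k$-demand distribution is effectively $O(\ln(k)/\delta)$ because one only needs $O(\ln(k))$ "scales" of values to preserve $k$-demand revenue up to $(1-\varepsilon)$ (a point we would need to verify, e.g.\ by first truncating each coordinate's contribution to a logarithmic range as in the core-tail decompositions of prior work) — yields the stated bound $n^{O(\ln(k)/\varepsilon^{12})}$ for both the symmetric menu complexity and the running time. For $k=1$ this is $n^{O(1/\varepsilon^{12})}$, a genuine PTAS after the $\ln(k)=0$ simplification (in fact $\ln 2$), and for $k=n$ it is quasi-polynomial, as claimed.
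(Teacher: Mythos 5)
Your overall architecture matches the paper's (reduce via Theorem~\ref{thm:reduction} to an almost-bounded i.i.d.\ instance, discretize the single marginal, then apply the symmetric LP of Theorem~\ref{thm:symmetries} with partition size $s=1$), but the discretization step you actually propose does not prove the theorem. You round values to an \emph{additive} grid $\{0,\delta,2\delta,\dots,1\}\cup\{W\}$ and claim $c=O(1/\delta)$ with $\delta=\varepsilon^{O(1)}$. For a $k$-demand buyer, rounding each item down by up to $\delta$ changes the value of a purchased set by up to $k\delta$, so the coupling error is $\Theta(k\delta)$; after normalizing the almost $1/\varepsilon^4$-bounded instance (so $\REV=\Theta(\varepsilon^4)$ and the target additive error is $O(\varepsilon^5)$), Corollary~\ref{cor:expectedNudge} forces $\delta=O(\varepsilon^{O(1)}/k)$, hence $c=\Omega(k/\varepsilon^{O(1)})$ and an LP/menu bound of $n^{\Omega(k/\varepsilon^{O(1)})}$ --- exponential when $k=n$, i.e.\ you lose exactly the quasi-polynomial bound for additive over i.i.d.\ items and the claimed $n^{O(\ln(k)/\varepsilon^{12})}$ dependence. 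The paper avoids this with the canonical value-discretization (Definition~\ref{def:canon}): round values down to powers of $(1-\delta)$ and zero out values below $\delta\REV(D)/k$. Then the multiplicative rounding loses at most $\delta\,\VAL(D)=O(\delta t)\REV(D)$ (Proposition~\ref{prop:RW}) and the zeroing loses at most $k\cdot\delta\REV(D)/k=\delta\REV(D)$ --- both independent of $k$ (Proposition~\ref{prop:iidcouples}, Corollary~\ref{cor:coupledist}) --- while the number of distinct values is only $O(\ln(kt/\delta)/\delta)$, i.e.\ logarithmic in $k$. Your closing parenthetical ("$O(\ln k)$ scales suffice, a point we would need to verify, e.g.\ by core-tail truncating each coordinate") gestures at this but defers it and points to the wrong mechanism: the $\ln(k)$ comes from the geometric grid spanning $[\delta\REV(D)/k,\,t\REV(D)]$, not from a per-coordinate core-tail argument. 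Since this is precisely the step that makes the theorem true for all $k$, it is a genuine gap rather than routine bookkeeping.

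Two smaller points. First, you never actually track the exponent: the paper gets $1/\varepsilon^{12}$ from $t=1/\varepsilon^4$-boundedness after the reduction, $\delta=\Theta(\varepsilon^2/t)$, and support size $O(\ln(k)/\delta^2)=O(t^2\ln(k)/\varepsilon^4)$ (Proposition~\ref{prop:iiddemand}); with your additive grid the exponent would in any case carry a factor of $k$. Second, the paper does not feed the point mass $W$ into the symmetric LP directly; it routes through $\modrevmax$ (Theorem~\ref{thm:symmetries2}, Theorem~\ref{thm:reduction2}, Lemma~\ref{lem:modrev}), where the huge value is replaced by per-item leftover weights $w_i$ (all equal in the i.i.d.\ case, so symmetry is preserved for free). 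Your direct treatment of $W$ as one more support value, and your assertion that $k$-demand is closed under the additive truncation "because a point mass only helps," would both need more care --- the additive truncation does not literally preserve the $k$-demand structure for $1<k<n$ --- but these are fixable presentation issues, unlike the discretization.
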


Our main application of Theorem~\ref{thm:symmetries} will be on \emph{arbitrary} distributions which are unit-demand over independent items. That is, the initial distribution might have no symmetries whatsoever. Still, we show that it is possible to discretize the distribution in a way which creates symmetries. Note that discretizing only the values clearly no longer suffices, as there are fully asymmetric distributions even when each marginal has support size two. So we additionally need to discretize the probabilities. The main challenge here is that unless our discretization is excessively fine (that is, too fine to improve the runtime/menu complexity), there will almost certainly be \emph{some} item for which the original and the discretized values are quite different. So we need to carefully dive into the details of an advanced nudge-and-round argument (Proposition~\ref{prop:advancedNudge}) to figure out exactly which item values contribute to lost revenue. This careful dive is possible for unit-demand valuations because optimal menus award at most one item without loss of generality. For additive valuations, there is a barrier to this approach, which we expound in Section~\ref{sec:approachLimits}.

\begin{theorem}\label{thm:unit}Let $\mathcal{D}$ be the class of valuations which are unit-demand over independent items. Then there exists a $(1-\varepsilon)$-approximation to $\revmax_\mathcal{D}$ which runs in time $n^{O(\ln(n/\varepsilon))^{1/\varepsilon^{7}})}$ and $\wsmc(\mathcal{D},\varepsilon) \leq n^{O(\ln(n/\varepsilon))^{1/\varepsilon^{7}})}$.
\end{theorem}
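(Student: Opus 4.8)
The plan is to combine the reduction of Theorem~\ref{thm:reduction} with the symmetry-exploiting machinery of Theorem~\ref{thm:symmetries}, but the latter cannot be applied out of the box since an arbitrary distribution which is unit-demand over independent items need not be symmetric with respect to any non-trivial item permutation group, and may have $n$ distinct marginals each with large support. So the first step is to invoke Theorem~\ref{thm:reduction} to reduce to the almost $1/\varepsilon^4$-bounded case: it suffices to produce an additive $O(\varepsilon^5)$-approximation to $\revmax_{\mathcal{D}_B}$, where $\mathcal{D}_B$ is the class of almost $1/\varepsilon^4$-bounded distributions which are unit-demand over independent items, with the final weak symmetric menu complexity and runtime only losing a multiplicative factor of $n$ (plus an additive $n$). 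After normalizing, every $D_i$ in such an instance is supported on $[0,1] \cup \{X\}$ for a common huge $X$.

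The heart of the argument is then to show that any such almost-bounded $D$ is ``close'' (in the metric of Definition~\ref{def:advancedCoupling}) to some $D'$ which is still almost-bounded and unit-demand over independent items but has at most $s = \ln(n)^{1/\varepsilon^{O(1)}}$ distinct marginals, and where each marginal has support bounded by some function of $\varepsilon$ only; this is what Lemma~\ref{lem:carefulcoupling} should deliver. The reason a crude discretization does not work is that rounding values alone preserves asymmetry (two items with support-size-two marginals at different value levels stay distinct), so one must also discretize the \emph{probabilities}; and any discretization coarse enough to keep the parameter count small will necessarily move \emph{some} item's true value far from its rounded value. The key observation that rescues this is that for a unit-demand buyer the optimal menu allocates at most one item with nonzero probability (without loss of generality), so in the advanced nudge-and-round analysis (Proposition~\ref{prop:advancedNudge}) one can pin down precisely which item's value, in a given realization, is ``responsible'' for the purchased option, and argue that the total revenue lost from coarse rounding on that single responsible item is small — whereas for additive buyers, where every item can simultaneously matter, this fails (the barrier of Section~\ref{sec:approachLimits}). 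Concretely I would bucket items by their ``revenue contribution'' or by the shape of their truncated marginal into roughly $\ln(n)^{1/\varepsilon^{O(1)}}$ groups, within each group replace the marginals by a single common rounded marginal, couple the original and rounded instances realization-by-realization, and bound the expected revenue gap via Proposition~\ref{prop:advancedNudge} using unit-demandness to charge the loss to the single active item.

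Having produced $D'$ with $s \leq \ln(n)^{1/\varepsilon^{O(1)}}$ distinct marginals each of support $c = c(\varepsilon)$, and symmetric with respect to the item permutation group $\Sigma$ whose parts are the $s$ marginal-classes, I apply Theorem~\ref{thm:symmetries} to solve $\revmax$ for $D'$ (equivalently $\mathcal{D}$ restricted to this symmetry class) exactly in time $\poly(n^{cs}) = n^{c(\varepsilon)\cdot \ln(n)^{1/\varepsilon^{O(1)}}} = n^{O(\ln(n/\varepsilon))^{1/\varepsilon^{O(1)}}}$, obtaining a menu $M'$ with $\ssmc(M') \leq n^{cs}$ of the same form. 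Because $D'$ and $D$ are close in the advanced-coupling metric, $M'$ (possibly after a further nudge-and-round to make it compatible, via Proposition~\ref{prop:advancedNudge}/Corollary~\ref{cor:expectedNudge}) is an additive $O(\varepsilon^5)$-approximation on the almost-bounded instance $D$; feeding this back through Theorem~\ref{thm:reduction} yields a $(1-\varepsilon)$-approximate menu for the original unbounded instance with $\wsmc \leq n + n \cdot n^{cs} = n^{O(\ln(n/\varepsilon))^{1/\varepsilon^{O(1)}}}$ and the same runtime bound, after rescaling $\varepsilon$ by a constant and absorbing the polynomial overhead.

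The main obstacle I expect is Lemma~\ref{lem:carefulcoupling}: getting the number of distinct marginals down to quasi-polylogarithmic while keeping the revenue loss additively $O(\varepsilon^5)$ requires a genuinely careful choice of which features of a marginal to preserve under rounding and a delicate coupling argument, and it is here that the restriction to unit-demand (at-most-one-item-allocated) valuations is essential — the analysis must track, for each realized type, the unique item that determines the buyer's chosen menu option and show the coarse rounding only perturbs that item's contribution by $O(\varepsilon^5)$ in expectation. Everything else (the two applications of Theorem~\ref{thm:reduction} and Theorem~\ref{thm:symmetries}, and the bookkeeping on menu complexity and runtime) is routine given the tools already assembled.
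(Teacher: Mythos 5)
Your architecture is exactly the paper's: reduce to the (almost-)bounded case via Theorem~\ref{thm:reduction}, discretize both values and probabilities so that the instance becomes symmetric with respect to an item-permutation group of partition size $\ln(n/\varepsilon)^{1/\varepsilon^{O(1)}}$, bound the loss via the refined metric $\delta_M$ of Proposition~\ref{prop:advancedNudge} using the fact that (unit-)demand menus allocate at most one item ($k$ items) with probability one, solve the symmetric instance with the LP of Theorem~\ref{thm:symmetries}, and feed the result back through the reduction. This matches Definition~\ref{def:unitcouple}, Lemma~\ref{lem:carefulcoupling}, Proposition~\ref{prop:unit} and the final assembly via Theorem~\ref{thm:reduction2}; your sketch of the coupling (per-realization, charge the loss to the single allocated item) is also the right idea behind Lemma~\ref{lem:carefulcoupling}, whose actual proof couples item-by-item (keep $v'_i=v_i$ with probability $q'_i(v_i)/q_i(v_i)$, else zero it), zeroes values of probability below $\delta^2/n^2$, and exploits that $S_M(v)$ is independent of $v'$ so the sup-then-expectation in $\delta_M$ can be evaluated item-by-item over the at most $k$ allocated items.

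The one step that would fail as literally written is your treatment of the almost-bounded instance with the common huge atom $X$. Lemma~\ref{lem:carefulcoupling}'s error bound scales as $O(\delta t k)\cdot\REV(D)$ and genuinely requires $t$-boundedness; the $X$-atoms violate this, and their probabilities are \emph{always} tiny (of order $w_i/(n^2 T^3)$) while their revenue contributions $w_i$ can sum to $\Theta(\REV(D))$, so any recipe that drops low-probability atoms or bounds the loss by ``value times rounding probability'' is either catastrophic or vacuous for them. The paper sidesteps this by never symmetrizing the $X$-atoms at all: it works with \modrevmax, in which the expensive exclusive options are encoded as leftover-weights $w_i$; only the $1/\varepsilon^4$-bounded part of $D$ is value/probability-discretized via Lemma~\ref{lem:carefulcoupling}, while the $w_i$'s are separately rounded to powers of $(1-\delta)$ (with a threshold $\varepsilon\REV(D)/n$, and a trivial empty-menu case when some $w_i\geq\REV(D)/\varepsilon$) and folded into the item-equivalence classes, which is why Theorem~\ref{thm:symmetries2} requires $\vec{w}$ to be $\Sigma$-invariant. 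You could patch your version by discretizing the $X$-atom probabilities multiplicatively only (never zeroing them, thresholding instead on their revenue contribution) and accounting for them separately in the coupling, but some such separate treatment of the unbounded atoms is needed and is missing from your plan; the rest is the paper's proof.
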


Finally, Theorem~\ref{thm:constant} below establishes that a similarly careful nudge-and-round yields a quasi-polynomial approximation scheme for distributions where each marginal has support at most $c$ (even if that support is distinct for each marginal). Recall that even when $c =2$, no subexponential approximation schemes are previously known, and this is left open by~\cite{DaskalakisDT14}.

\begin{theorem}\label{thm:constant} Let $\mathcal{D}$ be the class of valuations which are $k$-demand over independent items and satisfy $\ |\text{support}(D_i)|\leq c$ for all $i$. Then there exists a $(1-\varepsilon)$-approximation to $\revmax_\mathcal{D}$ which runs in time $n^{O(\ln(n/\varepsilon))^{c}}$ and $\wsmc(\mathcal{D},\varepsilon) \leq n^{O(\ln(n/\varepsilon))^{c}}$.
\end{theorem}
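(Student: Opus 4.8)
The plan is to combine Theorem~\ref{thm:reduction} with Theorem~\ref{thm:symmetries} via a careful discretization argument analogous to the one sketched for Theorem~\ref{thm:unit}, but with parameters tuned to the constant-support regime. First I would invoke Theorem~\ref{thm:reduction}: since the class of $k$-demand distributions over independent items with $|\text{support}(D_i)| \leq c$ is closed under the appropriate canonical truncation (additive truncation for additive/$k$-demand, max truncation for unit-demand — and in all cases a truncated $D_i$ still has support of size at most $c+1$, which is still $O(1)$), and each such $D$ is subadditive over independent items, it suffices to obtain an additive $O(\varepsilon^5)$-approximation for $\revmax_{\mathcal{D}_B}$ where $\mathcal{D}_B$ consists of almost $1/\varepsilon^4$-bounded members of this class. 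The reduction costs only a factor of $n$ (plus an additive $n$) in weak symmetric menu complexity and runs in $\poly(n,1/\varepsilon)$ time, so it will not affect the quasi-polynomial bound.

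Next I would apply a nudge-and-round argument (Proposition~\ref{prop:advancedNudge} / Corollary~\ref{cor:expectedNudge}) to the almost-bounded instance. After normalization each $v(\{i\})$ lies in $[0,1] \cup \{W\}$ for a common large $W$; round each of the (at most $c$, or $c+1$) support points of the bounded part of $D_i$ down to the nearest multiple of some $\delta = \poly(\varepsilon)$. This produces a distribution $D'$ where each marginal is supported on at most $c+1$ values drawn from a common grid of size $1/\delta + 1 = \poly(1/\varepsilon)$ — crucially independent of $n$. The key observation is that because each marginal has only $O(c)$ atoms, the \emph{profile} of $D_i$ (which grid points it uses, with which probabilities, after also discretizing probabilities to a fine-enough grid) can take only $(\poly(1/\varepsilon))^{O(c)}$ distinct values, so at most $\ln(n/\varepsilon)^{O(c)}$ — wait, more carefully: with probabilities discretized to precision $1/\poly(n,1/\varepsilon)$ there are $\poly(n,1/\varepsilon)^{O(c)}$ profiles, which is already too many. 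The right move, as in Theorem~\ref{thm:unit}, is to discretize probabilities on a \emph{multiplicative} grid (powers of $(1+\varepsilon)$ down to some threshold $\approx \varepsilon/n$), giving only $O(\log_{1+\varepsilon}(n/\varepsilon))^{O(c)} = \ln(n/\varepsilon)^{O(c)}$ distinct marginal profiles; items sharing a profile are then i.i.d., so the resulting $D'$ is symmetric with respect to an item permutation group $\Sigma$ of partition size $s = \ln(n/\varepsilon)^{O(c)}$, with each $|\text{support}(D'_i)| \leq c+1$.

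Then Theorem~\ref{thm:symmetries} applies directly to $D'$: an optimal menu $M'$ for $\revmax$ on $D'$ is found in time $\poly(n^{(c+1)s}) = n^{O(\ln(n/\varepsilon))^{c}}$ with $\ssmc(M') \leq n^{(c+1)s} = n^{O(\ln(n/\varepsilon))^{c}}$. Finally I would feed $M'$ back through the revenue-preservation direction of Proposition~\ref{prop:reduction} (the $(M'|_E)^{T,\vec{r}}$ construction) to obtain a menu for the original unbounded instance, multiplying $\wsmc$ by $n$ and adding $n$ — still $n^{O(\ln(n/\varepsilon))^{c}}$. The one piece requiring real care is quantifying the revenue loss in the nudge-and-round step: unlike the value-only discretization which loses a clean $\varepsilon$ fraction, discretizing the \emph{probabilities} multiplicatively can badly distort which menu option a given buyer type selects, and one must argue (via Proposition~\ref{prop:advancedNudge}, exploiting that $k$-demand optimal menus allocate few items) that the net effect on revenue is only $O(\varepsilon)$. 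I expect this to be the main obstacle — it is exactly the "careful dive" alluded to in the discussion preceding Theorem~\ref{thm:unit}, and here one additionally needs the almost-bounded structure from Theorem~\ref{thm:reduction} so that the huge point-mass $W$ is handled by the exclusive options and does not enter the discretization at all.
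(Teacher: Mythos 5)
Your skeleton is exactly the paper's route (reduce via Theorem~\ref{thm:reduction}/\ref{thm:reduction2}, multiplicatively discretize probabilities so that only $\ln(n/\varepsilon)^{O(c)}$ distinct marginal profiles remain, invoke Theorem~\ref{thm:symmetries}, and map back via $(M'|_E)^{T,\vec{r}}$), but two concrete pieces are missing or mis-stated. First, the claim that the huge point mass $W$ ``is handled by the exclusive options and does not enter the discretization at all'' is wrong as stated. The bounded-side instance produced by the reduction is \emph{not} a plain bounded $\revmax$ instance: it either carries the inserted point masses at $n^2(\max\{1,T\})^3$ with item-specific probabilities, or equivalently (Lemma~\ref{lem:modrev}) it is a $\modrevmax$ instance with item-specific weights $w_i$. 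These per-item quantities are precisely what forces the bounded-side menu to leave leftovers for the expensive exclusive options, and Theorem~\ref{thm:reduction2} needs the bounded-side menu to approximately maximize $\REV_M(D(T))+\sum_i \ell_i(M) r_i\Pr[v(\{i\})\geq r_i]$, not just $\REV_M(D(T))$. Consequently they must be folded into the symmetrization: the paper's proof (Proposition~\ref{prop:kdemconstant}) first disposes of the case $w_i\geq \REV(D)/\varepsilon$ with the empty menu, then rounds the $w_i$ down to powers of $(1-\delta)$ (zeroing those below $\varepsilon\REV(D)/n$) and defines $i\sim j$ only when both the discretized marginals \emph{and} $\underline{w_i}=\underline{w_j}$ agree, which is what keeps the partition size at $O(\ln(n/\delta))^{c}$ while letting Theorem~\ref{thm:symmetries2} apply. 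Your proposal never says how the $w_i$ (or the point-mass probabilities) become $\Sigma$-invariant, and simply dropping them would invalidate the reduction's guarantee.

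Second, the step you yourself flag as ``the main obstacle'' --- bounding the revenue loss from discretizing probabilities --- is the actual content of the theorem and is not carried out. In the paper this is Lemma~\ref{lem:carefulcoupling}: one couples $D$ with $D^*_\delta$ by keeping each coordinate with probability $q'_i(v_i)/q_i(v_i)$ and zeroing it otherwise, and then bounds the refined quantity $\delta_M(D,D^*_\delta)=O(\delta t k)\cdot\REV(D)$ using that every menu option awards at most $k$ items and that $S_M(v)$ is chosen independently of the coupling coins; the plain $\delta(D,D')$ of Corollary~\ref{cor:expectedNudge} does not suffice here. Moreover your stated parameters do not work even granting that lemma's structure: zeroing atoms of probability up to $\varepsilon/n$ when item values can be as large as $t\cdot\REV(D)$ with $t=1/\varepsilon^4$ costs, in the coupling bound, up to $c\varepsilon\cdot\REV(D)/\varepsilon^4$, which swamps $\varepsilon\REV(D)$. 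The threshold must be polynomially smaller (the paper uses $\delta^2/n^2$ with $\delta=O(\varepsilon^2/(tk))$) so that the ``rare'' event has probability $O(\delta/n)$ and contributes only $O(\delta tk)\REV(D)/n$. So the plan is right and matches the paper, but the careful-coupling error bound and the symmetrization of the weights are genuine gaps that your writeup leaves open rather than closes.
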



\section{Overview: Selling Separately with Low Symmetric Menu Complexity}\label{sec:srev}
One justified critique of menu complexity is that it assigns menu complexity $2^n$ to the ``selling separately'' mechanism, which places price $p_i$ on each item and allows the buyer to purchase any set $S$ for price $\sum_{i \in S} p_i$. Symmetric menu complexity is still imperfect in this regard: if all $p_i$ are distinct, the menu will still have strong/weak symmetric menu complexity $2^n$. 

\cite{BabaioffGN17} provide a nice response to this critique, by proving that while technically selling separately is deemed to have $2^n$ menu complexity, for every $M$ which sells separately and for all $D$ which are additive over independent items, there exists another $M'$ for which $\REV_{M'}(D) \geq (1-\varepsilon)\REV_M(D)$ and $\mc(M') \leq n^{1/\varepsilon^{O(1)}}$.\footnote{Note that ``selling separately'' is not obviously simple when $D$ is not additive over independent items, so it is not clear that one should expect/demand such an $M'$ unless $D$ is additive over independent items.} So while the definition of menu complexity is certainly still imperfect, we at least now know that if (for instance) a distribution $D$ admits no good mechanisms of polynomial menu complexity, it is not because selling separately is close to optimal.

We provide an even stronger response in the case of symmetric menu complexity: when $D$ is additive over independent items, and $M$ sells separately, there exists another $M'$ for which $\REV_{M'}(D) \geq (1-\varepsilon)\REV_M(D)$ and $\ssmc(M') \leq f(\varepsilon)n$. That is, the blow up from the ``intrinsic complexity'' (or description complexity) of selling separately $n$ items to the strong symmetric menu complexity of a menu that is almost as good is just a multiplicative factor independent of $n$. The proof can be found in Section~\ref{sec:srevproofs}.

\begin{theorem}\label{thm:apxsrev} Let $D$ be additive over independent items and $M$ be a mechanism which sells separately. Then there exists a mechanism $M'$ with $\REV_{M'}(D) \geq (1-\varepsilon)\REV_M(D)$ and $\ssmc(M') \leq f(\varepsilon)n$, for $f(\varepsilon) = 2^{O(1/\varepsilon^3)}$.
\end{theorem}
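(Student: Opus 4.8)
## Proof Proposal for Theorem~\ref{thm:apxsrev}

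\textbf{Setup and high-level strategy.} Let $M$ sell item $i$ separately at price $p_i$, so $\REV_M(D) = \sum_i \REV_{M_i}(D_i)$ where $M_i$ is the single-item Myerson mechanism at price $p_i$ and $D$ is additive over independent items. The plan is to build a replacement menu $M'$ that is symmetric under a small item-permutation group, i.e. $\ssmc(M') \le f(\varepsilon)\cdot n$ with $f(\varepsilon) = 2^{O(1/\varepsilon^3)}$. The key observation is that selling separately has ``intrinsic complexity'' $n$ because there are only $n$ relevant single-item prices; if we can \emph{bucket} these $n$ prices into $O(1/\varepsilon^3)$ geometric classes such that within each class the mechanism behaves (up to $(1-\varepsilon)$) as if all items in the class had a common price, then the menu becomes symmetric under the group that permutes items within each bucket --- a group of partition size $s = 2^{O(1/\varepsilon^3)}$, and with roughly a constant number of distinct options per bucket-type, giving the claimed bound.

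\textbf{Step 1: Discard tail contributions and normalize.} First, normalize so $\REV_M(D) = 1$ (or track $\REV(D)$ explicitly). Discard all items whose individual revenue $\REV_{M_i}(D_i)$ is below $\varepsilon/n \cdot \REV_M(D)$; these contribute at most $\varepsilon\,\REV_M(D)$ in total, so we lose only an $\varepsilon$ factor. For the surviving items, $\REV_{M_i}(D_i) \in [\varepsilon/n, 1]\cdot \REV_M(D)$, a range of width $\poly(n/\varepsilon)$, which we can further partition into $O(\log(n/\varepsilon))$ powers-of-two buckets by revenue magnitude --- but this alone gives $\log n$ buckets, not constant. To get down to constant we need the subtler point: within a revenue bucket, we apply the standard fact (folklore, à la the core-tail decomposition or the ``one item at a time'' argument for additive buyers) that selling at a single common reserve price to all items in the bucket --- chosen as (roughly) the revenue-optimal single price for a ``typical'' item in the bucket --- recovers a $(1-\varepsilon)$ fraction of that bucket's revenue, provided the bucket is further refined so that the optimal single-item prices $p_i$ themselves lie within a $(1\pm\varepsilon)$ multiplicative window. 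Refining simultaneously by revenue scale and by price scale costs $O(\log(n/\varepsilon))\times O(\log(n/\varepsilon))$ buckets, still not constant --- so I would instead argue directly that only $O(1/\varepsilon^3)$ buckets in the \emph{price} variable suffice, using that prices outside a multiplicative window of width $(n/\varepsilon)^{O(1/\varepsilon)}$ around the ``center of mass'' contribute negligibly, and within that window $O(\log_{1+\varepsilon}(\cdot)) = O(1/\varepsilon \cdot \log(n/\varepsilon))$... this is the crux and I return to it below.

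\textbf{Step 2: Symmetrize within buckets and count.} Once items are partitioned into buckets $T_1 \sqcup \dots \sqcup T_s$ with $s = 2^{O(1/\varepsilon^3)}$, replace $M$ by the mechanism $M'$ that, for each bucket $T_j$, offers the buyer the option to buy any subset of items in $T_j$ at a common per-item price $q_j$ (the bucket's representative price). The menu $M'$ is invariant under the item-permutation group $\Sigma$ of partition size $s$ generated by all within-bucket swaps. By Definition~\ref{def:excloptions}-style reasoning and Theorem~\ref{thm:symmetries}'s accounting, $M'$ has $\ssmc(M') \le $ (number of distinct option-types), which for ``buy-any-subset-within-buckets-at-uniform-price'' is $O(s)$ (one nontrivial generator per bucket, since all size-$k$-within-$T_j$ subsets at price $kq_j$ collapse under $\Sigma$ to a single equivalence class, giving $|T_j|$ classes per bucket and $\sum_j |T_j| = n$ total) --- hence $\ssmc(M') \le f(\varepsilon)\cdot n$ with $f(\varepsilon) = O(s) = 2^{O(1/\varepsilon^3)}$. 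Then $\REV_{M'}(D) \ge (1-O(\varepsilon))\REV_M(D)$ by summing the per-bucket guarantees from Step 1; rescaling $\varepsilon$ finishes.

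\textbf{Main obstacle.} The hard part is Step 1's claim that only $O(1/\varepsilon^3)$ price-buckets suffice, independent of $n$. Naively, the relevant prices span a range depending on $n$, forcing $\log n$ buckets. The resolution --- and where I expect the real work and the $1/\varepsilon^3$ exponent to come from --- is a two-sided truncation argument analogous to the ``nudge-and-round'' machinery (Corollary~\ref{cor:expectedNudge} / Proposition~\ref{prop:advancedNudge}): items whose optimal price is much larger than the revenue-weighted ``center'' can be sold via a handful of dedicated exclusive options (their total count is small because their total revenue is bounded), and items whose optimal price is much smaller can be absorbed into a single low-price bucket with negligible loss; what remains is a bounded multiplicative window whose logarithm is $O(1/\varepsilon)$ in appropriate units, and geometric bucketing at ratio $(1+\varepsilon)$ yields $O(1/\varepsilon^2)$ further buckets, with an extra $1/\varepsilon$ factor entering from the revenue-scale refinement needed to make the uniform-price substitution within each bucket lossless --- giving $2^{O(1/\varepsilon^3)}$ overall once these are combined multiplicatively and one accounts for the group-theoretic overhead. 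I would expect to borrow the precise truncation thresholds and the per-item revenue-splitting lemma from the toolkit already developed for Theorem~\ref{thm:reduction}.
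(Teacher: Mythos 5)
Your Step 2 contains the decisive error: for the menu you propose (buy \emph{any} subset, at a uniform per-item price $q_j$ within each bucket $T_j$), the orbits under the within-bucket permutation group are indexed by the vector $(|S\cap T_1|,\dots,|S\cap T_s|)$, so the strong symmetric menu complexity is $\prod_j (|T_j|+1)$, not $\sum_j |T_j|=n$. The equivalence classes multiply across buckets rather than add; with $s$ buckets this is of order $(n/s)^s = n^{\Theta(s)}$, which is polynomial in $n$ of degree $\Theta(1/\varepsilon^3)$, far from the claimed $f(\varepsilon)\cdot n$. So ``selling separately with bucket-uniform prices'' simply does not have linear symmetric menu complexity, and no refinement of the bucketing in Step 1 can repair this counting.

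The paper fixes this by changing the \emph{mechanism}, not just the prices: items are split into high, medium, and low price classes, and the buyer is restricted to purchase \emph{at most one} item from each exclusive bucket (high items form one such bucket $B_0$ keeping their individual prices; medium items are packed into buckets whose total purchase probability is at most $\varepsilon$ each), plus an all-or-nothing option on the bundle $B$ of all low-priced items. With this restriction an orbit is specified by one choice of $B_0$-item (at most $n$) times a binary take/skip decision per bucket, giving $\ssmc \le n\cdot 2^{k+1}$ with $k=O(1/\varepsilon^3)$ buckets --- this is where the $2^{O(1/\varepsilon^3)}$ actually comes from (you also conflate it with the number of buckets, which is only $O(1/\varepsilon^3)$). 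The revenue argument then needs two ideas your proposal does not supply: (i) because each medium bucket has purchase-probability mass at most $\varepsilon$ (and the high bucket at most $\varepsilon^2$), a union bound shows a buyer who wants some item in a bucket wants a second item of that bucket with probability at most $\varepsilon$, so the at-most-one-per-bucket restriction loses only an $\varepsilon$ fraction; and (ii) for the low items ($p_i\le \varepsilon^3\,\srev(D)$), either their revenue is negligible or a Chernoff bound shows $\sum_{i\in B}p_i\mathbb{I}(v_i\ge p_i)$ concentrates, so a single bundle price recovers $(1-\varepsilon)$ of it. The bucket count $O(1/\varepsilon^3)$ then follows because medium prices span only a $1/\varepsilon^5$ multiplicative range (so $O(\ln(1/\varepsilon)/\varepsilon)$ geometric levels, one ``light'' bucket each) and every ``heavy'' bucket carries revenue at least $(\varepsilon/2)\cdot\varepsilon^3\srev(D)$, so there are at most $2/\varepsilon^3$ of them. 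Your Step 1 gestures toward the high/low truncation but never formulates the probability-mass-per-bucket packing or the exclusivity/bundling structure, which are exactly what make both the revenue bound and the linear symmetric-menu-complexity bound go through.
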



\section{Setup for Complete Proofs}\label{sec:setup}
Below we introduce the necessary concepts to follow our complete proofs. First, our proofs will make use of the following additional revenue bounds:
\begin{itemize}
\item $\BREV(D)$: the optimal revenue achievable by ``bundling together'' (that is, putting a single take-it-or-leave-it price $p$ on $[n]$ and letting the buyer purchase iff $v([n]) \geq p$).
\item $\srev(D)$: the optimal revenue achievable by ``selling separately'' (that is, putting a take-it-or-leave-it price $p_i$ on each item $i$ and letting the buyer purchase the subset maximizing $v(S) - \sum_{i \in S} p_i$).
\item $\SREV(D)$: the optimal revenue achievable by ``selling exclusively'' (that is, putting a price $p_i$ on each singleton set $\{i\}$ and letting the buyer purchase the item maximizing $v(\{i\}) - p_i$)).\footnote{This serves the same purpose as $\SREV_{\vec{q}}(D)$ from~\cite{RubinsteinW15}, and could be substituted everywhere in their proofs. We find this definition to be conceptually cleaner.}
\end{itemize}

\subsection{Basic Lemmas}
Our work will make use of the following basic definition and lemma. Some definitions we use will instantiate a conditional distribution, and may be conditioned on probability zero events. To ease notation, we will define any such distribution to be ``null'' (that is, output $0$ with probability one).\footnote{We only use conditional distributions to analyze a restricted distribution. So if the conditioned event has probability $0$, the restricted distribution is null and requires no analysis. For ease of exposition, we will not explicitly note this every time it occurs.}

\begin{defn}[Conditional and Restricted Distributions] For a given event $E$, we define the distribution $D|E$ (``$D$ conditioned on $E$'') to be the distribution which outputs a valuation $v(\cdot)$ drawn from $D$, conditioned on event $E$. We also define $D\cdot \mathbb{I}(E)$ (``$D$ restricted to $E$'') to be the distribution which samples a valuation $v(\cdot)$ from $D$, then outputs $v(\cdot)$ if event $E$ occurs, and the zero function otherwise. 
\end{defn}

\begin{observation}[\cite{HartN17}] For all $D, E$ and  mechanisms $M$, we have $\Pr[E]\cdot \REV_M(D|E) = \REV_M(D\cdot \mathbb{I}(E))$.
\end{observation}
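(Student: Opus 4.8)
The plan is a one-line computation via linearity of expectation, after pinning down how a buyer with the identically-zero valuation behaves. For a mechanism $M$ (a menu of (randomized allocation, price) pairs, implicitly containing $(\emptyset,0)$), write $\pi_M(v)$ for the price the buyer with valuation $v(\cdot)$ pays, i.e. the price component of $\arg\max_{(S,p)\in M\cup\{(\emptyset,0)\}}\{v(S)-p\}$, so that $\REV_M(D')=\mathbb{E}_{v\leftarrow D'}[\pi_M(v)]$ for every distribution $D'$. The first step is to observe that $\pi_M(\vec 0)=0$: a buyer with the zero valuation gets utility $v(S)-p=-p\le 0$ from every option $(S,p)$, so her favorite option has price $0$ (e.g. $(\emptyset,0)$), and no matter how ties among the zero-price options are resolved she pays $0$.

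Given that, the second step handles the main case $\Pr[E]>0$. Drawing $v'\leftarrow D\cdot\mathbb{I}(E)$ is the same as drawing $v\leftarrow D$ and outputting $v$ when $E$ occurs and $\vec 0$ otherwise, so
\[
\REV_M(D\cdot\mathbb{I}(E))=\mathbb{E}_{v\leftarrow D}\big[\mathbb{I}(E)\,\pi_M(v)+\mathbb{I}(\bar E)\,\pi_M(\vec 0)\big]=\mathbb{E}_{v\leftarrow D}\big[\mathbb{I}(E)\,\pi_M(v)\big],
\]
where the $\bar E$ term vanished because $\pi_M(\vec 0)=0$. Pulling out $\Pr[E]$ then gives $\mathbb{E}_{v\leftarrow D}[\mathbb{I}(E)\,\pi_M(v)]=\Pr[E]\cdot\mathbb{E}_{v\leftarrow D}[\pi_M(v)\mid E]=\Pr[E]\cdot\REV_M(D|E)$, which is exactly the claimed identity.

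The last step is the degenerate case $\Pr[E]=0$, handled via the paper's convention that $D|E$ is then the null distribution. In that case $\REV_M(D|E)=\pi_M(\vec 0)=0$, so the left side is $0\cdot 0=0$, while $D\cdot\mathbb{I}(E)$ is the zero function almost surely, so the right side is also $\pi_M(\vec 0)=0$; both sides agree. There is no real obstacle here: the only points requiring care are the zero-valuation/tie-breaking observation (which is where the implicit $(\emptyset,0)$ option and non-negativity of prices get used) and the probability-zero convention, and everything else is bookkeeping.
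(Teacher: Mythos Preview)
Your proof is correct; the paper states this as an observation attributed to~\cite{HartN17} without supplying a proof, and your argument (zero valuation pays zero, linearity of expectation, plus the null-distribution convention for the $\Pr[E]=0$ case) is exactly the standard justification.
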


Lemma~\ref{lem:sds} below simply argues that we can count the revenue of a mechanism for distribution $D$ by counting the revenue of $M$ from each of $D$'s subdomains separately. This simple lemma is a surprisingly useful starting point for much of the work on approximately-optimal auctions.

\begin{lemma}[Sub-Domain Stitching~\cite{HartN17}]\label{lem:sds} Let $Y_1,\ldots, Y_k$ partition $\text{support}(D)$. Then for any menu $M$:
$$\REV_M(D) = \sum_i \REV_M(D \cdot \mathbb{I}(v \in Y_i)) \quad \text{and} \quad \REV(D) \leq \sum_i\REV(D\cdot \mathbb{I}(v \in Y_i)).$$
\end{lemma}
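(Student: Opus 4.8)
\textbf{Proof proposal for Lemma~\ref{lem:sds} (Sub-Domain Stitching).}

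The plan is to prove the two statements separately, with the equality for $\REV_M$ being the core computation and the inequality for $\REV$ following quickly from it. First I would unpack the definition of the restricted distribution $D\cdot \mathbb{I}(v\in Y_i)$: a sample is drawn $v(\cdot)\leftarrow D$, and then $v(\cdot)$ is output if $v\in Y_i$ and the zero function $\bo$ otherwise. The key observation is that the $Y_i$ partition $\text{support}(D)$, so for each realized valuation $v(\cdot)$ there is exactly one index $j$ with $v\in Y_j$. Hence, for a fixed menu $M$, the revenue extracted from $v$ (i.e. the price of the option $\arg\max_{(S,p)\in M\cup\{(\emptyset,0)\}}\{v(S)-p\}$, call it $\text{rev}_M(v)$) appears in exactly one term of $\sum_i \REV_M(D\cdot \mathbb{I}(v\in Y_i))$, namely $i=j$; in every other term that realization contributes $\text{rev}_M(\bo)=0$ since the zero valuation always (weakly) prefers $(\emptyset,0)$. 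Writing $\REV_M(D)=\Exp{v\leftarrow D}{\text{rev}_M(v)}=\sum_i \Exp{v\leftarrow D}{\text{rev}_M(v)\cdot \mathbb{I}(v\in Y_i)}$ by linearity of expectation and the fact that the indicators sum to $1$, and matching each summand with $\REV_M(D\cdot\mathbb{I}(v\in Y_i))$, gives the equality. (It is worth noting that this argument only uses that $M$ is a fixed menu and the buyer is utility-maximizing; one could also cite the Observation of~\cite{HartN17} together with $\Pr[E]\cdot\REV_M(D|E)$, but the direct expectation argument is cleanest.)

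For the inequality, let $M^*$ be an optimal (or near-optimal, then take a limit) menu for $D$, so $\REV(D)=\REV_{M^*}(D)$ (or $\REV(D)\le \REV_{M^*}(D)+\delta$ for arbitrary $\delta>0$, handling the supremum). Applying the equality just proved to $M^*$ yields $\REV(D)=\sum_i \REV_{M^*}(D\cdot\mathbb{I}(v\in Y_i))$. Now bound each term by its supremum over all menus: $\REV_{M^*}(D\cdot\mathbb{I}(v\in Y_i))\le \REV(D\cdot\mathbb{I}(v\in Y_i))$, since $\REV$ of a distribution is defined as the sup of $\REV_M$ over all $M$. Summing gives $\REV(D)\le \sum_i \REV(D\cdot\mathbb{I}(v\in Y_i))$, as desired. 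If one wants to be careful about $\REV(D)$ being a supremum rather than a maximum, run the argument with a menu achieving within $\delta$ and let $\delta\to 0$; finiteness of $\REV(D)$ (assumed throughout) ensures this is harmless.

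I do not expect a serious obstacle here — this is a routine but genuinely useful bookkeeping lemma. The one point requiring mild care is the treatment of the zero/null valuation: one must confirm that restricting to the complement of $Y_i$ and outputting $\bo$ contributes exactly zero revenue under any menu (because the buyer can always decline, selecting $(\emptyset,0)$), so that the restricted revenues genuinely add up without cross-terms. A second minor subtlety, already flagged, is that $\REV(\cdot)$ is a supremum, so the second statement is an inequality rather than an equality and the optimal menu for $D$ need not be simultaneously optimal on each subdomain; the $\delta$-approximation workaround covers this cleanly.
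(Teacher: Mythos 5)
Your proof is correct: the equality follows exactly as you say, since each realized $v$ lies in precisely one $Y_i$ and the zero valuation contributes zero revenue under any menu (all prices are non-negative, so it never strictly prefers a positively priced option), and the inequality follows by applying the equality to a (near-)optimal menu and bounding each restricted term by $\REV(D\cdot\mathbb{I}(v\in Y_i))$. The paper does not reprove this lemma (it is cited from~\cite{HartN17}), and your argument is the standard one that citation refers to; the only streamlining available is to note that the bound $\REV_M(D)\le\sum_i\REV(D\cdot\mathbb{I}(v\in Y_i))$ holds for every menu $M$, so one can take the supremum over $M$ directly rather than invoking a $\delta$-approximate menu and a limit.
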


\subsection{Modified Revenue Maximization}
As mentioned in Section~\ref{sec:reduction}, the key property that matters for interaction between the ``cheap part'' of a menu and the ``expensive part'' is the maximum probability with which each item $i$ is ever awarded for a cheap price. We also noted that, in a somewhat roundabout fashion, this is captured by adding a point-mass to a bounded distribution. Making this roundabout connection results in cleaner theorem statements, because we do not need to introduce a new problem, but the more conceptually clear approach is to design a modified revenue maximization problem, which we call \modrevmax. We will also formally establish the connection between almost-bounded distributions and \modrevmax\ (so that our cleaner theorem statements are completely proven), but our proofs will directly deal with \modrevmax, before transferring to almost-bounded distributions at the very end via Lemma~\ref{lem:modrev} below.

\begin{defn}[Leftovers of a mechanism] For any menu $M$, define $\ell_i(M):= 1-\sup_{(S,p) \in M} \big\{\Pr[i \in S]\big\}$ to be the fraction of item $i$ which is never allocated in $M$. We refer to $\vec{\ell}(M)$ as the \emph{leftovers} of $M$.
\end{defn}

\begin{defn}[Modified Revenue Maximization]
A $(1-\varepsilon)$-approximation for $\modrevmax_{\mathcal{D},\mathcal{W}}$ takes as input a distribution $D \in \mathcal{D}$, and item weights $w_i \geq 0$ so that $\vec{w} \in \mathcal{W}$, and outputs an implicit description of a menu $M$ such that $\REV_M(D) + \sum_i w_i \ell_i (M) \geq (1-\varepsilon)\sup_{M'}\big\{\REV_{M'}(D) + \sum_i w_i \ell_i(M') \big\}$.
\end{defn}

Intuitively, \modrevmax\ can be thought of as the following: before the buyer drawn from $D$ arrives, the seller can set aside an $\ell_i(M)$ fraction of item $i$, receiving bonus revenue $\sum_i w_i \ell_i(M)$ for it, and then run mechanism $M$ and receiving revenue $\REV_M(D)$. The connection to almost-bounded distributions is as follows: if the support of $D_i$ is (say) $[0,1] \cup \{w_i/p_i\}$, then the maximum price we can charge a buyer with value $w_i/p_i$ to receive item $i$ with probability $1$, while using menu $M$, is $\ell_i(M)w_i/p_i + n$. So if we further have that $\Pr[v(\{i\}) = w_i/p_i] = p_i$, and $w_i \gg 1$, and $p_i \ll 1/n$, then our revenue for using menu $M$ concatenated with exclusive options to purchase item $i$ at $w_i/p_i$ is roughly $\REV_M(D) + \sum_i \ell_i(M) w_i$. We make this formal with Lemma~\ref{lem:modrev} below, whose proof is in Appendix~\ref{app:setup}.

\begin{lemma}\label{lem:modrev} Let $\mathcal{D}$ be a class of distributions that is subadditive over independent items, and $c$-bounded. Let also $\mathcal{D}'$ denote the closure of $\mathcal{D}$ under either of the canonical truncations. Then there is a polytime reduction (which makes a single black-box call) from $(1-\varepsilon)$-approximations for $\modrevmax_{\mathcal{D},\mathcal{W}}$ to $(1-O(\varepsilon))$-approximations for $\revmax_{\mathcal{D}'}$. Moreover, if the menu produced on the call to $\revmax_{\mathcal{D}'}$ has (symmetric) menu complexity $C$, the menu output for $\modrevmax_{\mathcal{D},\mathcal{W}}$ has (symmetric) menu complexity $C$ as well.
\end{lemma}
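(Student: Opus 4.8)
The plan is to encode a $\modrevmax_{\mathcal{D},\mathcal{W}}$ instance as a $\revmax_{\mathcal{D}'}$ instance, representing each item weight $w_i$ as a tiny point mass at an enormous value. Given $(D,\vec w)$ with $D\in\mathcal{D}$, first rescale so that $D$ is $1$-bounded (rescaling $\vec w$ in tandem), so $\REV(D)$ is known and $v([n])\le n$ for every $D$-buyer. Pick a polynomially-large $T$ so that $X:=n^2(\max\{1,T\})^3$ exceeds $\max_i w_i$ as well as $\poly(n)\cdot\|\vec w\|_1/\varepsilon^2$ (only a crude upper bound on $\REV(D)$ is needed), and set $D':=D(T,\vec w)$. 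Since $D$ has no value above $T$, sampling from $D'$ just means: draw $v\leftarrow D$ and, independently for each $i$ with probability $w_i/X$, reset $v(\{i\})$ to $X$; hence $D'\in\mathcal{D}'$. Write $E_i$ for ``item $i$, and no other, received the point mass'', $E_0$ for ``no item did'', and $E_{\mathrm{mult}}$ for ``at least two items did'', so that $\{E_0,E_1,\dots,E_n,E_{\mathrm{mult}}\}$ partitions the support, $\Pr[E_0]\ge1-\|\vec w\|_1/X\ge1-\varepsilon$, $\Pr[E_i]=(1\pm\varepsilon)w_i/X$, and (by the choice of $X$) the entire revenue obtainable on $E_{\mathrm{mult}}$ is $\le\varepsilon\REV(D)$. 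One preliminary reduction I will use throughout: the optimal $\modrevmax$ menu may be assumed to have all prices $\le n$, since a $D$-buyer never picks a costlier option, while deleting such options only weakly raises every leftover $\ell_i(\cdot)$, hence the bonus $\sum_i w_i\ell_i(\cdot)$.

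For the ``$\ge$'' half I would show $\REV(D')\ge(1-O(\varepsilon))\cdot\sup_{M}\big\{\REV_{M}(D)+\sum_i w_i\ell_i(M)\big\}$. Fix a near-optimal $\modrevmax$ menu $M$ with prices $\le n$ and consider $\tilde M:=M^{X,\vec r}$ with $r_i=X$, in the notation of Definition~\ref{def:excloptions} (that is, $M$ together with, for each $i$, a deterministic option for item $i$ priced $\approx q_i+X\ell_i(M)$, all prices then scaled by $1-\varepsilon$). By Sub-Domain Stitching (Lemma~\ref{lem:sds}) over $\{E_0,\dots,E_{\mathrm{mult}}\}$: on $E_0$ (where $D'=D$) every buyer keeps using the $M$-part of $\tilde M$, because the added options cost more than $1$ whenever $\ell_i(M)>0$ and, when $\ell_i(M)=0$, are weakly dominated by the option of $M$ already awarding $i$ with probability $1$; this contributes $\ge(1-O(\varepsilon))\REV_M(D)$. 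On $E_i$ a buyer values item $i$ at $X$ and everything else at $\le1$, so (using $q_i\le n$ and $X$ enormous) she strictly prefers the new exclusive option for $i$ whenever $\ell_i(M)\ge2n/(\varepsilon X)$, contributing $\ge(1-O(\varepsilon))w_i\ell_i(M)$; the items with $\ell_i(M)<2n/(\varepsilon X)$ together ``lose'' at most $\sum_i w_i\cdot2n/(\varepsilon X)\le\varepsilon\REV(D)$, and $E_{\mathrm{mult}}$ is negligible. Summing gives the bound.

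For the reduction proper (the ``$\le$'' half), I would run the assumed $(1-O(\varepsilon))$-approximation for $\revmax_{\mathcal{D}'}$ on $D'$ to obtain $M'$ with $\REV_{M'}(D')\ge(1-O(\varepsilon))\REV(D')$, and output the menu $M:=\{(S,p)\in M':p\le E\}$ for a polynomially-large $E\ge n$. Then $\mc(M)\le\mc(M')$, the quantities $\ssmc(M),\wsmc(M)$ are no larger than those of $M'$ (a price threshold retains whole symmetry classes), and $M'$'s implicit description already computes the $\argmax$ over $M$ on every bounded valuation (a $D$-buyer never picks a price $>n$). It then remains to prove $\REV_M(D)+\sum_i w_i\ell_i(M)\ge\REV_{M'}(D')-O(\varepsilon)\REV(D)$; chaining this with the ``$\ge$'' half gives a $(1-O(\varepsilon))$-approximation for $\modrevmax_{\mathcal{D},\mathcal{W}}$, and rescaling $\varepsilon$ yields the precise form stated. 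Decomposing $\REV_{M'}(D')$ over $\{E_0,E_i,E_{\mathrm{mult}}\}$: the $E_0$ term is $\le\REV_{M'}(D)=\REV_M(D)$, the $E_{\mathrm{mult}}$ term is $\le\varepsilon\REV(D)$, and on $E_i$ a buyer valuing item $i$ at $X$ picks some $(S,p)\in M'$ but still has available the \emph{cheap} option $S_i\in M$ maximizing $\Pr[i\in S_i]=1-\ell_i(M)$, for which $v(S_i)\ge X(1-\ell_i(M))$ (monotonicity) while $v(S)\le X\Pr[i\in S]+(n-1)\le X+(n-1)$ (subadditivity); since $(S,p)$ is at least as good for her as $S_i$ and the price of $S_i$ is $\le E$, we get $p\le v(S)-v(S_i)+E\le X\ell_i(M)+n+E$. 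Hence the $E_i$ term is $\le(w_i/X)(X\ell_i(M)+n+E)=w_i\ell_i(M)+O\!\big(w_i(n+E)/X\big)$, and summing over $i$ (again by the choice of $X$) gives $\le\sum_i w_i\ell_i(M)+O(\varepsilon\REV(D))$.

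The main obstacle is precisely this last price bound: a buyer whose only large value is $X$, for item $i$, cannot be charged more than roughly $X\ell_i(M)$ by \emph{any} option of $M'$ — including the expensive, potentially item-$i$-heavy options that $M$ throws away. The reason, and the one essential use of subadditivity and monotonicity, is that such a buyer always retains the cheap option $S_i$, which already awards item $i$ with probability $1-\ell_i(M)$, so no option can be worth more than about $X\ell_i(M)+\poly(n)$ extra to her; this realizes the informal principle stated just before the lemma, that ``the only property of the cheap part which interacts with the expensive part is the maximum probability with which item $i$ is ever allocated.'' The remaining work is bookkeeping: choosing $T$ (hence $X$) large enough that $E_{\mathrm{mult}}$ and all $O(\poly(n)/X)$ error terms sum to $O(\varepsilon\REV(D))$, and invoking the standard fact that scaling every price of a menu by $1-\varepsilon$ costs at most a $1-\varepsilon$ factor of its revenue.
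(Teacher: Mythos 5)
Your proposal is correct and follows essentially the same route as the paper's proof: the same canonical truncation encoding each weight $w_i$ as a point mass at $X=n^2(\max\{1,T\})^3$ occurring with probability $w_i/X$, the same two directions (append exclusive options to a near-optimal $\modrevmax$ menu to lower-bound the truncated revenue, and prune the expensive options of the black-box menu for the converse), and the same key incentive argument that a buyer whose only huge value is for item $i$ still has access to a cheap option awarding $i$ with probability $1-\ell_i(M)$ and hence can never be charged more than roughly $X\ell_i(M)+\poly(n)$. The one local slip is your $E_0$ justification — when $0<\ell_i(M)\ll 1/X$ the appended option need not cost more than $1$ — but this is repaired inside your own framework by appending exclusive options only for items with $\ell_i(M)\ge 2n/(\varepsilon X)$ (the threshold you already use on $E_i$), or by the standard IC-chaining argument showing that any low buyer who does switch still pays at least $(1-\varepsilon)$ times her old price; the paper's construction glosses the same edge case.
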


\section{Careful Nudge-and-Rounds}\label{sec:nudge}
One key idea in our arguments (and those of~\cite{BabaioffGN17}) is to bound the loss in revenue from a given menu as the input distribution is perturbed slightly, or for a given distribution if the menu is perturbed slightly. Such arguments are typically termed ``nudge-and-round,'' and the initial such argument is commonly attributed to Nisan~\cite{BalcanBHM05,ChawlaHK07}. 

The simple arguments commonly used do not suffice for our results (see brief discussion follow Proposition~\ref{prop:carefulNudge}). Below, we introduce two ``careful'' nudge-and-round arguments. We will actually need the most careful Proposition~\ref{prop:advancedNudge} in designing our quasi-polynomial sized menu for unit-demand buyers (and we will highlight there why we need the careful statement). For the rest of our arguments, either Corollary~\ref{cor:expectedNudge} which appears in~\cite{RubinsteinW15} will suffice, or Proposition~\ref{prop:carefulNudge} will suffice.

\begin{defn}[Coupling error with respect to $M$]\label{def:advancedCoupling} Let $D$ and $D'$ be coupled valuation distributions (with coupled draws $v(\cdot), v'(\cdot)$), and let $M$ be some menu. Let also $S_M(\cdot)$ be some function that takes as input a valuation $v(\cdot)$ and selects a randomized allocation $S$ available on menu $M$. Then we say that the coupling error of $D, D'$ with respect to menu $M$ is (the infimum over all couplings between $D$ and $D'$ of):
$$\delta_M(D, D'):= \sup_{S_M(\cdot)} \big\{\mathbb{E}_{(v, v') \leftarrow (D, D')}[v(S_M(v)) - v'(S_M(v))]\big\} + \sup_{S_M(\cdot)} \big\{\mathbb{E}_{(v, v') \leftarrow (D, D')}[v'(S_M(v')) - v(S_M(v'))] \big\}.$$

That is, $\delta_M(D, D')$ denotes the supremum, over all mappings from valuations $v$ to randomized allocations $S_M(v)$ available in menu $M$ of the expected difference between $v(S_M(v))$ and $v'(S_M(v))$, plus the same quantity swapping the roles of $v$ and $v'$. 
\end{defn}

\begin{defn}[Coupling error] Let $D$ and $D'$ be coupled valuation distributions (with coupled draws $v(\cdot), v'(\cdot)$). Then the coupling error between $D$ and $D'$ is (the infimum over couplings between $D,D'$ of):\footnote{Note that this is the same as the first Wasserstein distance between $D$ and $D'$ if the underlying metric space for valuation functions has $d(v,v'):= \max_{S \subseteq [n]}\{v(S) - v(S')\} + \max_{S \subseteq [n]}\{v'(S) - v(S)\}$.}
$$\delta(D, D') := \mathbb{E}_{(v, v') \leftarrow (D, D')} \big[\max_{S \subseteq [n]} \{v(S) - v'(S)\} + \max_{S \subseteq [n]}\{v'(S) - v(S)\} \big].$$
\end{defn}

\begin{observation}\label{obs:simpleNudge} For all mechanisms $M$, we have $\delta_M(D, D') \leq \delta(D, D')$.
\end{observation}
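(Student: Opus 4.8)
The plan is to prove this essentially by a pointwise domination argument, fixing a coupling first and only taking infima at the very end. Both $\delta_M(D,D')$ and $\delta(D,D')$ are defined as infima over couplings of $D$ and $D'$; so it suffices to show that for \emph{every} fixed coupling $(v,v') \leftarrow (D,D')$, the quantity inside the infimum defining $\delta_M$ is at most the quantity inside the infimum defining $\delta$, and then take infima of both sides.

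Fix a coupling. First I would handle a single one of the two symmetric supremum terms, say $\sup_{S_M(\cdot)}\mathbb{E}_{(v,v')}[v(S_M(v)) - v'(S_M(v))]$. Fix any map $S_M(\cdot)$. For each realized pair $(v,v')$, the object $S_M(v)$ is a randomized allocation; unpacking the paper's abuse of notation ($v(S) := \mathbb{E}[v(S)]$ for set-valued random $S$), we have $v(S_M(v)) - v'(S_M(v)) = \mathbb{E}_{T \sim S_M(v)}[v(T) - v'(T)]$, where the expectation is over the internal randomness of the allocation. Since for every fixed set $T$ we have $v(T) - v'(T) \leq \max_{S \subseteq [n]}\{v(S) - v'(S)\}$, and the right-hand side does not depend on $T$, we get the pointwise bound $v(S_M(v)) - v'(S_M(v)) \leq \max_{S \subseteq [n]}\{v(S) - v'(S)\}$ for every realization. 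Taking expectations over the coupling gives $\mathbb{E}_{(v,v')}[v(S_M(v)) - v'(S_M(v))] \leq \mathbb{E}_{(v,v')}[\max_{S}\{v(S) - v'(S)\}]$; since this holds for every choice of $S_M(\cdot)$, the supremum over $S_M(\cdot)$ is also bounded by $\mathbb{E}_{(v,v')}[\max_{S}\{v(S) - v'(S)\}]$.

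The identical argument with the roles of $v$ and $v'$ swapped bounds the second supremum term by $\mathbb{E}_{(v,v')}[\max_{S}\{v'(S) - v(S)\}]$. Adding the two bounds, the full expression inside the infimum defining $\delta_M$ for this coupling is at most $\mathbb{E}_{(v,v')}[\max_{S}\{v(S) - v'(S)\} + \max_S\{v'(S) - v(S)\}]$, which is exactly the expression inside the infimum defining $\delta$ for the same coupling. Finally, since for every coupling $\pi$ the $\delta_M$-expression is $\leq$ the $\delta$-expression, the infimum of the $\delta_M$-expressions (namely $\delta_M(D,D')$) is a lower bound for the set of $\delta$-expressions, hence $\delta_M(D,D') \leq \inf_\pi(\delta\text{-expression}) = \delta(D,D')$.

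There is no real obstacle here; the only points requiring care are (i) correctly interpreting $v(S_M(v))$ when $S_M(v)$ is a randomized allocation (resolved by linearity of expectation, as above) and (ii) the order of operations — we must establish the pointwise/per-coupling inequality \emph{before} passing to infima, since $\inf$ does not interact well with the $\sup_{S_M(\cdot)}$ otherwise. Both are routine once stated explicitly.
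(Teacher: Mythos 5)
Your argument is correct and matches the paper's (unproved, one-line) justification of this observation: the only content is that $v(S_M(v)) - v'(S_M(v)) \leq \max_{S}\{v(S)-v'(S)\}$ pointwise (and symmetrically), so moving the supremum inside the expectation can only increase the quantity, after which one takes infima over couplings. Your extra care with randomized allocations (via linearity of expectation) and with performing the per-coupling comparison before taking infima is exactly the right way to make the paper's remark rigorous.
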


The high-level distinction between $\delta_M(D, D')$ and $\delta(D, D')$ is swapping the order of an expectation and a supremum. The function $\delta_M(D, D')$ first picks the supremum mapping, and then takes an expectation whereas $\delta(D, D')$ takes the supremum inside the expectation, and is therefore larger. Most prior work (e.g.~\cite{DaskalakisW12, BabaioffGN17}) typically takes a very loose bound on $\delta(D,D')$ with $\max_{v,v' \in \text{support}(D)}\{v([n])-v'([n])\}$ (or perhaps replaces $[n]$ with $S$ if $v,v'$ are known to be identical on all items $\notin S$). Some prior work (e.g.~\cite{RubinsteinW15}) does try to get extra mileage out of the expectation and use tighter bounds on $\delta(D,D')$. No prior work has used the truly refined $\delta_M(D,D')$ before. As mentioned previously, this extremely precise nudge-and-round is necessary for our approximation schemes for a unit-demand buyer. We first quickly establish that both errors are in fact metrics, which will allow us to later use the triangle inequality. The proof of Lemma~\ref{lem:triangle} appears in Appendix~\ref{app:triangle}.

\begin{lemma}\label{lem:triangle} $\delta(\cdot,\cdot)$ is a metric. For all $M$, $\delta_M(\cdot,\cdot)$ is a metric.
\end{lemma}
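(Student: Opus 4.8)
\textbf{Proof proposal for Lemma~\ref{lem:triangle}.}

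The plan is to verify the three metric axioms (non-negativity, symmetry, triangle inequality) for each of $\delta(\cdot,\cdot)$ and $\delta_M(\cdot,\cdot)$; identity of indiscernibles will hold only in the sense of equality of distributions, which is the standard convention for Wasserstein-type metrics on distributions, so I would only claim it up to that caveat. Non-negativity is immediate for both: in $\delta(D,D')$ each of $\max_S\{v(S)-v'(S)\}$ and $\max_S\{v'(S)-v(S)\}$ is $\geq 0$ (take $S=\emptyset$, where $v(\emptyset)=v'(\emptyset)=0$), so the expectation is non-negative; likewise in $\delta_M(D,D')$ each supremum over $S_M(\cdot)$ is $\geq 0$ because one may take $S_M(\cdot)\equiv\emptyset$, which lies on every menu $M$. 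Symmetry is also immediate from the definitions: both expressions are visibly invariant under swapping the roles of $D$ and $D'$ (the two summands simply exchange), and since the infimum is over all couplings of the unordered pair $\{D,D'\}$, the value is unchanged.

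The substance is the triangle inequality, and here the only real work is bookkeeping the couplings. For $\delta(\cdot,\cdot)$ I would use the standard gluing argument: given distributions $D_1,D_2,D_3$, fix near-optimal couplings $\pi_{12}$ of $(D_1,D_2)$ and $\pi_{23}$ of $(D_2,D_3)$, and glue them along their common $D_2$-marginal to obtain a joint distribution on triples $(v_1,v_2,v_3)$ whose $(1,3)$-marginal is a valid coupling of $(D_1,D_3)$. Then for each fixed triple and each set $S$, $v_1(S)-v_3(S) = (v_1(S)-v_2(S)) + (v_2(S)-v_3(S)) \leq \max_S\{v_1(S)-v_2(S)\} + \max_S\{v_2(S)-v_3(S)\}$, and taking $\max$ over $S$ on the left preserves the bound; symmetrically for $v_3(S)-v_1(S)$. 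Summing the two and taking expectations gives $\delta(D_1,D_3) \le \mathbb{E}_{\pi_{12}}[\cdots] + \mathbb{E}_{\pi_{23}}[\cdots]$, and letting the two couplings approach their infima yields $\delta(D_1,D_3)\le \delta(D_1,D_2)+\delta(D_2,D_3)$.

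For $\delta_M(\cdot,\cdot)$ the same gluing gives a coupling of $(D_1,D_3)$; the extra point is handling the inner $\sup_{S_M(\cdot)}$. Fix any map $S_M(\cdot)$ from valuations to allocations available on $M$. Using the glued triple coupling, write $\mathbb{E}[v_1(S_M(v_1)) - v_3(S_M(v_1))] = \mathbb{E}[v_1(S_M(v_1)) - v_2(S_M(v_1))] + \mathbb{E}[v_2(S_M(v_1)) - v_3(S_M(v_1))]$. The first term is at most $\sup_{S_M(\cdot)}\mathbb{E}_{\pi_{12}}[v_1(S_M(v_1)) - v_2(S_M(v_1))]$ since $S_M(\cdot)$ is one admissible map. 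For the second term, the map $v_1 \mapsto S_M(v_1)$ can be viewed, through the coupling, as a (randomized) map from $v_2$-values to allocations on $M$; since $\delta_M$'s supremum ranges over all such maps, this term is at most $\sup_{S_M(\cdot)}\mathbb{E}_{\pi_{23}}[v_2(S_M(v_2)) - v_3(S_M(v_2))]$. Adding these and the symmetric pair (with the roles reversed, using $v_3$ and $v_1$), then taking $\sup$ over $S_M(\cdot)$ on the left and passing to the infimum over the constituent couplings, gives $\delta_M(D_1,D_3) \le \delta_M(D_1,D_2) + \delta_M(D_2,D_3)$. The one place to be careful — and what I expect to be the main obstacle — is this ``view the map through the coupling'' step: when $v_2$ does not determine $v_1$ under $\pi_{12}$, the induced map from $v_2$ to allocations is genuinely randomized, so I need $\delta_M$'s definition (or a short observation) to cover randomized selection functions $S_M(\cdot)$, which it does since a randomized allocation on $M$ is still ``a randomized allocation available on $M$'' and the expectation is linear; I would state this explicitly to avoid any gap.
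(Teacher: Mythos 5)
Your proposal is correct and follows essentially the same route as the paper's proof: glue the two (near-optimal) couplings along the middle distribution, use the pointwise triangle inequality for $\delta$, and for $\delta_M$ bound the cross terms by viewing the coupling-induced (randomized) selection rule as an admissible candidate in the supremum. Your explicit remark that randomized selection functions are handled by linearity is a point the paper's argument uses implicitly, so there is no gap.
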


Proposition~\ref{prop:advancedNudge} is our main nudge-and-round argument. The proof is indeed similar to simpler claims which have appeared in prior work, but more carefully keeps track of the error induced. We include the complete proof below.

\begin{proposition}\label{prop:advancedNudge} Consider two valuation distributions $D$ and $D'$. For any mechanism $M$ and $\varepsilon > 0$, let $M'$ denote the mechanism with the same randomized allocations as $M$, but all prices multiplied by $(1-\varepsilon)$. Then:
$$\REV_{M'}(D') \geq (1-\varepsilon)\REV_M(D) - \delta_M(D, D')/\varepsilon.$$
\end{proposition}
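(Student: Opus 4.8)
The statement is a perturbation bound: moving from distribution $D$ to a coupled distribution $D'$, and simultaneously shaving prices by a $(1-\varepsilon)$ factor, can only cost us $\delta_M(D,D')/\varepsilon$ in revenue. The natural approach is to analyze, draw by draw, what a buyer drawn from $D'$ does when facing the price-shaved menu $M'$, and compare to what the corresponding (coupled) buyer drawn from $D$ does when facing $M$. First I would fix an optimal ``selection function'' for $M$ under $D$: let $S_M(v)$ be the randomized allocation the buyer $v$ purchases in $M$, at price $p(v)$, so that $\REV_M(D) = \E_{v\leftarrow D}[p(v)]$. The key move is to have the buyer $v'\leftarrow D'$ (coupled to $v\leftarrow D$) consider \emph{the same} menu option $(S_M(v), (1-\varepsilon)p(v))$ in $M'$. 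This option need not be $v'$'s favorite in $M'$, so the revenue $v'$ actually generates in $M'$ is at least whatever price $v'$ would pay if it chose this option — but we must be careful: $v'$ only purchases it if its utility is nonnegative, i.e. $v'(S_M(v)) \geq (1-\varepsilon)p(v)$.

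The heart of the argument is to lower-bound the expected revenue via incentive compatibility of $M$ for the buyer $v$. Since $v$ prefers $(S_M(v),p(v))$ over the empty option in $M$, we have $v(S_M(v)) \geq p(v)$; combined with shaving prices, $v'(S_M(v)) \geq (1-\varepsilon)p(v)$ holds \emph{provided} $v(S_M(v)) - v'(S_M(v)) \leq \varepsilon p(v)$, i.e. provided the per-draw coupling error at that option is small relative to $\varepsilon p(v)$. When this fails, the buyer $v'$ might not purchase this option; but then I bound the lost revenue by $p(v)$ and charge it against the coupling error, which is at least $\varepsilon p(v)$ on exactly those draws. Concretely: in $M'$, buyer $v'$ generates revenue $\geq (1-\varepsilon)p(v)\cdot \mathbb{I}[v'(S_M(v)) \geq (1-\varepsilon)p(v)]$. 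Taking expectations over the coupling,
\begin{align*}
\REV_{M'}(D') &\geq \E_{(v,v')}\big[(1-\varepsilon)p(v)\cdot \mathbb{I}[v'(S_M(v)) \geq (1-\varepsilon)p(v)]\big]\\
&\geq (1-\varepsilon)\E_{(v,v')}[p(v)] - (1-\varepsilon)\E_{(v,v')}\big[p(v)\cdot \mathbb{I}[v(S_M(v)) - v'(S_M(v)) > \varepsilon p(v)]\big],
\end{align*}
where the second line uses that $v(S_M(v)) \geq p(v)$, so the indicator fails only when the gap $v(S_M(v))-v'(S_M(v))$ exceeds $\varepsilon p(v)$. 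On the event that this gap exceeds $\varepsilon p(v)$, we have $p(v) \leq (v(S_M(v))-v'(S_M(v)))/\varepsilon$, so the subtracted term is at most $(1/\varepsilon)\E_{(v,v')}[\,(v(S_M(v))-v'(S_M(v)))^+\,] \leq \delta_M(D,D')/\varepsilon$, recognizing the first supremum term in Definition~\ref{def:advancedCoupling} (the selection function $v\mapsto S_M(v)$ is a valid choice of $S_M(\cdot)$, and adding the second nonnegative supremum term only increases the bound). This yields exactly $\REV_{M'}(D') \geq (1-\varepsilon)\REV_M(D) - \delta_M(D,D')/\varepsilon$.

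\textbf{Main obstacle.} The delicate point is handling randomized allocations and the fact that $v(S)$ denotes an expectation over the instantiation of the random set — so the utility comparisons and the ``buyer purchases iff utility $\geq 0$'' statements must be interpreted in expectation, which is fine since the Taxation-Principle representation already works at the level of expected values. A second subtlety is that $S_M(\cdot)$ in Definition~\ref{def:advancedCoupling} is an \emph{arbitrary} function into allocations available in $M$, whereas the naturally-arising selection function is the revenue-optimal/IC one; I need only observe that the optimal selection is one valid choice, so the supremum defining $\delta_M$ dominates the term I get. Finally I should double-check the direction of the coupling: $\delta_M(D,D')$ is defined as an infimum over couplings of a sum of two suprema, so after obtaining the bound for the optimal coupling in my argument I should note the bound holds for \emph{every} coupling and hence for the infimum — or, cleanly, just fix the optimal coupling from the outset and carry it through. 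No step requires more than elementary manipulation once the right option-matching is set up; the only real ``idea'' is matching $v'$ to $v$'s chosen option rather than to $v'$'s own favorite.
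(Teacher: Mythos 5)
There is a genuine gap, and it sits at the very first inequality of your argument. You claim that in $M'$ the buyer $v'$ ``generates revenue $\geq (1-\varepsilon)p(v)\cdot \mathbb{I}[v'(S_M(v)) \geq (1-\varepsilon)p(v)]$.'' This conflates individual rationality with incentive compatibility: the fact that the option $(S_M(v),(1-\varepsilon)p(v))$ gives $v'$ nonnegative utility does not mean $v'$ purchases anything that expensive — $v'$ buys its \emph{utility-maximizing} option, which may be a much cheaper one. Your charging step only covers draws where $v(S_M(v))-v'(S_M(v)) > \varepsilon p(v)$, but a switch to a cheaper option can happen even when $v(S_M(v)) = v'(S_M(v))$ exactly, namely when $v'$ values some \emph{other} allocation on the menu more than $v$ does. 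Concretely: let $M$ contain $A=(\{1\},10)$ and $B=(\{2\},1)$, let $D$ put mass one on $v_1=11, v_2=0$ (so $v$ buys $A$ and $\REV_M(D)=10$), and let $D'$ put mass one on $v'_1=11, v'_2=10$. Then $v'(A)=11\geq(1-\varepsilon)10$ and $v(A)-v'(A)=0$, so your bound asserts $\REV_{M'}(D')\geq(1-\varepsilon)10$; but in $M'$ the buyer $v'$ prefers $B$ (utility $9+\varepsilon$ vs.\ $1+10\varepsilon$) and pays only $(1-\varepsilon)$. The proposition itself survives in this example only because $\delta_M(D,D')$ is large — but it is large due to the \emph{second} supremum term in Definition~\ref{def:advancedCoupling}, $\sup_{S_M}\mathbb{E}[v'(S_M(v'))-v(S_M(v'))]$, which your proof explicitly discards (``adding the second nonnegative supremum term only increases the bound''). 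That term is not slack; it is exactly what pays for buyers who migrate to options they now value more.

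The paper's proof avoids this by using \emph{two} incentive constraints rather than one participation-style bound: $v$'s IC in $M$ comparing its chosen option $S(v)$ against $S(v')$ (the option its couple ends up buying in $M'$), and $v'$'s IC in $M'$ comparing $S(v')$ against $S(v)$. Summing the two and dividing by $\varepsilon$ gives $p(v') \geq p(v) - \big(v'(S(v'))-v(S(v'))\big)/\varepsilon - \big(v(S(v))-v'(S(v))\big)/\varepsilon$ pointwise over the coupling; taking expectations, the two error terms are bounded by the two suprema in $\delta_M$ (since $v\mapsto S(v)$ and $v'\mapsto S(v')$ are valid selection functions). To repair your argument you would need to incorporate the comparison against the option $v'$ actually buys — at which point you are essentially forced into the paper's two-inequality structure.
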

\begin{proof}
Consider taking the mechanism $M$, where buyer with valuation $v$ chooses to purchase randomized allocation $S(v)$ for price $p(v)$. Then we certainly have, for all coupled $(v,v')$, and $(S(v'),p(v'))$ on menu $M$:
\[ v(S(v)) - p(v) \geq v(S(v')) - p(v').
\]

Now consider taking the mechanism $M$, and multiplying all prices by $(1-\varepsilon)$. Consider $v$'s couple, $v'$, who chooses to purchase some (randomized allocation, price) pair $(S(v'),(1-\varepsilon)p(v'))$ instead of the option $(S(v),(1-\varepsilon)p(v))$. Then we have:
\[ v'(S(v')) - (1-\varepsilon)p(v') \geq v'(S(v)) - (1-\varepsilon)p(v).
\]

Summing these two inequalities together, and dividing both sides by $\varepsilon$ yields:
\begin{align*}
p(v') &\geq p(v) +\big(v(S(v')) - v'(S(v'))\big)/\varepsilon - \big(v(S(v))-v'(S(v))\big)/\varepsilon
\end{align*}

Importantly, observe that the randomized allocation $S(v')$ depends only on $v'$, and $S(v)$ depends only on $v$. Therefore, when we take an expectation over both sides, we get that:
\begin{align*}
\mathbb{E}_{(v, v') \leftarrow (D, D')}[p(v')] & \geq \mathbb{E}_{(v, v') \leftarrow (D, D')}\Big[p(v) -\big(v'(S(v')) -v(S(v'))\big)/\varepsilon - \big(v(S(v))-v'(S(v))\big)/\varepsilon \Big]\\
& \geq \REV(D) - \delta_M(D, D')/\varepsilon,
\end{align*}
where the last inequality follows by observing that $S(\cdot)$ is one candidate function for the supremum in $\delta_M(D, D')$, so the actual value of $\delta_M(D, D')$ can only be larger.
\end{proof}

\begin{corollary}[\cite{RubinsteinW15}]\label{cor:expectedNudge} Consider two coupled valuation distributions $D,D'$ (with coupled draws $(v(\cdot), v'(\cdot))$). Thsen for all $\varepsilon > 0$:
$$\REV(D') \geq (1-\varepsilon)\REV(D) - \delta(D, D')/\varepsilon.$$
\end{corollary}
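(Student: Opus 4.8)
The plan is to obtain Corollary~\ref{cor:expectedNudge} as an essentially immediate consequence of Proposition~\ref{prop:advancedNudge} together with Observation~\ref{obs:simpleNudge}, passing from a fixed (near-optimal) mechanism to the supremum only at the very end.

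First I would fix $\varepsilon > 0$ and let $M$ be an arbitrary mechanism for the buyer drawn from $D$. Applying Proposition~\ref{prop:advancedNudge} to $M$, the mechanism $M'$ whose randomized allocations agree with those of $M$ but whose prices are all scaled by $(1-\varepsilon)$ satisfies
$$\REV_{M'}(D') \geq (1-\varepsilon)\REV_M(D) - \delta_M(D, D')/\varepsilon.$$
Observation~\ref{obs:simpleNudge} gives $\delta_M(D, D') \leq \delta(D, D')$, so the right-hand side is at least $(1-\varepsilon)\REV_M(D) - \delta(D, D')/\varepsilon$; and since $\REV(D')$ is, by definition, the supremum of $\REV_N(D')$ over all menus $N$, in particular $\REV(D') \geq \REV_{M'}(D')$. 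Chaining these gives $\REV(D') \geq (1-\varepsilon)\REV_M(D) - \delta(D, D')/\varepsilon$ for every mechanism $M$.

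Finally I would take the supremum over $M$ of the right-hand side. Since $\delta(D,D')/\varepsilon$ does not depend on $M$ and $\REV(D) = \sup_M \REV_M(D)$, this yields the claimed bound $\REV(D') \geq (1-\varepsilon)\REV(D) - \delta(D, D')/\varepsilon$. There is essentially no obstacle here: the only point needing a word of care is that $\REV(D)$ is a supremum rather than a maximum, so one argues with an arbitrary $M$ and passes to the supremum at the end (equivalently, one takes a sequence $M_k$ with $\REV_{M_k}(D) \to \REV(D)$ and sends $k \to \infty$), and the statement is vacuous when $\delta(D,D') = \infty$.
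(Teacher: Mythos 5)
Your proposal is correct and is exactly the intended derivation: the paper states Corollary~\ref{cor:expectedNudge} without a separate proof precisely because it follows from Proposition~\ref{prop:advancedNudge} combined with Observation~\ref{obs:simpleNudge} and $\REV(D') \geq \REV_{M'}(D')$, passing to the supremum over $M$ (since $\REV(D)$ is only a supremum) just as you do. No gap here.
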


Again, the main difference between Corollary~\ref{cor:expectedNudge} and the simpler nudge-and-rounds used (for instance) in~\cite{DaskalakisW12,BabaioffGN17} is that Corollary~\ref{cor:expectedNudge} attempts to get some traction via expectations, whereas simpler arguments from prior work upper bound $\delta(D,D')$ by (e.g.) $\max_{v \in \text{support}(D)}\{v([n])\}$ (or if $D$ and $D'$ only differ in values for items $S$, then $\max_{v \in \text{support}(D)}\{v(S)\}$). We will also need the following argument, which changes a menu (rather than changing the distribution). The proof is similar, and deferred to Appendix~\ref{app:setup}.

\begin{proposition}\label{prop:carefulNudge} Let $M$ be a menu, and let $(S(v), p(v))$ denote the (randomized allocation, price) selected by a buyer with value $v$ from $M$. Let also $M'$ denote a menu which for each $(S(v),p(v)) \in M$ adds $(S'(v),(1-\varepsilon)p)$ to $M'$, where for all $v' \in \text{support}(D), v'(S(v)) \geq v'(S'(v))$ (that is, it makes all allocations no better, but offers a multiplicative discount). Then:
$$\REV_{M'}(D) \geq (1-\varepsilon)\REV_M(D) - \mathbb{E}[v(S(v))-v(S'(v))]/\varepsilon.$$
\end{proposition}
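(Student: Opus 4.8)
The plan is to mirror the proof of Proposition~\ref{prop:advancedNudge}, but instead of coupling two distributions and comparing a buyer's behavior across them, I couple the two menus $M$ and $M'$ and compare a single fixed buyer's behavior across them. Fix a valuation $v$. Let $(S(v),p(v))$ be the option $v$ selects from $M$, let $(\hat S(v),(1-\varepsilon)\tilde p(v))$ be the option $v$ selects from $M'$, and let $(\tilde S(v),\tilde p(v))\in M$ be the $M$-option from which $v$'s $M'$-choice was derived (so that $\hat S(v)$ is the ``shrunk'' allocation obtained from $\tilde S(v)$, and by hypothesis $v'(\tilde S(v))\geq v'(\hat S(v))$ for all $v'\in\text{support}(D)$, in particular for $v'=v$). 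Finally let $(S'(v),(1-\varepsilon)p(v))\in M'$ be the shrunk version of $v$'s $M$-favorite $(S(v),p(v))$.

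The first step is to write down the two incentive inequalities. Since $v$ (weakly) prefers $(S(v),p(v))$ over $(\tilde S(v),\tilde p(v))$ in $M$, we get $v(S(v))-p(v)\geq v(\tilde S(v))-\tilde p(v)$; and since $v$ prefers $(\hat S(v),(1-\varepsilon)\tilde p(v))$ over $(S'(v),(1-\varepsilon)p(v))$ in $M'$, we get $v(\hat S(v))-(1-\varepsilon)\tilde p(v)\geq v(S'(v))-(1-\varepsilon)p(v)$. Adding these and moving all price terms to one side, the price terms collapse exactly as in Proposition~\ref{prop:advancedNudge} to $\varepsilon(\tilde p(v)-p(v))$, yielding $\varepsilon(\tilde p(v)-p(v))\geq \bigl(v(\tilde S(v))-v(\hat S(v))\bigr)+\bigl(v(S'(v))-v(S(v))\bigr)$.

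The key observation — and this is where the ``no better'' hypothesis does its work — is that $v(\tilde S(v))-v(\hat S(v))\geq 0$, so that term can simply be dropped, leaving $\tilde p(v)\geq p(v)-\bigl(v(S(v))-v(S'(v))\bigr)/\varepsilon$. Multiplying through by $(1-\varepsilon)$ and using $v(S(v))\geq v(S'(v))$ once more (to weaken $-(1-\varepsilon)(\cdot)/\varepsilon$ to $-(\cdot)/\varepsilon$) gives $(1-\varepsilon)\tilde p(v)\geq (1-\varepsilon)p(v)-\bigl(v(S(v))-v(S'(v))\bigr)/\varepsilon$. Since $(1-\varepsilon)\tilde p(v)$ is exactly the revenue collected from $v$ under $M'$ and $p(v)$ is the revenue collected from $v$ under $M$, taking expectations over $v\leftarrow D$ and recognizing $\mathbb{E}[p(v)]=\REV_M(D)$ and $\mathbb{E}[(1-\varepsilon)\tilde p(v)]=\REV_{M'}(D)$ finishes the proof.

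I do not expect a genuine obstacle; the argument is short. The only real care needed is bookkeeping: keeping the four objects $S(v),S'(v),\hat S(v),\tilde S(v)$ straight, and making sure the hypothesis is invoked on the correct pair (namely $(\tilde S(v),\hat S(v))$ with $v'=v$) so that the dropped term is truly non-negative — applying it to the wrong pair would leave an uncontrolled term depending on the allocation $v$ takes in $M'$. The degenerate cases (non-unique favorites, or options of $M$ that are never selected) are handled exactly as in the surrounding arguments and are immaterial since only inequalities are required.
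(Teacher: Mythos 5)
Your proof is correct and follows essentially the same route as the paper's own argument: the same two incentive inequalities (favorite in $M$ vs.\ the $M$-ancestor of the $M'$-choice, and favorite in $M'$ vs.\ the discounted version of the $M$-favorite), the same use of the ``no better'' hypothesis applied at $v$ to the option purchased from $M'$ to drop that term, and the same expectation step (including the weakening of $(1-\varepsilon)/\varepsilon$ to $1/\varepsilon$). Nothing further is needed; your bookkeeping of the four allocations matches the paper's $S, S', T, T'$ exactly.
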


\subsection{Brief Discussion on Nudge-and-Rounds}\label{sec:nudgeDiscussion}
Here, we provide some brief context for the above two nudge-and-round arguments, their intended use, and the relationship to symmetric menu complexity. We'll refer to nudge-and-rounds provided via arguments such as Proposition~\ref{prop:advancedNudge} and Corollary~\ref{cor:expectedNudge} as ``nudging the distribution'', and those via arguments such as Proposition~\ref{prop:carefulNudge} as ``nudging the allocation.''

Distribution-nudges have been used in works such as~\cite{RubinsteinW15} to prove a ``Core-Tail Decomposition'' (en route to a constant-factor approximation). More related to our work, distribution-nudges have been used in works such as~\cite{DaskalakisW12} to discretize an input distribution, and then find an optimal menu for that discretized distribution (indeed, we use Proposition~\ref{prop:advancedNudge} for a simliar purpose). Algorithms based on distribution-nudges tend to be constructive, since one can explicitly discretize an input, and then perform operations on it.

Allocation-nudges have been used in works such as~\cite{BabaioffGN17} to argue that menus of a certain format with certain revenue guarantees exist. For example, Lemma~2.5 in~\cite{BabaioffGN17} rounds (almost) all allocation probabilities down to the nearest multiple of $\varepsilon/n$, concluding that a near-optimal menu exists where each option has all-but-one allocation probability equal to an integral multiple of $\varepsilon/n$ (which in turn allows them to eventually conclude that a near-optimal menu exists with a bounded number of distinct allocations). Arguments which apply allocation-nudges to an optimal mechanism are typically not constructive (since one cannot explicitly find the optimal mechanism and nudge it without a separate argument). We will also use allocation-nudge arguments in the proof of our reduction (Theorem~\ref{thm:reduction}).

Finally, we briefly discuss the relationship between symmetric menu complexity and these arguments. One use of a distribution-nudge (e.g. in Section~\ref{sec:symmetries}) would be to start from an asymmetric distribution $D$, nudge it to a symmetric distribution $D'$, exploit symmetries in $D'$ to claim that optimal mechanisms for $D'$ have low symmetric menu complexity, and then use Proposition~\ref{prop:advancedNudge} to claim symmetric menu complexity bounds on nearly-optimal mechanisms for $D$. 

For allocation-nudges, one might initially wonder whether symmetric menu complexity is so permissive a measure that (for instance) any menu which only uses allocation probabilities and prices in integral multiples of $\varepsilon$ happens to have low symmetric menu complexity, perhaps just because every such menu can be partitioned into a small number of heavily-symmetric parts. We quickly confirm that that this is not the case via a simple counting argument. Indeed, observe that there are \emph{doubly-exponentially-many} different menus which only use allocation probabilities in integral multiples of $\varepsilon$ (because there are exponentially-many different such allocations, and a menu can have any subset of these at a finite price). But recall that any menu $M$ with $\wsmc(M)$ subexponential in $n$ can be written as a list of subexponentially-many (allocation, price, permutation group) triples. If further each allocation/price is an integral multiple of $\varepsilon$, and each permutation group can be written using $n\ln(n)$ bits by Jerrum's filter~\cite{Jerrum82}, then the entire menu can be described with subexponentially-many bits. Therefore, the total number of menus which have subexponential weak symmetric menu complexity is sub-doubly-exponential, implying that we cannot in fact simply get lucky and write any discretized menu with low symmertic menu complexity. This confirms that there really is some structure not only to menus with low strong symmetric menu complexity, but also those with low weak symmetric menu complexity.

\section{Outline and Proofs: Reduction from Unbounded to Bounded}\label{sec:reductionproofs}

This section contains a complete proof of Theorem~\ref{thm:reduction}. The reader familiar with~\cite{BabaioffGN17} will find the first half of this outline similar to that of their section~2, but we remind the reader of the outline. Recall still that there are key quantitative differences (which we note at the beginning of each step), and an extension to subadditive buyers (which we will not repeatedly note). The second half of the outline deviates significantly from anything in~\cite{BabaioffGN17}.

Section~\ref{sec:nudge} contains formal statements of ``nudge-and round'' arguments, which bound the loss in revenue from a particular menu $M$ between two similar distributions, and will be used repeatedly. As mentioned previously, our proofs really make use of these careful statements, rather than simpler statements from prior work. Section~\ref{sec:stepone} asserts that we may restrict attention to the subset of the support of $D$ where the buyer values \emph{at most one} item above $H=\SREV(D)/\varepsilon$ without losing more than an $\varepsilon$ fraction of the revenue. Section~\ref{sec:steptwo} asserts further that we may make any menu $E$-exclusive (for $E \geq \max\{H/\varepsilon,\BREV(D)\}/\varepsilon$) without losing more than an $\varepsilon$ fraction of the revenue. Section~\ref{sec:stepthree} concludes that we may greatly simplify the singleton options priced above $E$ without losing more than an $\varepsilon$ fraction of the revenue, reducing them to only two expensive options per item (so $2n$ options total priced $\geq E$). The conclusion of section~\ref{sec:stepthree} is Proposition~\ref{prop:main3}, which is a slightly more precise version of Proposition~\ref{prop:main4}. 

Section~\ref{sec:stepfour} is where we deviate significantly from anything in~\cite{BabaioffGN17}. We argue that while one absolutely cannot treat the search for a cheap menu and $2n$ expensive options separately, the \emph{only} property coupling the searches together is the supremum probability with which the cheap menu offers item $i$ (for all $i$). The conclusion of Section~\ref{sec:stepfour} is two propositions (Propositions~\ref{prop:reduce1} and~\ref{prop:reduction2}) which together are a slightly more precise version of Proposition~\ref{prop:reduction}. Again, we emphasize that the qualitative contribution of section~\ref{sec:stepfour} is that future work (and the remainder of the present work) can proceed to study bounded distributions, without worrying about how the resulting menu will interact with additional expensive options added.

We also briefly note that in our proofs, at some points we need for a parameter $H, E, T$, etc. to be larger than $\SREV(D)$, and sometimes to be larger than $\BREV(D)$. As $\REV(D) = \Theta(\max\{\SREV(D),\BREV(D)\})$~\cite{RubinsteinW15}, we would not gain more than a constant factor by tightening these parameters from (e.g.) $E \geq \REV(D)/\varepsilon^3$ to $E \geq \max\{\SREV(D),\BREV(D)\}/\varepsilon^3$, so we write the bounds in terms of $\REV(D)$ instead of $\max\{\SREV(D),\BREV(D)\}$ for ease of exposition.

\subsection{Step 0: Further Notation}
Throughout the subsequent sections, let:
\begin{itemize}
\item $H \in \mathbb{R}$ be a cutoff (which will be instantiated in each statement as $\SREV(D)/\varepsilon$). 
\item $B_i$ denote the event that $v(\{i\}) > H$, and $p_i = \Pr[B_i]$.
\item $B_{ij}$ denote the event that $v(\{i\}) > H$ and also $v(\{j\}) > H$ (i.e. $B_{ij} = B_i \cap B_j$). 
\item $B$ denote the event that $v(\{i\}) > H$ for \emph{at most one} item $i$ (i.e. $B=\overline{\bigcup_{i,j} B_{ij}}$).
\item $B'_A$ denote the event that $v(\{i\}) > H$ for all $i \in A$, and $v(\{i\}) \leq H$ for all $i \notin A$.
\end{itemize}

\subsection{Step 1: Ignoring Multiple High Values}\label{sec:stepone}

The first step in the reduction is to find a sufficiently large value $H$ such that we can safely ignore revenue contributed by any valuation function $v(\cdot)$ such that there exist \emph{at least two} distinct items $i,j$ for which $v(\{i\}) > H$ and also $v(\{j\}) > H$. Proposition~\ref{prop:main1} is the main contribution of this section, which parallels~\cite[Section~2.3]{BabaioffGN17}. The quantitative difference to~\cite{BabaioffGN17} is that our $H = \frac{\SREV(D)}{\varepsilon}$, whereas they instead take $H=\Omega(\frac{n^2\REV(D)}{\varepsilon})$. 

\begin{proposition}\label{prop:main1} Let $H\geq \frac{\SREV(D)}{\varepsilon}$. Then:
$$\sum_{i=1}^n \sum_{j > i} \REV(D\cdot \mathbb{I}(B_{ij})) \leq \frac{36\varepsilon}{1-2\varepsilon} \SREV(D) + \frac{6\varepsilon^2}{(1-\varepsilon)^2}\REV (D).$$
Therefore, $\REV(D\cdot \mathbb{I}(B)) \geq (1-O(\varepsilon)) \REV(D)$, and for any $M$, we have $\REV_M(D\cdot \mathbb{I}(\bar{B})) \leq O(\varepsilon)\REV(D)$.
\end{proposition}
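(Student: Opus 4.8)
The plan is to bound, for each fixed pair $i < j$, the quantity $\REV(D\cdot \mathbb{I}(B_{ij}))$ in terms of the revenue the seller could earn from items $i$ and $j$ alone, conditioned on both of their values exceeding $H$. The starting observation is that when both $v(\{i\})$ and $v(\{j\})$ exceed $H \geq \SREV(D)/\varepsilon$, each of these two values is already ``large'' relative to the overall single-item revenue, so the contribution of all the \emph{other} items can be absorbed into a small multiplicative slack. Concretely, since the buyer is subadditive over independent items, a buyer restricted to the event $B_{ij}$ has $v(S) \leq v(S \setminus \{i,j\}) + v(\{i\}) + v(\{j\})$, and by subadditivity $v(S\setminus\{i,j\}) \le v([n]\setminus\{i,j\})$; one then wants to argue that the expected optimal revenue extractable from the ``small'' part $[n]\setminus\{i,j\}$ on this event is tiny compared to $\SREV(D)$ (using $\Pr[B_{ij}] = p_ip_j$ and a union-type bound), while the bulk of the revenue must come from the values $v(\{i\}), v(\{j\})$, which are both at least $H$.

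First I would use the standard two-step decomposition of revenue from any distribution into a ``selling-separately''-style bound plus a ``bundling''-style (single-grand-bundle) bound; the relevant fact is $\REV(D') = \Theta(\SREV(D') + \BREV(D'))$ from~\cite{RubinsteinW15}, applied to $D' = D\cdot\mathbb{I}(B_{ij})$ (equivalently $p_ip_j\cdot\REV(D|B_{ij})$ via the \cite{HartN17} observation). For the separate-sales part, any per-item reserve price strategy on $D\cdot\mathbb{I}(B_{ij})$ is dominated by the same strategy on $D$, so $\SREV(D\cdot\mathbb{I}(B_{ij})) \le \SREV(D\cdot\mathbb{I}(B_i))$, and here is where $H$ enters: since $B_i$ forces $v(\{i\}) > H \ge \SREV(D)/\varepsilon$, we have $p_i H \le \SREV(D)$, i.e.\ $p_i \le \varepsilon\SREV(D)/H \le \varepsilon^2$, and one extracts a factor $p_j$ (respectively $p_i$) of smallness when summing over $j$ (respectively $i$). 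For the bundling part, $\BREV(D\cdot\mathbb{I}(B_{ij})) \le p_ip_j \cdot \mathbb{E}[v([n]) \mid B_{ij}]$, and by subadditivity and independence $\mathbb{E}[v([n])\mid B_{ij}] \le \mathbb{E}[v([n]\setminus\{i,j\})] + \mathbb{E}[v(\{i\})\mid B_i] + \mathbb{E}[v(\{j\})\mid B_j]$; the last two terms, weighted by $p_ip_j$, sum to something controlled by $\SREV(D)$ (since $p_i\mathbb{E}[v(\{i\})\mid B_i] \le \SREV(D)$ as offering item $i$ at price $H$ — or at the optimal single-item reserve — already yields revenue $\ge p_i H$, and one can bound the conditional expectation against $\SREV$ by a Markov-type argument), and the first term weighted by $p_ip_j$ sums to $O(\varepsilon^2)\VAL$-type quantities that must in turn be controlled — I expect one actually wants to route this through $\REV(D)$ rather than $\VAL(D)$, which is why the stated bound has a $\frac{6\varepsilon^2}{(1-\varepsilon)^2}\REV(D)$ term.

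Summing the per-pair bounds over all $i < j$ and using $\sum_{i<j} p_i p_j \le (\sum_i p_i)^2 / 2$ together with $\sum_i p_i \le \sum_i \varepsilon\SREV(D)/H \cdot (H/\SREV(D))$... more carefully, $\sum_i p_i H \le \sum_i \mathbb{E}[v(\{i\})\mathbb{I}(B_i)] $ and separate sales at reserve $H$ per item earns $\ge H\sum_i p_i(1 - \text{other items beat it})$, so $\sum_i p_i$ is $O(\varepsilon)$ after dividing by $H/\SREV(D) = 1/\varepsilon$; this delivers the $\frac{36\varepsilon}{1-2\varepsilon}\SREV(D)$ first term. The final two sentences then follow immediately: $\REV(D\cdot\mathbb{I}(B)) \ge \REV(D) - \sum_{i<j}\REV(D\cdot\mathbb{I}(B_{ij}))$ by Sub-Domain Stitching (Lemma~\ref{lem:sds}), which is $(1-O(\varepsilon))\REV(D)$ using $\SREV(D) = O(\REV(D))$; and $\REV_M(D\cdot\mathbb{I}(\bar B)) \le \REV(D\cdot\mathbb{I}(\bar B)) \le \sum_{i<j}\REV(D\cdot\mathbb{I}(B_{ij})) = O(\varepsilon)\REV(D)$, again since $\bar B \subseteq \bigcup_{i<j} B_{ij}$ and revenue is subadditive over subdomains.

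\textbf{Main obstacle.} The delicate point is the bundling/``tail'' term: bounding $\sum_{i<j} p_ip_j\cdot\mathbb{E}[v([n]\setminus\{i,j\})]$ (the contribution of all the other items' values on the doubly-high event). Naively this is $\le \big(\sum_i p_i\big)^2\cdot\VAL(D)$, but $\VAL(D)$ can be infinite (unbounded distributions!), so one cannot afford to let $\VAL$ appear at all. The fix must be to observe that on the event $B_{ij}$ we are conditioning on a probability-$p_ip_j$ slice, and to bound the conditional expectation of $v([n]\setminus\{i,j\})$ by the unconditional $\REV$ via a more careful argument — e.g.\ noting that the ``core'' (values below some threshold comparable to $\REV(D)$) contributes $O(\REV(D))$ and the ``tail'' (any item exceeding that threshold) is already being charged elsewhere, or by invoking that on $B$ at most one value is high so $v([n]\setminus\{i,j\})$ restricted appropriately is bounded. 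Getting the bookkeeping right so that only $\SREV(D)$ and $\REV(D)$ (never $\VAL(D)$) appear, with the clean constants $36$ and $6$, is where the real work lies; everything else is routine application of Lemmas~\ref{lem:sds}, the \cite{HartN17} observation, and the $\REV = \Theta(\SREV + \BREV)$ bound of~\cite{RubinsteinW15}.
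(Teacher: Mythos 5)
Your outline gets the bookkeeping around the edges right (Sub-Domain Stitching for the ``Therefore'' part, the bound $\sum_i p_i = O(\varepsilon)$ via an exclusive-sale argument at price $H$, extracting a factor $p_ip_j$ per pair), but the heart of the proposition --- the per-pair bound on $\REV(D\cdot \mathbb{I}(B_{ij}))$ in terms of $\SREV(D)$ and $\REV(D)$ only --- is exactly the step you label the ``main obstacle'' and leave unresolved. Routing through $\REV = \Theta(\SREV+\BREV)$ and then bounding $\BREV(D\cdot\mathbb{I}(B_{ij})) \leq p_ip_j\,\mathbb{E}[v([n])\mid B_{ij}]$ is a dead end for unbounded distributions: the term $\mathbb{E}[v([n]\setminus\{i,j\})]$ is a $\VAL$-type quantity that may be infinite, and the speculative fixes you mention (core-tail splits, ``charged elsewhere'') are not carried out and would require substantial new argument. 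So as written the proof does not go through.

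The missing idea is Lemma~\ref{lem:RW15} (from~\cite{RubinsteinW15}): for $D$ subadditive over independent items and any partition $S,\bar S$, $\REV(D) \leq 6\REV(D_S) + 6\REV(D_{\bar S})$. The paper applies it to $D|B_{ij}$ (which is still subadditive over independent items, since conditioning on $B_{ij}$ only restricts the supports of $D_i,D_j$) with the cut $S=\{i,j\}$, and then once more to split $\{i,j\}$ into singletons. This yields $\REV(D|B_{ij}) \leq 36\REV(D_i|B_i)+36\REV(D_j|B_j)+6\REV(D)$, where the crucial move --- the one that replaces your problematic expectation --- is that by independence the conditioning does not touch the other items, so $\REV(D_{[n]\setminus\{i,j\}}|B_{ij}) = \REV(D_{[n]\setminus\{i,j\}}) \leq \REV(D)$: the cross-item contribution is controlled by a \emph{revenue}, never a value. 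Multiplying by $p_ip_j$, summing over pairs, and using Claim~\ref{claim:sumpi} ($\sum_i p_i \leq \varepsilon/(1-\varepsilon)$) together with the observation that posting the per-item Myerson reserves (all $\geq H$) witnesses $\SREV(D) \geq \frac{1-2\varepsilon}{1-\varepsilon}\sum_i \REV(D_i\cdot\mathbb{I}(B_i))$ gives the stated bound with the constants $36$ and $6$; your concluding paragraph (stitching and $\SREV(D)=O(\REV(D))$) then applies unchanged.
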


Proposition~\ref{prop:main1} asserts that we can ignore the revenue contributed from cases where the buyer values two or more items each above $H$, because it contributes at most an $O(\varepsilon)$ fraction of the optimal revenue. We'll conclude the proof of Proposition~\ref{prop:main1} at the end of this subsection, and first build up the necessary lemmas. We begin with Claim~\ref{claim:sumpi} below, which asserts that we are unlikely to see many items with values exceeding $H$.

\begin{claim}\label{claim:sumpi} For any $t$, we have $\sum_i \Pr[v(\{i\}) \geq t] \leq \frac{\SREV(D)/t}{1-\SREV(D)/t}$. In particular, if $H \geq \frac{\SREV}{\varepsilon}$ then $\sum_i \Pr[v(\{i\}) > H] \leq \varepsilon/(1-\varepsilon)$.
\end{claim}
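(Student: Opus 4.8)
The plan is to lower-bound $\SREV(D)$ by the revenue of the simplest possible selling-exclusively mechanism --- the one that posts the same price $t$ on every item --- and then to massage the resulting inequality using the two elementary bounds $1-x\le e^{-x}$ and $e^{x}\ge 1+x$. First I would dispose of the degenerate case: if $\SREV(D)\ge t$ then the claimed right-hand side is non-positive (or undefined), so there is nothing to prove, and we may assume $\SREV(D)<t$.

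Write $q_i:=\Pr[v(\{i\})\ge t]$ and $S:=\sum_i q_i$. The key observation is that the selling-exclusively mechanism posting price $t$ on each singleton $\{i\}$ collects exactly $t\cdot\Pr[\exists i:\ v(\{i\})\ge t]$ in revenue: a buyer purchases some item iff her largest singleton value is at least $t$, in which case she buys an item attaining that max and pays $t$ (there is a harmless tie-break when some value equals $t$ exactly, which one can sidestep by posting price $t'<t$ and letting $t'\to t^-$). Since $D=\prod_i D_i$ has independent items, $\Pr[\exists i:\ v(\{i\})\ge t]=1-\prod_i(1-q_i)$, and hence $\SREV(D)\ge t\bigl(1-\prod_i(1-q_i)\bigr)$. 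Applying $1-q_i\le e^{-q_i}$ gives $\prod_i(1-q_i)\le e^{-S}$, and then $1-e^{-S}\ge S/(1+S)$ (equivalently $e^{S}\ge 1+S$) gives $\SREV(D)\ge t\cdot S/(1+S)$. Setting $\beta:=\SREV(D)/t\in[0,1)$, the inequality $\beta\ge S/(1+S)$ rearranges to $S\le \beta/(1-\beta)=\frac{\SREV(D)/t}{1-\SREV(D)/t}$, which is the claimed bound.

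For the ``in particular'' part I would simply plug in $t=H\ge \SREV(D)/\varepsilon$, so that $\SREV(D)/H\le\varepsilon<1$; since $\{v(\{i\})>H\}\subseteq\{v(\{i\})\ge H\}$ and $x\mapsto x/(1-x)$ is increasing on $[0,1)$, the first part yields $\sum_i\Pr[v(\{i\})>H]\le \sum_i\Pr[v(\{i\})\ge H]\le \frac{\SREV(D)/H}{1-\SREV(D)/H}\le \frac{\varepsilon}{1-\varepsilon}$.

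I do not expect a genuine obstacle here. The one step that is actually load-bearing is invoking independence of the items to write $\Pr[\exists i:\ v(\{i\})\ge t]=1-\prod_i(1-q_i)$ --- this is precisely where the product structure $D=\prod_i D_i$ is essential, since without it one could only bound this probability below by $\max_i q_i$, which is far too weak. The only even slightly fussy point is the tie-breaking convention in the uniform-price mechanism, which is standard and inconsequential; everything else is the two elementary inequalities above.
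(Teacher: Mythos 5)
Your proof is correct and is essentially the paper's argument: both lower-bound $\SREV(D)$ by the revenue of the uniform price-$t$ selling-exclusively mechanism, using independence of the singleton values, and then rearrange to get $\sum_i q_i \le \frac{\SREV(D)/t}{1-\SREV(D)/t}$. The only (cosmetic) difference is in the intermediate estimate: the paper bounds the chance that no other item clears $t$ by $1-\SREV(D)/t$ (itself obtained from the same mechanism), whereas you use the exact sale probability $1-\prod_i(1-q_i)$ together with $1-x\le e^{-x}$ and $e^{S}\ge 1+S$, and both routes give the identical final bound.
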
	

\begin{proof}
Consider the mechanism which offers the buyer to purchase any singleton set at price $t$. Then the probability that the buyer chooses to purchase an item must be at most $\SREV(D)/t$ (otherwise this contradicts the definition of $\SREV(D)$). Therefore, for each item $i$, the probability that item $i$ is \emph{purchased} in this mechanism is at least $\Pr[v(\{i\})\geq t] \cdot (1-\SREV(D)/t)$ (because whenever the buyer is willing to purchase item $i$, they are uninterested in any of the others with probability at least $(1-\SREV(D)/t)$, since all the values $v(\{j\})$ are independent). Therefore, we can also write that the revenue from this mechanism is at least $t \sum_i \Pr[v(\{i\}) \geq t] \cdot (1-\SREV(D)/t)$. As this must be at most $\SREV(D)$, we get our claim.
\end{proof}

We need just one more lemma, from~\cite{RubinsteinW15}, in order to wrap up:
\begin{lemma}[\cite{RubinsteinW15}]\label{lem:RW15}
Let $D$ be subadditive over independent items, and let $S, \bar{S}$ partition $[n]$. Then:
$$\REV(D) \leq 6\REV(D_S)+6\REV(D_{\bar{S}}).$$
\end{lemma}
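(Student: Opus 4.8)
The statement is exactly the structural lemma of~\cite{RubinsteinW15}, so in the paper itself it is simply cited; the plan below is how I would reprove it. The route I would take is to invoke the other structural fact of~\cite{RubinsteinW15} quoted above, namely $\REV(D) = \Theta(\max\{\SREV(D),\BREV(D)\})$, and then decompose $\BREV(D)$ and $\SREV(D)$ separately across the partition $S,\Sbar$, tracking constants carefully at the end to land on the factor $6$.

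\textbf{The bundling term (routine).} By subadditivity every draw satisfies $v([n]) \le v(S) + v(\Sbar)$, so for any price $p$ we have $\Pr[v([n]) \ge p] \le \Pr[v(S) \ge p/2] + \Pr[v(\Sbar) \ge p/2]$ (if $a+b\ge p$ then $a\ge p/2$ or $b\ge p/2$). Multiplying by $p$ and noting that the distribution of $v(S)$ under $D$ equals that of $v_S(S)$ under $D_S$, we get $p\Pr[v([n])\ge p] \le 2\BREV(D_S)+2\BREV(D_{\Sbar})$; taking the supremum over $p$ and using that bundling $S$ alone (resp.\ $\Sbar$ alone) is a feasible mechanism on the respective sub-instance gives $\BREV(D) \le 2\BREV(D_S)+2\BREV(D_{\Sbar}) \le 2\REV(D_S)+2\REV(D_{\Sbar})$.

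\textbf{The separates term (the real work).} I claim $\SREV(D) \le \REV(D_S) + c\,\REV(D_{\Sbar})$ for an absolute constant $c$. Fix $\SREV$-optimal item prices $\vec p^\ast$ and let $T^\ast(v)$ be the set a buyer with valuation $v$ purchases, so $\SREV(D) = \E_v[p^\ast(T^\ast(v))]$. Build a marginal mechanism $M'$ on the $D_S$-instance: the sub-buyer reports $\hat v_S$; the seller internally draws $v_{\Sbar}\sim D_{\Sbar}$, simulates the $\SREV$-mechanism on $(\hat v_S,v_{\Sbar})$ to obtain $T^\ast := T^\ast(\hat v_S,v_{\Sbar})$; awards the $S$-portion $T^\ast\cap S$; and charges $p^\ast(T^\ast\cap S)+p^\ast(T^\ast\cap\Sbar)-v_{\Sbar}(T^\ast\cap\Sbar)$, thresholded at $0$ — i.e.\ it charges the full $\SREV$-price $p^\ast(T^\ast)$ but rebates the buyer's realized surplus on the withheld $\Sbar$-portion. (This is IR: $p^\ast(T^\ast)\le v(T^\ast)\le v(T^\ast\cap S)+v_{\Sbar}(T^\ast\cap\Sbar)$ by subadditivity, so the charge is at most the value of what is awarded.) Rebating the $\Sbar$-surplus is precisely what neutralizes the ``free rider'' incentive the sub-buyer would otherwise have to misreport in order to change which $\Sbar$-items get bought; verifying that truthful reporting is an (approximate) equilibrium of $M'$ for merely-\emph{subadditive} valuations — where the convenient identities of the additive case hold only as inequalities — is the technical crux, to be carried out as in~\cite{RubinsteinW15}. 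Since $M'$ collects at least $\E_v[p^\ast(T^\ast(v))] - \E_v[v_{\Sbar}(T^\ast(v)\cap\Sbar)]$ in expectation, we get $\REV(D_S) \ge \SREV(D) - \E_v[v_{\Sbar}(T^\ast(v)\cap\Sbar)]$, so it remains to bound $\E_v[v_{\Sbar}(T^\ast(v)\cap\Sbar)]$ — the expected value the buyer derives from the $\Sbar$-items she buys — by $O(\REV(D_{\Sbar}))$. This is exactly where the naive estimate $\le\VAL(D_{\Sbar})$ fails, and it is recovered by a core–tail decomposition on $D_{\Sbar}$: with a cutoff $r=\Theta(\SREV(D_{\Sbar}))$, the ``core'' contribution (coordinates of $v_{\Sbar}$ below $r$) is a bounded subadditive-over-independent-items valuation and concentrates around its mean, so its expectation is $O(\BREV(D_{\Sbar}))=O(\REV(D_{\Sbar}))$; the ``tail'' contribution is controlled because, by a Claim~\ref{claim:sumpi}-style estimate, only $O(1)$ coordinates of $v_{\Sbar}$ exceed $r$ in expectation, each contributing $O(\SREV(D_{\Sbar}))=O(\REV(D_{\Sbar}))$.

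\textbf{Combining and the constant $6$.} Plugging both decompositions into $\REV(D)=\Theta(\max\{\SREV(D),\BREV(D)\})$ gives $\REV(D)\le O(\REV(D_S)+\REV(D_{\Sbar}))$; obtaining the clean constant $6$ requires not passing through the lossy step $\max\{\SREV,\BREV\}\to\SREV+\BREV$ but instead combining the $\SREV$- and $\BREV$-bounds additively and carefully re-tracking the core–tail constants (reusing the bookkeeping of Claim~\ref{claim:sumpi} and~\cite{RubinsteinW15}). The step I expect to be the main obstacle is the incentive-compatibility analysis of the marginal mechanism $M'$ under subadditive (rather than additive) valuations, together with the constant-tracking through the core–tail decomposition needed to keep the final factor as small as $6$ — which is precisely why in the paper this lemma is simply imported from~\cite{RubinsteinW15}.
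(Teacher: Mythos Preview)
Your route differs genuinely from the paper's (which does include a full proof in its appendix, not merely a citation). The paper avoids both the $\REV = O(\SREV + \BREV)$ black box and any core--tail argument via two ideas you are missing. First, it passes from $D$ to the distribution $D^+$ additivized across $(S,\bar{S})$, namely $v^+(X):=v(X\cap S)+v(X\cap\bar{S})$, by a nudge-and-round; since $D^+$ is genuinely additive across the cut, the standard \emph{additive} marginal-mechanism lemma applies to it. Second --- the key trick --- it splits into subdomains $Y:=\{v:v(S)\geq v(\bar{S})\}$ and $\bar{Y}$ and observes that $\VAL(D_{\bar{S}}\cdot\mathbb{I}(v\in Y))\leq\REV(D_S)$: the seller of $S$ samples $v(\bar{S})\leftarrow D_{\bar{S}}$ internally and posts it as a take-it-or-leave-it price on the bundle $S$; the buyer accepts exactly on $Y$, paying $v(\bar{S})$. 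This one line replaces your entire core--tail bound on $\mathbb{E}[v_{\bar{S}}(T^\ast\cap\bar{S})]$, and taking $\varepsilon=1/3$ in the nudge-and-round makes the arithmetic land on the constant $6$ exactly.

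Your approach, by contrast, has two problems. The constant $C$ in $\REV\leq C(\SREV+\BREV)$ from~\cite{RubinsteinW15} is far larger than $2$, so black-boxing it cannot give $6$; your proposed fix of re-tracking core--tail constants amounts to reproving that theorem from scratch --- more work than the direct argument and mildly circular, since that theorem already rests on the same approximate-marginal-mechanism machinery underlying this lemma. More concretely, your mechanism $M'$ has an IC issue beyond the subadditive-vs.-additive subtlety you flag. You write $\SREV$ but treat $T^\ast(v)$ as a set with $p^\ast(\cdot)$ additive over it, which is $\srev$; if you truly mean $\SREV$ (exclusive selling of one item), then whenever truthful reporting yields $T^\ast\in\bar{S}$ the sub-buyer is awarded and charged nothing, so she strictly prefers to inflate some $\hat v_S(\{i\})$ and buy her favorite $S$-item instead. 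Thus $M'$ collapses to exclusive selling on $S$ at the restricted prices, and your claimed lower bound $\SREV(D)-\mathbb{E}[v_{\bar{S}}(T^\ast\cap\bar{S})]$ does not follow. The mechanism needs redesigning, not just a more careful IC check.
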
 

\begin{proof}[Proof of Proposition~\ref{prop:main1}]
Consider any two items $i,j$, and consider the distribution $D|B_{ij}$. Observe that it is still subadditive over independent items (as conditioning on $B_{ij}$ simply reduces the domain of $D_i$ and $D_j$, but the parameters for all items are otherwise drawn independently conditioned on this). Therefore we can apply Lemma~\ref{lem:RW15} to get:
\begin{align*}
\REV(D|B_{ij}) &\leq 6 \REV (D_{\{i,j\}}|B_{ij}) + 6 \REV(D_{[n]\setminus \{i,j\}}|B_{ij})\\
&\leq 36\REV(D_i|B_{ij}) + 36 \REV(D_j|B_{ij}) + 6 \REV(D)\\
&\leq 36\REV(D_i|B_i) + 36 \REV(D_j|B_j) + 6 \REV(D).
\end{align*}
Here the first line follows from a direct application of Lemma~\ref{lem:RW15}. The second line follows again from an application of Lemma~\ref{lem:RW15}, and then the observation that conditioning on $B_{ij}$ does not affect the distribution of values for items in $[n]\setminus \{i,j\}$ at all, and therefore $\REV(D_{[n]\setminus \{i,j\}}|B_{ij}) = \REV(D_{[n]\setminus \{i,j\}}) \leq \REV(D)$. The final line follows by observing similarly that conditioning on $B_j$ doesn't affect the distribution of $v(\{i\})$, and so $D_i|B_{ij} = D_i|B_i$ (and $D_j|B_{ij} = D_j|B_j$). 

Now, let us multiply both sides by $p_ip_j$ to get (observing that $p_i \REV(D_i|B_i)$ is exactly $\REV(D_i \cdot \mathbb{I}(B_i))$:
$$\REV(D \cdot \mathbb{I}(B_{ij})) = p_ip_j \REV(D|B_{ij}) \leq 36p_j \REV(D_i\cdot \mathbb{I}(B_i))+36 p_i \REV(D_j\cdot \mathbb{I}(B_j)) + 6 p_i p_j \REV(D).$$

Finally, we can sum over all pairs to get:
\begin{align*}
\sum_{i=1}^n \sum_{j > i}  \REV (D\cdot \mathbb{I}(B_{ij})) &\leq 36\sum_i \left(\sum_{j \neq i} p_j\right) \REV(D_i\cdot \mathbb{I}(B_i)) + 6 \REV(D) \sum_{i=1}^n \sum_{j > i} p_i p_j\\
&\leq \frac{36\varepsilon}{1-\varepsilon} \sum_i \REV(D_i\cdot \mathbb{I}(B_i)) + 6\left(\sum_i p_i\right)^2 \REV(D)\\
&\leq \frac{36\varepsilon}{1-2\varepsilon} \SREV(D) + \frac{6\varepsilon^2}{(1-\varepsilon)^2}\REV (D).
\end{align*}
Here the first inequality just sums the previous inequality over all pairs, and reorganizes terms. The second inequality makes use of Claim~\ref{claim:sumpi} to upper bound $\sum_{j \neq i} p_j$. The third inequality again uses Claim~\ref{claim:sumpi} to upper bound $\sum_i p_i$, and also to observe that $\SREV(D) \geq (1-2\varepsilon)\sum_i \REV(D_i\cdot \mathbb{I}(B_i))$. To see this final claim, observe that each $D_i\cdot \mathbb{I}(B_i)$ is a single-dimensional distribution. So the optimal mechanism is a posted-price $r_i$, and that price is certainly $\geq H$ (since the entire support of $D_i\cdot \mathbb{I}(B_i)$ is $\geq H$). So consider setting each of these prices. Observe that $\REV(D_i\cdot \mathbb{I}(B_i))$ is exactly $r_i \cdot \Pr[v(\{i\}) \geq r_i]$. At the same time, item $i$ will be purchased at least as often as both $v(\{i\}) \geq r_i$ \emph{and} $v(\{j\} < r_j)$ for all $j \neq i$. As each $r_j \geq H$, the probability of this latter event is at least $1-\frac{\varepsilon}{1-\varepsilon}$ by Claim~\ref{claim:sumpi}. Therefore, we get that these prices witness $\SREV(D) \geq \frac{1-2\varepsilon}{1-\varepsilon}\sum_i \REV(D_i\cdot \mathbb{I}(B_i))$. 

To derive the ``Therefore,'' portion of the proposition, we use Sub-Domain Stitching (Lemma~\ref{lem:sds}). Recall that $B'_A$ is the event that exactly the items $i\in A$ have $v(\{i\}) > H$. Then for any mechanism $M$, we have:
$$\REV_M(D) = \sum_{A \subseteq [n]} \REV_M(D \cdot \mathbb{I}(B'_A)).$$
Also, observe that for any $i,j$, we have:
$$\REV_M(D \cdot \mathbb{I}(B_{ij})) = \sum_{A \supseteq \{i,j\}} \REV_M(D \cdot \mathbb{I}(B'_A)).$$
And after summing over all $i,j$, we get that:
$$\sum_{A, |A| \geq 2}\REV_M(D \cdot \mathbb{I}(B'_A)) \quad \leq \quad \sum_{i=1}^n \sum_{j > i} \REV(D \cdot \mathbb{I}(B_{ij})) \quad \leq \quad \frac{36\varepsilon}{1-2\varepsilon} \SREV(D) + \frac{6\varepsilon^2}{(1-\varepsilon)^2} \REV(D).$$
Therefore, as $B$ is the event that $v(\{i\}) > H$ for \emph{at most one} $i$, $\cup_{A, |A| \geq 2} B'_A = \bar{B}$, we conclude that:
$$\REV_M(D\cdot \mathbb{I}(\bar{B})) \quad \leq \quad  \frac{36\varepsilon}{1-2\varepsilon} \SREV(D) + \frac{6\varepsilon^2}{(1-\varepsilon)^2} \REV(D) \quad = \quad O(\varepsilon) \cdot \REV(D).$$
Hence, when $M$ is the optimal mechanism for $D$, we get:
\[ \REV_M(D \cdot \mathbb{I}(B)) \quad \geq \quad \left(1-\frac{42\varepsilon}{(1-\varepsilon)^2}\right)\REV(D) \quad = \quad (1-O(\varepsilon))\REV(D). \qedhere \]
\end{proof}


\subsection{Step 2: Simplifying Expensive Options}\label{sec:steptwo}
In this section, we build on Step 1 to argue that there exists a $(1-\varepsilon)$-optimal menu such that for all options priced above $E\geq \frac{H}{\varepsilon^2}$, they offer only a single item with non-zero probability. This is again parallels~\cite[Section~2.4]{BabaioffGN17}, except that we take $E\geq\frac{H}{\varepsilon^2}$ whereas they take $E:= \frac{nH}{\varepsilon}$. 

\begin{proposition}\label{prop:main2} Let $E \geq \max\{H/\varepsilon^2,\BREV(D)/\varepsilon\}$, and $H \geq \SREV(D)/\varepsilon$. Then for all $D$ that are subadditive over independent items, there exists a mechanism $M$ such that:
\begin{itemize}
\item $\REV_M(D) \geq \REV_M(D \cdot \mathbb{I}(B)) \geq (1-O(\varepsilon))\REV(D)$.
\item For all (randomized allocation, price) pairs $(S,p) \in M$, either $p \leq E$, or there exists at most one $i$ such that $\Pr[i \in S] > 0$. 
\end{itemize}
\end{proposition}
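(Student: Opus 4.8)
\textbf{Proof plan for Proposition~\ref{prop:main2}.}

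The plan is to start from the optimal mechanism $M^\star$ for $D$ restricted to the event $B$ (which by Proposition~\ref{prop:main1} already captures a $(1-O(\varepsilon))$ fraction of $\REV(D)$), and surgically modify only the options priced above $E$ so that each such option awards at most a single item with positive probability. Since we have conditioned on $B$, any valuation $v$ in the support either has all singleton values $\le H$, or there is a unique ``special'' item $i=i(v)$ with $v(\{i\})>H$. The key observation is that a buyer who pays more than $E$ must be deriving most of her value from a single item: by monotonicity and subadditivity, the portion of $v(S)$ coming from items other than the special one is at most $\sum_{j\ne i(v)} v(\{j\}) \le n H$ in the worst case, but more usefully, it is controlled by $\BREV$-type and $\SREV$-type bounds — a buyer who pays $p>E\ge \BREV(D)/\varepsilon$ cannot be doing so primarily on the strength of the ``bounded part'' of her valuation, since that part's contribution to revenue is governed by $\BREV$ and $\SREV$, both $\ll \varepsilon E$.

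Concretely, I would do the following. First, partition the support of $D\cdot\mathbb{I}(B)$ by which item (if any) is special, and use Sub-Domain Stitching (Lemma~\ref{lem:sds}) to analyze each piece. On the sub-domain where item $i$ is special, replace every option $(S,p)\in M^\star$ with $p>E$ by the option $(X_i(S),\,(1-\varepsilon)p)$, where $X_i(S)$ is the randomized set equal to $\{i\}$ with probability $\Pr[i\in S]$ and $\emptyset$ otherwise — exactly the construction in Definition~\ref{def:exclusive}. This is an ``allocation-nudge'': the new allocation is pointwise no better for any valuation (we've thrown away all items except $i$), so Proposition~\ref{prop:carefulNudge} applies, and the revenue loss is $(1-\varepsilon)$-multiplicative plus an additive $\mathbb{E}[v(S(v))-v(X_{i(v)}(S(v)))]/\varepsilon$ term. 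The heart of the argument is bounding this additive term: for a buyer buying an expensive option, $v(S)-v(X_{i(v)}(S))$ is at most the value contributed by the non-special items, and this is where the $\SREV(D)/\varepsilon = H$ cutoff and the choice $E\ge H/\varepsilon^2$ pay off — summing the expected ``leftover value'' over all valuations contributes only $O(\varepsilon)\cdot\SREV(D)$-ish, hence $O(\varepsilon)\REV(D)$, because each non-special item contributes at most $H$ and the expected number of items with value near $H$ is tiny by Claim~\ref{claim:sumpi} (for items above $H$) while for items below $H$ the contribution to an expensive option is negligible relative to $E$. Options priced $\le E$ are left untouched (or possibly also need a uniform $(1-\varepsilon)$ price shave for compatibility, as in Definition~\ref{def:exclusive}), and one checks a buyer who previously bought an expensive option does not now prefer a cheap one by more than an $O(\varepsilon)$-revenue amount — this is automatic from the nudge-and-round bookkeeping since Proposition~\ref{prop:carefulNudge} already accounts for all deviations.

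The main obstacle I anticipate is the bookkeeping in the additive error term: I need to argue that across all valuations, the total expected value stripped away from expensive options (the $\mathbb{E}[v(S(v))-v(X_{i(v)}(S(v)))]$ quantity) is $O(\varepsilon)\REV(D)$, and this requires carefully splitting the ``leftover'' into a part coming from items with value $\le H$ (bounded via $\BREV$/$\SREV$ and the fact that an expensive buyer pays $>E\gg H/\varepsilon$, so the revenue attributable to cheap items in expensive transactions is small) and a part from items with value $>H$ (which can only be the special item on event $B$, so there is no such part — this is precisely why conditioning on $B$ in Step 1 was the right move). The subadditivity (rather than additivity) hypothesis complicates this slightly, since $v(S)$ is not literally $\sum_{i\in S}v(\{i\})$, but monotonicity plus subadditivity gives $v(S)\le v(S\setminus\{i\})+v(\{i\})$ and $v(S\setminus\{i\})\le\sum_{j\in S\setminus\{i\}}v(\{j\})$, which is all one needs. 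Once the additive term is shown to be $O(\varepsilon)\REV(D)$, chaining it with the $(1-\varepsilon)$ multiplicative factor and Proposition~\ref{prop:main1} yields the claimed $(1-O(\varepsilon))\REV(D)$ guarantee, and the structural property (at most one item per expensive option) holds by construction.
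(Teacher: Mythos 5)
Your overall architecture matches the paper's: take a (near-)optimal mechanism, restrict attention to the event $B$ via Proposition~\ref{prop:main1}, apply the $E$-exclusive operation of Definition~\ref{def:exclusive}, and control the loss through the allocation-nudge Proposition~\ref{prop:carefulNudge}, so that everything reduces to bounding the additive term $\mathbb{E}[v(S(v))-v(S'(v))]/\varepsilon$. The structural bullet then holds by construction, exactly as you say.

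However, there is a genuine gap at the quantitative heart of the argument: your bound on the stripped-away value. You assert that the leftover from items valued $\leq H$ is ``negligible relative to $E$'' because each such item contributes at most $H$ and items above $H$ are rare (Claim~\ref{claim:sumpi}). But Claim~\ref{claim:sumpi} only controls the number of items \emph{above} $H$ (at most one anyway on event $B$); the dangerous case is many items each below $H$ whose \emph{aggregate} value exceeds $E$, so that the buyer purchases an expensive option and the entire $v([n]\setminus\{i\})$, potentially as large as $nH$, is stripped. The crude bound $nH$ would force $E=\Omega(nH)$, i.e., it only recovers the weaker parameters of~\cite{BabaioffGN17} and is incompatible with the stated $E\geq H/\varepsilon^2$ (which is exactly what the later $1/\varepsilon^4$-boundedness needs). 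Nor do Markov-style bounds via $\BREV$ suffice: $\Pr[v([n])\geq x]\leq \BREV(D)/x$ gives $\int_E^\infty \Pr[v([n])\geq x]\,dx$ divergent, so one cannot conclude $\mathbb{E}[v([n])\cdot\mathbb{I}(v([n])\geq E)]$ is small this way. The paper's proof closes this gap with Schechtman's concentration inequality for subadditive valuations over independent items with item values bounded by $H$ (Proposition~\ref{prop:schechtman}, Corollary~\ref{cor:Schechtman}), which yields an exponentially decaying tail and hence an $O(\exp(-1/\varepsilon))\cdot \SREV(D)$ bound on the leftover in the cases where no item (or only a moderately-valued special item, the $A_i$ case) exceeds $E/2$; the remaining case where the special item itself exceeds $E/2$ (the $L_i$ case) is handled by bounding $\mathbb{E}[v([n]\setminus\{i\})]\leq 6\BREV(D)+4H/\ln 2$ via Proposition~\ref{prop:RW} and multiplying by $\sum_i\Pr[v(\{i\})>E/2]=O(\SREV(D)/E)$, which is where $E\geq H/\varepsilon^2$ and $E\geq\BREV(D)/\varepsilon$ are actually consumed. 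Without a concentration ingredient of this kind, your sketch does not establish the additive term is $O(\varepsilon^2)\REV(D)$, and the proposition as stated (with $E\geq H/\varepsilon^2$) does not follow.
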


The main idea in the proposition, just as in~\cite{BabaioffGN17}, is an application of nudge-and-round.

\begin{proposition}\label{prop:exclusive} For any menu $M$, we gave $\REV_{M|_E}(D\cdot \mathbb{I}(B)) \geq (1-\varepsilon)\REV_M(D\cdot \mathbb{I}(B)) - O(\varepsilon)\cdot\REV(D)$.

\end{proposition}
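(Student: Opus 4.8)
The plan is a careful charging (``nudge-and-round'') argument in the spirit of Proposition~\ref{prop:carefulNudge}, but carried out directly rather than as a black box, because $M|_E$ contains strictly more options than a single ``discounted image'' of each option of $M$ (for an expensive $(S,p)$ it contains all $n$ exclusive versions $(X_i(S),(1-\varepsilon)p)$). Fix the optimal-for-$D$ menu $M$ (or any menu), and for $v\in\mathrm{support}(D)$ let $(S(v),p(v))$ be the option $v$ purchases. Partition the support of $D\cdot\mathbb{I}(B)$ into $Y_1:=\{v\in B:\ p(v)\le E\}$ and $Y_2:=\{v\in B:\ p(v)>E\}$, so that by Lemma~\ref{lem:sds}, $\REV_M(D\cdot\mathbb{I}(B))=\REV_M(D\cdot\mathbb{I}(Y_1))+\REV_M(D\cdot\mathbb{I}(Y_2))$ and likewise $\REV_{M|_E}(D\cdot\mathbb{I}(B))=\REV_{M|_E}(D\cdot\mathbb{I}(Y_1))+\REV_{M|_E}(D\cdot\mathbb{I}(Y_2))$, where $Y_1,Y_2$ are defined by behavior in $M$. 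A first easy observation I would record is the ``no-better-allocation'' fact: for every valuation $v'$ which is monotone and subadditive, $v'(X_i(S))=\Pr[i\in S]\cdot v'(\{i\})\le \mathbb{E}_{T\sim S}[v'(T)]=v'(S)$ (monotonicity gives $v'(T)\ge v'(\{i\})$ whenever $i\in T$). This is what makes the exclusive options legitimate ``downgrades'' of the original ones.

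\textbf{The cheap part ($Y_1$).} For $v\in Y_1$, the option $(S(v),(1-\varepsilon)p(v))$ lies in $M|_E$. Among all discounted cheap options of $M|_E$, the buyer's ranking is unchanged from $M$ since every cheap price is scaled by the same factor $(1-\varepsilon)$; moreover a short computation shows $v$ cannot be lured to a \emph{cheaper} cheap option (if $[v(S'')-(1-\varepsilon)p'']\ge[v(S(v))-(1-\varepsilon)p(v)]$ then $[v(S'')-p'']-[v(S(v))-p(v)]\ge\varepsilon(p(v)-p'')$, and the left side is $\le 0$ by $v$'s choice in $M$, forcing $p''\ge p(v)$). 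If instead $v$ prefers some exclusive option, its price is $(1-\varepsilon)$ times an expensive price $>E\ge p(v)$. Either way $v$ pays at least $(1-\varepsilon)p(v)$ (and strictly positive utility rules out $(\emptyset,0)$ when $p(v)>0$), so $\REV_{M|_E}(D\cdot\mathbb{I}(Y_1))\ge(1-\varepsilon)\REV_M(D\cdot\mathbb{I}(Y_1))$.

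\textbf{The expensive part ($Y_2$).} For $v\in Y_2$, let $i^*(v)$ be the unique item with $v(\{i^*\})>H$ if it exists (which it does for ``most'' such $v$, since paying $>E\ge H/\varepsilon^2$ is hard otherwise), and otherwise an arbitrary item; set $S'(v):=X_{i^*(v)}(S(v))$, an option of $M|_E$ at price $(1-\varepsilon)p(v)$. Running the charging argument: if $v$ buys $(T(v),q(v))$ in $M|_E$, then $(T(v),q(v))$ descends from some $(\hat S,\hat p)\in M$ with $q(v)=(1-\varepsilon)\hat p$ and $v(T(v))\le v(\hat S)$, and combining $v(S(v))-p(v)\ge v(\hat S)-\hat p\ge v(T(v))-\hat p$ with $v(T(v))-q(v)\ge v(S'(v))-(1-\varepsilon)p(v)$ yields (after rearranging) $q(v)\ge(1-\varepsilon)p(v)-g(v)/\varepsilon$, where $g(v):=v(S(v))-v(X_{i^*(v)}(S(v)))$. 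Taking expectations, $\REV_{M|_E}(D\cdot\mathbb{I}(Y_2))\ge(1-\varepsilon)\REV_M(D\cdot\mathbb{I}(Y_2))-\mathbb{E}[g(v)\mathbb{I}(Y_2)]/\varepsilon$. Adding the two parts gives $\REV_{M|_E}(D\cdot\mathbb{I}(B))\ge(1-\varepsilon)\REV_M(D\cdot\mathbb{I}(B))-\mathbb{E}[g(v)\mathbb{I}(Y_2)]/\varepsilon$, so it remains to prove $\mathbb{E}[g(v)\mathbb{I}(Y_2)]\le O(\varepsilon^2)\REV(D)$. By subadditivity, when $v$ has a huge item $i^*$ we have $g(v)\le v(S(v)\setminus\{i^*\})\le v([n]\setminus\{i^*\})$, a valuation of items \emph{all} of which are worth at most $H$ on the event $B$; when $v$ has no huge item, $g(v)\le v(S(v))\le v([n])\le\sum_i v(\{i\})$ with every summand at most $H$, and paying $>E\ge H/\varepsilon^2$ already forces such an event to be doubly rare.

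\textbf{Main obstacle.} The crux is this last estimate $\mathbb{E}[g(v)\mathbb{I}(Y_2)]\le O(\varepsilon^2)\REV(D)$. It is exactly a core-tail/tail-negligibility statement of Rubinstein--Weinberg type: on $Y_2$ one needs an aggregate value exceeding $E$, which is enormous compared with $\REV(D)$ (recall $E\ge H/\varepsilon^2$, $H\ge\SREV(D)/\varepsilon$, $E\ge\BREV(D)/\varepsilon$, hence $E\gg\SREV(D),\BREV(D)$), while each individual item is capped at $H$, so the event is not merely rare ($\Pr[v([n])>E]\le\BREV(D)/E$) but \emph{super}-polynomially rare once independence is exploited. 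I would bound it by splitting $Y_2$ according to which item (if any) is huge, using that the residual event $\{v(\{j\})\le H\ \forall j\neq i^*\}$ is a product event (so $D$ restricted to it is still subadditive over independent items, and Lemma~\ref{lem:RW15} applies to $D_{[n]\setminus\{i^*\}}$ restricted there), using Claim~\ref{claim:sumpi} to control $\sum_i\Pr[v(\{i\})>H]$, and using a Chernoff-style bound to control the tail of $v([n]\setminus\{i^*\})$ beyond $E$. Getting the constants/powers of $\varepsilon$ right here---without picking up stray factors of $n$ or $\log n$, which is precisely the quantitative improvement over~\cite{BabaioffGN17}---is the delicate step, and is where the hypothesis that only one value can exceed $H$ (event $B$, from Proposition~\ref{prop:main1}) is essential.
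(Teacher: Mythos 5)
Your reduction of the statement to the single estimate $\mathbb{E}[g(v)\,\mathbb{I}(Y_2)]\le O(\varepsilon^2)\REV(D)$ is sound and is essentially how the paper proceeds (it invokes Proposition~\ref{prop:carefulNudge} with $S'(v)=S(v)$ for cheap purchases and $S'(v)=X_i(S(v))$ for expensive ones, so your cheap/expensive split and charging inequality recover the same bound). The genuine gap is that this estimate --- which is the entire substance of the proposition, and the place where the quantitative improvement over~\cite{BabaioffGN17} lives --- is only sketched, and the sketch as stated would not go through. First, a ``Chernoff-style bound'' is not available here: conditioned on the relevant events, $v([n]\setminus\{i^*\})$ is a subadditive (not additive) function of independent items, so one needs Schechtman's concentration inequality (Proposition~\ref{prop:schechtman}, packaged as Corollary~\ref{cor:Schechtman}), and the relevant scale is the median of $v([n]\setminus\{i^*\})$, controlled by $\BREV(D)$, with increments capped at $H$; this is why the hypothesis $E\ge \BREV(D)/\varepsilon$ enters, which your sketch never uses. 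Second, Lemma~\ref{lem:RW15} is the wrong tool for the value bound you need: it relates \emph{revenues} of subinstances, whereas the required ingredient is the \emph{expected-value} bound $\mathbb{E}[v([n]\setminus\{i\})]\le 6\BREV(D)+4H/\ln 2$ (Proposition~\ref{prop:RW}).

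Third, and most importantly, your plan ``control the tail of $v([n]\setminus\{i^*\})$ beyond $E$'' misses the case that actually dominates the error. When the single huge item satisfies $v(\{i^*\})>E/2$ (the paper's event $L_i$), the buyer can purchase an option priced above $E$ even though $v([n]\setminus\{i^*\})$ is tiny, so no tail bound on the remaining items applies; there the loss must be charged in expectation via Proposition~\ref{prop:RW}, i.e.\ $\mathbb{E}[v([n]\setminus\{i\})\mid L_i]\le 6\BREV(D)+4H/\ln 2$, and then multiplied by $\sum_i\Pr[L_i]\lesssim \SREV(D)/E\le\varepsilon^3$ (Claim~\ref{claim:sumpi} at threshold roughly $E/2$), which after the $1/\varepsilon$ division yields $O(\varepsilon^2)\BREV(D)+O(\varepsilon)\SREV(D)=O(\varepsilon)\REV(D)$. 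The complementary regime, $v(\{i^*\})\in(H,E/2]$ or no item above $H$, is where Corollary~\ref{cor:Schechtman} gives an $O(\exp(-1/\varepsilon))\cdot\SREV(D)$ contribution because then $v([n]\setminus\{i^*\})$ (resp.\ $v([n])$) must itself exceed $E/2$ (resp.\ $E$). Without this two-case split and the two distinct tools, the ``delicate step'' you defer is not merely a matter of constants; it is the argument. As written, the proposal is an accurate reduction plus an unproven core claim, so it does not yet constitute a proof.
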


\begin{proof}
Observe that the menus $M, M|_E$ satisfy the hypotheses of Proposition~\ref{prop:carefulNudge}, where for any $v$ with $p(v) \leq E$ we define $S'(v):= S(v)$, and for any $v$ with $p(v) >E$ we define $S'(v):=X_i(S(v))$, where {$i:=\arg\max_j\{v(\{j\})\}$} (i.e. the option which keeps only $v$'s favorite item). 

To proceed, let us now use Proposition~\ref{prop:carefulNudge} for mechanisms $M, M|_E$ and distribution $D \cdot \mathbb{I}(B)$. The only task is to bound $\mathbb{E}_{v \leftarrow D \cdot \mathbb{I}(B)}[v(S(v)) - v(S'(v))]$. There are $n+1$ disjoint events making up $B$: $B'_\emptyset$, and $B'_{\{i\}}$ for all $i \in [n]$. Observe that the distribution $D|B'_A$ is still subadditive over independent items, for all $A$ (this is simply because each conditioning just restricts the support of $D_i$, but conditioned on this it is still a product distribution). 

The first observation we make is the following: if $v$ purchased some option from $M$ which had price $\leq E$, then we defined $S'(v)=S(v)$, and so the only $v$ for which $v(S(v)) - v(S'(v)) >0$ are those for which (at minimum) $v([n]) > E$ (as otherwise they are certainly not choosing to purchase an option priced $> E$). 

Moreover, we observe that if $v$ purchased an option with price $> E$ in $M$, and $i$ denotes their favorite item, then then because all $v(\cdot)$ are subadditive, $v(S(v)) - v(X_i(S(v)))\leq v([n] \setminus \{i\})$. 

Now, we wish to figure out how much expected value can possibly be lost from both cases. For both arguments, we'll make use of the following concentration inequality due to Schechtman~\cite{Schechtman99}, and rephrased by~\cite{RubinsteinW15}.

\begin{proposition}[\cite{Schechtman99}]\label{prop:schechtman} Let $D$ be subadditive over independent items, and let $v(\{i\}) \leq t$ with probability one, for all $i$. Then for any $q,a,k > 0$:
$$\Pr[v([n]) \geq (q+1)a + kt] \leq \Pr[v([n]) \leq a]^{-q} q^{-k}.$$
In particular, if $a$ denotes the median of $v([n])$ and $q = 2$,\footnote{That is, $a$ is such that $\Pr[v([n]) \geq a] \geq 1/2$ and also $\Pr[v([n]) \leq a] \geq 1/2$.} then for all $k$:
$$\Pr[v([n]) \geq 3a+kt] \leq 4\cdot 2^{-k}.$$
\end{proposition}

This proposition suffices to bound both the terms that we care about.

\begin{corollary}\label{cor:Schechtman} Let $D$ be subadditive over independent items, and let $v(\{i\}) \leq t$ with probability one, for all $i$. Let also $a$ denote the median of $v([n])$. Then for any $T \geq 3a$:
$$\mathbb{E}[v([n]) \cdot \mathbb{I}(v([n]) \geq T)] \leq \frac{4t}{\ln(2)} \cdot 2^{-\frac{T-3a}{t}}.$$
\end{corollary}
\begin{proof} Observe that
\begin{align*}
\mathbb{E}[v([n]) \cdot \mathbb{I}(v([n]) \geq T)] &= \int_T^\infty \Pr[v([n]) \geq x]dx\\
&\leq \int_T^\infty 4\cdot 2^{-\frac{x-3a}{t}}dx \quad= \quad \frac{-4t}{\ln(2)} \cdot 2^{-\frac{x-3a}{t}}|_T^\infty \quad = \quad \frac{4t}{\ln(2)} \cdot 2^{-\frac{T-3a}{t}}. \qedhere
\end{align*}
\end{proof}

Plugging in above with $t=H$ and $T = E$, we get that $\mathbb{E}_{v \leftarrow D|B'_\emptyset}[v([n])\cdot \mathbb{I}(v([n]) \geq E)] \leq 4H2^{-\frac{E - 6\BREV(D)}{H}} \leq O(\exp(-1/\varepsilon)\cdot \SREV(D))$. The calculations follow by observing that $\BREV(D)$ is at least half the median of $v([n])$ (when $v$ drawn from $D|B'_\emptyset$, or $D$ --- because setting the median as a price on $[n]$ sells with probability at least $1/2$ by definition), and therefore we can upper bound $3$ times the median with $6\BREV(D)$ (and also because $E \geq \BREV(D)/\varepsilon$). The above reasoning therefore concludes the following claim:

\begin{claim}\label{claim:allsmall} $\mathbb{E}_{v \leftarrow D|B'_\emptyset}[v([n])\cdot \mathbb{I}(v([n])> E)] \leq O(\exp(-1/\varepsilon))\cdot \SREV(D)$.
\end{claim}

Next, we need to upper bound the loss from $B'_{\{i\}}$ for all $i$. Here, we'll cover two cases, breaking $B'_{\{i\}}$ further into two events depending on whether $v(\{i\})$ is unusually large or not. Specifically, let $L_i$ denote the event that $v(\{i\}) > E/2$ (and $v(\{j\}) \leq H$ for all $j \neq i$), and $A_i$ denote the event that $v(\{i\}) \in (H,E/2]$ (and $v(\{j\}) \leq H$ for all $j \neq i$). 

$\mathbb{E}_{v \leftarrow D|A_i}[v([n] \setminus \{i\})]$ can be handled almost identically to the previous case. Specifically, observe that in order for the buyer to possibly purchase an option priced above $E$, conditioned on $A_i$, their value for $[n]\setminus\{i\}$ must exceed $E/2$. So we care about $\mathbb{E}_{v\leftarrow D|A_i}[v([n]\setminus \{i\})\cdot \mathbb{I}(v([n]\setminus \{i\})\geq E/2)]$. As $D|A_i$ is subadditive over independent items, we can again just directly apply Corollary~\ref{cor:Schechtman} (with $t = H, T = E/2$) to conclude that $\mathbb{E}_{v\leftarrow D|A_i}[v([n]\setminus \{i\})\cdot \mathbb{I}(v([n]\setminus \{i\})\geq E/2)] = O(\exp(-1/\varepsilon))\cdot \SREV(D)$. That is, we conclude the following claim:

\begin{claim}\label{claim:onemedium}
$\mathbb{E}_{v \leftarrow D|A_i}[v([n]\setminus \{i\}) \cdot \mathbb{I}(v([n]\setminus\{i\}) \geq E/2)] = O(\exp(-1/\varepsilon))\cdot \SREV(D)$.
\end{claim}

For $D|L_i$, it is now wholly possible that the buyer chooses to purchase an option priced $> E$ despite having low values for items in $[n]\setminus \{i\}$. In this case, we just bound the expectation directly using similar ideas. For this case, the bound we want has already been explicitly stated in~\cite{RubinsteinW15} (which follows from similar calculations to Corollary~\ref{cor:Schechtman}).

\begin{proposition}[\cite{RubinsteinW15}]\label{prop:RW} Let $D$ be subadditive over independent  items, and let $v(\{i\})$ be supported on $[0,t]$ for all $v \in \text{support}(D)$. Then $\mathbb{E}_{v\leftarrow D}[v([n])] \leq 6\BREV(D) + 4t/\ln(2).$
\end{proposition}

As $D |L_i$ is subadditive over independent items and each $v(\{j\}) \leq H$ for all $j \neq i$, we immediately get that $\mathbb{E}_{v \leftarrow D|L_i}[v([n]\setminus\{i\})] \leq 6\BREV(D^{[n]\setminus \{i\}}|L_i) + 4H/\ln(2)$. As the conditioning on $L_i$ only makes $v([n]\setminus\{i\})$ smaller, $\BREV(D^{[n]\setminus \{i\}}|L_i) \leq \BREV(D)$, and so we get:

\begin{claim}\label{claim:onebig}$\mathbb{E}_{v \leftarrow D|L_i}[v([n]\setminus\{i\})] \leq 6\BREV(D) + 4H/\ln(2)$.
\end{claim}

Finally, we are ready to wrap up:
\begin{align*}
\REV_{M|_E}(D \cdot \mathbb{I}(B)) &\geq (1-\varepsilon)\REV_M(D\cdot \mathbb{I}(B)) - \frac{1}{\varepsilon}\Big(\mathbb{E}_{v \leftarrow D \cdot \mathbb{I}(B'_\emptyset)}[v([n])] - \sum_i \mathbb{E}_{v \leftarrow D \cdot \mathbb{I}(B'_{\{i\}})}[v([n \setminus \{i\}])]\Big)\\
& \geq \REV_M(D\cdot \mathbb{I}(B))-O(\varepsilon)\cdot \REV(D) - O(\exp(-1/\varepsilon))\cdot \SREV(D)\\
&\quad- \frac{1}{\varepsilon}\Big( \sum_i \left(\mathbb{E}_{v \leftarrow D \cdot \mathbb{I}(A_i)}[v([n \setminus\{i\}])] + \mathbb{E}_{v \leftarrow D \cdot \mathbb{I}(L_i)}[v([n \setminus\{i\}])]\right)\Big)\\
&\geq \REV_M(D) -O(\varepsilon)\cdot\REV(D) \\&\quad-\frac{1}{\varepsilon}\Big( \ \sum_i \left(\Pr[A_i] \cdot \mathbb{E}_{v \leftarrow D|A_i}[v([n \setminus \{i\}])] +\Pr[L_i] \cdot \mathbb{E}_{v \leftarrow D|L_i}[v([n \setminus \{i\}])]\right)\Big)\\
&\geq \REV_M(D) -O(\varepsilon)\cdot\REV(D)  \\
&\quad -\frac{1}{\varepsilon}\Big( \sum_i \left(\Pr[A_i] \cdot O(\exp(-1/\varepsilon))\cdot \SREV(D) +\Pr[L_i] \cdot (6\BREV(D) + 4H/\ln(2) )\right)\Big)\\
& \geq \REV_M(D) -O(\varepsilon)\cdot\REV(D)  - (6\BREV(D) + 4H/\ln(2))\cdot \sum_i \frac{\Pr[L_i]}{\varepsilon}\\
& \geq\REV_M(D) -O(\varepsilon)\cdot\REV(D)  - (6\BREV(D) + 4H/\ln(2))\cdot \frac{\SREV/E}{\varepsilon(1-\SREV/E)}\\
&\geq \REV_M(D) -O(\varepsilon)\cdot\REV(D) .
\end{align*}
The first line follows just by a direct application of Proposition~\ref{prop:carefulNudge}. The second line follows from Claim~\ref{claim:allsmall}, and by partitioning $B'_{\{i\}}$ into $A_i \sqcup L_i$.

The third line follows by grouping together terms that are $O(\varepsilon) \cdot \REV(D)$, and rewriting the draw of $v$ from $D$ restricted to $A_i$ (resp. $L_i$) with drawing $v$ from $D$ conditioned on $A_i$ (resp. $L_i$) and multiplying by the probability of $A_i$ (resp. $L_i$). The fourth line just upper bounds the two remaining expectations using Claims~\ref{claim:onemedium} and~\ref{claim:onebig}.

The fifth line observes that $\sum_i \Pr[A_i] \leq 1$ as they are disjoint events, and again groups terms that are $O(\varepsilon) \cdot \REV(D)$. The sixth line uses the fact that $\sum_i \Pr[L_i] \leq \frac{\SREV/E}{1-\SREV(D)/E}$, which follows from Claim~\ref{claim:sumpi}.

The final line then follows by recalling that $E\geq H/\varepsilon^2\geq \SREV(D)/\varepsilon^3$. So the fraction $\frac{\SREV(D)/E}{\varepsilon(1-\SREV(D)/E)}$ is both $O(\varepsilon^2)$ and also $O(\varepsilon \SREV(D)/H)$, so the entire remaining term is $O(\varepsilon)\cdot \REV(D)$.
\end{proof}

\begin{proof}[Proof of Proposition~\ref{prop:main2}]
Simply observe that $M|_E$ clearly satisfies the second bullet of Proposition~\ref{prop:main2} for any $M$. Moreover, by Proposition~\ref{prop:exclusive}, if we take $M$ to be the revenue-optimal (or arbitrarily close to optimal) mechanism for $D$, then we get:
\[ \REV_{M|_E}(D) ~\geq~ \REV_{M|_E}(D \cdot \mathbb{I}(B)) ~\geq~ \REV_M(D\cdot \mathbb{I}(B)) - O(\varepsilon)\cdot \REV(D) ~\geq~ (1-O(\varepsilon)) \REV(D). \qedhere \]
\end{proof}

\subsection{Step 3: Trimming Expensive Options}\label{sec:stepthree}
Here, we show that any mechanism of the form promised by Proposition~\ref{prop:main2} can be further simplified to have only $2n$ options priced above $E$. The approach we take is again similar to~\cite[Section~2.5]{BabaioffGN17}, with the main substantive difference that we require only $E \geq H/\varepsilon^2$ (versus $E \geq nH$). 
\begin{proposition}\label{prop:main3}
 Let $E \geq \max\{H/\varepsilon^2,\BREV(D)/\varepsilon\}$, and $H \geq \SREV(D)/\varepsilon$. Then for all $D$ that are subadditive over independent items, there exists a mechanism $M$ such that:
\begin{itemize}
\item $\REV_M(D) ~~\geq~~ \REV_M\big(D \cdot \mathbb{I}(B)\big) ~~\geq~~ (1-O(\varepsilon))\REV(D)$.
\item For all (randomized allocation, price) pairs $(S,p) \in M$, either $p \leq E$, or there exists at most one $i$ such that $\Pr[i \in S] > 0$.
\item For each item $i$, there exist at most two distinct (randomized allocation, price) pairs $(S,p) \in M$ such that $p > E$ and $\Pr[i \in S] > 0$. Moreover, one of these options has $\Pr[i \in S] =1$. 
\end{itemize}
\end{proposition}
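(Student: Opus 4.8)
The plan is to start from the mechanism $M$ guaranteed by Proposition~\ref{prop:main2} and surgically rewrite only its expensive options (those priced $>E$), item by item. By Proposition~\ref{prop:main2}, every $(S,p)\in M$ with $p>E$ has $\Pr[i\in S]>0$ for at most one $i$, so $S$ is a randomized single-item allocation; hence the expensive options partition across items $i$ into sub-menus $M_i$, each a one-dimensional menu of ``receive item $i$ with probability $q$ for price $p>E$'' options. The first step is to observe that, restricted to the event $B$ (at most one item valued above $H$), these sub-problems decouple: a buyer whose unique high item is $i$, or who has no high item, cannot afford any option in $M_j$ with $v(\{j\})\le H<E$ (her value for a randomized allocation of $j$ is at most $v(\{j\})<E<p$), so the only buyers who ever touch $M_i$ lie in $B'_{\{i\}}$. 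It therefore suffices to rewrite each $M_i$ as a one-dimensional menu for $D_i\mid B'_{\{i\}}$, and to add the per-item revenue losses via Sub-Domain Stitching (Lemma~\ref{lem:sds}) over $B'_\emptyset,\{B'_{\{i\}}\}_i$. Cheap options ($p\le E$) are left untouched, so bullet~2 of Proposition~\ref{prop:main3} is inherited directly from Proposition~\ref{prop:main2}.

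For each $i$, after discarding options of $M_i$ that are never purchased (everything off the upper envelope of its $(\text{probability},\text{price})$ pairs; in particular all surviving options below the deterministic one have strictly smaller price), I replace $M_i$ by at most two options: the deterministic option $(\{i\},r_i)$, where $r_i$ is (arbitrarily close to) the revenue-optimal posted price for item $i$ against the single-dimensional distribution $D_i\mid B'_{\{i\}}$, kept only when $r_i>E$; and a single randomized option $(q_i^\ast\text{-alloc},p_i^\ast)$ drawn from $M_i$. Since a one-dimensional menu's revenue is dominated by its optimal posted price, $r_i\cdot\Pr[v(\{i\})\ge r_i]\cdot\Pr[\text{all }j\neq i:\,v(\{j\})\le H]$ already recovers all the revenue $M$ extracts from buyers in $B'_{\{i\}}$ who purchase an option of $M_i$; and the new option $(\{i\},r_i)$ can only help buyers who were purchasing cheap options (they switch to it only when $v(\{i\})>E\ge$ their old price, i.e. revenue strictly increases). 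When instead $r_i\le E$, I drop $M_i$ entirely and append $(\{i\},r_i)$ as a cheap option, which still captures that item's expensive revenue and introduces no new expensive option. The retained randomized option $(q_i^\ast\text{-alloc},p_i^\ast)$ — and, if needed to preserve incentive compatibility after removing envelope interiors, an overall $(1-\varepsilon)$-discount applied through an allocation-nudge (Proposition~\ref{prop:carefulNudge}) — is there solely to limit the loss from the ``in-between'' buyers, those with $v(\{i\})\in(E,r_i)$ who were served by a randomized option of $M_i$ but can neither afford $r_i$ nor be left unserved without too much loss.

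It remains to bound the total loss. Every buyer interacting with $M_i$ has $v(\{i\})>E$, and $\sum_i\Pr[v(\{i\})>E]=O(\SREV(D)/E)$ by Claim~\ref{claim:sumpi}, so the ``in-between'' band carries tiny probability mass across all items; since all prices in play are $\Theta(E)$, the hypothesis $E\ge H/\varepsilon^2\ge\SREV(D)/\varepsilon^3$, together with $\REV(D)=\Theta(\max\{\SREV(D),\BREV(D)\})$ and the subadditive revenue bound Lemma~\ref{lem:RW15}, converts the item-local losses into an aggregate bound of $O(\varepsilon)\cdot\REV(D)$ rather than a constant fraction of it. Non-interference with cheap options is immediate (new options are single-item and priced above $H$, so a cheap-option buyer switches only when $v(\{i\})>E$, which can only raise revenue). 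Chaining with $\REV_M(D\cdot\mathbb{I}(B))\ge(1-O(\varepsilon))\REV(D)$ from Proposition~\ref{prop:main2}, and using $\REV_{M'}(D)\ge\REV_{M'}(D\cdot\mathbb{I}(B))$ (nonnegativity of revenue on $\bar B$), gives bullets~1 and~2; bullet~3 holds because each item now has the deterministic option $(\{i\},r_i)$ with $\Pr[i\in S]=1$ plus at most one other expensive option.

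The main obstacle is precisely the claim implicit in the second and third paragraphs: that the (possibly unboundedly many) randomized options of each $M_i$ can be collapsed to a single one while losing, summed over all items, only an $O(\varepsilon)$-fraction of $\REV(D)$. Per item this loss can be a constant fraction of that item's expensive revenue — which itself can be as large as $\Theta(\REV(D))$ — so the argument must show that items with large expensive revenue are already fully captured by the deterministic option $(\{i\},r_i)$, and that the residual ``genuinely randomized'' revenue is globally small. This is exactly the place where the quantitative strengthening $E\ge H/\varepsilon^2$ (versus the weaker $E\ge nH$ of~\cite{BabaioffGN17}) is used, and where the bookkeeping through Claim~\ref{claim:sumpi} and the $\REV$-versus-$\SREV$/$\BREV$ comparison is delicate; I expect choosing $(q_i^\ast,p_i^\ast)$ correctly and pushing this accounting through to be the bulk of the work in the full proof.
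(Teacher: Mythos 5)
Your skeleton matches the paper's (split $M$ from Proposition~\ref{prop:main2} into a cheap part $M^*$ and per-item expensive sub-menus $M_i$, then replace each $M_i$ by at most two options, one deterministic), but the step you yourself flag as "the bulk of the work" is exactly the step that is missing, and the specific construction you propose does not survive it. Pricing the deterministic option at the unconstrained revenue-optimal posted price $r_i$ for $D_i\mid B'_{\{i\}}$ ignores the cheap menu: $M^*$ may already award item $i$ with probability up to some $a_i$ at prices $\leq E$, so a buyer with $v(\{i\})\geq r_i$ purchases your new option only when roughly $v(\{i\})(1-a_i)\geq r_i-E$; if $a_i$ is close to $1$ and $r_i\gg E$, such buyers defect to the cheap part and the claimed revenue $r_i\cdot\Pr[v(\{i\})\geq r_i]$ is simply not achieved. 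Your "non-interference" argument only covers cheap buyers trading up (which indeed can only help); the dangerous direction — buyers who paid $>E$ under $M$ now paying $\leq E$ — is the entire difficulty the paper highlights, and it is not addressed by keeping an unspecified randomized option $(q_i^\ast,p_i^\ast)$ plus a generic $(1-\varepsilon)$ discount. (The $r_i\leq E$ case has a similar unaddressed cannibalization issue: appending $(\{i\},r_i)$ as a cheap option can divert buyers from more expensive purchases.)

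The paper's proof closes this gap with a different per-item comparison, and as a result never has to bound the "collapse loss" you anticipate summing over items. It defines the expensive interval $C_i$ and the base pair $(q_i,p_i)$ (infimum allocation probability and price in $M_i$ after Lemma~\ref{lem:RZ}), proves the upper-bound decomposition of Observation~\ref{obs:oneway} over $v(\{i\})\in C_i$ rather than over $B'_{\{i\}}$, and then replaces $M_i$ by a two-option menu that \emph{retains} the base option (so every buyer with $v(\{i\})\in C_i$ still prefers the expensive sub-menu to $M^*$ by the definition of $C_i$) and prices the deterministic option marginally, charging the Myerson reserve of $D_i\cdot\mathbb{I}(v(\{i\})\in C_i)$ only for the probability beyond $q_i$. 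Lemma~\ref{lem:wrapup3} then shows, via ironed virtual values, that this two-option menu loses \emph{nothing} against $M_i$ on the single-dimensional distribution — so the only losses are the $O(\varepsilon)\cdot\REV(D)$ incurred when stitching back with $M^*$ (Lemma~\ref{lem:otherway}), controlled by Claim~\ref{claim:sumpi} and Proposition~\ref{prop:RW} together with the hypotheses on $E$ and $H$. In short: the decoupling and accounting you hope to push through item by item is replaced in the paper by an exact Myerson-based dominance per item plus an $O(\varepsilon)$ global stitching cost, and without that (or an equivalent mechanism-design argument handling $a_i$, $q_i$, $p_i$) your proposal does not establish the first bullet of the proposition.
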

 
We will again conclude the proof of Proposition~\ref{prop:main3} at the end of the section, and build up lemmas first. We will provide a complete proof below, and identify which lemmas have appeared in prior work. 

To begin, let $M$ denote the mechanism promised by Proposition~\ref{prop:main2}, and let $M_i$ denote the subset of the $M$ containing all $(S,p)$ for which $p > E$ and $\Pr[i \in S] > 0$. Let also $M^*:= M\setminus (\cup_i M_i)$ denote the options on $M$ priced $\leq E$. We first argue that it is without loss to remove from $M_i$ any options which are never purchased. Lemma~\ref{lem:RZ} below characterizing options in $M_i$ which are never purchased comes from~\cite{RileyZ83}.

\begin{lemma}[\cite{RileyZ83}]\label{lem:RZ} Say that $(S,p) \in M_i$ dominates a distinct $(T, q) \in M_i$ if:
\begin{itemize}
\item $\Pr[i \in S] \geq \Pr[i \in T]$.
\item $p/\Pr[i \in S] \leq q/\Pr[i \in T]$. 
\end{itemize}

Then if $M'_i$ denotes the subset of $M_i$ which is undominated by any other element of $M_i$, and we update $M':=M^* \cup (\cup_i M'_i)$, we have $\REV_M(D) = \REV_{M'}(D)$.
\end{lemma}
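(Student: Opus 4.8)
The statement to prove is Lemma~\ref{lem:RZ} (the Riley-Zeckhauser-style characterization of dominated options).

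\textbf{Plan.} The plan is to argue that no buyer type ever strictly prefers a dominated option over an option in $M'_i$ that dominates it, so removing dominated options from the menu changes no buyer's choice (up to ties, which can be broken consistently) and hence changes no revenue. The key structural fact to exploit is the one established in Proposition~\ref{prop:main2}/\ref{prop:main3}: every option $(S,p) \in M_i$ awards item $i$ with probability $\Pr[i \in S] > 0$ and \emph{no other item} with positive probability. Thus, from the point of view of \emph{any} valuation function $v(\cdot)$, the utility derived from such an option $(S,p)$ is exactly $v(\{i\}) \cdot \Pr[i \in S] - p$: the randomized allocation is just ``item $i$ with probability $\Pr[i \in S]$, nothing otherwise,'' so by linearity of expectation (recall the paper's convention $v(S) = \mathbb{E}[v(S)]$ for randomized $S$) the value is $\Pr[i\in S]\cdot v(\{i\})$. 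This reduces every option in $M_i$ to a point in the two-parameter family indexed by $(\text{probability } x_S := \Pr[i\in S],\ \text{price } p)$, and the buyer's utility for it is the affine function $x_S \cdot v(\{i\}) - p$.

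\textbf{Key steps.} First I would fix an arbitrary valuation $v(\cdot)$ and an item $i$, and show that if $(S,p)$ dominates $(T,q)$ in the sense defined, then $v$'s utility from $(S,p)$ is at least $v$'s utility from $(T,q)$. Writing $x = \Pr[i\in S] \geq y = \Pr[i \in T] > 0$ and the per-unit prices $p/x \leq q/y$, we want $x\cdot v(\{i\}) - p \geq y \cdot v(\{i\}) - q$, i.e. $(x-y)v(\{i\}) \geq p - q$. Since $v(\{i\}) \geq 0$ and $x \geq y$, the left side is nonnegative; and $p - q \leq (p/x)\cdot y - q \cdot (y/y)$... more cleanly: $p = x\cdot(p/x) \leq x\cdot (q/y)$ and $q = y \cdot (q/y)$, so $p - q \leq (x-y)(q/y) \leq (x-y)v(\{i\})$ whenever $v(\{i\}) \geq q/y$; and whenever $v(\{i\}) < q/y$ the buyer gets negative utility from $(T,q)$ (indeed $y\cdot v(\{i\}) - q < 0$), so she weakly prefers the outside option $(\emptyset,0)$, which is also available, hence she never strictly needs $(T,q)$. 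Either way, $(T,q)$ is never a strictly-best option when $(S,p)$ is present. Second, I would note that $M'_i$ is obtained by iteratively discarding options dominated by \emph{some} other element of $M_i$; since domination among a finite (or, after Lemma's hypotheses, well-defined) collection is transitive and every discarded option is dominated by some surviving one (a standard Pareto-frontier argument), every type's best option in $M' = M^* \cup (\cup_i M'_i)$ is achievable with the same utility as in $M$. Third, I would conclude $\REV_M(D) = \REV_{M'}(D)$ by observing that a type who purchased a surviving option in $M$ still purchases it (or a tied one) in $M'$ at the same price, and a type who purchased a now-discarded option $(T,q) \in M_i$ can be routed to a dominating $(S,p) \in M'_i$, which yields weakly higher utility \emph{and} weakly higher price $p \geq q$ — wait, that is not what we want for an \emph{equality}; instead I would argue that the discarded option was never \emph{strictly} optimal, so reassigning those (measure-appropriate) types to their genuine best option in $M'$ leaves the revenue identically equal to what $\REV_M$ already counted (here one uses that $\REV_M$ is defined via the buyer choosing a utility-maximizing option, and the set of utility-maximizing options, restricted to $M'$, is nonempty and gives the same utility, hence the realized revenue distribution is unchanged up to a consistent tie-breaking rule).

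\textbf{Main obstacle.} The conceptual content is easy; the subtlety — and the step I expect to require the most care — is the tie-breaking / measure-zero bookkeeping needed to get an exact \emph{equality} of revenues rather than just $\REV_{M'}(D) \geq (1-o(1))\REV_M(D)$. One must be careful that discarding $(T,q)$ does not \emph{increase} revenue either: a type who was indifferent between $(T,q)$ and a dominating $(S,p)$ with $p > q$ might, under one tie-breaking convention, have been assumed to buy the cheaper $(T,q)$; after removal she buys $(S,p)$ at a higher price. The clean fix is to adopt the convention (standard in this literature, and consistent with ``$\REV_M$'' being well-defined as a supremum/canonical value) that among utility-maximizing options the buyer's realized payment is taken to be the one that makes $\REV_M$ well-defined — or simply to note that the set of types for which two distinct menu options tie has the relevant structure and does not affect $\REV_M(D)$, since $\REV_M(D)$ is itself defined in the paper as an expectation over the buyer's (canonically chosen) favorite option. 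Granting the standard convention that ties are broken identically in $M$ and $M'$ whenever the tied options are common to both menus, the equality $\REV_M(D) = \REV_{M'}(D)$ follows immediately from the claim that no type's optimal utility changes and the surviving tied options are the same. I would state this convention explicitly and keep the argument short.
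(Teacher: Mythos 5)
Your core calculation is the same as the paper's: because every option in $M_i$ awards only item $i$, the utility of $(S,p)$ to any type is $\Pr[i\in S]\cdot v(\{i\})-p$, and the per-unit-price algebra shows a dominating option is preferred by any buyer who would consider the dominated one. Where you diverge is in how the removal step is closed, and this is where your proposal has a genuine gap. You only establish a \emph{weak} preference, and then lean on (a) a ``standard Pareto-frontier argument'' that every discarded option is dominated by some \emph{surviving} option, and (b) a tie-breaking convention. Neither is actually secured: $M_i$ need not be finite (the paper elsewhere takes infima over $M_i$ and speaks of sequences of options), so an infinite descending chain of dominated options can have no undominated dominator, and your routing argument breaks; and your convention (``ties broken identically among options common to both menus'') does not resolve exactly the problematic case you raise, namely a buyer who in $M$ tie-broke onto the \emph{removed} option $(T,q)$ and in $M'$ must switch to an option of a different price.

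The paper avoids both issues by proving the preference is \emph{strict}: for any buyer with $v(\{i\})\ge q/\Pr[i\in T]$ (i.e., anyone who weakly prefers $(T,q)$ to $(\emptyset,0)$), the dominating $(S,p)$ gives strictly higher utility, since $(v_i-p/\Pr[i\in S])\Pr[i\in S]\ge (v_i-q/\Pr[i\in T])\Pr[i\in T]$ with equality only when the options coincide. Hence no type ever purchases a dominated option in $M$ at all; the buyer's chosen option in $M$ therefore lies in $M'$, and since $M'\subseteq M$ it remains utility-maximizing there, so choices and payments are literally unchanged and $\REV_M(D)=\REV_{M'}(D)$ with no routing, no chain/Pareto-frontier claim, and no tie-breaking discussion. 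Your argument becomes correct if you strengthen your first step to this strict inequality (your own case analysis nearly gives it: when $\Pr[i\in S]=\Pr[i\in T]$ distinct options force $p<q$, and when $\Pr[i\in S]>\Pr[i\in T]$ the inequality $(x-y)v_i\ge(x-y)q/y\ge p-q$ is strict except at the zero-utility boundary $v_i=q/y=p/x$, which is the same degenerate point the paper's proof implicitly absorbs), and then run the removal argument in the direction ``the argmax in $M$ is never dominated'' rather than ``route buyers of removed options to a dominator.''
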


For the remainder of this section, we will therefore assume w.l.o.g. that no $M_i$ contains any dominated option (which implies that $M_i$ contains no distinct options $(S,p) \neq (T,q)$ with $\Pr[i \in S] = \Pr[i \in T]$, and also that for any $(S,p),(T,q) \in M_i$ that $\Pr[i \in S] > \Pr[i \in T] \Rightarrow p > q$). Now define $q_i:= \inf_{(S,p) \in M_i}\{\Pr[i \in S]\}$ and $p_i := \inf_{(S,p) \in M_i}\{p\}$. Like~\cite{BabaioffGN17}, we  think of each of the options in $M_i$ as allocating at least a $q_i$ fraction of item $i$ for $p_i$, and then additionally allocating a $\Pr[i \in S] - q_i$ fraction of item $i$ for $p - p_i$. 

\begin{defn} We will denote the \emph{expensive interval} $C_i$ for item $i$ as the range of values $c$ such that a buyer with value $c$ for item $i$ and $0$ for all other items would choose to purchase an option in $M_i$ from menu $M$. Observe that certainly $C_i \subseteq [E,\infty)$, and in fact $C_i \subseteq [p_i/q_i,\infty)$, but also perhaps $C_i$ is an open interval, or perhaps the lower endpoint is much larger. Denote by $c_i$ the lower endpoint of $C_i$ (noting that maybe $c_i \in C_i$, or maybe not). 
\end{defn} 

Observation~\ref{obs:oneway} observes an upper bound on the revenue achievable by $M$, by breaking down its revenue into that coming from $M^*$, and that coming from the $M_i$s. This idea is explicit in~\cite{BabaioffGN17}.

\begin{observation}\label{obs:oneway} $\REV_{M}(D\cdot\mathbb{I}(B)) \leq \REV_{M^*}(D \cdot \mathbb{I}(B \wedge \forall i,\ v(\{i\})\notin C_i)) + \sum_i \REV_{M_i}(D_i \cdot \mathbb{I}(B \wedge v(\{i\}) \in C_i))$.
\end{observation}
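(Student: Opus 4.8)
The plan is to decompose the event $B$ into $n+1$ pairwise-disjoint sub-events and apply Sub-Domain Stitching (Lemma~\ref{lem:sds}). The key structural fact is that every expensive interval satisfies $C_i \subseteq [E,\infty) \subseteq (H,\infty)$ (since $E \ge H/\varepsilon^2 > H$), so the events $\{v(\{i\}) \in C_i\}$ are pairwise disjoint \emph{within} $B$: if $v(\{i\}) \in C_i$ and $v(\{j\}) \in C_j$ for $i\neq j$, then $v(\{i\})>H$ and $v(\{j\})>H$, contradicting $B$. Thus $B$ is the disjoint union of $Y_0 := B \wedge(\forall i,\ v(\{i\})\notin C_i)$ and $Y_i := B \wedge(v(\{i\})\in C_i)$ for $i\in[n]$, and Lemma~\ref{lem:sds} gives $\REV_M(D\cdot\mathbb{I}(B)) = \REV_M(D\cdot\mathbb{I}(Y_0)) + \sum_i \REV_M(D\cdot\mathbb{I}(Y_i))$. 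It then suffices to establish $\REV_M(D\cdot\mathbb{I}(Y_0)) \le \REV_{M^*}(D\cdot\mathbb{I}(Y_0))$ and, for each $i$, $\REV_M(D\cdot\mathbb{I}(Y_i)) \le \REV_{M_i}(D_i\cdot\mathbb{I}(B\wedge v(\{i\})\in C_i))$.

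For the $Y_0$ term I would show that a buyer $v$ whose type lies in $Y_0$ never strictly prefers any option in $\cup_j M_j$ to her favorite option in $M^*\cup\{(\emptyset,0)\}$ when facing $M$. Since each $M_j$ option awards only item $j$ with positive probability, its realized set is always $\subseteq\{j\}$, so it yields utility $\Pr[j\in S]\,v(\{j\})-p$. Under $B$, at most one item $j_0$ has value exceeding $H$: every $M_j$ option with $j\neq j_0$ is priced $>E>H\ge v(\{j\})$ and hence gives negative utility (and if no item exceeds $H$, this rules out all of $\cup_j M_j$). For $j_0$ itself, the defining property of $C_{j_0}$ — read via the ``fake'' buyer with value $v(\{j_0\})$ on $j_0$ and $0$ elsewhere, who never prefers an $M_j$, $j\neq j_0$, option — says the best $M_{j_0}$-utility is no larger than the best $M^*\cup\{(\emptyset,0)\}$-utility \emph{for the fake buyer}. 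Combining this with the monotonicity/subadditivity inequality $v(T) \ge \Pr[j_0\in T]\,v(\{j_0\})$ (valid for any randomized allocation $T$, since $v(T)\ge v(\{j_0\})$ on every realization containing $j_0$) transfers the bound to the real buyer $v$. Adopting the tie-breaking convention in the definition of $C_i$ that ties favor $M_i$ options makes ``$v(\{j_0\})\notin C_{j_0}$'' a strict inequality, so $v$ strictly prefers $M^*\cup\{(\emptyset,0)\}$; her payment under $M$ thus equals her payment under $M^*$, and $\REV_M(D\cdot\mathbb{I}(Y_0)) = \REV_{M^*}(D\cdot\mathbb{I}(Y_0))$.

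For each $Y_i$ term, note $v(\{i\})\in C_i$ and (by $B$) every other item has value $\le H$, so as above no $M_j$ with $j\neq i$ is ever purchased. A buyer in $Y_i$ facing $M$ therefore purchases either an $M_i$ option or an option of $M^*\cup\{(\emptyset,0)\}$. In the first case, since the choice among $M_i$ options depends on $v$ only through $v(\{i\})$, she pays exactly $\pi_i(v(\{i\}))$, the payment extracted by $M_i$ from the fake buyer with value $v(\{i\})$; and because $v(\{i\})\in C_i$ this fake buyer indeed buys from $M_i$ when offered $M_i$ alone, so $\pi_i(v(\{i\}))$ is precisely this type's contribution to $\REV_{M_i}(D_i\cdot\mathbb{I}(v(\{i\})\in C_i))$. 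In the second case her payment is $\le E$, which is strictly below $\pi_i(v(\{i\})) > E$ (every $M_i$ option is priced above $E$, and some $M_i$ option is purchased since $v(\{i\})\in C_i$). Hence the payment under $M$ is at most $\pi_i(v(\{i\}))$, giving $\REV_M(D\cdot\mathbb{I}(Y_i)) \le \mathbb{E}_{v\leftarrow D}[\pi_i(v(\{i\}))\,\mathbb{I}(v\in Y_i)] = \REV_{M_i}(D\cdot\mathbb{I}(Y_i)) = \REV_{M_i}(D_i\cdot\mathbb{I}(B\wedge v(\{i\})\in C_i))$ (the final equality merely rewriting a quantity that depends on $D$ only through the marginal $D_i$). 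Summing the $Y_0$ and $Y_i$ bounds yields the Observation.

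The main obstacle, and the place the argument needs care, is the $Y_i$ case: a buyer whose \emph{item-$i$} value lies in $C_i$ need not actually purchase from $M_i$ when offered the full menu $M$, because her (positive, but $\le H$) value for other items may render some $M^*$ option more attractive. Ruling this out cleanly — showing that whenever she defects to $M^*$ she pays strictly less than $\pi_i(v(\{i\}))$ — is exactly where the price separation $E$ between $M^*$ and $\cup_j M_j$, together with the event $B$, is used; and the transfer of the fake-buyer analysis of $C_i$ to an arbitrary subadditive $v$ rests on $v(T)\ge \Pr[i\in T]\,v(\{i\})$. These are the two points I would be most careful to state precisely.
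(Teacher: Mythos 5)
Your proof is correct and follows essentially the same route as the paper's: a pointwise (type-by-type) accounting over the support of $D\cdot\mathbb{I}(B)$, using that under $B$ at most one $i$ can have $v(\{i\})\in C_i$, that a buyer's preferences among $M_i$-options depend only on $v(\{i\})$ (so her payment is exactly captured by the single-dimensional term), and that a buyer with $v(\{i\})\in C_i$ who defects to $M^*$ pays at most $E$ while $M_i$ would extract more than $E$ from her marginal type. The only cosmetic differences are that you organize the accounting via Sub-Domain Stitching over the events $Y_0,Y_1,\dots,Y_n$ and spell out (via $v(T)\ge \Pr[i\in T]\,v(\{i\})$ and the fake-buyer definition of $C_i$, plus tie-breaking) the step the paper asserts as ``certainly such $v$ will purchase an option on $M^*$.''
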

\begin{proof}
First, consider any $v$ such that $v(\{i\}) \notin C_i$ for all $i$. Then certainly such $v$ will purchase an option on $M^*$, by definition of $C_i$. So the revenue that such $v$ contributes to $M$ is exactly the same as their contribution to $M^*$. Next, consider any $v$ such that $v(\{i\}) \in C_i$ for some $i$ (note that coinciding with event $B$ means there is at most one such $i$). Say that this buyer purchases an option from $M_i$ in $M$. Then because all options in $M_i$ offer only item $i$ with non-zero probability, if we zero out their values for all other options, they would still purchase exactly the same option from $M_i$ when presented with only options in $M_i$. Therefore, the contribution of such a $v(\cdot)$ to $M$ is exactly their contribution to $M_i$, and is counted correctly (in fact, it is not even overcounted, since there can be at most one such $i$ with $v(\{i\}) \in C_i$ when event $B$ occurs). 

When $v$ is such that $v(\{i\}) \in C_i$ for some $i$, but $v$ chooses to purchase an option in $M^*$ in $M$, they pay at most $E$. As $v(\{i\}) \in C_i$, they will certainly purchase a non-trivial option from $M_i$ when presented only with $M_i$, and will pay at least $E$. Therefore, the contribution of such $v$ to $M$ is also (over-)counted by the right-hand side.

We have shown that for all $v$ in the support of $D\cdot \mathbb{I}(B)$, the contribution of purchases from $v$ to the LHS is always at most the contribution to the RHS.
\end{proof}

Observation~\ref{obs:oneway} asserts that the revenue achieved by menu $M$ on $D$ can be upper bounded cleanly by a contribution from $M^*$ in isolation, and contributions from $M_i$ \emph{on single-dimensional distributions} in isolation. We next argue that certain mechanisms for $D_i\cdot \mathbb{I}(v(\{i\}) \in C_i)$ can be included with $M^*$ without losing much of their revenue. Lemma~\ref{lem:otherway} is also explicit in~\cite{BabaioffGN17}, although quantitatively weaker.

\begin{lemma}\label{lem:otherway} Let $\{M'_i\}_{i \in [n]}$ be a set of menus each offering only item $i$ such that:
\begin{itemize}
\item Each $M'_i$ contains only undominated options.
\item At least one option $(S,p) \in M'_i$ has $\Pr[i \in S] \geq q_i$ and $p = \Pr[i \in S] \cdot c_i$.
\item All options $(S,p) \in M'_i$ have $p \geq \Pr[i \in S] \cdot c_i$.
\end{itemize} 
Define $M':= M^* \cup (\cup_i M'_i)$, and define $M''$ to be $M'$ after multiplying all prices by $(1-\varepsilon)$. Then:
\begin{align*} 
 \REV_{M''}\big(D\cdot\mathbb{I}( B)\big) &\geq (1-\varepsilon)\cdot \REV_{M^*}\big(D\cdot \mathbb{I}(B\wedge \forall i, \ v(\{i\})\notin C_i)\big)  \\*
 & \qquad + (1-\varepsilon)^2\sum_i \REV_{M'_i} \big(D_i\cdot \mathbb{I}(B \wedge v(\{i\})\in C_i)\big) ~~-~~ O(\varepsilon)\cdot \REV(D). 
\end{align*}
\end{lemma}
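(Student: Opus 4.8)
The plan is to lower bound $\REV_{M''}(D\cdot\mathbb{I}(B))$ buyer-by-buyer, sorting buyers according to which block of $M''=M^*\cup\bigcup_iM'_i$ they purchase from (recall every price in $M''$ is the corresponding price of $M'$ scaled by $1-\varepsilon$). Two elementary exchange arguments will do most of the work. First: scaling all prices of a (sub)menu $N$ by $1-\varepsilon$ keeps the revenue of $N$, on every valuation, at least $(1-\varepsilon)$ times the original — if the old choice had price $p^*$, then after scaling $(\cdot,(1-\varepsilon)p^*)$ is still present, and adding the two optimality inequalities shows the new choice has (scaled) price $\ge(1-\varepsilon)p^*$; in particular, whenever a buyer in $M''$ ends up purchasing from the scaled $M^*$-block she pays $\ge(1-\varepsilon)\REV_{M^*}(v)$, and likewise within any single $M'_i$-block. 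Second: since every option of $M'_i$ allocates only item $i$, a valuation $v$ and its restriction $v|_i$ (equal to $v$ on $\{i\}$, zero elsewhere) assign identical value to every allocation of $M'_i$; adding $v$'s optimality inequality in $M''$ to $v|_i$'s optimality inequality in $M'_i$ then shows that if $v$ buys from the scaled $M'_i$-block she pays $\ge(1-\varepsilon)\REV_{M'_i}(v|_i)$.

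With these tools in hand I would first decompose and then handle the ``easy'' blocks. Since on $B$ at most one item $i$ has $v(\{i\})>H\ge E$, and every $C_i\subseteq[E,\infty)$, the events $E_i:=B\cap\{v(\{i\})\in C_i\}$ ($i\in[n]$) are pairwise disjoint, disjoint from $E_0:=B\cap\{\forall i:\ v(\{i\})\notin C_i\}$, and partition $\mathrm{support}(D\cdot\mathbb{I}(B))$. By Sub-Domain Stitching (Lemma~\ref{lem:sds}), $\REV_{M''}(D\cdot\mathbb{I}(B))=\REV_{M''}(D\cdot\mathbb{I}(E_0))+\sum_i\REV_{M''}(D\cdot\mathbb{I}(E_i))$, and it suffices to bound each summand. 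For $E_0$: if $v$ buys from the $M^*$-block, she pays $\ge(1-\varepsilon)\REV_{M^*}(v)$ by the first fact. If she instead buys $(S,(1-\varepsilon)p)$ from an $M'_j$-block with $v(\{j\})\le c_j$ (i.e. $v(\{j\})$ below $C_j$), then since every option of $M'_j$ has price-per-probability at least $c_j\ge v(\{j\})$ her utility is $\Pr[j\in S]v(\{j\})-(1-\varepsilon)p\le\varepsilon\Pr[j\in S]c_j$; as this had to beat her best scaled $M^*$-option, whose utility is at least $(\text{her old }M^*\text{ utility})+\varepsilon\REV_{M^*}(v)\ge\varepsilon\REV_{M^*}(v)$, we get $\varepsilon\Pr[j\in S]c_j\ge\varepsilon\REV_{M^*}(v)$, hence she pays $(1-\varepsilon)p\ge(1-\varepsilon)\Pr[j\in S]c_j\ge(1-\varepsilon)\REV_{M^*}(v)$. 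The only remaining possibility, $v(\{j\})>c_j$ lying \emph{above} a bounded $C_j$ for the unique high item $j$, I would fold into the fringe estimate below. So, up to that fringe term, $\REV_{M''}(D\cdot\mathbb{I}(E_0))\ge(1-\varepsilon)\REV_{M^*}(D\cdot\mathbb{I}(E_0))$.

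For $E_i$: fix $v$ with $\mathbb{I}(E_i)=1$; on $B$ we have $v(\{j\})\le H<c_j$ for $j\ne i$, so $v$ never buys from an $M'_j$-block with $j\ne i$, and she therefore buys from the $M^*$-block or the $M'_i$-block. Write $\hat p:=\REV_{M'_i}(v|_i)$, with $(\hat S,\hat p)$ the option $v|_i$ selects in $M'_i$. If $v$ buys from the $M'_i$-block, the second fact gives payment $\ge(1-\varepsilon)\hat p\ge(1-\varepsilon)^2\hat p$. If she buys from the $M^*$-block — the genuinely lossy case — her payment is at least $(1-\varepsilon)\REV_{M^*}(v)$ but the shortfall relative to the target rate $(1-\varepsilon)^2\hat p$ is at most $\hat p\le\Pr[i\in\hat S]\,v(\{i\})$. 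Summing over $v$ gives $\REV_{M''}(D\cdot\mathbb{I}(E_i))\ge(1-\varepsilon)^2\REV_{M'_i}(D_i\cdot\mathbb{I}(E_i))-L_i$, where $L_i:=\mathbb{E}\big[\mathbb{I}(E_i,\ v\text{ buys from }M^*)\cdot\Pr[i\in\hat S]\,v(\{i\})\big]$.

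I expect the main obstacle to be bounding $\sum_iL_i$ (together with the leftover ``above $C_j$'' term from the $E_0$ analysis) by $O(\varepsilon)\REV(D)$. The key structural point I would push on is that a buyer $v\in E_i$ can prefer the $M^*$-block over the $M'_i$-block only when $v(\{i\})$ lies in a \emph{fringe} of $C_i$ near an endpoint: near the lower endpoint $c_i$, because by condition~(2) the cheapest useful option of $M'_i$ has price-per-probability exactly $c_i$, so it yields $\approx 0$ utility at $v(\{i\})\approx c_i$ while an $M^*$-option may still give positive utility there; and (if $C_i$ is bounded above) near its upper endpoint, which also absorbs the leftover $E_0$ case. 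To bound the aggregate loss I would combine: (a)~that every $M^*$-option is priced $\le E$, so a fringe buyer contributes only a bounded shortfall; (b)~the hypothesis $E\ge H/\varepsilon^2\ge\SREV(D)/\varepsilon^3$ together with Claim~\ref{claim:sumpi}, which (since every $c_i\ge E$) forces $\sum_i\Pr[v(\{i\})\ge(1-O(\varepsilon))c_i]$ to be tiny and, more carefully, bounds how much probability the fringes can carry; and (c)~Corollary~\ref{cor:Schechtman} and Proposition~\ref{prop:RW} to control the contribution of the non-item-$i$ coordinates that appear in a fringe buyer's $M^*$-allocation. Assembling Steps on $E_0$ and $E_i$ and charging the fringes as $O(\varepsilon)\REV(D)$ then yields exactly $\REV_{M''}(D\cdot\mathbb{I}(B))\ge(1-\varepsilon)\REV_{M^*}\big(D\cdot\mathbb{I}(B\wedge\forall i,v(\{i\})\notin C_i)\big)+(1-\varepsilon)^2\sum_i\REV_{M'_i}\big(D_i\cdot\mathbb{I}(B\wedge v(\{i\})\in C_i)\big)-O(\varepsilon)\REV(D)$, as claimed; this last aggregation of the fringe bound is the step I am least sure reduces to routine estimation.
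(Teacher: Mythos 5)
Your setup is sound and matches the paper's skeleton: the same partition of $B$ into $E_0$ and the disjoint events $E_i=B\wedge v(\{i\})\in C_i$, sub-domain stitching, and the two exchange inequalities (which are exactly the special cases of Proposition~\ref{prop:advancedNudge} / Proposition~\ref{prop:carefulNudge} that the paper invokes). The handling of $E_0$ is essentially fine (the paper gets it even more cheaply: any purchase from a discounted $M'_j$ costs at least $(1-\varepsilon)E$, which already dominates the original $M^*$ payment since all $M^*$ prices are $\le E$, so no ``above $C_j$'' leftover case is needed). But the heart of the lemma is the $E_i$ analysis, and there your argument has a genuine gap. The structural claim you lean on --- that a buyer in $E_i$ can prefer the $M^*$-block only when $v(\{i\})$ lies in a fringe of $C_i$ near an endpoint --- is not justified and is false in general. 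The interval $C_i$ is defined for a buyer who values \emph{only} item $i$; a real buyer in $E_i$ also has values up to $H$ on the other items, and whenever her value for the non-$i$ coordinates of a cheap allocation exceeds the (possibly small, and not monotone in $v(\{i\})$) utility gap between $M'_i$ and the item-$i$-only value of $M^*$, she defects to $M^*$ --- this can happen anywhere in $C_i$, not just near its endpoints. Consequently your loss term $L_i$ cannot be localized to a fringe, and your per-buyer shortfall bound $\hat p\le\Pr[i\in\hat S]\,v(\{i\})$ does not aggregate: $v(\{i\})\ge E$ is unbounded, and $\mathbb{E}[v(\{i\})\cdot\mathbb{I}(v(\{i\})\ge E)]$ can be huge (even infinite) for distributions with finite revenue, so pairing it with a probability bound such as Claim~\ref{claim:sumpi} does not by itself give $O(\varepsilon)\REV(D)$. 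Your ingredient (a) is also off: the fact that $M^*$ prices are $\le E$ bounds what the defecting buyer \emph{pays}, not the shortfall $\hat p-(\text{payment})$, which is governed by $\hat p$.

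What actually closes this step in the paper is a different accounting: couple the true buyer on $B^*_i$ with the single-item buyer drawn from $D_i\,|\,(v(\{i\})\in C_i)$ (who, by the hypotheses on $M'_i$ and the definition of $C_i$, always purchases from $M'_i$), and apply Proposition~\ref{prop:advancedNudge} with coupling error $\delta_{M'}\le\mathbb{E}_{v\leftarrow D|B^*_i}[v([n]\setminus\{i\})]\le 6\BREV(D)+4H/\ln(2)$ by Proposition~\ref{prop:RW}. This gives a \emph{per-event} additive loss of $(6\BREV(D)+4H/\ln 2)/\varepsilon$ (Claim~\ref{claim:icase}), which is then multiplied by $\Pr[B^*_i]$ and summed; since $C_i\subseteq[E,\infty)$ and $E\ge\max\{H/\varepsilon^2,\BREV(D)/\varepsilon\}$ with $H\ge\SREV(D)/\varepsilon$, Claim~\ref{claim:sumpi} gives $\sum_i\Pr[B^*_i]=O(\varepsilon^3)$, and the product is $O(\varepsilon)\REV(D)$. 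In other words, the loss is controlled not by restricting \emph{where} in $C_i$ defection happens, but by bounding the \emph{expected value of the other items} (which is what makes defection profitable) and exploiting the tiny total probability of the events $B^*_i$. Your sketch mentions Proposition~\ref{prop:RW} and Claim~\ref{claim:sumpi} as possible tools, but without this coupling-based accounting the ``fringe'' aggregation you defer to does not go through, so as written the proof is incomplete at precisely the step you flagged.
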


\begin{proof}
We will do this with parallel applications of Proposition~\ref{prop:advancedNudge}, by breaking down event $B$ into events $B\wedge (\forall i, \ v(\{i\})\notin C_i)$ and $B \wedge( v(\{i\}) \in C_i)$ (separately for each $i$).

First, consider $v$ drawn such that $v(\{i\}) \notin C_i$ for all $i$. Then on the left-hand side (when purchasing from $M''$), such a $v$ either purchases its favorite option from $M^*$ (after the discount), or some option from $M'_i$ (after the discount). 

Let $(S,p)$ denote $v$'s favorite option from $M^*$ (before the discounts). Then such a $v$ will contribute $(1-\varepsilon)p$ to the RHS. After the discounts, their favorite option from $M^*$ might change, but not to one priced $\leq (1-\varepsilon)p$ (because higher-priced options are discounted more heavily). Also, they may choose to purchase from $M''$ an option from one of the discounted $M'_i$s, but not paying $\leq (1-\varepsilon)E$, which is certainly $\geq (1-\varepsilon)p$, as every option on $M^*$ has price at most $E$. So a $v$, such that $v(\{i\}) \notin C_i$, contributes at least $(1-\varepsilon)p$ to the LHS, and at most $(1-\varepsilon)p$ to the RHS. This allows us to conclude:

\begin{claim}\label{claim:M*}
$\REV_{M''}\big(D\cdot\mathbb{I}( B\wedge \forall i, \ v(\{i\})\notin C_i)\big) \geq (1-\varepsilon)\cdot \REV_{M^*}\big(D\cdot \mathbb{I}(B\wedge \forall i, \ v(\{i\})\notin C_i)\big).$
\end{claim}

Next, consider a $v$ such that $v(\{i\}) \in C_i$ and event $B$ is satisfied. Note that the additional restriction to $B$ implies that $v(\{j\}) < H$ for all $j \neq i$. If we then denote $B^*_i$ the event that $v(\{i\}) \in C_i$ and $v(\{j\}) < H$ for all $j \neq i$, then we again see that $D|B^*_i$ is subadditive over independent items. Consider also the abuse of notation which defines $D_i|(v(\{i\}) \in C_i)$ to be the distribution which first draws $X_i \leftarrow D_i|(v(\{i\}) \in C_i)$, and then defines $v(S):= v(S \cap \{i\})$ (that is, the distribution is essentially only over values for item $i$, but we extend it to a distribution over $n$-item valuations by ignoring any items $\neq i$). 

We first observe that, because any $v' \leftarrow D_i | (v(\{i\}) \in C_i)$ has zero value for all other items, and because there is an option on $M'_i$ which provides \emph{at least as much} utility as receiving item $i$ with probability $q_i$ and paying $p_i$ for it, that such $v'$ will always purchase an option from $M'_i$ (before or after the discounts) when presented with menu $M'$ (this is because, by definition of $C_i$, such buyers have higher utility for receiving item $i$ with probability $q_i$ and paying $p_i$ than any option on $M^*$, and such options only become more attractive after the discounts, because they are more expensive than any option on $M^*$). We therefore immediately get that:
$$\REV_{M'}\big(D_i | v(\{i\}) \in C_i \big) = \REV_{M'_i}\big(D_i | v(\{i\}) \in C_i \big).$$

Now we wish to use this to bound $\REV_{M''}(D_i | B^*_i)$. So consider two coupled draws $v, v'$ from $D|B^*_i$ and $D_i | (v(\{i\}) \in C_i)$, coupled by $X_i$. Then as $v(\{i\}) = v'(\{i\})$, and both distributions are subadditive (over independent items), we have that the maximum possible value of $|v(S) - v'(S)|$ is $v([n]\setminus \{i\})$. That is, for any menu $M'$ (and in particular, the $M'$ we've defined above):
$$\delta_{M'}(D|B^*_i,\ D_i | v(\{i\})\in C_i) \quad \leq \quad \mathbb{E}_{v \leftarrow D|B^*_i}[v([n]\setminus\{i\})]  \quad = \quad  \mathbb{E}_{v \leftarrow D^{[n]\setminus\{i\}}|B^*_i}[v([n]\setminus\{i\})].$$

Observe also that as $D^{[n]\setminus\{i\}}|B^*_i$ is subadditive over independent items, and has $v(\{j\}) \leq H$ for all $j$, we get immediately from Proposition~\ref{prop:RW} that: 
\[ \mathbb{E}_{v \leftarrow D^{[n]\setminus\{i\}}|B^*_i}[v([n]\setminus\{i\})] \quad \leq \quad 6\BREV \big(D^{[n]\setminus\{i\}}|B^*_i \big) + 4H/\ln(2) \quad \leq \quad 6\BREV(D) + 4H/\ln(2). \] 
Therefore, Proposition~\ref{prop:advancedNudge} immediately lets us conclude the following claim:

\begin{claim}\label{claim:icase} We have
\begin{align*}\REV_{M''}(D|B^*_i) &\geq (1-\varepsilon) \REV_{M'}(D_i | v(\{i\}) \in C_i) - (6\BREV(D) + 4H/\ln(2))/\varepsilon\\
&= (1-\varepsilon) \REV_{M'_i}(D_i | v(\{i\}) \in C_i) - (6\BREV(D) + 4H/\ln(2))/\varepsilon.
\end{align*}
\end{claim}

Now using the above claims we can wrap up the proof of Lemma~\ref{lem:otherway}  as:
\begin{align*}
\REV_{M''}(D\cdot \mathbb{I}(B)) &= \REV_{M''}(D\cdot \mathbb{I}(B \wedge \forall i,\ v(\{i\}) \notin C_i)) + \sum_i \REV_{M''}(D \cdot \mathbb{I}(B \wedge v(\{i\}) \in C_i)).\\
&\geq (1-\varepsilon)\REV_{M^*}(D\cdot \mathbb{I}(B \wedge \forall i,\ v(\{i\})\notin C_i) )+ \sum_i \Pr[B^*_i] \cdot \REV_{M''}(D|B^*_i).\\
&\geq (1-\varepsilon)\REV_{M^*}(D\cdot \mathbb{I}(B \wedge \forall i,\ v(\{i\})\notin C_i)) \\
&\quad+ \sum_i \Pr[B^*_i] \cdot \left((1-\varepsilon)\REV_{M'_i}(D_i | v(\{i\})\in C_i) - \big(6\BREV(D)+4H/\ln(2)\big)/\varepsilon\right).\\
&\geq (1-\varepsilon)\REV_{M^*}(D\cdot \mathbb{I}(B \wedge \forall i,\ v(\{i\})\notin C_i) \\
&\quad+ \sum_i (1-\varepsilon)^2 \REV_{M'_i}(D_i \cdot \mathbb{I}( v(\{i\})\in C_i)) - \sum_i \Pr[B^*_i]\big(6\BREV(D)+4H/\ln(2)\big)/\varepsilon.\\
&\geq (1-\varepsilon)\REV_{M^*}(D\cdot \mathbb{I}(B \wedge \forall i,\ v(\{i\})\notin C_i)) \\
&\quad+ \sum_i (1-\varepsilon)^2 \REV_{M'_i}(D_i \cdot \mathbb{I}( v(\{i\})\in C_i)) - \frac{\SREV(D)/E}{1-\SREV(D)/E}\big(6\BREV(D)+4H/\ln(2)\big)/\varepsilon.\\
&\geq (1-\varepsilon)\REV_{M^*}(D\cdot \mathbb{I}(B \wedge \forall i,\ v(\{i\})\notin C_i) \\
&\quad+ \sum_i (1-\varepsilon)^2 \REV_{M'_i}(D_i \cdot \mathbb{I}( v(\{i\})\in C_i)) - O(\varepsilon) \cdot \REV(D).
\end{align*}
Here the first equality above follows directly from Sub-Domain Stitching (Lemma~\ref{lem:sds}). The second inequality follows from Claim~\ref{claim:M*}, and by rewriting the revenue of a restricted distribution as that of a conditional distribution times the probability of the event (and recalling that $B^*_i$ is exactly $B \wedge v(\{i\}) \in C_i$).

The third inequality follows directly from Claim~\ref{claim:icase}. The fourth inequality follows by observing that event $B$ occurs with probability at least $1-\varepsilon$ conditioned on $v(\{i\}) \in C_i$ (because $B$ occurs as long as all other values are $\leq H$, and the complement of this can occur with probability at most $\varepsilon$ without contradicting the definition of $\SREV(D)$), and converting back from a conditional distribution to a restricted distribution.

The fifth inequality follows by observing that $\Pr[B^*_i] \leq \Pr[v(\{i\}) \in C_i] \leq \Pr[v(\{i\}) \geq E]$. Moreover, we must have that $\sum_i \Pr[v(\{i\}) \geq E] \leq \frac{\SREV(D)/E}{1-\SREV(D)/E}$ (Claim~\ref{claim:sumpi}). Therefore, $\sum_i \Pr[B^*_i] \leq \frac{\SREV(D)/E}{1-\SREV(D)/E}$. 

The final inequality follows by the assumptions on $E$. Certainly $\SREV(D)/E = O(\varepsilon^3)$, so $\frac{\SREV(D)/E}{1-\SREV(D)/E}$ times $6\BREV(D)/\varepsilon$ is $O(\varepsilon^2) \cdot \REV(D)$. Similarly, $E \geq H/\varepsilon^2$, so $\SREV(D)/E \leq \varepsilon^2\SREV(D)/H$, so $(\SREV(D)/E)\cdot H/\varepsilon = O(\varepsilon) \cdot \SREV(D))$. 
\end{proof}

Observation~\ref{obs:oneway} and Lemma~\ref{lem:otherway} together assert that the only difference between $\REV_M(D \cdot \mathbb{I}(B))$ and $\REV_{M''}(D\cdot \mathbb{I}(B))$ comes via the revenues $M_i$ vs. $M'_i$ on single-dimensional distributions. So the last remaining step is to argue that there exists an $M'_i$ of the desired simple form with $\REV_{M'_i}(D_i \cdot \mathbb{I}(B\wedge v(\{i\}) \in C_i)) \geq \REV_{M_i}(D_i \cdot \mathbb{I}(B\wedge v(\{i\}) \in C_i))$. Lemma~\ref{lem:wrapup3} is also explicit in~\cite{BabaioffGN17}, although the reader familiar with Myerson's theory of virtual values may find the proof below simpler. 

\begin{lemma}\label{lem:wrapup3} Let $r_i$ denote the Myerson reserve for the distribution $D_i \cdot \mathbb{I}(v(\{i\}) \in C_i)$, and let $M'_i$ have only two options: receive item $i$ with probability $q_i$ for price $p_i$, or receive item $i$ with probability $1$ for price $p_i + (1-p_i)r_i$. Then:
$$\REV_{M'_i}(D_i \cdot \mathbb{I}(v(\{i\})\in C_i)) \geq \REV_{M_i}(D_i \cdot \mathbb{I}(v(\{i\})\in C_i).$$
\end{lemma}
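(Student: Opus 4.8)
The plan is to view $M_i$ as a single-parameter mechanism and run Myerson's machinery on it. Since every option of $M_i$ awards only item $i$ with positive probability, when the buyer is drawn from the single-dimensional distribution $D_i\cdot\mathbb{I}(v(\{i\})\in C_i)$ the mechanism is described by the monotone allocation rule $x_i(v):=\Pr[i\in S(v)]$ — monotonicity is precisely the ``no dominated options'' assumption, cf.\ Lemma~\ref{lem:RZ} — together with the payments it induces. Two structural facts will do the heavy lifting. First, undominatedness forces the cheapest option of $M_i$ to also be its lowest-probability option, i.e.\ (up to taking limits of the infima that define $q_i,p_i$) the option $(q_i,p_i)$: if the minimum-price option had probability $>q_i$ it would dominate a minimum-probability option. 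Consequently every buyer's $M_i$-utility $v\cdot x_i(v)-p(v)$ is at least $v q_i-p_i$, and every payment $p(v)$ is at least $p_i$. Second, $C_i\subseteq[p_i/q_i,\infty)$ (every $v\in C_i$ purchases some $(S,p)\in M_i$ in preference to the implicit $(\emptyset,0)$, so $v\ge p/\Pr[i\in S]\ge p_i/q_i$ using the per-unit-price bound that undominatedness forces), so a buyer with $v(\{i\})\in C_i$ weakly prefers the base option to opting out.

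Next I would carry out the ``base-plus-upgrade'' decomposition already used informally in this section: write $x_i(v)=q_i+(1-q_i)\hat x(v)$ with $\hat x(v)\in[0,1]$, and $p(v)=p_i+\hat p(v)$ with $\hat p(v)\ge 0$. A short computation shows the residual mechanism $(\hat x,\hat p)$ is incentive compatible and individually rational for a single buyer and a single (divisible) good once the buyer's value is rescaled to $\hat v:=(1-q_i)v$: the residual utility at type $v$ equals $\big(v\cdot x_i(v)-p(v)\big)-\big(v q_i-p_i\big)\ge 0$ by the first structural fact, and the participation constraint at the bottom of $C_i$ is handled by the second. Since every buyer with $v(\{i\})\in C_i$ pays at least $p_i$ in $M_i$, this yields
$$\REV_{M_i}\big(D_i\cdot\mathbb{I}(v(\{i\})\in C_i)\big)=p_i\cdot\Pr[v(\{i\})\in C_i]+\mathbb{E}_{v\leftarrow D_i\cdot\mathbb{I}(v(\{i\})\in C_i)}\big[\hat p(v)\big],$$
and $\mathbb{E}[\hat p(v)]$ is at most the optimal revenue of \emph{any} mechanism selling a single good to a buyer with value $\hat v$.

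Finally I would invoke the classical fact (Riley--Zeckhauser; cf.\ Lemma~\ref{lem:RZ}) that for a single buyer and a single good a posted price is revenue-optimal even for irregular distributions, so this optimum equals $\max_{\hat r}\hat r\cdot\Pr[\hat v\ge\hat r]=(1-q_i)\cdot r_i\cdot\Pr[v(\{i\})\ge r_i]$, where $r_i$ is the monopoly (Myerson) reserve of $D_i\cdot\mathbb{I}(v(\{i\})\in C_i)$; one checks $r_i\ge c_i$ (pricing below $c_i$ is dominated by pricing at $c_i$), so within the support of this distribution $\{v(\{i\})\ge r_i\}\subseteq C_i$. Plugging in gives $\REV_{M_i}(\cdot)\le p_i\Pr[v(\{i\})\in C_i]+(1-q_i)r_i\Pr[v(\{i\})\ge r_i]$. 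On the other hand $M'_i$ realizes exactly this value: a buyer with $v(\{i\})\in C_i$ prefers the base option to opting out (second structural fact), takes the base option when $v(\{i\})<r_i$ (paying $p_i$), and upgrades to the full item when $v(\{i\})\ge r_i$ (paying $p_i$ plus the residual surcharge $(1-q_i)r_i$). Hence $\REV_{M'_i}(\cdot)\ge\REV_{M_i}(\cdot)$.

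The main obstacle is the bookkeeping around the residual mechanism rather than any deep new idea. The delicate points: (a) verifying IC/IR of $(\hat x,\hat p)$, in particular dealing with the ``gap'' in the type space below $c_i$ (types there get $(0,0)$) and the case where the infima defining $q_i$ and $p_i$ are not attained, which requires arguing from undominatedness that they are attained along a common approximating sequence; and (b) applying the single-buyer single-item optimality to the correctly rescaled distribution so the factor $(1-q_i)$ lands where it should. Everything else is a routine application of Myerson's payment identity, and can alternatively be phrased via (ironed) virtual values for readers who prefer that language.
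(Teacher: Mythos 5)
Your proof is correct, but it takes a genuinely different route from the paper's. The paper deliberately sidesteps the ``base-plus-upgrade'' viewpoint --- it remarks that proving optimality of $M'_i$ among mechanisms containing the base option from first principles is messy --- and instead argues entirely through Myerson's ironed virtual values: the revenue of any menu is at most its expected ironed virtual surplus; on $C_i$ the allocation of $M_i$ is at least $q_i$ everywhere and at most $1$, while $M'_i$ allocates exactly $q_i$ below $r_i$ and exactly $1$ above it and is constant on ironed intervals, so $M'_i$'s virtual-surplus bound dominates $M_i$'s and is attained with equality. Your argument formalizes the first-principles route instead: peel off the base option $(q_i,p_i)$, check that the residual $(\hat x,\hat p)$ is an IC/IR single-item mechanism for the rescaled value $(1-q_i)v$, and bound its revenue by posted-price optimality (Riley--Zeckhauser), which is exactly the upgrade revenue $(1-q_i)r_i\Pr[v(\{i\})\ge r_i]$ that $M'_i$ collects. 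The two proofs need slightly different structural inputs: the paper only uses the trivial fact that $x_i(v)\ge q_i$ on $C_i$, whereas you additionally need every type's $M_i$-utility to be at least $vq_i-p_i$, which is exactly where your ``common approximating sequence'' caveat lives; it does go through, because undominatedness makes price and allocation probability co-monotone, so any sequence of options with probabilities tending to $q_i$ has prices tending to $p_i$ (and if some option attains probability $q_i$, uniqueness forces its price to equal $p_i$). What you gain is a proof that avoids ironing and rests on a classical single-item fact; what the paper gains is brevity, outsourcing all the regularity bookkeeping to the virtual-value machinery. One small note: you (correctly) price the upgrade at $p_i+(1-q_i)r_i$; the ``$(1-p_i)r_i$'' in the lemma statement is a typo (compare Definition~\ref{def:excloptions} and the indifference point $r_i$ used in the paper's own proof), and both arguments rely on the corrected price.
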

\begin{proof}
Intuitively, the proof will follow by arguing that $M'_i$ is the optimal mechanism which contains the option to receive item $i$ with probability $q_i$ for price $p_i$, and that $M_i$ is \emph{some} auction with this property, and therefore the revenue of $M'_i$ can only be higher. There is some work to take care of the fact that $M_i$ doesn't necessarily contain the option to receive item $i$ with probability $q_i$ for price $p_i$ (but just a sequence of options approaching this), and also it is rather messy to prove that a mechanism is optimal from first principles. To give a short and direct proof, we will rely on Myerson's theory of virtual values~\cite{Myerson81}. Specifically, let $\bar{\varphi}_i(v)$ denote the Myerson ironed virtual value associated with distribution $D_i |(v(\{i\}) \in C_i)$ for value $v$. Then the following facts all follow from the definition of ironed virtual values~\cite{Myerson81, Hartlinebook}:
\begin{itemize}
\item $\bar{\varphi}_i(\cdot)$ is defined on the support of $D_i | (v(\{i\}) \in C_i)$, which is just $C_i$.
\item $r_i = \inf\{v|\ \bar{\varphi}_i(v) \geq 0\}$ (note the optimal reserve for $D|F$ and $D\cdot \mathbb{I}(F)$ are identical for any event $F$). 
\item $\bar{\varphi}_i(v) \geq 0$ for all $v > r_i$.
\item $\bar{\varphi}_i(v) < 0$ for all $v < r_i$.
\item The revenue from any menu is upper bounded by $\mathbb{E}_{v \leftarrow D_i \cdot \mathbb{I}(v(\{i\})\in C_i)}[x_i(v) \cdot \bar{\varphi}_i(v)]$, where $x_i(v)$ denotes the allocation probability of item $i$ for the menu option which $v$ chooses to purchase.
\item Equality holds in the bullet above whenever $x_i(\cdot)$ is constant in any interval contained in $C_i$ where $\bar{\varphi}_i(\cdot)$ is constant.
\end{itemize}

The bullets above suffice for our claim. Indeed, observe that for any $v \in C_i$, this value $v$ certainly purchases an option from $M_i$ which awards item $i$ with probability at least $q_i$ (by definition of $q_i, p_i, C_i$). In addition, observe that any $v > r_i$ purchases an option which awards item $i$ with probability at most $1$.

On the other hand, in $M'_i$, every value $v \in C_i \cap [0,r_i)$ chooses to receive the item with probability only $q_i$, and every $v > r_i$ receives the item with probability $1$ (depending on the exact value of $\bar{\varphi}(r_i)$, $x_i(r_i)$ could be set to be either $q_i$ or $1$. Specifically, if $\bar{\varphi}_i(r_i) \geq 0$, set $x_i(r_i) = 1$. Otherwise, set $\bar{\varphi}_i(r_i) = 0$). Therefore, we conclude that:
\begin{itemize}
\item In the range where $\bar{\varphi}_i(v)$ is negative, $M_i$ awards a weakly higher allocation probability than $M'_i$. 
\item In the range where $\bar{\varphi}_i(v)$ is non-negative, $M_i$ awards a weakly lower allocation probability than $M'_i$.
\item As $M'_i$ has constant allocation probability whenever $\bar{\varphi}(\cdot)$ is constant, the revenue from $M'_i$ is equal to the ironed virtual value upper bound.
\end{itemize}

The first two bullets conclude that the ironed virtual value upper bound for $M'_i$ exceeds that of $M_i$. Therefore, the revenue of $M_i$ is at most the ironed virtual value upper bound of $M_i$, which is at most the ironed virtual value upper bound of $M'_i$, which is the revenue of $M'_i$.
\end{proof}

And now we can wrap up the proof of Proposition~\ref{prop:main3}.

\begin{proof}[Proof of Proposition~\ref{prop:main3}]
We have the following sequences of inequalities, starting from any mechanism $M$ of the form promised by Proposition~\ref{prop:main2}. For this $M$, define $M^*,M_i, C_i, M'_i, M''$ as above. Now
\begin{align*}
(1-O(\varepsilon))\REV(D) &\leq (1-\varepsilon)^2\REV_M(D\cdot \mathbb{I}(B))\\
&\leq (1-\varepsilon)^2\REV_{M^*}(D\cdot \mathbb{I}(B) \wedge \forall i,\ v(\{i\}) \notin C_i) + (1-\varepsilon)^2\sum_i \REV_{M_i} \big(D_i \cdot \mathbb{I}(B^*_i) \big)\\
&\leq (1-\varepsilon)^2\REV_{M^*}(D\cdot \mathbb{I}(B) \wedge \forall i,\ v(\{i\}) \notin C_i) + (1-\varepsilon)^2\sum_i \REV_{M'_i} \big(D_i \cdot \mathbb{I}(B^*_i)\big)\\
&\leq \REV_{M''}\big(D\cdot\mathbb{I}(B)\big) + O(\varepsilon) \cdot \REV(D)\\
\Rightarrow \REV_{M''}(D) &\geq \REV_{M''}\big(D\cdot \mathbb{I}(B)\big) \geq (1-O(\varepsilon))\cdot \REV(D).
\end{align*}
Here the first line follows directly from Proposition~\ref{prop:main2}. The second line is a direct application of Observation~\ref{obs:oneway}. The third line is a direct application of Lemma~\ref{lem:wrapup3}. The fourth line is a direct application of Lemma~\ref{lem:otherway}. The fifth line just observes that $\REV_{M''}(D) \geq \REV_{M''}(D \cdot \mathbb{I}(B))$. 
\end{proof}

\subsection{A Reduction from Unbounded to Bounded} \label{sec:stepfour}
From here, the next step is to separately find a good menu for the role of $M^*$ with low menu complexity, observing that it need only consider options priced $\leq E$, and therefore seems like a bounded problem. This part is true, but one needs to be careful with changing $M^*$ as it will also impact what options on $M'_i$ are purchased. Indeed, if we carelessly change $M^*$ to (for instance) offer item $i$ with probability much larger than the original $M^*$, this will cause buyers who previously purchased an expensive option on $M'_i$ to now possibly prefer a cheaper option in $M^*$. As such, a careless nudge-and-round discretization to get a new $M^*$ runs a substantial risk, and~\cite{BabaioffGN17} addresses this by directly tailoring a nudge-and-round so as not to interfere with the expensive parts of the menu.

We take a different approach, and observe that while it is true that we can't be completely careless with $M^*$, we don't need to be particularly careful either. We show that the only important aspect of $M^*$ which must be preserved is the supremum probability with which any option awards item $i$ (and this matters for all $i$). But as long as this is preserved, there is little interference with the expensive part of the menu. Parts of this proof will look similar to step three (and indeed, an argument like the one we present below could replace step three entirely), but we provide it after step three because notation will be greatly simplified with only $2n$ expensive options. The proof comes in two steps. First, we derive an upper bound on the revenue of $M$, and then we show that this upper bound can be achieved by separately designing a menu for $D(T)$ and later adding the expensive options.

\begin{proposition}\label{prop:reduce1} Let $T \geq E/\varepsilon \geq \max\{H/\varepsilon^3,\BREV(D)/\varepsilon^2\}$, and $H \geq \SREV(D)/\varepsilon$. Let $M$ be the mechanism promised by Proposition~\ref{prop:main3}. Then: 
$$(1-O(\varepsilon))\REV_M(D\cdot \mathbb{I}(B)) \leq \sup_{M'} \Big\{\REV_{M'}(D(T)\cdot \mathbb{I}(B)) + \sum_i \ell_i(M')\REV(D_i \cdot \mathbb{I}( v(\{i\}) \geq T)) \Big\}.$$
\end{proposition}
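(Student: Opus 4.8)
Write $M^\ast$ for the options of $M$ priced $\le E$, and for each $i$ write $M_i$ for the (at most two) options priced $>E$ that allocate item $i$ with positive probability; by Proposition~\ref{prop:main3} one of these is an option awarding item $i$ with probability $q_i$ at some price $p_i>E$, and the other awards item $i$ with probability $1$. Let $\pi_i^\ast:=\sup_{(S,p)\in M^\ast}\Pr[i\in S]$. The plan is to decompose $\REV_M(D\cdot\mathbb{I}(B))$ via Observation~\ref{obs:oneway} and then re-route each piece either into a cheap menu for $D(T)$ or into the leftover-bonus term. A first short step is to show $q_i\ge\pi_i^\ast$: a buyer whose only positive value is some $c\in C_i$ prefers an option of $M_i$ to every option of $M^\ast$, and comparing the option $(q_i,p_i)$ (priced $>E$) against the $M^\ast$-option awarding item $i$ with probability $\pi_i^\ast$ (priced $\le E$) gives $q_ic-p_i\ge\pi_i^\ast c-p^\ast$ for this $c\ge E>0$ with $p_i\ge E\ge p^\ast$, hence $q_i\ge\pi_i^\ast$.

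Next I would take as the candidate for the supremum the menu $M':=M^\ast\cup\{(q_i,\min\{p_i,q_iT\}):i\in[n]\}$, i.e.\ the cheap part of $M$ together with the low-probability expensive option of each $M_i$, but with its price capped so that a buyer whose \emph{truncated} value for item $i$ is as large as $T$ can still afford it; crucially the probability-$1$ options are kept \emph{out} of $M'$, so by the previous step $\ell_i(M')=1-\max\{\pi_i^\ast,q_i\}=1-q_i>0$. With this $M'$ I would establish two bounds. \textbf{(1)} The cheap contribution $\REV_{M^\ast}\bigl(D\cdot\mathbb{I}(B\wedge\forall i,\,v(\{i\})\notin C_i)\bigr)$ together with the ``base'' revenue $\sum_i p_i\Pr[v(\{i\})\in C_i]$ (the part of $\REV_{M_i}$ coming from the $q_i$-fraction sale) is at most $\tfrac{1}{1-O(\varepsilon)}\REV_{M'}(D(T)\cdot\mathbb{I}(B))$. \textbf{(2)} The remaining ``extra-fraction'' revenue of each $M_i$ is at most $(1-q_i)\REV(D_i\cdot\mathbb{I}(v(\{i\})\ge T))=\ell_i(M')\cdot\REV(D_i\cdot\mathbb{I}(v(\{i\})\ge T))$, the leftover-bonus term.

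For \textbf{(1)} the key observation is that on event $B$ at most one item exceeds $H$, hence at most one item exceeds $T$, so the natural coupling between $D\cdot\mathbb{I}(B)$ and $D(T)\cdot\mathbb{I}(B)$ that truncates that single item changes only that item's value; because every option in play is priced below the relevant expensive price and $q_i\ge\pi_i^\ast$, a buyer of $D(T)$ still weakly prefers in $M'$ the same ``kind'' of option it purchased in $M$, and the residual loss (values pushed just below the truncation threshold; Schechtman-type tails of $v([n])$ conditioned on which item is large) is absorbed by Proposition~\ref{prop:carefulNudge}/Proposition~\ref{prop:advancedNudge} together with $\SREV(D)/T=O(\varepsilon^3)$, $H/T=O(\varepsilon^2)$, $\BREV(D)/E=O(\varepsilon^2)$ and Claim~\ref{claim:sumpi}. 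For \textbf{(2)}, $M_i$ is a two-option single-item menu, so on $D_i\cdot\mathbb{I}(v(\{i\})\in C_i)$ its revenue splits exactly as $p_i\Pr[v(\{i\})\in C_i]+(1-q_i)r_i\Pr[v(\{i\})\ge r_i]$, and since $r_i$ is the Myerson reserve of $D_i\mid(v(\{i\})\in C_i)$ the second summand equals $(1-q_i)\REV(D_i\cdot\mathbb{I}(v(\{i\})\in C_i))$. Bounding this by $(1-q_i)\REV(D_i\cdot\mathbb{I}(v(\{i\})\ge T))$ is immediate when $C_i\subseteq[T,\infty)$; otherwise it follows from sub-domain stitching (Lemma~\ref{lem:sds}), the welfare bound $\REV(D_i\cdot\mathbb{I}(c_i\le v(\{i\})<T))\le T\Pr[v(\{i\})\ge c_i]$, and re-folding the resulting $O(T\Pr[v(\{i\})\ge c_i])$ slack into the $M'$-revenue of the truncated medium-value buyers, together with the elementary fact that posting the price $r_i\ge c_i$ on $D_i\cdot\mathbb{I}(v(\{i\})\ge T)$ already recovers $r_i\Pr[v(\{i\})\ge r_i]$.

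Finally I would sum \textbf{(1)}–\textbf{(2)} over the $n+1$ sub-domains making up $B$ (the event $B'_\emptyset$ and the events $B'_{\{i\}}$), combine with Observation~\ref{obs:oneway}, and collect error terms: each is of the form $\bigl(\SREV(D)/T+H/T+\BREV(D)/E\bigr)\cdot\mathrm{poly}(1/\varepsilon)\cdot\REV(D)$ or $\bigl(\sum_i\Pr[v(\{i\})\ge T]\bigr)\cdot\REV(D)$, all $O(\varepsilon)\REV(D)$ by the assumed relations among $H,E,T$, by Claim~\ref{claim:sumpi}, and by $\REV(D)=\Theta(\SREV(D)+\BREV(D))$. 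I expect the main obstacle to be exactly the case $c_i<T$ in step \textbf{(2)}: the ``medium-value'' expensive buyers with $v(\{i\})\in C_i\cap[E,T)$ are untouched by the truncation, so their $M_i$-revenue must be recovered by $M'$ on $D(T)$ rather than by the leftover bonus, and getting the accounting right — fixing the price cap $\min\{p_i,q_iT\}$, verifying these buyers still purchase the intended $M'$-option in the presence of the cheap options $M^\ast$, and controlling the leftover slack by the posted-price comparison above — is the delicate part, made more so by having to work with general subadditive valuations, for which the canonical truncation need not pointwise decrease the valuation.
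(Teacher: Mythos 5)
There is a genuine gap, and it sits exactly where you predicted: the choice of witness menu $M'$. You discard \emph{every} probability-one expensive option and keep only the $q_i$-lottery (price-capped at $q_iT$), so that $\ell_i(M')=1-q_i$ for all $i$. But then the ``extra-fraction'' revenue of $M_i$ coming from buyers with $v(\{i\})\in[r_i,T)$ --- who in $M$ purchase the probability-one option and pay the premium $(1-q_i)r_i$ on top of $p_i$ --- is simply gone: your $M'$ contains no option charging such a buyer more than $\min\{p_i,q_iT\}\le p_i$. This loss is not covered by the leftover bonus $\ell_i(M')\cdot\REV\big(D_i\cdot\mathbb{I}(v(\{i\})\ge T)\big)$, which can be \emph{zero} (take each $D_i$ supported on $\{0,2E,4E\}$ with $4E<T$, so no mass at or above $T$), while $\sum_i(1-q_i)r_i\Pr[r_i\le v(\{i\})<T]$ can be $\Theta(\REV(D))$ in equal-revenue-style instances where essentially all revenue comes from values in $(E,T)$. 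Nor can it be written off as error: your proposed patch via the welfare bound $\REV(D_i\cdot\mathbb{I}(c_i\le v(\{i\})<T))\le T\Pr[v(\{i\})\ge c_i]$ only gives, after summing and using Claim~\ref{claim:sumpi}, a term of order $T\cdot\SREV(D)/E=\Omega(\REV(D)/\varepsilon)$, which is far too large, and there is nothing in $M'$ into which this slack can be ``re-folded.'' So your bound \textbf{(2)} is false as stated, and no accounting of the error terms rescues the construction. The fix is the one the paper uses: keep in $M'$ \emph{every} expensive option that is purchased by some buyer with all values below $T$ (in particular the probability-one option whenever its purchase threshold $z_i$ is below $T$), so that $\ell_i(M')$ equals $1-b_i$ or $1-a_i$ only for items whose thresholds $z_i$ (resp.\ $y_i$) exceed $T$; the payments of the above-$T$ buyers are then charged to the leftover term via the price bounds $p_{i,1}\approx y_i(b_i-a_i)$ and $p_{i,2}\approx z_i(1-b_i)+y_i(b_i-a_i)$ of Observation~\ref{obs:boundprices}, using $y_i\Pr[v(\{i\})\ge y_i]\le\REV(D_i\cdot\mathbb{I}(v(\{i\})\ge T))$ when $y_i>T$ (and similarly for $z_i$), with the $+E$ slack absorbed by $\sum_i\Pr[v(\{i\})>T]\cdot E=O(\varepsilon)\SREV(D)$.

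Two secondary issues. First, your claim $q_i\ge\pi_i^\ast$ is not established by the argument given: a value $c\in C_i$ is only guaranteed to prefer \emph{some} option of $M_i$, and it may be the probability-one option, in which case the comparison yields only $1>\pi_i^\ast$. Second, the assertion that a buyer of $D(T)$ ``still weakly prefers in $M'$ the same kind of option it purchased in $M$'' is exactly the delicate incentive step that the paper handles by first passing to the intermediate distribution $D'$ that zeroes out all other items whenever $v(\{i\})\ge y_i$ (Claims~\ref{claim:D'}--\ref{claim:D'3}); without this single-minded surrogate (or an equivalent device), a truncated buyer may defect to a cheap option, and controlling that defection requires the $D'$-coupling plus Proposition~\ref{prop:advancedNudge}, not just Schechtman tails.
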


\begin{proof}
Let us begin by establishing an upper bound on $\REV_M(D)$. We first introduce notation:
\begin{itemize}
\item For each $i$, let $(S_{i,1},p_{i,1}),(\{i\},p_{i,2})$ denote the two options for item $i$ which are priced higher than $E$ (if there aren't two distinct options, just set $(S_{i,1},p_{i,1}):=(\{i\},p_{i,2})$. If neither option exists, then this definition is void and will never be invoked). 
\item Let $b_i:=\Pr[i \in S_{i,1}]$.
\item Let $y_i$ denote the minimum value such that a buyer with value $v(S)=y_i |S \cap \{i\}|$ would purchase $(S_{i,1},p_{i,1})$ from $M$.
\item Let $z_i$ denote the minimum value such that a buyer with value $v(\{i\}) = z_i$ would purchase $(\{i\},p_{i,2})$ over $(S_{i,1},p_{i,1})$.
\item Let $a_i$ denote the supremum allocation probability of getting item $i$ from an option in $M$ priced at most $E$.
\end{itemize}

\begin{observation}\label{obs:boundprices}The following bounds on $p_{i,1},p_{i,2}$ hold:
\begin{align*}
p_{i,2} &= p_{i,1} + z_i(1-b_i).\\
p_{i,1} &\in [y_i(b_i-a_i), y_i(b_i-a_i)+E].\\
p_{i,2} &\in [z_i(1-b_i)+y_i(b_i-a_i), z_i(1-b_i)+y_i(b_i-a_i)+E].
\end{align*}
\end{observation}
\begin{proof}
The first line follows immediately from the definition of $z_i$. The third line follows immediately from the first two. The second line follows from incentive constraints: there is some option on the menu $M$ with probability $a_i$ of giving item $i$ (or a sequence of options approaching this probability). The price of this option is between $0$ and $E$ (or the prices of all options in the sequence are between $0$ and $E$). The price therefore cannot be less than $(y_i-\varepsilon)(b_i-a_i)$, or else we'd contradict the definition of $y_i$: a buyer with value $v(S)=(y_i-\varepsilon)|S \cap \{i\}|$ would prefer $(S_{i,1},p_{i,1})$ to any option which awards item $i$ with probability $\leq a_i$ \emph{even for free}. Similarly, the price cannot be more than $y_i(b_i-a_i+\varepsilon)+E$, or else a buyer with value $v(S)=y_i |S \cap \{i\}|$ would strictly prefer to pay $E$ and get item $i$ with probability $a_i-\varepsilon$ for price $E$, and such an option is certainly on the menu $M$, contradicting the definition of $y_i$.
\end{proof}
Now, we wish to use this notation to specify an upper bound on $\REV_M(D\cdot \mathbb{I}(B))$. The plan of attack will be as follows. First, we will define a distribution $D' \cdot \mathbb{I}(B)$ which zeroes out values for all items $j \neq i$ whenever $v(\{i\})$ is high, and claim that this can only improve the revenue that $M$ gets. Next, we will argue that truncating $D'$ at $T$ does not hurt the revenue of $D'$ by much. Finally, we will couple $D'(T)$ with $D(T)$ using Proposition~\ref{prop:advancedNudge} and claim that the revenue further lost is not much.

We begin by defining $D' \cdot \mathbb{I}(B)$. First, draw $v \leftarrow D\cdot \mathbb{I}(B)$. If for all $i$, $v(\{i\}) < y_i$, output $v':= v$. Otherwise, there must exist at most one $i$ for which $v(\{i\}) \geq y_i$ (because of event $B$). Output the valuation $v'$ with $v'(S):= v(S \cap \{i\})$. That is, zero out all values for $v'(\{j\})$ for $j \neq i$ (and keep $v'$ subadditive). We first claim that:

\begin{claim}\label{claim:D'} $\REV_M(D\cdot \mathbb{I}(B)) \leq \REV_M(D'\cdot \mathbb{I}(B))$.
\end{claim}
\begin{proof}
Consider the contribution of a value $v$ to the LHS and its couple $v'$ to the RHS. If $v(\{i\}) < y_i$ for all $i$, then $v = v'$, and they purchase the same option from $M$ and therefore contribute the same revenue. If there exists an $i$ for which $v(\{i\}) \geq y_i$, then $v$ and $v'$ have the same preferences between $(S_{i,1},p_{i,1})$ and $(\{i\},p_{i,2})$, but maybe $v$ prefers a cheaper option priced $\leq E$ (whereas $v'$ certainly prefers one of these options). Therefore, $v'$ makes a more expensive purchase than $v$.
\end{proof}

Next, we want to argue that truncating $D'$ at $T$ doesn't affect the revenue of $D'$ by much. Let $M'$ denote the mechanism $M$ after removing all options which are never purchased by any buyer with $v(\{i\}) < T$ for all $i$.

\begin{claim}\label{claim:D'2} $\REV_M(D'\cdot \mathbb{I}(B))\leq \REV_{M'}(D'(T)\cdot \mathbb{I}(B)) + \sum_i \ell_i(M')\REV(D_i \cdot \mathbb{I}(v(\{i\}) \geq T)) + O(\varepsilon)\SREV(D).$
\end{claim}
\begin{proof}
Again consider any $v$ from $D'\cdot \mathbb{I}(B)$ and its couple $v'$ from $D'(T) \cdot \mathbb{I}(B)$. If $v(\{i\}) \leq T$ for all $i$, then actually $v = v'$ and they make the same purchase from either $M$ or $M'$, so the contribution to both sides is the same.

If $v(\{i\}) > T$ for some $i$ (by event $B$, there is at most one such $i$), then perhaps $v$ purchases an option from $M$ which costs at most $E$ (note that this will only happen when $v(\{i\})<y_i$, since we are drawing from $D'$ and not $D$). We claim that the total revenue coming from such cases to the LHS is at most $O(\varepsilon)\SREV(D)$. To see this, observe that $\sum_i \Pr[v(\{i\}) \geq T] \leq (1+O(\varepsilon))\SREV(D)/T$ (otherwise this will contradict the definition of $\SREV(D)$ by setting price $T$ on each item). Therefore, even if we make the maximum possible $E$ in all such events, the total contribution is at most $(1+O(\varepsilon))\SREV(D)\cdot E/T = O(\varepsilon)\cdot \SREV(D)$.

If $v(\{i\}) > T$ for some $i$, and further $v(\{i\}) \geq y_i$, then there are a few cases.
\begin{itemize}
\item Case 1: $z_i > v(\{i\}) > T \geq y_i$. Then $v$ will purchase the option $(S_{i,1},p_{i,1})$, and so will $v'$.
\item Case 2: $v(\{i\}) > T \geq z_i$. Then $v$ will purchase the option $(\{i\},p_{i,2})$, and so will $v'$.
\item Case 3: $z_i > v(\{i\}) \geq y_i \geq T$. Then $v$ will purchase the option $(S_{i,1},p_{i,1})$, and contribute at most $y_i(b_i-a_i)+E$ to the revenue of $M$.
\item Case 4: $v(\{i\}) \geq z_i \geq y_i > T$. Then $v$ will purchase the option $(\{i\},p_{i,2})$, and contribute at most $z_i(1-b_i) + y_i(b_i-a_i) + E$ to the revenue of $M$.
\item Case 5: $v(\{i\}) \geq z_i > T \geq y_i$. Then $v$ will purchase the option $(\{i\},p_{i,2})$, and contribute at most $z_i(1-b_i) + y_i(b_i-a_i) +E$ to the revenue of $M$, and $v'$ will purchase the option $(S_{i,1},p_{i,1})$, and contribute at least $y_i(b_i-a_i)$ to the revenue of $M'$.
\end{itemize}

So only the final three bullets have some unaccounted revenue from $D'$ that is not achieved by $D'(T)$. Let us first argue that we can ignore the $+E$ terms. Indeed, observe that each of the final three bullets only occur when $v(\{i\}) > T$ for some $i$. The probability of this event is at most $(1+O(\varepsilon))\SREV(D)/T$, and so therefore the total contribution of the $+E$ terms to this cases is at most $O(\varepsilon)\cdot \SREV(D)$. So just to clarify, we now have:
\begin{align*}
\REV_M(D'\cdot \mathbb{I}(B)) & \leq \REV_{M'}(D'(T)\cdot \mathbb{I}(B)) \\
& \quad + \sum_{i, y_i >T}  \Pr[v(\{i\}) \in [y_i,z_i)] \cdot y_i(b_i-a_i) +\Pr[v(\{i\}) \geq z_i]\cdot (z_i(1-b_i)+y_i(b_i-a_i))  \\
& \quad + \sum_{i, z_i > T \geq y_i} \Pr[v(\{i\}) \geq z_i]\cdot z_i(1-b_i)+ O(\varepsilon)\SREV(D). 
\end{align*}

But now observe that for any $i$ such that $z_i > T$, we have that $z_i\Pr[v(\{i\}) \geq z_i] \leq \REV(D_i \cdot \mathbb{I}(v(\{i\}) \geq T))$. Therefore, 
$$(1-b_i)z_i\Pr[v(\{i\}) \geq z_i] \leq \ell_i(M') \cdot \REV(D_i \cdot \mathbb{I}(v(\{i\}) \geq T))$$ 
as $\ell_i(M') = 1-b_i$ for these items. 

Similarly, for any $i$ such that $y_i > T$, we have: 
\begin{align*}
&\Pr\big[v(\{i\}) \in [y_i,z_i) \big] \cdot y_i(b_i-a_i) +\Pr\big[v(\{i\}) \geq z_i\big]\cdot (z_i(1-b_i)+y_i(b_i-a_i)) \\ 
= & \Pr\big[v(\{i\}) \geq y_i\big]\cdot y_i(b_i-a_i) + \Pr\big[v(\{i\}) \geq z_i\big] \cdot z_i(1-b_i). 
\end{align*}
Also, $\Pr[v(\{i\}) \geq y_i]\cdot y_i \leq \REV(D_i \cdot \mathbb{I}(v(\{i\}) \geq T))$, and $\Pr[v(\{i\}) \geq z_i] \cdot z_i \leq \REV(D_i \cdot \mathbb{I}(v(\{i\}) \geq T))$. As $1-a_i = \ell_i(M')$ (for these items), we get the desired claim.
\end{proof}

Finally, we now argue that going back from $D'(T)$ to $D(T)$ doesn't lose much, using Proposition~\ref{prop:advancedNudge}. Let $M''$ denote $M'$ after multiplying all prices by $(1-\varepsilon)$.

\begin{claim}\label{claim:D'3} $\REV_{M''}(D(T) \cdot \mathbb{I}(B)) \geq (1-\varepsilon)\REV_{M'}(D'(T)\cdot \mathbb{I}(B)) - O(\varepsilon) \cdot \REV(D)$.

\end{claim}
\begin{proof}
The proof will follow from an application of Proposition~\ref{prop:advancedNudge} to $D(T)\cdot \mathbb{I}(B)$ and $D'(T) \cdot \mathbb{I}(B)$, after computing $\delta(\cdot, \cdot)$ via $D(T)|B$ and $D'(T)|B$. Observe that $M''$ and $M'$ indeed satisfy the hypotheses, so we just need to bound $\delta(D(T)\cdot \mathbb{I}(B),D'(T)\cdot \mathbb{I}(B))$.

Couple draws $v$ from $D(T)|B$ and $v'$ from $D'(T)|B$ in the obvious way (first draw $v$ from $D(T)|B$, and then zero out values for items $\neq i$ if $v(\{i\}) \geq T$). Observe first that whenever $v(\{i\}) < T$ for all $T$, that $v = v'$, so the only contribution to $\delta(D, D')$ comes when $v(\{i\}) = T$ for some $i$. But observe also that $D(T)|(B\wedge v(\{i\})=T)$ is subadditive over independent items, and that the maximum difference arises from $v([n]\setminus \{i\})-v'([n]\setminus\{i\}) = v([n]\setminus\{i\})$. But now we can apply Proposition~\ref{prop:RW} to claim that the expected value of $v([n]\setminus\{i\})$ when drawn from $D(T)|(B\wedge v(\{i\}) = T)$ is at most $6\BREV(D) + 4H/\ln(2)$. So we get:
\begin{align*}
\delta(D(T)\cdot \mathbb{I}(B),D'(T) \cdot \mathbb{I}(B)) &\leq \Pr[B] \cdot \delta(D(T)|B,D'(T)|B)\\
&\leq \sum_i \Pr[B] \cdot Pr[v(\{i\})=T|B] \cdot (6\BREV(D) + 4H/\ln(2))\\
&\leq (6\BREV(D) + 4H/\ln(2))\cdot \sum_i \Pr[v(\{i\}) = T]\\
&\leq O(\varepsilon^3) \cdot \REV(D).
\end{align*}

Above, the first line follows by observing that both $v$ and $v'$ are the zero function whenever event $B$ does not occur. If event $B$ occurs, then both are drawn coupled from $D(T)|B$ and $D'(T)|B$. The second line follows by the reasoning in the previous paragraph. The third line simply observes that $\Pr[v(\{i\}) = T|B] \cdot \Pr[B] = \Pr[v(\{i\}) = T]$, and the final line observes that $\sum_i \Pr[v(\{i\}) = T]$ is both $O(\varepsilon^4)$ and also $O(\varepsilon^3)\cdot \SREV(D)/H$.
\end{proof}

We now wrap up by observing that:
\begin{align*}
&(1-\varepsilon)\REV_M(D\cdot \mathbb{I}(B)) \quad \leq \quad (1-\varepsilon)\REV_M(D'\cdot \mathbb{I}(B))\\
&\qquad \qquad \leq (1-\varepsilon)\REV_{M'}(D'(T) \cdot \mathbb{I}(B)) + (1-\varepsilon)\sum_i \ell_i(M')\cdot \REV(D_i \cdot \mathbb{I}(v(\{i\})\geq T))+O(\varepsilon)\SREV(D)\\
&\qquad \qquad \leq \REV_{M''}(D(T) \cdot \mathbb{I}(B)) + \sum_i \ell_i(M'')\cdot \REV(D_i \cdot \mathbb{I}(v(\{i\})\geq T))+O(\varepsilon) \REV(D).
\end{align*}

The first line follows directly from Claim~\ref{claim:D'}, the second from Claim~\ref{claim:D'2}, and the third from Claim~\ref{claim:D'3}. Observe that $M''$ is some mechanism, and so surely the supremum in the RHS of the proposition statement exceeds the bound above. Because our original $M$ was promised by Proposition~\ref{prop:main3}, we also have that $O(\varepsilon)\REV(D) = O(\varepsilon)\REV_M(D)$.
\end{proof}

\begin{proposition}\label{prop:reduction2}
Let $r_i$ denote the optimal price for $D_i\cdot \mathbb{I}(v(\{i\})\geq T)$. For any mechanism $M'$ (which contains only options purchased by some $v \in \text{support}(D(T))$) let  $M''$ denote the mechanism $(M'|_E)^{T,\vec{r}}$, then:
$$\REV_{M''}(D\cdot \mathbb{I}(B)) \geq (1-2\varepsilon)\Big(\REV_{M'}(D(T)\cdot \mathbb{I}(B)) + \sum_i \ell_i(M')\REV(D_i \cdot \mathbb{I}(v(\{i\}) \geq T))\Big)-O(\varepsilon)\cdot \REV(D).$$
\end{proposition}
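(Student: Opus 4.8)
The plan is to prove this by \emph{sub-domain stitching} (Lemma~\ref{lem:sds}), mirroring the proof of Proposition~\ref{prop:reduce1} ``in reverse''. Partition $\text{support}(D\cdot \mathbb{I}(B))$ into $Y_0:=\{v : v(\{i\})<T \ \forall i\}$ and, for each $i$, $Y_i:=\{v : v(\{i\})\geq T\}$; since $T>H$, under $B$ at most one item exceeds $T$, so this is a genuine partition, and $\REV_{M'}(D(T)\cdot\mathbb{I}(B))$ splits over the coupled events (on $D(T)$ the event ``$v(\{i\})\geq T$'' becomes ``$v(\{i\})=T$''). I would first record three bookkeeping facts. (a) Since $r_i$ is the monopoly price for the single-dimensional $D_i\cdot\mathbb{I}(v(\{i\})\geq T)$ (support in $[T,\infty)$), we have $\REV(D_i\cdot\mathbb{I}(v(\{i\})\geq T))=r_i\Pr[v(\{i\})\geq r_i]$ and $r_i\geq T$, and (exactly as in the closing paragraph of the proof of Proposition~\ref{prop:main1}, setting price $r_j$ on each item $j$) $\sum_i r_i\Pr[v(\{i\})\geq r_i]\leq (1+O(\varepsilon))\SREV(D)$. (b) Under $B$, ``$v(\{i\})\geq r_i$'' forces $v(\{j\})\leq H$ for all $j\neq i$, and $\Pr[\forall j\neq i:\ v(\{j\})\leq H]\geq 1-\sum_j p_j \geq 1-O(\varepsilon)$ by Claim~\ref{claim:sumpi}; hence the part of $\sum_i \ell_i(M')r_i\Pr[v(\{i\})\geq r_i]$ coming from draws in $B$ is at least $(1-O(\varepsilon))\sum_i\ell_i(M')r_i\Pr[v(\{i\})\geq r_i]$, and the discarded $O(\varepsilon)\sum_i r_i\Pr[v(\{i\})\geq r_i]=O(\varepsilon)\SREV(D)=O(\varepsilon)\REV(D)$. (c) Unwinding Definitions~\ref{def:exclusive} and~\ref{def:excloptions}, $M''=(M'|_E)^{T,\vec r}$ contains $(S,(1-\varepsilon)^2p)$ for each $(S,p)\in M'$ with $p\leq E$, the options $(X_j(S),(1-\varepsilon)^2p)$ for each $(S,p)\in M'$ with $p>E$ and each $j$, and the $n$ appended options $(\{i\},(1-\varepsilon)(q_i+r_i(1-\Pr[i\in S_i])))$ where $(S_i,q_i)\in M'|_E$ is the option bought by a ``$v(S)=T\mathbb{I}(i\in S)$'' buyer; note also $\ell_i(M'|_E)=\ell_i(M')$ (the $X_i$-restrictions preserve $\Pr[i\in\cdot]$), so $\Pr[i\in S_i]\leq 1-\ell_i(M')$, and thus the $i$-th appended option has price $\geq (1-\varepsilon)(q_i+r_i\ell_i(M'))$ and delivers item $i$ with probability $1$.

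For the $Y_0$ term, $D$ and $D(T)$ coincide on $Y_0$, and I would apply Proposition~\ref{prop:carefulNudge} with effective discount $\varepsilon':=1-(1-\varepsilon)^2\in[\varepsilon,2\varepsilon]$, taking the ``shadow'' of a buyer $v$'s choice $(S_{M'}(v),p(v))$ on $M'$ to be $(S_{M'}(v),(1-\varepsilon)^2 p(v))$ when $p(v)\leq E$ and $(X_{i^*(v)}(S_{M'}(v)),(1-\varepsilon)^2 p(v))$ when $p(v)>E$, where $i^*(v):=\arg\max_j v(\{j\})$. Monotonicity gives $v'(X_{i^*(v)}(S_{M'}(v)))\leq v'(S_{M'}(v))$ for all $v'$, and since Proposition~\ref{prop:carefulNudge} already permits $M''$ to contain additional options, the appended options require no separate argument. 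This yields $\REV_{M''}(D\cdot\mathbb{I}(B\wedge Y_0))\geq (1-2\varepsilon)\REV_{M'}(D\cdot\mathbb{I}(B\wedge Y_0)) - \tfrac1\varepsilon\,\mathbb{E}_{v\leftarrow D\cdot\mathbb{I}(B\wedge Y_0)}[v(S_{M'}(v))-v(S'(v))]$, and the expectation is nonzero only on draws where $v$ bought an option priced $>E$, where by subadditivity the integrand is at most $v([n]\setminus\{i^*(v)\})$. To bound this, decompose $B\wedge Y_0$ into $B'_\emptyset$ (all values $\leq H$) and, for each $i_0$, $B'_{\{i_0\}}\cap Y_0$ (only $v(\{i_0\})\in(H,T)$, which forces $i^*=i_0$); on $B'_\emptyset$, paying $>E$ needs $v([n])>E$ and Corollary~\ref{cor:Schechtman} with $t=H$ gives an $O(\exp(-1/\varepsilon^2))\SREV(D)$ bound; on $B'_{\{i_0\}}\cap Y_0$, split further on whether $v(\{i_0\})>E/2$ (then, using independence of item $i_0$ from the rest, Proposition~\ref{prop:RW} and $\sum_{i_0}\Pr[v(\{i_0\})>E/2]=O(\SREV(D)/E)$ from Claim~\ref{claim:sumpi} give $O(\SREV(D)\REV(D)/(E\varepsilon))=O(\varepsilon^2)\REV(D)$ after using $E\geq\REV(D)/\varepsilon^3$) or $v(\{i_0\})\leq E/2$ (then $v([n]\setminus\{i_0\})>E/2$, bounded again by Corollary~\ref{cor:Schechtman} with $t=H$, $T=E/2$, times $\sum_{i_0}\Pr[v(\{i_0\})>H]=O(\varepsilon)$). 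Dividing by $\varepsilon$ leaves an $O(\varepsilon)\REV(D)$ additive loss.

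For the $Y_i$ term, fix $i$ and a draw $v$ with $v(\{i\})\geq T$ and $v(\{j\})\leq H$ for $j\neq i$, and let $\tilde v$ be its $D(T)$-couple (coordinates truncated at $T$; here only coordinate $i$ changes, to $T$), which buys $(\tilde S,\tilde p)$ on $M'$ with $\Pr[i\in\tilde S]\leq 1-\ell_i(M')$. I would re-run the derivation behind Proposition~\ref{prop:advancedNudge} by hand along this coupling, with two different shadows: for draws with $v(\{i\})<r_i$ I would use the $M''$-image of $(\tilde S,\tilde p)$ (as in the $Y_0$ argument, the value lost relative to $\tilde v$'s utility is $0$ if $\tilde p\leq E$ and at most $\tilde v([n]\setminus\{i\})$ if $\tilde p>E$, since $\tilde v(\tilde S)\leq \Pr[i\in\tilde S]\tilde v(\{i\})+\tilde v([n]\setminus\{i\})\leq \Pr[i\in\tilde S]v(\{i\})+\tilde v([n]\setminus\{i\})$ as $v(\{i\})\geq T=\tilde v(\{i\})$); aggregated over $i$ and divided by $\varepsilon$ this contributes only $O(\varepsilon^2)\REV(D)$, using $\sum_i\Pr[v(\{i\})\geq T]=O(\varepsilon^4)$ from Claim~\ref{claim:sumpi} together with $T\geq\REV(D)/\varepsilon^4$, and $\mathbb{E}[\tilde v([n]\setminus\{i\})\mid v(\{j\})\leq H\ \forall j\neq i]=O(\REV(D)/\varepsilon)$ from Proposition~\ref{prop:RW}); for draws with $v(\{i\})\geq r_i$ I would instead use the $i$-th appended option as shadow. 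In the latter case the appended option has price $\geq (1-\varepsilon)(q_i+r_i\ell_i(M'))$, so the nudge-and-round inequality — after also using $\tilde v$'s incentive constraint on $M'$ — shows $v$'s payment on $M''$ exceeds $(1-2\varepsilon)(\tilde p+r_i\ell_i(M'))$ up to an $O(\tilde v([n]\setminus\{i\}))/\varepsilon$-type error, where the ``$\tilde p$ versus $q_i$'' reconciliation uses that $\tilde v$ and the ``$v(S)=T\mathbb{I}(i\in S)$'' buyer defining $(S_i,q_i)$ rank every option of $M'$ identically up to $\pm\tilde v([n]\setminus\{i\})$ (and $M'|_E$ only makes options cheaper), so either $q_i$ already matches $\tilde p$ or $\tilde p\leq O(\tilde v([n]\setminus\{i\}))$ and is itself absorbable. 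Summing over $i$, the total of the $r_i\ell_i(M')\Pr[v(\{i\})\geq r_i,\ \text{others}\leq H]$ contributions recovers $(1-O(\varepsilon))\sum_i\ell_i(M')\REV(D_i\cdot\mathbb{I}(v(\{i\})\geq T))$ by fact (b).

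Finally I would reassemble: $\REV_{M''}(D\cdot\mathbb{I}(B))=\REV_{M''}(D\cdot\mathbb{I}(B\wedge Y_0))+\sum_i\REV_{M''}(D\cdot\mathbb{I}(B\wedge Y_i))$, plug in the two bounds above, collect the $O(\varepsilon)\REV(D)$ error terms and add back the $O(\varepsilon)\SREV(D)=O(\varepsilon)\REV(D)$ lost in fact (b), obtaining the claimed inequality (using $O(\varepsilon)\REV(D)=O(\varepsilon)\REV_{M'}(D(T)\cdot\mathbb{I}(B))+O(\varepsilon)\REV(D)$ as needed). The main obstacle is the $Y_i$ analysis: there one must simultaneously handle a menu change and an \emph{unbounded} change in the value of item $i$, for which neither Proposition~\ref{prop:advancedNudge} nor Proposition~\ref{prop:carefulNudge} applies as a black box, so the argument of Proposition~\ref{prop:advancedNudge} has to be re-executed with case-dependent shadow options; this is also precisely where the price $(1-\varepsilon)(q_i+r_i\ell_i(M'))$ of the appended option must be shown large enough (via $\Pr[i\in S_i]\leq 1-\ell_i(M')$) and where one must argue that the ``base price'' $q_i$ either matches $\tilde p$ or leaves $\tilde p$ small. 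The remaining difficulty is purely quantitative: every error term — from the $E$-exclusive truncation, from the two $(1-\varepsilon)$ discounts, and from $v$'s nonzero values on items $\neq i$ — must be kept at scale $O(\varepsilon)\REV(D)$, which is what forces the sub-case split on whether the favorite item's value exceeds $E/2$ and the repeated use of $E\geq\REV(D)/\varepsilon^3$, $T\geq\REV(D)/\varepsilon^4$, and $H=\SREV(D)/\varepsilon$.
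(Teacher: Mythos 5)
Your overall architecture (stitch over which item, if any, exceeds $T$, then run bespoke exchange arguments with case-dependent shadow options) is genuinely different from the paper's, which instead chains through intermediate distributions $D(T)\to D'(T)\to D'\to D$, where $D'$ zeroes out all items other than the one exceeding $T$; that detour is precisely what lets the paper compare the appended options only against \emph{single-minded} buyers (Claim~\ref{claim:final3}) and then return to $D$ with one black-box application of Proposition~\ref{prop:advancedNudge} (Claim~\ref{claim:final4}). Your route could plausibly be made to work, but as written it has two concrete gaps, both at the points where you decline to give details. First, in the $Y_0$ part, the assertion that ``Proposition~\ref{prop:carefulNudge} already permits $M''$ to contain additional options, [so] the appended options require no separate argument'' is not right: the proof of Proposition~\ref{prop:carefulNudge} maps every option the buyer might purchase in the new menu back to an originating option of the old menu with weakly larger value and the same pre-discount price, and the appended exclusive options have no such originating option. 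This is not merely formal: since the appended prices carry the extra $(1-\varepsilon)$ discount in $M''$, a buyer in $Y_0$ with $v(\{i\})\in\big((1-\varepsilon)r_i,\,T\big)$ (possible whenever $r_i<T/(1-\varepsilon)$) may strictly prefer the appended option $(\{i\},(1-\varepsilon)(q_i+r_i(1-\Pr[i\in S_i])))$ to the discounted image of her original choice, and one must separately argue the revenue lost this way is $O(\varepsilon)\REV(D)$. The paper sidesteps exactly this by introducing the appended options at \emph{undiscounted} prices first, so that $r_i\geq T\geq v(\{i\})$ makes low-valued buyers never take them, and only applying the uniform $(1-\varepsilon)$ discount at the very end, on two menus with identical allocations, where Proposition~\ref{prop:advancedNudge} applies as a black box.

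Second, in the $Y_i$ analysis your ``$\tilde p$ versus $q_i$'' reconciliation is false as stated. It is not true that either $q_i$ matches $\tilde p$ or $\tilde p\leq O(\tilde v([n]\setminus\{i\}))$: if $\tilde v$ purchases an option awarding item $i$ with higher probability than the $T\mathbb{I}(i\in\cdot)$-buyer's choice $S_i$, then the incentive constraints only give $\tilde p-q_i\leq T\cdot(\Pr[i\in \tilde S]-\Pr[i\in S_i])+\tilde v([n]\setminus\{i\})$ (up to the $(1-\varepsilon)$ rescaling in $M'|_E$), so $\tilde p$ can exceed $q_i$ by an amount of order $T$ while being nowhere near $O(\tilde v([n]\setminus\{i\}))$. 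The statement you need is rescuable, but by a different mechanism: since $\Pr[i\in\tilde S]\leq 1-\ell_i(M')$ and $r_i\geq T$, the term $r_i(1-\Pr[i\in S_i])\geq r_i\ell_i(M')+T(\Pr[i\in\tilde S]-\Pr[i\in S_i])$ in the appended price absorbs exactly this probability gap (alternatively, chain $\tilde v$'s IC against the originating option of $(S_i,q_i)$). Relatedly, the central case-B inequality --- that $v$'s payment on $M''$ is at least $(1-2\varepsilon)(\tilde p+r_i\ell_i(M'))$ minus $O(\tilde v([n]\setminus\{i\}))/\varepsilon$ --- is asserted rather than derived; it cannot follow from IC against the appended option alone (that only upper-bounds the payment), and the deviation-to-a-cheap-option case must be explicitly charged to the junk value divided by the discount, which is exactly the work Claims~\ref{claim:final3} and~\ref{claim:final4} do in the paper. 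Until these two steps are written out, the proposal is an outline of a plausible alternative proof, not a proof.
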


\begin{proof}
Like the proof of Proposition~\ref{prop:reduction}, we will consider a chain of inequalities going through $D'\cdot \mathbb{I}(B)$, which again draws a value $v$ from $D\cdot \mathbb{I}(B)$ and outputs either $v':=v$ if $v(\{i\}) \leq T$ for all $i$, or $v'$ with $v'(S):= v(S \cap \{i\})$ otherwise. We first observe that:

\begin{claim}\label{claim:final1}
$\REV_{M'|_E}(D(T) \cdot \mathbb{I}(B)) \geq (1-\varepsilon)\REV_{M'}(D(T)\cdot \mathbb{I}(B)) - O(\varepsilon)\cdot \REV(D).$
\end{claim}
\begin{proof}
This follows from a direct application of Proposition~\ref{prop:exclusive}. 
\end{proof}

Next, we want to claim that:

\begin{claim}\label{claim:final2} $\REV_{M'|_E}(D'(T)\cdot \mathbb{I}(B)) \geq  \REV_{M'|_E}(D(T) \cdot \mathbb{I}(B)) - O(\varepsilon)\cdot \SREV(D)$.
\end{claim}
\begin{proof}
To see this, couple draws $v'$ from $D'(T) \cdot \mathbb{I}(B)$ and $v$ from $D(T) \cdot \mathbb{I}(B)$ in the obvious way. Observe that $v = v'$ whenever $v(\{i\}) <T$ for all $i$, and therefore $v$ and $v'$ purchase the same option. 

If $v(\{i\}) = T$, then maybe $v$ purchased an option from $M'$ priced at most $E$. But the total probability that $v(\{i\}) = T$ for \emph{some} $i$ is at most $\frac{\SREV(D)/T}{1-\SREV(D)/T}$, so the total revenue contributed from these cases is $\frac{\SREV(D)\cdot E/T}{1-\SREV(D)/T} = O(\varepsilon)\cdot \SREV(D)$. 

Finally, maybe $v(\{i\}) = T$, and $v$ purchased an option from $M'$ priced more than $E$. Then this option must offer only item $i$, and $v'$ will have exactly the same preferences as $v$ among such options and make the same purchase.
\end{proof}

Now, we want to understand the menu $(M'|_E)^{T,\vec{r}}$ (see Definition~\ref{def:excloptions}), but not yet with the prices further reduced by $(1-\varepsilon)$. Below, let $M'''$ denote the menu $(M'|_E)^{T,\vec{r}}$ with all prices multiplied by $(1-\varepsilon)$ 
(so to be clear, first $M'$ is made $E$-exclusive, which reduces the prices by $(1-\varepsilon)$. Then the exclusive options are added, but the prices are not yet further reduced by $(1-\varepsilon)$).

\begin{claim}\label{claim:final3} $\REV_{M'''}(D'\cdot \mathbb{I}(B)) \geq \REV_{M'|_E}(D'(T)\cdot \mathbb{I}(B)) + \sum_i \ell_i(M') \REV(D_i \cdot \mathbb{I}(v(\{i\}) \geq T)) - O(\varepsilon)\cdot \SREV(D)$.
\end{claim}
\begin{proof}
We want to claim that the RHS is essentially breaking down the contribution to the revenue from $D'\cdot \mathbb{I}(B)$ into two parts: that from buyers with $v(\{i\}) \leq T$ for all $i$, and that from buyers with $v(\{i\}) > T$ for some $i$. 

Let us consider coupled draws $v, v'$ from $D'\cdot \mathbb{I}(B)$ and $D'(T) \cdot \mathbb{I}(B)$, coupled in the obvious way. Then if $v(\{i\}) \leq T$ for all $i$, we in fact have $v = v'$, and $v$ will certainly not purchase one of the $n$ extra options. So $v$ and $v'$ will purchase the same option from $M'|_E$.

Now let us consider the case that $v(\{i\}) > T$ for some $i$ (by event $B$, there is at most one such $i$). Then maybe $v'$ purchases an option priced $\leq E$ from $M'|_E$. But again the total contribution of revenue from such events can be at most $O(\varepsilon)\cdot \SREV(D)$, so this is accounted for by the additional $O(\varepsilon)\cdot \SREV(D)$ term.

Finally, maybe $v(\{i\}) > T$ for some $i$, and $v'$ purchases an option priced $> E$ from $M'|_E$ which offers only item $i$, and pays price $p_i$. Then $v$ will either purchase exactly this option from $M'|_E$, or perhaps the additional option $(\{i\}, p_i + r_i \ell_i(M'))$ if further $v(\{i\}) \geq r_i$. So our total lower bound on the revenue from $D'\cdot \mathbb{I}(B)$ is:
$$\REV_{(M'|_E)_{\vec{r}}}(D'\cdot \mathbb{I}(B))+O(\varepsilon)\cdot \SREV(D) \geq \REV_{M'|_E}(D'(T)\cdot \mathbb{I}(B)) + \sum_i \ell_i(M') r_i \cdot \Pr[v(\{i\}) \geq r_i] .$$

As $r_i \cdot \Pr[v(\{i\}) \geq r_i] = \REV(D_i \cdot \mathbb{I}(v(\{i\}) \geq T))$, we have the desired claim.
\end{proof}

And finally, we just need to relate $\REV_{M''}(D \cdot \mathbb{I}(B))$ to $\REV_{(M'''}(D' \cdot \mathbb{I}(B))$ (recall from the proposition statement that $M'':= (M|_E)^{T,\vec{r}}$).

\begin{claim}\label{claim:final4}$\REV_{M''}(D \cdot \mathbb{I}(B)) \geq (1-\varepsilon)\REV_{M'''}(D' \cdot \mathbb{I}(B)) - O(\varepsilon)\cdot \REV(D)$.
\end{claim}
\begin{proof}
This is a direct application of Proposition~\ref{prop:advancedNudge}. Observe that $M''$ and $M'''$ satisfy the hypotheses, so our only job is to bound $\delta(D \cdot \mathbb{I}(B), D' \cdot \mathbb{I}(B))$.

Couple $(v, v')$ from $(D \cdot \mathbb{I}(B), D' \cdot \mathbb{I}(B))$ in the obvious way. Then observe that $v$ and $v'$ are identical whenever $v(\{i\}) \leq T$ for all $i$. When $v(\{i\}) > T$ for some $i$ (at most one $i$, by event $B$), we have that $v(S) \geq v'(S)$ for all $S$, and this difference is at most $v([n]\setminus\{i\})$ because $v$ is subadditive. We now compute $\delta(D \cdot \mathbb{I}(B), D' \cdot \mathbb{I}(B))$ via $D|B$ and $D'|B$. 

Observe also that $D|(B\wedge v(\{i\})>T)$ is subadditive over independent items, and that $v(\{j\}) \leq H$ for all $j \neq i$. So now we can apply Proposition~\ref{prop:RW} to claim that the expected value of $v([n]\setminus\{i\})$ when drawn from $D|(B\wedge v(\{i\}) >T)$ is at most $6\BREV(D) + 4H/\ln(2)$. So we get that:
\begin{align*}
\delta(D\cdot \mathbb{I}(B),D' \cdot \mathbb{I}(B)) &\leq \Pr[B] \cdot \delta(D|B,D'|B)\\
&\leq \sum_i \Pr[B] \cdot Pr[v(\{i\})=T|B] \cdot (6\BREV(D) + 4H/\ln(2))\\
&\leq (6\BREV(D) + 4H/\ln(2))\cdot \sum_i \Pr[v(\{i\}) = T]\\
&\leq O(\varepsilon^3) \cdot \REV(D).
\end{align*}
Above, the first line follows by observing that both $v$ and $v'$ are the zero function whenever event $B$ does not occur. If event $B$ occurs, then both are drawn coupled from $D|B$ and $D'|B$. The second line follows by the reasoning in the previous paragraph. The third line simply observes that $\Pr[v(\{i\}) > T|B] \cdot \Pr[B] = \Pr[v(\{i\}) > T]$, and the final line observes that $\sum_i \Pr[v(\{i\}) > T]$ is both $O(\varepsilon^4)$ and also $O(\varepsilon^3)\cdot \SREV(D)/H$.
\end{proof}

And now we can put everything together:
\begin{align*}
\REV_{M''}(D \cdot \mathbb{I}(B))&\geq (1-\varepsilon)\REV_{M'''}(D'\cdot \mathbb{I}(B)) - O(\varepsilon) \cdot \REV(D)\\
&\geq (1-\varepsilon)\Big(\REV_{M'|_E}(D'(T) \cdot \mathbb{I}(B))+\sum_i \ell_i(M') \REV(D_i\cdot \mathbb{I}(v(\{i\}) \geq T))\Big) - O(\varepsilon) \cdot \REV(D)\\
&\geq (1-\varepsilon)\Big(\REV_{M'|_E}(D(T)\cdot \mathbb{I}(B)) + \sum_i \ell_i(M')\REV(D_i \cdot \mathbb{I}(v(\{i\}) \geq T))\Big) - O(\varepsilon) \cdot \REV(D)\\
&\geq (1-\varepsilon)^2 \Big(\REV_{M'}(D(T) \cdot \mathbb{I}(B))+\sum_i \ell_i(M')\REV(D_i \cdot \mathbb{I}(v(\{i\})\geq T))\Big) - O(\varepsilon) \cdot \REV(D)\\
&\geq (1-2\varepsilon)\Big(\REV_{M'}(D(T) \cdot \mathbb{I}(B))+\sum_i \ell_i(M')\REV(D_i \cdot \mathbb{I}(v(\{i\})\geq T))\Big) - O(\varepsilon) \cdot \REV(D).
\end{align*}
The first line follows directly from Claim~\ref{claim:final4}, the second from Claim~\ref{claim:final3}, the third from Claim~\ref{claim:final2}, and the fourth from Claim~\ref{claim:final1}. The final line simply bounds $(1-\varepsilon)^2 \geq (1-2\varepsilon)$.
\end{proof}

We now conclude the main theorem of this section, which is a slightly more precise version of Theorem~\ref{thm:reduction} (in that it explicitly describes the process for getting a menu for $D$ from one for $D(T)$).

\begin{theorem}\label{thm:reduction2}
Let $r_i$ denote the optimal price for $D_i \cdot \mathbb{I}(v(\{i\}) \geq T)$, and let $M$ be any mechanism guaranteeing $\REV_M(D(T)) \geq (1-O(\varepsilon))\cdot \sup_{M'}\big\{\REV_{M'}(D(T) +\sum_i \ell_i(M')r_i \cdot \Pr[v(\{i\})\geq r_i] \big\}$. Then:
$$\REV_{(M|_E)^{T,\vec{r}}}(D) \geq (1-O(\varepsilon))\REV(D).$$
Further, observe that $\mc((M|_E)^{T,\vec{r}}) \leq n\mc(M) + n$ and that $\wsmc((M|_E)^{T,\vec{r}}) \leq n\wsmc(M)+n$. 
\end{theorem}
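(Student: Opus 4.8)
The plan is to chain Propositions~\ref{prop:main3}, \ref{prop:reduce1}, and~\ref{prop:reduction2} for a single choice of the cutoffs, and then do the (symmetric) menu-complexity bookkeeping separately. I would set $H:=\SREV(D)/\varepsilon$, $E:=\max\{H/\varepsilon^2,\BREV(D)/\varepsilon\}$, and $T:=E/\varepsilon$. Since $\REV(D)=\Theta(\max\{\SREV(D),\BREV(D)\})$ by~\cite{RubinsteinW15}, these satisfy the hypotheses $H\ge\SREV(D)/\varepsilon$, $E\ge\max\{H/\varepsilon^2,\BREV(D)/\varepsilon\}$, and $T\ge E/\varepsilon\ge\max\{H/\varepsilon^3,\BREV(D)/\varepsilon^2\}$ required by all three propositions, and moreover $T=O(\REV(D)/\varepsilon^4)$, consistent with the almost $1/\varepsilon^4$-boundedness of $D(T)$. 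Write $w_i:=\REV(D_i\cdot\mathbb{I}(v(\{i\})\ge T))$, noting $w_i=r_i\cdot\Pr[v(\{i\})\ge r_i]$ because $D_i\cdot\mathbb{I}(v(\{i\})\ge T)$ is one-dimensional with support in $\{0\}\cup[T,\infty)$, so its revenue-optimal mechanism is a posted price $\ge T$. Set $\mathrm{MOPT}:=\sup_{M'}\{\REV_{M'}(D(T))+\sum_i\ell_i(M')w_i\}$ and $\mathrm{MOPT}_B:=\sup_{M'}\{\REV_{M'}(D(T)\cdot\mathbb{I}(B))+\sum_i\ell_i(M')w_i\}$; trivially $\mathrm{MOPT}_B\le\mathrm{MOPT}$.

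First I would establish $\mathrm{MOPT}\ge(1-O(\varepsilon))\REV(D)$: let $M^{\circ}$ be the menu promised by Proposition~\ref{prop:main3} for these $H,E$, so $\REV_{M^{\circ}}(D\cdot\mathbb{I}(B))\ge(1-O(\varepsilon))\REV(D)$, and applying Proposition~\ref{prop:reduce1} to $M^{\circ}$ with this $T$ gives $(1-O(\varepsilon))\REV_{M^{\circ}}(D\cdot\mathbb{I}(B))\le\mathrm{MOPT}_B\le\mathrm{MOPT}$, so $(1-O(\varepsilon))\REV(D)\le\mathrm{MOPT}$. Next I would transfer the hypothesis on $M$ to the event $B$: by hypothesis $\REV_M(D(T))+\sum_i\ell_i(M)w_i\ge(1-O(\varepsilon))\mathrm{MOPT}\ge(1-O(\varepsilon))\REV(D)$, and by Sub-Domain Stitching (Lemma~\ref{lem:sds}), $\REV_M(D(T))-\REV_M(D(T)\cdot\mathbb{I}(B))=\REV_M(D(T)\cdot\mathbb{I}(\bar{B}))\le\REV(D(T)\cdot\mathbb{I}(\bar{B}))$, so it suffices to prove the sub-claim $\REV(D(T)\cdot\mathbb{I}(\bar{B}))=O(\varepsilon)\REV(D)$. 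I expect this bookkeeping to be the only genuine (if mild) obstacle. I would prove it by re-running the proof of Proposition~\ref{prop:main1} with $D$ replaced by $D(T)$, which is legitimate since $D(T)$ is still subadditive over independent items, $\SREV(D(T))\le\SREV(D)$ (truncating item values downward cannot increase $\SREV$) so Claim~\ref{claim:sumpi} applies at cutoff $H$, $\Pr_{D(T)}[v(\{i\})>H]=\Pr_{D}[v(\{i\})>H]$ since $T>H$, and $\REV(D(T))=O(\REV(D))$ (from $\SREV(D(T))\le\SREV(D)$, $\BREV(D(T))=O(\REV(D))$, and~\cite{RubinsteinW15}; this can also be read off the earlier steps of the reduction). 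This yields $\REV(D(T)\cdot\mathbb{I}(\bar{B}))=O(\varepsilon)\SREV(D)+O(\varepsilon^2)\REV(D)=O(\varepsilon)\REV(D)$, hence $\REV_M(D(T)\cdot\mathbb{I}(B))+\sum_i\ell_i(M)w_i\ge(1-O(\varepsilon))\REV(D)$.

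To finish the revenue bound I would invoke Proposition~\ref{prop:reduction2}. We may assume $M$ contains only options purchased by some $v\in\text{support}(D(T))$ (as in that proposition; removing the others contributes nothing to revenue and cannot decrease $\REV_M(D(T))+\sum_i\ell_i(M)w_i$), so Proposition~\ref{prop:reduction2} applied with $M'=M$ and $\vec{r}$ as above gives $\REV_{(M|_E)^{T,\vec{r}}}(D\cdot\mathbb{I}(B))\ge(1-2\varepsilon)\big(\REV_M(D(T)\cdot\mathbb{I}(B))+\sum_i\ell_i(M)w_i\big)-O(\varepsilon)\REV(D)\ge(1-O(\varepsilon))\REV(D)$. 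Since restricting a fixed menu to a subdomain can only lower its revenue (Lemma~\ref{lem:sds}), $\REV_{(M|_E)^{T,\vec{r}}}(D)\ge\REV_{(M|_E)^{T,\vec{r}}}(D\cdot\mathbb{I}(B))\ge(1-O(\varepsilon))\REV(D)$, which is the claimed inequality.

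It remains to check the menu-complexity bounds. The menu $M|_E$ keeps one option for each option of $M$ priced $\le E$ and, for each option $(S,p)$ of $M$ priced $>E$, adds the $n$ options $\{(X_i(S),(1-\varepsilon)p)\}_{i\in[n]}$; hence $\mc(M|_E)\le n\cdot\mc(M)$, and since $(\cdot)^{T,\vec{r}}$ then appends $n$ deterministic-singleton options and rescales all prices, $\mc((M|_E)^{T,\vec{r}})\le n\,\mc(M)+n$. For the weak symmetric version, write $M=\bigsqcup_k M_k$ with $\ssmc(M_k)=d_k$ and $\sum_k d_k=\wsmc(M)$, and split the index set of each $M_k$ into its $a_k$ classes priced $\le E$ and $b_k$ classes priced $>E$ ($a_k+b_k=d_k$). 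Each cheap class $(S,p,\Sigma_k)$ becomes the single class $(S,(1-\varepsilon)p,\Sigma_k)$; for each expensive class $(S,p,\Sigma_k)$, the identity $X_i(\sigma(S))=\sigma\big(X_{\sigma^{-1}(i)}(S)\big)$ shows its $E$-exclusive expansion $\{(X_i(\sigma(S)),(1-\varepsilon)p):i\in[n],\sigma\in\Sigma_k\}$ equals $\bigcup_{\ell\in[n]}(X_\ell(S),(1-\varepsilon)p,\Sigma_k)$, i.e.\ $n$ equivalence classes under the \emph{same} group $\Sigma_k$. Therefore $\ssmc\big((M_k)|_E\big)\le a_k+n b_k\le n d_k$ (with group $\Sigma_k$), so $\wsmc(M|_E)\le\sum_k n d_k=n\,\wsmc(M)$; appending the $n$ new singleton options as one further block under the trivial group (price rescalings do not affect symmetric menu complexity) gives $\wsmc\big((M|_E)^{T,\vec{r}}\big)\le n\,\wsmc(M)+n$.
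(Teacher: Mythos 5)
Your overall route --- chaining Propositions~\ref{prop:main3}, \ref{prop:reduce1} and~\ref{prop:reduction2} with $H=\SREV(D)/\varepsilon$, $E=\max\{H/\varepsilon^2,\BREV(D)/\varepsilon\}$, $T=E/\varepsilon$, and then doing the complexity bookkeeping via the identity $X_i(\sigma(S))=\sigma\big(X_{\sigma^{-1}(i)}(S)\big)$ --- is exactly the chaining the paper intends (it never writes this proof out explicitly), and your $\mc$/$\wsmc$ accounting is correct. You also correctly isolate the one step the chaining genuinely needs beyond the quoted propositions: the hypothesis controls $\REV_M(D(T))$ (plus leftovers) on all of $D(T)$, while Proposition~\ref{prop:reduction2} only credits $\REV_M(D(T)\cdot\mathbb{I}(B))$, so one must show $\REV(D(T)\cdot\mathbb{I}(\bar{B}))=O(\varepsilon)\REV(D)$.

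The gap is in how you justify that sub-claim. Re-running Proposition~\ref{prop:main1} on $D(T)$ yields a bound of the form $O(\varepsilon)\SREV(D(T))+O(\varepsilon^2)\REV(D(T))$, and you convert it to $O(\varepsilon)\REV(D)$ by asserting $\SREV(D(T))\leq\SREV(D)$ (``truncating item values downward cannot increase $\SREV$'') and $\REV(D(T))=O(\REV(D))$. Neither is immediate. For a subadditive (even unit-demand) buyer, the revenue of a \emph{fixed} item pricing is not monotone under pointwise lowering of values (lowering one value can switch the demanded set to a more expensive one), so monotonicity of $\SREV$ under this domination is a nontrivial revenue-monotonicity claim you would have to prove; moreover, under the \emph{additive} truncation of a general subadditive valuation the truncated $v'$ is not even pointwise dominated by $v$, so even $\BREV(D(T))\leq\BREV(D)$ requires fixing which canonical truncation is used. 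The fallback ``read off the earlier steps'' is also not available as stated: Proposition~\ref{prop:reduction2} bounds $\REV(D(T)\cdot\mathbb{I}(B))$, not $\REV(D(T))$, against $\REV(D)$. The sub-claim is nevertheless true and your plan is repairable: in the rerun of Proposition~\ref{prop:main1} use $\Pr_{D(T)}[v(\{i\})>H]=\Pr_{D}[v(\{i\})>H]$ together with Claim~\ref{claim:sumpi} applied to $D$; bound $\sum_i\REV\big(D(T)_i\cdot\mathbb{I}(B_i)\big)\leq\sum_i\REV\big(D_i\cdot\mathbb{I}(B_i)\big)=O(\SREV(D))$ by one-dimensional monotonicity (truncating a single-dimensional distribution only lowers its revenue); and handle the leftover $O(\varepsilon^2)\REV(D(T))$ term by writing $\REV(D(T))\leq\REV(D(T)\cdot\mathbb{I}(B))+X$ with $X:=\REV(D(T)\cdot\mathbb{I}(\bar{B}))$, bounding $\REV(D(T)\cdot\mathbb{I}(B))\leq(1+O(\varepsilon))\REV(D)$ by applying Proposition~\ref{prop:reduction2} to a near-optimal menu for $D(T)\cdot\mathbb{I}(B)$, and then solving $X\leq O(\varepsilon)\REV(D)+O(\varepsilon^2)X$ (noting $X<\infty$ since $D(T)$ is bounded). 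With that replacement the argument goes through.
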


\begin{proof}[Proof of Theorem~\ref{thm:reduction}]
Theorem~\ref{thm:reduction} now follows direction from Theorem~\ref{thm:reduction2} together with Lemma~\ref{lem:modrev}. Theorem~\ref{thm:reduction2} implies that a $(1-O(\varepsilon))$-approximation to $\modrevmax_{\mathcal{D},\mathbb{R}^n}(D(T),\vec{r})$ implies a $(1-\varepsilon)$-approximation to $\revmax_\mathcal{D}(D)$, and Lemma~\ref{lem:modrev} implies that such an approximation can be achieved via a $(1-O(\varepsilon))$-approximation to $\revmax_\mathcal{D}(D(T,\vec{r}))$. Importantly, observe that all inputes are valid because $\mathcal{D}$ is closed under one of the canonical truncations.
\end{proof}


\section{Outline and Proofs: Symmetries and Optimal Mechanisms}\label{sec:symmetriesproofs}
In this section, we provide outlines and complete proofs for Theorems~\ref{thm:iiddemand},~\ref{thm:unit}, and~\ref{thm:constant}. We begin with a proof of Theorem~\ref{thm:symmetries}, which should appear straight-forward to readers familiar with~\cite{DaskalakisW12}. We'll actually prove a slightly stronger statement to accommodate \modrevmax, but specialized to $\mathcal{W} = \{\vec{0}\}$ the statement below is equivalent to Theorem~\ref{thm:symmetries}.

\begin{theorem}[Mild extension of~\cite{DaskalakisW12}]\label{thm:symmetries2} For any item permutation $\Sigma$ of partition size $s$, let $\mathcal{D}$ be the class of distributions $D$ which are $k$-demand over independent items, with each $|\text{support}(D_i)| \leq c$, and symmetric with respect to $\Sigma$. Let also $\mathcal{W}$ be any set of vectors such that $\sigma(\vec{w}) = \vec{w}$ for all $\vec{w} \in \mathcal{W}$. Then an optimal solution to $\modrevmax_{\mathcal{D},\mathcal{W}}$ can be found in time $\poly(n^{cs})$. Moreover, the mechanism output $M$ has $\ssmc(M) \leq n^{cs}$.
\end{theorem}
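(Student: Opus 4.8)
The argument will follow~\cite{DaskalakisW12}: first reduce to mechanisms that respect $\Sigma$ (which, being part of the specification of $\mathcal{D}$, is known to the algorithm), then express the optimal $\Sigma$-symmetric mechanism as the optimum of a linear program with $\poly(n^{cs})$ variables and constraints, solve it, and finally read the $\ssmc$ bound off the number of $\Sigma$-orbits of $\text{support}(D)$. The only ingredient of $\modrevmax$ beyond~\cite{DaskalakisW12} is the leftover term $\sum_i w_i\ell_i(M)$, and the crux will be that under the hypothesis $\sigma(\vec{w})=\vec{w}$ this term is compatible with the symmetry, both in the averaging step and in the LP formulation.

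\textbf{Symmetrization.} For a single $k$-demand buyer it is without loss to restrict to mechanisms whose randomized allocations are supported on sets of size at most $k$ (replace each menu option by the $\le k$-sized sub-allocation the buyer would keep: this leaves every type's own value and payment unchanged, only weakens others' incentives to deviate, and never increases any $\sup_{(S,p)\in M}\Pr[i\in S]$, so the $\modrevmax$ objective does not decrease). For such mechanisms each type's value is additive in its interim allocation vector $x(t)\in[0,1]^n$ with $\sum_i x_i(t)\le k$, so the set of incentive-compatible, individually-rational mechanisms is a compact polytope $P$ on which $\REV_M(D)$ is linear. Discarding options that no type buys (which only raises the objective) we have $\ell_i(M)=1-\sup_t x_i(t)$, so $x\mapsto-\sup_t x_i(t)$ is concave and the whole $\modrevmax$ objective is concave on $P$; hence a maximizer exists. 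Since $D$ is symmetric with respect to $\Sigma$ and $w_{\sigma(i)}=w_i$ for all $\sigma\in\Sigma$, this objective is invariant under the natural $\Sigma$-action on $P$ (permute each type's allocation accordingly, keep prices): $\REV_M(D)$ is preserved by symmetry of $D$, and $\sum_i w_i\sup_t x_i(t)$ is preserved because $w_{\sigma(i)}=w_i$. Averaging an optimal $(x,p)\in P$ over this action therefore yields, by concavity, an optimal mechanism that is $\Sigma$-symmetric. For such a mechanism $x_i(t)$ depends on $i$ only through $i$'s part and $i$'s value under $t$ (forced by invariance under the stabilizer of $t$ in $\Sigma$), and the allocation a type actually receives is, within each part, sorted so that higher values get higher probability (she picks her favorite, hence best-aligned, option).

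\textbf{The LP.} Within each part $T_\ell$ all $D_i$ coincide and have support of size $\le c$, so the $\Sigma$-orbits of $\text{support}(D)$ are indexed by a tuple of $s$ count-vectors, one per part, each consisting of $\le c$ nonnegative integers summing to $|T_\ell|\le n$; there are at most $n^{cs}$ (up to lower-order factors) of them. For each orbit I would introduce the $\le cs$ sorted, part-wise allocation probabilities together with one price variable, add the single $k$-demand feasibility inequality relating them, and include the individual-rationality inequality. The key structural point, exactly as in~\cite{DaskalakisW12}, is that when a representative buyer of one orbit evaluates the option intended for another orbit, her optimal choice among the $\Sigma$-translates of that option is the rearrangement-inequality matching of her items to that option's probabilities within each part, and the value she obtains is a \emph{fixed linear} function of that option's allocation variables once the ordered pair of orbits is fixed (the multiplicities and the relevant item values are constants); this produces $O(n^{2cs})$ linear incentive constraints. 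For the leftover term, note that on a $\Sigma$-symmetric menu $\sup_{(S,p)\in M}\Pr[i\in S]$ depends on $i$ only through its part (it is the supremum over orbits of the top sorted allocation probability in that part), so introducing one auxiliary variable per part upper-bounding this supremum linearizes $\sum_i w_i\ell_i(M)$; since it enters the maximized objective with a negative sign, the bound is tight at any optimum. This LP has $\poly(n^{cs})$ variables and constraints, hence is solvable in time $\poly(n^{cs})$, and by the symmetrization step its value equals $\sup_{M'}\{\REV_{M'}(D)+\sum_i w_i\ell_i(M')\}$.

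\textbf{Output mechanism and the main obstacle.} From an optimal LP solution I would output $M=\bigcup_{[t]}\{(\hat S_{[t]},p_{[t]},\Sigma)\}$, the union over all $\Sigma$-orbits $[t]$, where $\hat S_{[t]}$ is any lottery over sets of size $\le k$ with the prescribed marginals — such a lottery exists because $\{x\in[0,1]^n:\sum_i x_i\le k\}$ is the convex hull of the indicator vectors of those sets. Then $M$ is $\Sigma$-symmetric, $\ssmc(M)$ is at most the number of orbits, i.e.\ $\le n^{cs}$, and the rearrangement-matching observation together with the incentive constraints ensures that each buyer's favorite option on $M$ is the one intended for her orbit, so $M$ is an optimal solution to $\modrevmax_{\mathcal{D},\mathcal{W}}$; its implicit description (read the buyer's part-wise value profile, look up the orbit, return the sorted-matched allocation and its price) runs in $\poly(n)$ time. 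I expect the main obstacle to be the LP step: verifying cleanly that a buyer's deviation value collapses to a part-wise rearrangement matching (so it is genuinely linear, not merely piecewise linear, in the LP variables) and that the reduction to size-$\le k$ allocations is without loss — these are precisely the places where the structure of $k$-demand valuations, beyond mere independence, is used.
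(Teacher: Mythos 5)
Your proposal is correct and takes essentially the same route as the paper: an LP over the at most $n^{cs}$ $\Sigma$-orbit representatives of $\text{support}(D)$, with the leftover term linearized using $\sigma(\vec{w})=\vec{w}$, solved in $\poly(n^{cs})$ time, and the output menu given as the union of $(S,p,\Sigma)$ triples so that $\ssmc \leq n^{cs}$. Your sorted part-wise allocation variables with rearrangement-matched deviation values are just a reparametrization of the paper's per-item variables $x_i(\vec{v})$ together with its sortedness constraint~\eqref{eq:five} and leftover constraint~\eqref{eq:one}, and your explicit averaging/symmetrization step is the standard argument the paper inherits from~\cite{DaskalakisW12}.
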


\begin{proof}
For simplicity of notation, we will use $v_i:= v(\{i\})$ in the proof below, since all $v \leftarrow D$ are $k$-demand. First, let $T_1,\ldots, T_s$ denote the partition witnessing that the partition size of $\Sigma$ is $s$. Define $C$ to pick exactly one representative from each equivalence class of $\text{support}(D)$ under $\Sigma$ in the following way: say that item $j$ is equivalent to item $\ell$ if there exists an $i$ such that $\{j,\ell\} \in T_i$, and denote this by $j \sim \ell$. Then $C$ contains all valuations in the support of $D$ such that $v_j \geq v_\ell$ for all $j \leq \ell$ such that $j \sim \ell$. 

So for example, if $\Sigma$ were the set of all item permutations, then $C$ would contain all valuations in the support of $D$ such that $v_j \geq v_\ell$ for all $j \leq \ell$. Also, let $Q(\vec{v})$ denote the probability, when $\vec{v}'$ is drawn from $D$, that $\vec{v}$ and $\vec{v}'$ are equivalent under $\Sigma$. Observe that $Q(\vec{v})$ can be computed in time $\poly(n,c)$ for any $\vec{v}$.\footnote{To see this, observe that we can compute the probability of drawing exactly $\vec{v}$ by simply multiplying together $\Pr[v_i \leftarrow D_i]$ for all $i$. Afterwards, we just need to multiply by the number of elements equivalent to $\vec{v}$. This can also be computed simply: within each $T_i$, for each possible value $x_1,\ldots, x_c$ in the support of $D_j$ ($j \in T_i$), there are some number $v(i,\ell)$ items in $T_i$ with value $x_\ell$, and therefore there are $\binom{|T_i|}{v(i,1),v(i,2),\ldots, v(i,c)}$ ways to permute within $T_i$, and $\prod_i\binom{|T_i|}{v(i,1),v(i,2),\ldots, v(i,c)}$ total members of the equivalence class.}

\begin{align}
\text{Variables:} \quad &p(\vec{v}),\ \forall \vec{v} \in C\nonumber\\
&x_i(\vec{v}),\ \forall \vec{v} \in C, \ i \in [n]\nonumber\\
&\ell_i,\ \forall i \in [n]\nonumber\\
\text{Maximize:}\quad & \sum_{\vec{v} \in C} Q(\vec{v}) \cdot p(\vec{v}) + \sum_i w_i \ell_i.\nonumber\\
\text{Such that:}\quad &x_j(\vec{v}) \leq 1-\ell_i,\ \forall \vec{v} \in C,\ i\sim j \in [n]\label{eq:one}\\
&\sum_i x_i(\vec{v}) \leq k,\ \forall \vec{v} \in C\label{eq:two}\\
&x_i(\vec{v}) \in [0,1],\ \forall \vec{v} \in C\label{eq:three}\\
&\sum_i v_i \cdot x_i(\vec{v}) - p(\vec{v}) \geq \sum_i v_i \cdot x_i(\vec{v}') - p(\vec{v}'),\ \forall \vec{v},\vec{v}'\in C\label{eq:four}\\
&x_i(\vec{v})\geq x_j(\vec{v}),\ \forall i\sim j, i \leq j.\label{eq:five}
\end{align}
Above, the variables $p(\vec{v})$ denote the price of the menu option purchased by $\vec{v}$ (or any valuation equivalent to $\vec{v}$ under $\Sigma$). Let $x_i(\vec{v})$ denote the probability with which $\vec{v}$ receives item $i$ (and also the probability with which $\sigma(\vec{v})$ receives item $\sigma(i)$). Let $\ell_i$ denote the leftovers for item $i$ in the resulting menu. To be explicit, the resulting menu is $\bigcup_{\vec{v} \in C}(G(\vec{x}(\vec{v})),p(\vec{v}),\Sigma)$, all permutations of the explicitly-found options over permutations in $\Sigma$.

The objective then computes exactly the revenue of the output mechanism, plus the weighted leftovers. 

Equation~\eqref{eq:one} guarantees that the leftovers are computed correctly --- note that our menu will include $\sigma(\vec{x}(\vec{v}))$ for all $\sigma \in \Sigma$, so we do need to have this constraint for all $i \sim j$ (and not just $i = j$). This will (correctly) imply that $\ell_i = \ell_j$ for all $i \sim j$. Equation~\eqref{eq:two} guarantees, without loss of generality, that no option contains more than $k$ options in expectation (and therefore there is a distribution matching the marginal probability which offers at most $k$ items with probability one). This means that the buyer's expected value for an option is simply $\sum_i v_i \cdot x_i(\vec{v}')$. Equation~\eqref{eq:three} guarantees that all probabilities are, in fact, probabilities. 

Equations~\eqref{eq:four} and~\eqref{eq:five} handle the incentive constraints. Equation~\eqref{eq:four} guarantees that every valuation vector $\vec{v} \in C$ prefers the menu option targeted at them versus one designed for any other $\vec{v}' \in C$. However, it does not guarantee that $\vec{v}$ prefers this option to $(\sigma(\vec{x}(\vec{v'})),p(\vec{v'}))$ for any non-trivial $\sigma$. Equation~\eqref{eq:five} guarantees that for all $\vec{v} \in C$, their favorite menu option of the form $(\sigma(\vec{x}(\vec{v'})),p(\vec{v'}))$ for any $\sigma \in \Sigma$ is when $\sigma$ is the identity permutation, and therefore both equations together guarantee that every valuation vector indeed prefers their intended option.

The feasibility constraints are exactly the space of all symmetric menus, and the objective function correctly calculates their revenue plus weighted leftovers. Therefore, the solution to this LP is the optimal menu. The menu is strongly symmetric with respect to $\Sigma$, and has strong symmetric menu complexity at most $|C|$, and the LP has size $\poly(|C|)$. Finally, we argue that there are at most $n^{cs}$ elements of $C$, which will conclude the proof.

If $n(i, x)$ denotes the number of items $j$ in $T_i$ for which $v_j= x$, then observe that fixing $n(i,x)$ for all $x$ uniquely determines an element of $C$ (because these values must be permuted into decreasing order within each $T_i$). There are only $c$ different values of $x$ in the support of $D_j$, and $s$ different values of $i$, and there are at most $n$ possible options for each $n(i,x)$. Therefore, the total number of such vectors is at most $n^{cs}$. 
\end{proof}

\subsection{Warmup: i.i.d. items}\label{sec:iid}
In this section, we consider a $k$-demand buyer with i.i.d. items. Qualitatively similar results for this setting in the bounded case are already known due to~\cite{DaskalakisW12}. While the main purpose of this section is as a warmup, our results here do make quantitative improvements, and now extend to the unbounded case due to Theorem~\ref{thm:reduction}. 

We begin by proposing a discretization of $D$ for which we can directly apply Theorem~\ref{thm:symmetries}.

\begin{defn}\label{def:canon} Let $D$ be $k$-demand independent items and $t$-bounded. The canonical value-discretization $D'_\delta$ with parameter $\delta$ is defined to first draw $\vec{v} \leftarrow D$, and then round all values down to the nearest number which can be written as $t\REV(D)$ times an integral power of $(1-\delta)$ (that is, round $v_i$ down to $(1-\delta)^{\lfloor \ln_{1-\delta}(v_i/(t \REV(D)))\rfloor}t\REV(D)$), and then round down any values less than $\delta \REV(D)/k$ to $0$. Observe that $D'_\delta$ is still $k$-demand over independent items.
\end{defn}

\begin{proposition}\label{prop:iidcouples} Let $D$ be $k$-demand over independent items and $t$-bounded, then 
\[\delta(D, D'_\delta) \leq \delta \VAL(D) + \delta \REV(D).\]
\end{proposition}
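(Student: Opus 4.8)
The plan is to bound the coupling error by analyzing, item by item, how much a single value can decrease under the discretization, and then summing. Recall that
\[
\delta(D, D'_\delta) = \mathbb{E}_{(v,v')}\Big[\max_{S}\{v(S)-v'(S)\} + \max_S\{v'(S)-v(S)\}\Big],
\]
where the coupling is the obvious one: draw $\vec{v} \leftarrow D$ and let $\vec{v}'$ be the rounded-down version. Since rounding only decreases values, for $k$-demand valuations we have $v(S) \geq v'(S)$ for every set $S$ (the optimal sub-multiset of size $\leq k$ for $v'$ is available to $v$ with weakly larger contributions), so the second max is identically zero and we only need to control the first one.

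First I would observe that for any fixed draw, $\max_S\{v(S)-v'(S)\}$ is at most the sum over the (at most $k$) items in $v$'s favorite bundle of the per-item rounding loss $v_i - v_i'$. Splitting the rounding into its two stages: if $v_i \geq \delta\REV(D)/k$, then rounding down to the nearest power of $(1-\delta)$ loses at most a $\delta$ fraction, i.e. $v_i - v_i' \leq \delta v_i$; if $v_i < \delta\REV(D)/k$, it is zeroed out and we lose at most $\delta\REV(D)/k$. Hence for each item in the chosen bundle the loss is at most $\delta v_i + \delta\REV(D)/k$, and summing over the $\leq k$ items in the bundle gives $\max_S\{v(S)-v'(S)\} \leq \delta \sum_{i \in S^\star} v_i + \delta\REV(D) \leq \delta\, v([n]) + \delta\REV(D)$, where $S^\star$ is the optimal bundle and the last inequality uses $\sum_{i\in S^\star} v_i \leq v([n])$ (monotonicity/additivity on the chosen bundle — actually $v(S^\star)=\sum_{i\in S^\star}v_i \le v([n])$ since adding items only helps a $k$-demand valuation up to size $k$, but more simply $\sum_{i\in S^\star} v_i \le \sum_i v_i = v([n])$).

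Then I would take expectations: $\mathbb{E}[v([n])] = \VAL(D)$ by definition, so $\delta(D,D'_\delta) \leq \delta\VAL(D) + \delta\REV(D)$, as claimed. The argument is essentially a routine telescoping of per-coordinate rounding errors; there is no genuine obstacle. The only point that needs a little care is making sure the "lose at most a $\delta$ fraction" bound is applied only where it is valid (values above the $\delta\REV(D)/k$ threshold) and that the contributions of the at most $k$ thresholded items sum to at most $\delta\REV(D)$ — this is exactly why the threshold was set at $\delta\REV(D)/k$ rather than $\delta\REV(D)$. I would also remark that boundedness of $D$ is not actually needed for this particular inequality (it matters only later, for applying Theorem~\ref{thm:symmetries}, to ensure the rounded support is small), so the statement holds as written for any $k$-demand $D$ with $\REV(D)<\infty$.
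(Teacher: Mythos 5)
Your proof is correct and takes essentially the same approach as the paper's: the paper bounds the same two error sources (multiplicative rounding, at most $\delta\,\VAL(D)$ in expectation, and zeroing of small values, at most $k\cdot\delta\REV(D)/k=\delta\REV(D)$) by passing through an intermediate distribution and invoking the triangle inequality for $\delta(\cdot,\cdot)$ (Lemma~\ref{lem:triangle}), whereas you bound both losses per item within the chosen bundle, which amounts to the same computation. One small correction: your parenthetical ``more simply $\sum_{i\in S^\star}v_i\le\sum_i v_i=v([n])$'' is false for $k<n$ (for a $k$-demand valuation $v([n])$ is the best bundle of size at most $k$, not the full sum), so only your primary justification --- $|S^\star|\le k$ together with monotonicity --- is the valid one.
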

\begin{proof}
Consider an intermediate $D'$ which first draws $v$ from $D$ and then rounds down all values to the nearest power of $(1-\delta)$ (but doesn't further round especially small values to $0$). It's clear that $v(S) \geq v'(S)$ for all $S$. Therefore, the maximum that we can possibly lose while rounding down from $v$ to $v'$ is $\delta v([n])$, implying that $\delta(D, D') \leq \delta \VAL(D)$. 

Consider now further rounding down $v'$ to $v'_\delta$ by replacing all values $\leq \delta \REV(D)/k$ with $0$. Then because $v'$ is $k$-demand, we clearly lose at most $\delta \REV(D)$ in value for any set, and therefore $\delta(D', D'_\delta) \leq \delta \REV(D)$. Because $\delta(\cdot,\cdot)$ is a metric, we conclude that $\delta(D, D'_\delta) \leq \delta \VAL(D) + \delta \REV(D)$ by the triangle inequality.
\end{proof}

\begin{corollary}\label{cor:coupledist}
Let $D$ be $k$-demand over independent items and $t$-bounded for $t \geq 1$. Then $\delta(D, D'_\delta) = O(\delta t)\cdot \REV(D)$.
\end{corollary}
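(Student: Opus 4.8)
The statement follows immediately from Proposition~\ref{prop:iidcouples} together with a standard bound relating $\VAL(D)$ to $\REV(D)$ for bounded distributions. The plan is to recall that $\delta(D, D'_\delta) \leq \delta\VAL(D) + \delta\REV(D)$ from Proposition~\ref{prop:iidcouples}, and then to argue that $\VAL(D) = O(t)\cdot \REV(D)$ whenever $D$ is $k$-demand over independent items and $t$-bounded. Combining these two facts gives $\delta(D, D'_\delta) \leq \delta\big(O(t)\cdot \REV(D) + \REV(D)\big) = O(\delta t)\cdot \REV(D)$, using $t \geq 1$ to absorb the $\REV(D)$ term.

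The only real content is the bound $\VAL(D) = O(t)\cdot \REV(D)$. Since each $D_i$ is single-dimensional and $t$-bounded, Proposition~\ref{prop:RW} (with the bound $v(\{i\}) \leq t\REV(D)$ for all $i$, which holds by $t$-boundedness) gives $\VAL(D) = \mathbb{E}_{v \leftarrow D}[v([n])] \leq 6\BREV(D) + 4t\REV(D)/\ln(2)$. Since $\BREV(D) \leq \REV(D)$ and $t \geq 1$, this is $O(t)\cdot \REV(D)$ as desired. (Alternatively, one could argue directly: with each value bounded by $t\REV(D)$, one can charge each item the Myerson reserve of $D_i$ and count the contribution item by item, comparing against $\SREV(D) \leq O(\REV(D))$; but invoking Proposition~\ref{prop:RW} is cleanest since it is already stated.)

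I expect no genuine obstacle here — the lemma is a routine corollary. The one point requiring a small amount of care is that Proposition~\ref{prop:RW} is stated for distributions that are subadditive over independent items with each $v(\{i\})$ supported on $[0, \tau]$; one must check that a $k$-demand, $t$-bounded $D$ satisfies these hypotheses with $\tau = t\REV(D)$, which is immediate from the definitions ($k$-demand valuations are subadditive, and $t$-boundedness gives the support bound). After that, the corollary is just arithmetic with $t \geq 1$.
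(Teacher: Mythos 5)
Your proposal is correct and matches the paper's own proof, which likewise derives the corollary by combining Proposition~\ref{prop:iidcouples} with the bound $\VAL(D) = O(t)\cdot\REV(D)$ from Proposition~\ref{prop:RW} (using that $k$-demand valuations are subadditive over independent items and $t$-bounded). No issues.
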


\begin{proof}
Follows directly from Proposition~\ref{prop:iidcouples} and Proposition~\ref{prop:RW} (which provides an upper bound on $\VAL(D)$ of $O(t)\REV(D)$ as $D$ is subadditive over independent items and $t$-bounded).
\end{proof}

\begin{proposition}\label{prop:iiddemand} Let $D$ be $k$-demand over i.i.d. items and $t$-bounded, and $\mathcal{W}$ be the subset of $\mathbb{R}_+^n$ with $w_i = w_j$ for all $i,j$. Then a $(1-\varepsilon)$-approximation to $\modrevmax_{\mathcal{D},\mathcal{W}}$ can be found in time $n^{O(t^2\ln(k)/\varepsilon^4)}$. Moreover, the mechanism output has strong symmetric menu complexity at most $n^{O(t^2\ln(k)/\varepsilon^{4})}$. 
\end{proposition}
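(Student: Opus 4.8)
The plan is to reduce to the symmetric case handled by Theorem~\ref{thm:symmetries2}: discretize $D$ into a finitely‑supported i.i.d.\ distribution $D'$, solve $\modrevmax$ on $D'$ exactly via that theorem, and then carry the resulting menu back to $D$ with a nudge‑and‑round. Since $D$ is already i.i.d., the discretized $D'$ is i.i.d.\ as well, hence symmetric with respect to the full permutation group $\Sigma$, whose partition size is $s=1$; and because every $\vec w\in\mathcal W$ has all coordinates equal, $\sigma(\vec w)=\vec w$ for all $\sigma\in\Sigma$, so the hypotheses of Theorem~\ref{thm:symmetries2} are met. Concretely I would take $D':=D'_{\delta_{\mathrm d}}$ from Definition~\ref{def:canon} with discretization parameter $\delta_{\mathrm d}:=c_1\varepsilon^2/t$ for a small absolute constant $c_1$ (a $\Theta(1)$‑factor estimate of $\REV(D)$ suffices to run the discretization and affects its constants only; such an estimate is available since $\REV(D)=\Theta(\SREV(D)+\BREV(D))$, and we may assume $t\ge 1$). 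The nonzero values in each marginal of $D'$ are the numbers $t\REV(D)(1-\delta_{\mathrm d})^j\ge\delta_{\mathrm d}\REV(D)/k$, of which there are $O(\ln(kt/\delta_{\mathrm d})/\delta_{\mathrm d})$, so each marginal has support size $c=O\big(\ln(kt/\delta_{\mathrm d})/\delta_{\mathrm d}\big)=O\big(t^2\log(2k)/\varepsilon^4\big)$ after substituting $\delta_{\mathrm d}=\Theta(\varepsilon^2/t)$ and bounding $\ln(kt/\delta_{\mathrm d})=O(\log(2k)+\log t+\log(1/\varepsilon))=O\big((t/\varepsilon^2)\log(2k)\big)$. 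Theorem~\ref{thm:symmetries2} then gives, in time $\poly(n^{cs})=n^{O(t^2\log(2k)/\varepsilon^4)}$, an optimal $\modrevmax$ menu $M$ for $D'$ with $\ssmc(M)\le n^{cs}=n^{O(t^2\log(2k)/\varepsilon^4)}$; for $k\ge 2$ this is the stated bound.

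For the transfer back I would use Proposition~\ref{prop:advancedNudge} \emph{in both directions} against the single coupling bound of Corollary~\ref{cor:coupledist}, which gives $\delta(D,D')=O(\delta_{\mathrm d}t)\cdot\REV(D)=O(c_1\varepsilon^2)\cdot\REV(D)$. Write $\Phi(D'')$ for $\sup_{M''}\{\REV_{M''}(D'')+\sum_i w_i\ell_i(M'')\}$. First, let $M'$ be $M$ with all prices scaled by $(1-\varepsilon/4)$; this leaves every allocation, hence every $\ell_i$, unchanged, so applying Proposition~\ref{prop:advancedNudge} with $D'$ playing the role of $D$ and $D$ the role of $D'$ (using $\delta_M(\cdot,\cdot)\le\delta(\cdot,\cdot)$ from Observation~\ref{obs:simpleNudge} and the symmetry of the metric $\delta$, Lemma~\ref{lem:triangle}) yields $\REV_{M'}(D)+\sum_i w_i\ell_i(M')\ge(1-\tfrac\varepsilon4)\Phi(D')-\tfrac4\varepsilon\,\delta(D,D')$, where optimality of $M$ for $D'$ and $w_i\ge 0$ are used. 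Second, applying the same proposition to a (near‑)optimal $\modrevmax$ menu for $D$ with prices scaled by $(1-\varepsilon/4)$ gives $\Phi(D')\ge(1-\tfrac\varepsilon4)\Phi(D)-\tfrac4\varepsilon\,\delta(D,D')$. Chaining, $\REV_{M'}(D)+\sum_i w_i\ell_i(M')\ge(1-\tfrac\varepsilon4)^2\Phi(D)-\tfrac8\varepsilon\,\delta(D,D')$, and since $\Phi(D)\ge\REV(D)$ the loss term is $O(c_1\varepsilon)\Phi(D)$, which is $\le\tfrac\varepsilon4\Phi(D)$ for $c_1$ small enough; hence $\REV_{M'}(D)+\sum_i w_i\ell_i(M')\ge(1-\varepsilon)\Phi(D)$, i.e.\ $M'$ is a $(1-\varepsilon)$‑approximation to $\modrevmax_{\mathcal D,\mathcal W}$. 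Since $M'$ differs from $M$ only in prices, its implicit description has the same overhead and $\ssmc(M')=\ssmc(M)\le n^{O(t^2\log(2k)/\varepsilon^4)}$.

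I expect the only genuinely delicate part to be this second step, and it is delicate only in the bookkeeping rather than in any new idea: one must check that price‑scaling preserves the leftovers $\ell_i$ (so the bonus term $\sum_i w_i\ell_i$ transfers verbatim and the $\modrevmax$ objective behaves just like $\REV$ under Proposition~\ref{prop:advancedNudge}), that the proposition may legitimately be run against the same bound $\delta(D,D')$ in both directions (this is where $\delta_M\le\delta$ and symmetry of the metric are used), and that the two additive losses sum to only $O(\delta_{\mathrm d}t/\varepsilon)\cdot\REV(D)$, so that the choice $\delta_{\mathrm d}=\Theta(\varepsilon^2/t)$ makes everything collapse. Everything else — that $D'$ is i.i.d.\ and hence fully symmetric, counting its support, and invoking Theorem~\ref{thm:symmetries2} — is immediate. (For $k=1$ one reads $\ln(k)$ in the statement as $\log(2k)=O(1)$; this case arises, e.g., as the warmup for a unit‑demand buyer, where combining with the $t=\Theta(1/\varepsilon^4)$ boundedness from Theorem~\ref{thm:reduction} yields the $n^{O(1/\varepsilon^{12})}$‑type bound of Theorem~\ref{thm:iiddemand}.)
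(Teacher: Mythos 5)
Your proposal is correct and follows essentially the same route as the paper: discretize via the canonical value-discretization with $\delta=\Theta(\varepsilon^2/t)$, solve the resulting fully symmetric (partition size one) instance optimally via Theorem~\ref{thm:symmetries2}, and transfer the menu back with prices scaled by $(1-\Theta(\varepsilon))$ using Proposition~\ref{prop:advancedNudge}, with the loss bounded through $\delta(D,D'_\delta)=O(\varepsilon^2)\REV(D)$ from Corollary~\ref{cor:coupledist}. Your bookkeeping of the leftover terms and of the $\vec w$-invariance hypothesis of Theorem~\ref{thm:symmetries2} is in fact a bit more explicit than the paper's short argument, which phrases the two nudge steps via Corollary~\ref{cor:expectedNudge} and Proposition~\ref{prop:advancedNudge} for the revenue part only, but the underlying argument is the same.
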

\begin{proof}
We just apply Proposition~\ref{prop:advancedNudge} as follows. If we take $\delta = O(\varepsilon^{2}/t)$, then we'll have $\delta(D, D'_\delta) = O(\varepsilon^2)\REV(D)$. This means that $\REV(D'_\delta) \geq \REV(D)-O(\varepsilon)\cdot \REV(D)$ by Corollary~\ref{cor:expectedNudge}. So if we take $M$ to be the optimal mechanism for $D'_\delta$, which has strong symmetric menu complexity at most $n^{\ln_{1-\delta}(k/\delta)+1}=n^{O(\ln(k)/\delta^2)}$ by Theorem~\ref{thm:symmetries} (this is because there are only $\ln_{1-\delta}(k/\delta)+1$ integers between $0$ and $\ln_{1-\delta}(k/\delta)+1$ for values that $v_i \leftarrow (D'_\delta)_i$ can take), then Proposition~\ref{prop:advancedNudge} tells us that taking $M'$ to be $M$ but with prices multiplied by $(1-\varepsilon)$, then $M'$ loses only an additional $\delta(D, D'_\delta)/\varepsilon = O(\varepsilon)\REV(D)$ revenue.

Evaluating $\ln(k)/\delta^2 = O(\ln(k) t^2/\varepsilon^4)$ completes the proof.
\end{proof}

\begin{proof}[Proof of Theorem~\ref{thm:iiddemand}]
Theorem~\ref{thm:iiddemand} now follows immediately from Theorem~\ref{thm:reduction2} and Proposition~\ref{prop:iiddemand}.
\end{proof}

\subsection{Unit-demand over Asymmetric Independent Items}
In this section, we show that even for an arbitrary distribution which is unit-demand over independent items, which may initially have no symmetries whatsoever, we can still make use of Theorem~\ref{thm:symmetries} anyway. Observe that simply discretizing the values will no longer suffice, as there are asymmetric distributions even when each marginal has support two. So we will additionally have to modify the probabilities within each marginal. The high-level plan is as follows:

\begin{itemize}
\item First, apply the canonical value discretization. We already know from Corollary~\ref{cor:coupledist} how to bound the error from this. However, this doesn't immediately give us a symmetric distribution as even with two distinct values per marginal we could have $n$ distinct marginals.
\item Next, we have to discretize the probability vector for each marginal. The challenge here is that a simple nudge-and-round argument, even one that makes use of expectations, is doomed to fail if $v$ and its couple disagree by $>\varepsilon$ on \emph{any} item. Here is where we really get mileage out of Proposition~\ref{prop:advancedNudge} via a careful coupling of the original $D$ and discretized distribution.
\item Finally, we invoke Theorem~\ref{thm:symmetries}. The first bullet will result in at most {$-\ln(\delta\varepsilon)/\delta$} different values, and the second step will result in $\ln(n)/\delta$ different possible probabilities, for a total of at most $(\ln(n)/\delta)^{-\ln(\delta \varepsilon)/\delta}$ different possible probability vectors. After discretizing the $w_i$'s as well, this will leave us with at most $O((\ln(n)/\delta)^{-\ln(\delta\varepsilon)/\delta})$ different ``profiles'' for an item, and therefore the resulting distribution is symmetric under some $\Sigma$ with partition size $O((\ln(n)/\delta)^{-\ln(\delta\varepsilon)/\delta})$.
\end{itemize}

We now proceed to define our probability discretization.

\begin{defn}[Canonical probability-discretization $D^*_\delta$ and careful coupling]\label{def:unitcouple}
Let $D$ be $k$-demand over independent items and $t$-bounded. Define the canonical probability-discretization $D^*_\delta$ of $D$ with parameter $\delta$ to be the following distribution:
\begin{itemize}
\item First, draw $v'$ from $D'_{k\delta}$, as defined in the canonical value-discretization.
\item Define $q_i(x):=\Pr[v'_i = x]$, and $q'_i(x)$ to be $q_i(x)$ rounded down to the nearest power of $(1-\delta)$. If $q_i(x) < \delta^2/n^2$, instead set $q'_i(x) = 0$. Move the remaining probability mass to $0$.
\end{itemize}

Define the \emph{careful coupling} between $D$ and $D^*_\delta$ by ``converting" a draw from $D$ into a draw from $D^*_\delta$ in the following way:
\begin{itemize}
\item Draw $v \leftarrow D$, let $v'$ be its couple in $D'_{k\delta}$.
\item For all $i$ independently, do the following. Flip a coin which is heads with probability $q'_i(v'_i)/q_i(v'_i)$ (define $0/0$ to be $0$). If it lands heads, keep $v'_i = v_i$. Otherwise, update $v'_i= 0$.
\end{itemize}
\end{defn}

\begin{lemma}\label{lem:carefulcoupling} Let $D$ be $k$-demand over independent items and $t$-bounded for $t \geq 1$. Let also $M$ be any menu in which all randomized allocations with probability one award at most $k$ items. Then $\delta_M(D, D^*_\delta) \leq O(\delta t k)\cdot \max \{\SREV(D),\BREV(D)\}$.
\end{lemma}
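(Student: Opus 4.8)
The plan is to bound $\delta_M(D,D^*_\delta)$ by passing through the intermediate distribution $D'_{k\delta}$ (the canonical value-discretization) and using the triangle inequality for $\delta_M(\cdot,\cdot)$ established in Lemma~\ref{lem:triangle}. Concretely, I would write $\delta_M(D,D^*_\delta) \le \delta_M(D,D'_{k\delta}) + \delta_M(D'_{k\delta},D^*_\delta)$. The first term is handled by Observation~\ref{obs:simpleNudge} together with Corollary~\ref{cor:coupledist}: since $D'_{k\delta}$ is exactly the canonical value-discretization with parameter $k\delta$, we get $\delta_M(D,D'_{k\delta}) \le \delta(D,D'_{k\delta}) = O(k\delta t)\cdot\REV(D) = O(\delta t k)\cdot\max\{\SREV(D),\BREV(D)\}$, using $\REV(D) = \Theta(\max\{\SREV(D),\BREV(D)\})$ from~\cite{RubinsteinW15}. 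So the real work is bounding the second term, $\delta_M(D'_{k\delta},D^*_\delta)$, where the two distributions are coupled by the careful coupling: for each $i$, with probability $q'_i(v'_i)/q_i(v'_i)$ we keep the value $v'_i$, and otherwise we zero it out.

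For the second term, the key point — and the reason $\delta_M$ rather than $\delta$ is essential — is that we cannot afford to bound the per-coordinate disagreement in the worst case: conditioned on $v'_i$ being some huge value (up to $t\REV(D)$), the careful coupling zeroes it out with probability roughly $\delta$, so $v'([n]) - v^*([n])$ can be as large as $t\REV(D)$ on an event of probability $\approx \delta$ per item, and a naive $\delta(\cdot,\cdot)$ bound summed over $n$ items would give $O(n\delta t)\REV(D)$, which is useless. Instead, fix an arbitrary selection function $S_M(\cdot)$ mapping valuations to allocations available in $M$ (with the property that deterministic allocations award at most $k$ items). I would bound $\mathbb{E}[v'(S_M(v')) - v^*(S_M(v'))]$ and the symmetric term $\mathbb{E}[v^*(S_M(v^*)) - v'(S_M(v^*))]$ separately. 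For the first of these: since $S_M(v')$ depends only on $v'$ and the coin flips are independent across coordinates, and since $v'(S) - v^*(S) \le \sum_{i\in S}(v'_i - v^*_i) = \sum_{i : \text{flip}_i = \text{tails}} v'_i \cdot \mathbb{I}(i \in S_M(v'))$ (using subadditivity/monotonicity of $v'$ — actually for unit-demand/$k$-demand one wants the bound $v'(S_M(v')) - v^*(S_M(v')) \le$ sum over coordinates zeroed out of their contribution), we get by taking expectations over the coins:
\begin{align*}
\mathbb{E}[v'(S_M(v')) - v^*(S_M(v'))] \le \sum_i \mathbb{E}_{v'}\Big[v'_i \cdot \Pr[\text{flip}_i = \text{tails} \mid v'_i] \cdot \mathbb{I}(i \in S_M(v'))\Big].
\end{align*}
Now $\Pr[\text{flip}_i = \text{tails}\mid v'_i = x] = 1 - q'_i(x)/q_i(x) \le \delta$ whenever $q_i(x) \ge \delta^2/n^2$, and equals $1$ (all mass moved to $0$) when $q_i(x) < \delta^2/n^2$. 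The first case contributes at most $\delta \sum_i \mathbb{E}_{v'}[v'_i \cdot \mathbb{I}(i\in S_M(v'))]$; here I would argue $\mathbb{E}_{v'}[v'_i\cdot\mathbb{I}(i\in S_M(v'))]$ summed over $i$ is at most $k$ times the value the buyer receives, which is at most (a constant times) the value term controllable via $\SREV$/$\BREV$ — more carefully, $\sum_i \mathbb{E}[v'_i \mathbb{I}(i\in S_M(v'))] \le k\cdot \mathbb{E}[v'([n])] = O(kt)\cdot\REV(D)$ by Proposition~\ref{prop:RW}. The low-probability case ($q_i(x) < \delta^2/n^2$) contributes at most $\sum_i \sum_{x: q_i(x)<\delta^2/n^2} q_i(x)\cdot x$; since there are at most $O(-\ln(\delta\varepsilon)/\delta)$... hmm, actually at most $n$ relevant values of $x$ per coordinate and each $x \le t\REV(D)$ with $q_i(x) < \delta^2/n^2$, this is $\le n \cdot n \cdot (\delta^2/n^2) \cdot t\REV(D) = \delta^2 t\REV(D)$, which is even smaller. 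The symmetric term $\mathbb{E}[v^*(S_M(v^*)) - v'(S_M(v^*))]$ is nonpositive since the careful coupling only zeroes out coordinates, so $v^* \le v'$ pointwise on every set; hence that term is $\le 0$ and can be dropped. Summing up, $\delta_M(D'_{k\delta},D^*_\delta) = O(\delta t k)\cdot\REV(D)$, and combining with the first term finishes the proof.

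The main obstacle I anticipate is the bookkeeping in the first term's "high-probability" case: one must show that $\sum_i \mathbb{E}_{v'}[v'_i\cdot\mathbb{I}(i\in S_M(v'))]$ is genuinely $O(kt)\cdot\max\{\SREV,\BREV\}$ and not secretly $n$-dependent. This is where the hypothesis that $M$'s deterministic allocations award at most $k$ items is used (so $\mathbb{I}(i\in S_M(v'))$ sums to at most $k$ over $i$ for each $v'$), together with the $t$-boundedness to invoke Proposition~\ref{prop:RW} for the bound $\mathbb{E}[v'([n])] = O(t)\REV(D)$ on the $k$-demand (hence subadditive) value-discretized distribution. A secondary subtlety is making the pointwise domination $v^* \le v'$ and the per-coordinate decomposition $v'(S) - v^*(S) \le \sum_{i\in S, \text{zeroed}} v'_i$ rigorous for $k$-demand valuations — for $k$-demand this is immediate since $v(S) = \max_{U\subseteq S, |U|\le k}\sum_{i\in U} v_i$ and zeroing out coordinates can only decrease the max by at most the sum of the zeroed coordinates that were in the optimal $U$. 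Everything else is routine application of the triangle inequality and the already-established nudge-and-round machinery.
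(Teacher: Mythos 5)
Your proposal is correct and follows essentially the same route as the paper's proof: triangle inequality through $D'_{k\delta}$ with Corollary~\ref{cor:coupledist}, then for the careful coupling exploiting that $S_M(v')$ is chosen independently of the per-item coins, dropping the reverse direction because $v^* \leq v'$ pointwise, and splitting the loss into the rare case ($q_i(x) < \delta^2/n^2$) and the non-rare case where the zeroing probability is at most $\delta$. The only cosmetic difference is that in the non-rare case you bound the loss by $\delta \cdot \mathbb{E}\big[\sum_{i \in S} v'_i\big]$ and invoke Proposition~\ref{prop:RW}, whereas the paper just uses the worst-case bound $t\,\REV(D)$ per zeroed item times the at most $k$ awarded items; both yield $O(\delta t k)\cdot\REV(D)$ and hence the claimed bound.
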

The idea in the proof of Lemma~\ref{lem:carefulcoupling} is the following. Simple nudge-and-rounds argue that if $v$ and its couple $v'$ are close \emph{on every set}, then no matter what sets appear in the nudge-and-round calculations, the gap is small. Proposition~\ref{prop:advancedNudge} more carefully observes that it is wasteful to consider \emph{arbitrary} sets, when in fact only two such sets actually appear in the nudge-and-round analysis. And if the set which appears has only a single item (or is a distribution over sets of size one), then it only matters whether $v$ and $v'$ are close on this particular item, and not ones which never appear in the analysis. 

\begin{proof}[Proof of Lemma~\ref{lem:carefulcoupling}]
This proof will really make use of the carefulness in Proposition~\ref{prop:advancedNudge} as Corollary~\ref{cor:expectedNudge} does not suffice. We first observe that for all couples $(v,v')$ from $(D'_{k\delta},D^*_\delta)$, we have that $v(S) \geq v'(S)$ for all $S$. Therefore, in the definition of $\delta_M(D, D')$, the term:
$$\sup_{S_M(\cdot)}\Big\{\mathbb{E}_{(v,v')\leftarrow (D'_{k\delta}, D^*_\delta)}[v'(S_M(v')) - v(S_M(v'))] \Big\} = 0.$$
So let us now focus on the other term in the definition of $\delta_M(D, D')$. Here, observe that no matter how $S_M(\cdot)$ is defined, its output is some randomized allocation which awards at most $k$ items. So we can rewrite:
\begin{align*}
\mathbb{E}_{(v,v')\leftarrow (D'_{k\delta}, D^*_\delta)}[v(S_M(v)) - v'(S_M(v))] &= \mathbb{E}_{v \leftarrow D'_{k\delta}}\left[\mathbb{E}_{S \leftarrow S_M(v)}\left[\sum_{i \in S}\mathbb{E}_{v'\leftarrow D^*_\delta|v}\left[v_i - v'_i\right]\right]\right]\\
&\leq\mathbb{E}_{v \leftarrow D'_{k\delta}}\left[\mathbb{E}_{S \leftarrow S_M(v)}\left[\sum_{i \in S} t(1-q'_i(v_i)/q_i(v_i)))\REV(D)\right]\right]
\end{align*}

Above, the first line just rewrites the expectation to first draw $v$, which defines $S_M(v)$, and then draw $S$ from $S_M(v)$, and then we can draw $v'$ from $D^*_\delta$ conditioned on $v$. Crucially for this line, we needed $S_M(v)$ to be defined independently of $v'$. The second line observes that drawing $v'$ conditioned on $v$ maybe rounds each of at most $k$ values all the way down to zero, which hurts at most $t \REV(D)$, but occurs not with particularly high probability.

We now split up the above expectation into two cases: those $v$ for which $q'_i(v_i) = 0$ for some $i$, and the rest. We'll call those $v$ for which $q'_i(v_i) = 0$ for some $i$ \emph{rare}, and denote by $R$ the set of such $v$.
\begin{align*}
\mathbb{E}_{v \leftarrow D'_{k\delta}}&\left[\mathbb{E}_{S \leftarrow S_M(v)}\left[\sum_{i \in S} t(1-q'_i(v_i)/q_i(v_i))\REV(D)\right]\right]\\
&= \mathbb{E}_{v \leftarrow D'_{k\delta}\cdot \mathbb{I}(v \in R)}\left[\mathbb{E}_{S\leftarrow S_M(v)}\left[\sum_{i \in S} t(1-q'_i(v_i)/q_i(v_i))\REV(D)\right]\right]\\
&\quad + \mathbb{E}_{v \leftarrow D'_{k\delta}\cdot \mathbb{I}(v \notin R)}\left[\mathbb{E}_{S \leftarrow S_M(v)}\left[\sum_{i \in S}t(1-q'_i(v_i)/q_i(v_i))\REV(D)\right]\right]\\
&\leq \Pr[v \in R] \cdot tk\REV(D) ~+~ \mathbb{E}_{v \leftarrow D\cdot \mathbb{I}(v \notin R)}\left[\mathbb{E}_{S \leftarrow S_M(v)}\left[k\delta t\REV(D)\right]\right]\\
&\leq \delta tk\REV(D)/n ~+~ \delta kt\REV(D)\\
& = O(\delta kt) \cdot \REV(D),
\end{align*}

where the first inequality follows from noting that $\frac{q'_i(v_i)}{q_i(v_i)} \geq 1-\delta$. Since we have already established that $\delta_M(D, D'_{k\delta}) \leq \delta(D, D'_{k\delta}) = O(kt\delta)$ (Corollary~\ref{cor:coupledist}), and that $\delta(\cdot,\cdot)$ is a metric, we get that $\delta_M(D, D^*_\delta) = O(kt\delta)$.
\end{proof}

We now conclude the main proposition of this section:

\begin{proposition}\label{prop:unit} Let $D$ be $k$-demand over independent items and $t$-bounded for $t \geq 1$. Then a $(1-\varepsilon)$-approximation to $\modrevmax(D,\vec{w})$ can be found in time $n^{O(\ln(ntk/\varepsilon)^{O(kt\ln(kt/\varepsilon)/\varepsilon^2)})}$. Moreover, the mechanism output has strong symmetric menu complexity at most $n^{O(\ln(ntk/\varepsilon)^{O(kt\ln(kt/\varepsilon)/\varepsilon^2)})}$. 
\end{proposition}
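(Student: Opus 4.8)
The plan is to discretize $D$ into a symmetric distribution $D^*_\delta$ (Definition~\ref{def:unitcouple}) for a suitably small $\delta$, solve $\modrevmax$ exactly on $D^*_\delta$ via Theorem~\ref{thm:symmetries2}, and transfer the resulting menu back to $D$ using Proposition~\ref{prop:advancedNudge}, with the coupling error bounded by Lemma~\ref{lem:carefulcoupling}. First I would fix $\delta = \Theta(\varepsilon^2/(kt))$. Since $D$ is $t$-bounded, $\max\{\SREV(D),\BREV(D)\} = \Theta(\REV(D)) \le \Theta(\REV_{\mathrm{opt}}(D))$ by \cite{RubinsteinW15}, so Lemma~\ref{lem:carefulcoupling} gives $\delta_M(D, D^*_\delta) = O(\varepsilon^2)\cdot\REV(D)$ for every menu $M$ all of whose probability-one randomized allocations award at most $k$ items; in particular $\delta_M(D,D^*_\delta)/\varepsilon = O(\varepsilon)\REV(D)$. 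We may restrict to such menus without loss: an optimal $\modrevmax$ menu for a $k$-demand buyer can be taken to have every allocation award at most $k$ items (cf.\ constraint~\eqref{eq:two} in Theorem~\ref{thm:symmetries2}), and the menu Theorem~\ref{thm:symmetries2} outputs has this property by construction.

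Second, I would establish two-sided transfer between $D$ and $D^*_\delta$ for the $\modrevmax$ objective. The key point is that multiplying all prices of a menu by $(1-\varepsilon)$ leaves every randomized allocation, hence every leftover $\ell_i(\cdot)$, unchanged, while $\sum_i w_i \ell_i$ loses only a $(1-\varepsilon)$ factor since $w_i,\ell_i\ge 0$. Applying Proposition~\ref{prop:advancedNudge} with $(D,D')=(D,D^*_\delta)$ to the (without loss $k$-bounded) optimal $\modrevmax$ menu for $D$ shows $\sup_{M}\{\REV_M(D^*_\delta)+\sum_i w_i\ell_i(M)\} \ge (1-O(\varepsilon))\cdot\sup_{M}\{\REV_M(D)+\sum_i w_i\ell_i(M)\}$; and applying it with $(D,D')=(D^*_\delta,D)$ (using that $\delta_M$ is symmetric, being a metric by Lemma~\ref{lem:triangle}) turns any $k$-bounded menu $M$ for $D^*_\delta$ into a menu $M'$ for $D$ with the same allocations, so $\ssmc(M')=\ssmc(M)$ and $\REV_{M'}(D)+\sum_i w_i\ell_i(M') \ge (1-O(\varepsilon))(\REV_M(D^*_\delta)+\sum_i w_i\ell_i(M))$. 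Chaining these (and rescaling $\varepsilon$ by a constant) reduces the task to computing the exactly optimal $\modrevmax$ menu for $D^*_\delta$ with a slightly rounded weight vector.

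Third, I would invoke Theorem~\ref{thm:symmetries2}, which requires checking that $D^*_\delta$ is symmetric with small partition size and small marginal support. After the value-discretization each marginal $(D^*_\delta)_i$ is supported on $0$ together with values spaced by a $(1-\Theta(\delta))$ factor in a range that is multiplicatively $\poly(kt/\varepsilon)$ wide, so it has support size $c = O(kt\ln(kt/\varepsilon)/\varepsilon^2)$; after the probability-discretization each of these $c$ probabilities is either $0$ or a power of $(1-\delta)$ in $[\delta^2/n^2,1]$, i.e.\ one of $O(\ln(n/\delta)/\delta)$ values; and I would additionally round each $w_i$ into a grid of comparable size, absorbing the resulting additive loss $\le \delta\sum_i w_i\ell_i = O(\varepsilon^2)\REV_{\mathrm{opt}}$ into the previous step. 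Items whose rounded marginal and rounded weight coincide are genuinely interchangeable, so $D^*_\delta$ (with the rounded weights) is symmetric with respect to the item-permutation group $\Sigma$ permuting items within each such class, the rounded weight vector is $\Sigma$-invariant, and the partition size $s$ is at most the number of distinct (marginal, weight) profiles, which is $(\ln(n/\delta)/\delta)^{O(c)} \le \ln(ntk/\varepsilon)^{O(kt\ln(kt/\varepsilon)/\varepsilon^2)}$. Theorem~\ref{thm:symmetries2} then solves the problem exactly in time $\poly(n^{cs})$ with strong symmetric menu complexity $\le n^{cs}$, and substituting $cs = O(\ln(ntk/\varepsilon)^{O(kt\ln(kt/\varepsilon)/\varepsilon^2)})$ yields both claimed bounds (after undoing the constant-factor rescaling of $\varepsilon$).

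I expect the genuinely delicate part --- controlling $\delta_M(D,D^*_\delta)$ against an \emph{arbitrary} menu even when $D$ and $D^*_\delta$ disagree substantially on individual items --- to be precisely what Lemma~\ref{lem:carefulcoupling} supplies, so the remaining obstacle is bookkeeping: choosing $\delta$ so that every nudge-and-round loss is $O(\varepsilon)\REV(D)$, verifying that the $\modrevmax$ objective (in particular the $w_i\ell_i$ terms and their own discretization) is compatible with all of these steps, and propagating the parameters $c$ and $s$ through Theorem~\ref{thm:symmetries2}'s $n^{cs}$ bound to land exactly on the stated quasi-polynomial expression.
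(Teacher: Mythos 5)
Your overall route is the same as the paper's: discretize values and probabilities via $D^*_\delta$ with $\delta=\Theta(\varepsilon^2/(kt))$, bound the menu-specific coupling error with Lemma~\ref{lem:carefulcoupling}, transfer in both directions with Proposition~\ref{prop:advancedNudge} (using that discounting prices leaves allocations, hence leftovers, unchanged, and that $\delta_M$ is symmetric), restrict without loss to menus whose allocations award at most $k$ items, and solve the symmetric instance exactly with Theorem~\ref{thm:symmetries2}, counting marginal profiles to bound the partition size. All of this matches the paper's proof, and your parameter bookkeeping is at least as careful as the paper's own.

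The one genuine gap is the treatment of the weights $\vec{w}$. You propose to ``round each $w_i$ into a grid of comparable size, absorbing the resulting additive loss $\le \delta\sum_i w_i\ell_i$.'' That loss bound is what pure multiplicative rounding gives, but pure multiplicative rounding does not produce a bounded grid: the $w_i$ are arbitrary non-negative reals, so after rounding they can still take $n$ distinct values (e.g.\ $w_i=2^i\REV(D)$), in which case the partition size of $\Sigma$ --- and hence the $n^{cs}$ time and $\ssmc$ bounds from Theorem~\ref{thm:symmetries2} --- becomes exponential. To get $O(\ln(n/\varepsilon)/\delta)$ weight classes you must also truncate the dynamic range, and truncation from above is not an additive-$O(\varepsilon^2)\cdot\REV(D)$ operation: capping one huge $w_i$ changes the objective by an amount comparable to the optimum itself. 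The paper closes this with a small but necessary extra idea: if some $w_i\ge \REV(D)/\varepsilon$, simply output the empty menu, which already collects $\sum_i w_i \ge \REV(D)/\varepsilon$ against a benchmark of at most $\REV(D)+\sum_i w_i$ and is therefore a $(1-\varepsilon)$-approximation; otherwise all weights lie below $\REV(D)/\varepsilon$, weights under $\varepsilon\REV(D)/n$ are floored to $0$ at total additive cost $\varepsilon\REV(D)$, and only then does multiplicative rounding yield $O(\ln(n/\varepsilon)/\delta)$ classes. You need this case analysis (or an equivalent argument bounding the number of weight classes) before your partition-size bound; with it, the rest of your argument goes through as written.
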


\begin{proof}
We will again make use of Theorem~\ref{thm:symmetries} on $D^*_\delta$. First, observe that there are not too many possible marginals in $D^*_\delta$. In particular, there are only $\ln(t/\delta)/\delta$ distinct values, and each non-zero value is drawn with one of at most $O(\ln(n/\delta))$ probabilities, so there are at most $(\ln(n/\delta))^{\ln(t/\delta)/\delta}$ different marginals. This certainly means that $D^*_\delta$ is symmetric with respect to some $\Sigma$ of partition size $(\ln(n/\delta))^{\ln(t/\delta)/\delta}$, but this isn't quite enough yet because we need $\vec{w}$ to be invariant under $\Sigma$ as well.

So first, check if any $w_i \geq\REV(D)/\varepsilon$. If so, then just let $M$ be the empty menu. We claim this is at least a $(1-\varepsilon)$-approximation. The empty menu will get $\sum_i w_i \geq \REV(D)/\varepsilon$ since $M$ allocates no items. Also, the benchmark is at most $\REV(D) + \sum_i w_i$, meaning that ignoring the revenue from $D$ entirely costs us at most a $\varepsilon$ fraction of our benchmark. 

If not, then round all $w_i$ down to the nearest power of $(1-\delta)$, rounding all $w_i$ which are less than $\varepsilon \REV(D)/n$ down to $0$. Refer to the new rounded values as $\underline{w_i}$. Now, observe that there are at most $O(\ln(n/\varepsilon)/\delta)$ different values of $\underline{w_i}$. So we can say that item $i \sim j$ iff the marginal $D^*_\delta$ onto items $i$ and $j$ are the same and also $\underline{w_i} = \underline{w_j}$. Now, there are still at most $O(\ln(n/\delta))^{\ln(t/\delta)/\delta}$ equivalence classes (as long as $t = \Omega(1/\varepsilon)$ and $\delta = O(\varepsilon$), so we can solve $\modrevmax(D^*_\delta,\underline{\vec{w}})$ in time $n^{O(\ln(n/\delta)^{\ln(t/\delta)/\delta})}$, finding a menu with strong symmetric menu complexity at most $n^{O(\ln(n/\delta)^{\ln(t/\delta)/\delta})}$. 

Now we just need to calculate the error. Let $M$ denote the optimal solution to $\modrevmax(D,\vec{w})$, with quality $\sum_i w_i \ell_i(M) + \REV_M(D)$. Then by Proposition~\ref{prop:advancedNudge}, the mechanism $M'$ with all prices rounded down to the nearest multiple of $(1-\varepsilon)$ satisfies:
$$\sum_i \underline{w_i}\ell_i(M') + \REV_{M'}(D^*_\delta) \geq (1-\delta)\sum_i w_i \ell_i(M) + (1-\varepsilon) \REV_M(D) - O(\delta kt)\REV(D)/\varepsilon - \varepsilon\REV(D).$$

So as long as we have $\delta = O(\varepsilon^2/(tk))$, then the subtracted terms account for at most $\varepsilon \cdot \textsf{OPT}$, and the quality of the optimal solution to $\modrevmax(D^*_\delta,\underline{\vec{w}})$ is close to the desired optimum. Once we solve $\modrevmax(D^*_\delta,\underline{\vec{w}})$ and then multiply the prices by $(1-\varepsilon)$, we'll get a menu $M^*$ satisfying:
$$\sum_i \underline{w_i}\ell_i(M^*) + \REV_{M^*}(D) \geq (1-\varepsilon)\cdot \left(\sum_i \underline{w_i} \ell_i(M') + \REV_{M'}(D^*_\delta)\right) - O(\delta k t)\REV(D)/\varepsilon.$$

Combining both equations and setting $\delta = O(\varepsilon^2/(tk))$ proves the proposition.
\end{proof}

\begin{proof}[Proof of Theorem~\ref{thm:unit}]
Theorem~\ref{thm:unit} now follows immediately from Theorem~\ref{thm:reduction2} and Proposition~\ref{prop:unit} (and upperbounding $\ln(1/\varepsilon)$ with $1/\varepsilon$ for ease of presentation).
\end{proof}


\subsection{$k$-Demand Buyer with Constant Support Per Item}\label{sec:constant}

In this section we prove Theorem~\ref{thm:constant}, which considers a bidder who is $k$-demand over independent items, where each $D_i$ has support at most  $c$ (note that the supports may be disjoint). The proof follows mostly from Lemma~\ref{lem:carefulcoupling}, and we just need to again count the number of distinct marginals.

\begin{proposition}\label{prop:kdemconstant} Let $D$ be $k$-demand over independent items and $t$-bounded for $t \geq 1$, and have each $D_i$ have support at most $c$. Then a $(1-\varepsilon)$-approximation to $\modrevmax(D,\vec{w})$ can be found in time $n^{O(\ln(nt/\varepsilon))^c}$. Moreover, the mechanism output has strong symmetric menu complexity at most $n^{O(\ln(nt/\varepsilon))^c}$.
\end{proposition}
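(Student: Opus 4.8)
The plan is to reprise the proof of Proposition~\ref{prop:unit} almost verbatim, the only real change being a tighter count of the number of distinct item ``profiles'' after discretization, which exploits that each $D_i$ has support at most $c$. Fix $\delta=\Theta(\varepsilon^2/(tk))$ and form the canonical probability-discretization $D^*_\delta$ of Definition~\ref{def:unitcouple} (which internally applies the canonical value-discretization $D'_{k\delta}$, then rounds each marginal's probabilities down to powers of $(1-\delta)$, zeroing masses below $\delta^2/n^2$). Since rounding values down can only merge support points, $D^*_\delta$ is still $k$-demand over independent items, $t$-bounded, with $|\mathrm{support}((D^*_\delta)_i)|\le c$ for every $i$. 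Handle $\vec w$ exactly as in Proposition~\ref{prop:unit}: if some $w_i\ge \REV(D)/\varepsilon$, output the empty menu (its leftover term $\sum_i w_i$ already dominates the benchmark $\REV(D)+\sum_i w_i$ up to a $(1-\varepsilon)$ factor); otherwise round each $w_i$ down to the nearest power of $(1-\delta)$, zeroing those below $\varepsilon\REV(D)/n$, to get $\underline{\vec w}$, which takes $O(\ln(n/\varepsilon)/\delta)$ distinct values.

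Next, count profiles. After discretization there are $V=O(\ln(tk/\delta)/\delta)$ admissible values, $P=O(\ln(n/\delta)/\delta)$ admissible nonzero probabilities, and $L=O(\ln(n/\varepsilon)/\delta)$ admissible weights. Because $|\mathrm{support}((D^*_\delta)_i)|\le c$, the marginal $(D^*_\delta)_i$ is specified by a multiset of at most $c$ pairs drawn from $V\times P$, so the number of distinct pairs $\big((D^*_\delta)_i,\underline{w_i}\big)$ is at most $s:=V^c\,P^c\,L$, which (treating $k$ and $c$ as constants and absorbing the fixed $\poly(t,1/\varepsilon)$ factors coming from $1/\delta$, exactly as in the bookkeeping of Propositions~\ref{prop:iiddemand} and~\ref{prop:unit}) is $\big(\ln(nt/\varepsilon)\big)^{O(c)}$. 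Declaring $i\sim j$ iff items $i,j$ have the same profile partitions $[n]$ into at most $s$ classes, and by construction both $D^*_\delta$ and $\underline{\vec w}$ are invariant under the induced item permutation group $\Sigma$, whose partition size is at most $s$. Apply Theorem~\ref{thm:symmetries2} to $\modrevmax_{\{D^*_\delta\},\{\underline{\vec w}\}}$ with support bound $c$ and partition size $s$: in time $\poly(n^{cs})=n^{O(\ln(nt/\varepsilon))^c}$ it returns the optimal symmetric menu $M'$ for $(D^*_\delta,\underline{\vec w})$, which has $\ssmc(M')\le n^{cs}$ and (by the at-most-$k$ constraint of the LP) awards at most $k$ items deterministically.

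It remains to transfer the guarantee. Let $M^{\mathrm{opt}}$ be the optimal menu for $\modrevmax(D,\vec w)$, which we may take to award at most $k$ items deterministically so that Lemma~\ref{lem:carefulcoupling} applies, and write $\textsf{OPT}$ for its quality. Combining Proposition~\ref{prop:advancedNudge} applied to $(D,D^*_\delta)$, Lemma~\ref{lem:carefulcoupling} ($\delta_{M}(D,D^*_\delta)=O(\delta tk)\max\{\SREV(D),\BREV(D)\}=O(\varepsilon^2)\REV(D)$), the estimate $\underline{w_i}\ge(1-\delta)w_i$ on the unzeroed weights, and the fact that $\ell_i(\cdot)$ is insensitive to a uniform price rescaling, shows that some symmetric menu for $(D^*_\delta,\underline{\vec w})$ has quality $\ge(1-O(\varepsilon))\textsf{OPT}$, hence so does $M'$. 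Running $M'$ through the same chain in reverse (Proposition~\ref{prop:advancedNudge} on $(D^*_\delta,D)$, all prices multiplied by $(1-\varepsilon)$, $\underline{w_i}\le w_i$, and symmetry of the metric $\delta_{M'}(\cdot,\cdot)$ together with Lemma~\ref{lem:carefulcoupling}) yields a menu for $(D,\vec w)$ of quality $\ge(1-O(\varepsilon))\textsf{OPT}$ with the same allocation set and hence the same $\ssmc$; the choice $\delta=\Theta(\varepsilon^2/(tk))$ keeps every error term at $O(\varepsilon)\REV(D)=O(\varepsilon)\textsf{OPT}$.

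The proposition has essentially no obstacle of its own: all the subtlety lives in Lemma~\ref{lem:carefulcoupling}, which is already in hand and which crucially relies on the refined coupling error $\delta_M(\cdot,\cdot)$ and on $k$-demand allocations awarding at most $k$ items (so only $k$ coordinates of value can be lost per draw, rather than the full $v([n])$ a crude nudge-and-round would charge). The only genuine work here is the profile count and the matching accounting of the $1/\delta$ factors so that the exponent reads $\ln(nt/\varepsilon)^{O(c)}$ rather than the vastly larger probability-vector count $P^V$ that appears for general unit-demand distributions in Proposition~\ref{prop:unit}; this is routine.
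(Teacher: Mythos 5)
Your proposal is essentially the paper's own proof: the paper proves Proposition~\ref{prop:kdemconstant} by rerunning the argument of Proposition~\ref{prop:unit} verbatim (discretize to $D^*_\delta$ with $\delta=\Theta(\varepsilon^2/(tk))$, handle $\vec w$ by the empty-menu check and rounding to $\underline{\vec w}$, invoke Theorem~\ref{thm:symmetries2}, and transfer both ways via Proposition~\ref{prop:advancedNudge} and Lemma~\ref{lem:carefulcoupling}), with the only new content being the count of distinct item profiles. Your transfer chain, including the explicit remarks that the benchmark menu may be taken to award at most $k$ items and that $\delta_M(\cdot,\cdot)$ is symmetric, matches the paper's intent.

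The one place you diverge is the profile count, and it is worth being precise about it. The paper counts marginals of $D^*_\delta$ as ``at most $c$ support points, each carrying one of $O(\ln(n/\delta))$ discretized probabilities,'' i.e.\ $O(\ln(n/\delta))^c$, and does not charge for \emph{which} grid values form the support; your count $V^cP^cL$ does charge for this (and for the full $1/\delta$ factor in the number of probability levels), which is the more honest accounting given that the supports of the $D_i$ may be disjoint. However, your subsequent absorption step --- ``treating $k$ and $c$ as constants and absorbing the $\poly(t,1/\varepsilon)$ factors'' --- is not licensed by the statement being proved: the claimed bound $n^{O(\ln(nt/\varepsilon))^c}$ is uniform in $k$ and only polylogarithmic in $t/\varepsilon$, while your $s=V^cP^cL$ with $1/\delta=\Theta(tk/\varepsilon^2)$ places $\poly(tk/\varepsilon)^{\Theta(c)}$ factors in the exponent of $n$, which is a genuinely larger bound when $k$ or $t$ grows with $n$. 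To be fair, the paper's own count is loose in the opposite direction (it silently drops both the support-choice factor and a $1/\delta$ in the probability count, and waves the weight factor $L$ away with the parenthetical ``$c\ge 2$'' remark), so your proof reaches the stated exponent only at the same level of rigor as the paper's; if you want the clean $O(\ln(nt/\varepsilon))^c$ form to be airtight you would need either the paper's (questionable) coarser count or an explicit statement that the exponent carries $\poly(tk/\varepsilon)$ factors, which is what the analogous bound in Proposition~\ref{prop:unit} does.
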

\begin{proof}
Note that most of this proof is repeating steps in the proof of Proposition~\ref{prop:unit}. The only difference is in counting the number of distinct marginals.
We will again make use of Theorem~\ref{thm:symmetries} on $D^*_\delta$. First, observe that there are not too many possible marginals in $D^*_\delta$. In particular, there are only $c$ distinct values, and each non-zero value is drawn with one of at most $O(\ln(n/\delta))$ probabilities, so there are at most $O(\ln(n/\delta))^{c}$ different marginals. This certainly means that $D^*_\delta$ is symmetric with respect to some $\Sigma$ of partition size $O(\ln(n/\delta))^{c}$, but this isn't quite enough yet because we need $\vec{w}$ to be invariant under $\Sigma$ as well.

So first, check if any $w_i \geq\REV(D)/\varepsilon$. If so, then just let $M$ be the empty menu. We claim this is at least a $(1-\varepsilon)$-approximation. The empty menu will get $\sum_i w_i \geq \REV(D)/\varepsilon$ since $M$ allocates no items. Also, the benchmark is at most $\REV(D) + \sum_i w_i$, meaning that ignoring the revenue from $D$ entirely costs us at most a $\varepsilon$ fraction of our benchmark. 

If not, then round all $w_i$ down to the nearest power of $(1-\delta)$, rounding all $w_i$ which are less than $\varepsilon \REV(D)/n$ down to $0$. Refer to the new rounded values as $\underline{w_i}$. Now, observe that there are at most $O(\ln(n/\varepsilon)/\delta)$ different values of $\underline{w_i}$. So we can say that item $i \sim j$ iff the marginal $D^*_\delta$ onto items $i$ and $j$ are the same and also $\underline{w_i} = \underline{w_j}$. Now, there are still at most $O(\ln(n/\delta))^{c}$ equivalence classes (as long as $c\geq 2$ and $n = \omega(\log(1/\varepsilon))$), so we can solve $\modrevmax(D^*_\delta,\underline{\vec{w}})$ in time $n^{O(\ln(n/\delta))^c}$, finding a menu with strong symmetric menu complexity at most $n^{O(\ln(n/\delta))^c}$. 

Now we just need to calculate the error. Let $M$ denote the optimal solution to $\modrevmax(D,\vec{w})$, with quality $\sum_i w_i \ell_i(M) + \REV_M(D)$. Then by Proposition~\ref{prop:advancedNudge}, the mechanism $M'$ with all prices rounded down to the nearest multiple of $(1-\varepsilon)$ satisfies:
$$\sum_i \underline{w_i}\ell_i(M') + \REV_{M'}(D^*_\delta) \geq (1-\delta)\sum_i w_i \ell_i(M) + (1-\varepsilon) \REV_M(D) - O(\delta kt)\REV(D)/\varepsilon - \varepsilon\REV(D).$$

So as long as we have $\delta = O(\varepsilon^2/(tk))$, then the subtracted terms account for at most $\varepsilon \cdot \textsf{OPT}$, and the quality of the optimal solution to $\modrevmax(D^*_\delta,\underline{\vec{w}})$ is close to the desired optimum. Once we solve $\modrevmax(D^*_\delta,\underline{\vec{w}})$ and then multiply the prices by $(1-\varepsilon)$, we'll get a menu $M^*$ satisfying:
$$\sum_i \underline{w_i}\ell_i(M^*) + \REV_{M^*}(D) \geq (1-\varepsilon)\cdot \left(\sum_i \underline{w_i} \ell_i(M') + \REV_{M'}(D^*_\delta)\right) - O(\delta k t)\REV(D)/\varepsilon.$$

Combining both equations and setting $\delta = O(\varepsilon^2/(tk))$ proves the proposition.
\end{proof}

\begin{proof}[Proof of Theorem~\ref{thm:constant}]
The proof of Theorem~\ref{thm:constant} now follows from Proposition~\ref{prop:kdemconstant} and Theorem~\ref{thm:reduction2}.
\end{proof}

\section{Outline and Proofs: Selling Separately with Low Symmetric Menu Complexity}\label{sec:srevproofs}
In this section we provide the proof of Theorem~\ref{thm:apxsrev}, which draws on ideas from~\cite{BabaioffGN17}. Essentially, we will partition the items up into exclusive buckets $B_0,\ldots, B_k$, and one joint bucket $B$, and allow the buyer to purchase at most one item from each exclusive bucket (deciding separately for each exclusive bucket) and add this to either the entire joint bucket or none of the joint bucket. $B_0$ will be a special exclusive bucket where the prices might differ for each item inside, but for all other buckets the prices will be identical. We first argue that a mechanism of this form has low strongly symmetric menu complexity.

\begin{defn}[Bucket Mechanism] Let $B_0,\ldots, B_k, B$ partition $[n]$. Let each item $j$ in $B_0$ have a price $p_j$, and each bucket $i > 0$ have a price $q_i$, and bucket $B$ have a price $q$. Then the bucket mechanism associated with $B_0,\ldots, B_k, B, \vec{p},\vec{q},q$ allows the buyer to pick any set $S$ and pay price $p(S)$ such that:
\begin{itemize}
\item For all $i$, we have $S \cap B_i \leq 1$.
\item Either $B \subseteq S$ or $S \cap B = \emptyset$.
\item $p(S) = \sum_{j \in B_0 \cap S} p_j + \sum_{i,\ |S \cap B_i| =1} p_i + q\cdot \mathbb{I}(S \subseteq B)$.
\end{itemize}
\end{defn}

\begin{lemma}\label{lem:buckets}
Every bucket mechanism $M$ with $k$ buckets has $\mc(M) \leq 2n^{k+1}$, and $\ssmc(M) \leq n\cdot 2^{k+1}$.
\end{lemma}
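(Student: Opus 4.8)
## Proof Proposal for Lemma~\ref{lem:buckets}

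The plan is to compute both menu-complexity measures by directly counting the number of distinct (randomized allocation, price) pairs the buyer can select, exploiting the product structure of a bucket mechanism. The key observation is that in a bucket mechanism the buyer's choice decomposes completely: independently for each bucket $B_i$ with $i > 0$ she picks either ``nothing'' or one of the $|B_i|$ items (contributing price $q_i$ if she picks one), independently for the joint bucket $B$ she picks either $\emptyset$ or all of $B$ (contributing $q$ if she takes it), and independently within $B_0$ she picks any subset (contributing $\sum_{j \in B_0 \cap S} p_j$). So every purchasable outcome is determined by this tuple of choices.

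For the bound $\mc(M) \le 2n^{k+1}$: I would bound the number of choices crudely. Within $B_0$, a single item is purchased iff $v_j \ge p_j$, so actually I should be slightly careful — the menu complexity counts distinct outcomes that could be purchased, and $B_0$ contributes a factor of $2^{|B_0|}$ in the worst case, which is not bounded by a power of $n$. The cleaner route: observe that for each exclusive bucket $B_i$ ($i \ge 0$) the buyer effectively selects at most one item, so there are at most $|B_i| + 1 \le n$ choices per such bucket when $i \ge 1$; for $B_0$ the buyer's optimal choice is to take exactly the set $\{j \in B_0 : v_j \ge p_j\}$, so although syntactically there are $2^{|B_0|}$ price combinations, the number of \emph{distinct options the buyer may purchase} is governed by the nudge-style counting in~\cite{BabaioffGN17}; more directly, I should just bound the number of distinct purchasable allocations: each of the $k$ non-special exclusive buckets contributes $\le |B_i|+1$ choices, the joint bucket contributes $2$ choices, and $B_0$ — here I would treat $B_0$ as not contributing to the exponent because its pricing is additive/separable and the relevant counting is absorbed. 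Re-reading the claim $\mc(M) \le 2n^{k+1}$, the intended reading is that $B_0$ is itself one of the $k+1$ ``exclusive buckets'' (so $B_0, \dots, B_k$ are $k+1$ buckets) and the buyer picks at most one item from \emph{each}, including $B_0$; then the count is $\prod_{i=0}^{k}(|B_i|+1) \cdot 2 \le (\max_i(|B_i|+1))^{k+1} \cdot 2 \le n^{k+1} \cdot 2$, using $\sum_i |B_i| \le n$ hence each $|B_i| + 1 \le n$ (for $n \ge 2$), and the factor $2$ from the binary choice on $B$. The main subtlety is reconciling the definition's ``$p(S) = \sum_{j \in B_0 \cap S} p_j + \dots$'' (which allows an arbitrary subset of $B_0$) with the bound; I expect the resolution is that in the actual construction for Theorem~\ref{thm:apxsrev}, $B_0$ will also be used as an exclusive bucket or its contribution will be folded in, so I would state the lemma's proof under the (at-most-one-item-per-bucket-including-$B_0$) reading, which is what makes the arithmetic work.

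For the bound $\ssmc(M) \le n \cdot 2^{k+1}$: I would exhibit an explicit item-permutation group $\Sigma$ under which $M$ is symmetric, and count equivalence classes. Take $\Sigma$ to be generated by all transpositions of items lying within a common bucket $B_1, \dots, B_k$ or within $B$ — i.e., $\Sigma$ permutes items freely inside each $B_i$ ($i \ge 1$) and inside $B$, but fixes $B_0$ pointwise (since items in $B_0$ may have distinct prices $p_j$, we cannot permute them). Because the price of an allocation in a bucket mechanism depends only on \emph{which bucket} each chosen item lies in (for $i \ge 1$) and on the identity of chosen items in $B_0$, the menu is invariant under $\Sigma$: applying $\sigma \in \Sigma$ to an option $(S, p)$ yields another option with the same price. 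It remains to count the $\Sigma$-equivalence classes of purchasable options. An option is specified by: the subset of $B_0$ purchased ($2^{|B_0|}$ possibilities — but here we need $\le n$, so again under the ``$B_0$ is exclusive'' reading it is $|B_0|+1 \le n$); for each $i \ge 1$, whether an item from $B_i$ is taken or not ($2$ possibilities, since \emph{which} item is irrelevant up to $\Sigma$); and whether $B$ is taken ($2$ possibilities). This gives $\le (|B_0|+1) \cdot 2^k \cdot 2 \le n \cdot 2^{k+1}$ equivalence classes, and since $\Sigma$ has a generating set of size $\le n$ by Jerrum's filter~\cite{Jerrum82}, this is a valid strong-symmetric description, establishing $\ssmc(M) \le n \cdot 2^{k+1}$.

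The main obstacle I anticipate is purely notational/definitional: pinning down exactly how $B_0$ is meant to interact with the counting, since the \textbf{Bucket Mechanism} definition as literally written allows an arbitrary subset of $B_0$ (giving a $2^{|B_0|}$ factor that would break $\mc(M) \le 2n^{k+1}$ unless $|B_0|$ is treated as $O(\log n)$ or unless $B_0$ is folded into the $k+1$ buckets). I would resolve this by re-reading the lemma as stating that $B_0, B_1, \dots, B_k$ are the $k+1$ exclusive buckets (so the buyer takes at most one item from each, \emph{including} $B_0$, with item-specific prices only permitted in $B_0$), which is the only reading consistent with both stated bounds; under that reading every step above is routine counting and the ``hard part'' evaporates. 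The actual content of the lemma is simply: product structure of choices $\Rightarrow$ polynomial menu complexity, and permutation-invariance within non-special buckets $\Rightarrow$ linear-in-$n$ symmetric menu complexity.
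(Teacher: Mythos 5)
Your proof is correct and follows essentially the same route as the paper's: count $\prod_{i=0}^{k}(|B_i|+1)\cdot 2$ purchasable options for the menu-complexity bound, and for the symmetric bound take $\Sigma$ to permute items within the non-special buckets and count equivalence classes as (choice in $B_0$) $\times\, 2^{k}\times 2 \le n\cdot 2^{k+1}$. The reading you settle on is indeed the intended one --- the constraint $|S\cap B_i|\le 1$ applies to $B_0$ as well ($B_0$ is an exclusive bucket whose prices merely may differ per item), so the $2^{|B_0|}$ worry does not arise, and the all-items-in-$B_0$ edge case is handled in the paper by the same $\le n$ bound you use.
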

\begin{proof}
The menu complexity is straight-forward to compute: the total number of sets which are available for purchase can be counted by picking any of the $|B_i|$ items in $B_i$, separately for each $i$, and then deciding whether or not to add $B$. There are $n$ choices for each bucket, and two for whether or not to add $B$.

To compute the strong symmetric menu complexity, consider the item permutation group $\Sigma$ which contains all permutations which separately permute items in $B_i$, for all $i > 0$. Observe that for any $\sigma \in \Sigma$ (below, define $\sigma(S):= \{\sigma(i),i \in S\}$):
\begin{itemize}
\item If the option $(S,p)$ is on $M$, then so is $(\sigma(S),p)$. 
\item There are at most $2^{k+1}n$ equivalence classes under $\Sigma$. 
\end{itemize}
To see the final bullet, consider any two sets $S, T$ for which:
\begin{itemize}
\item $S \cap B_0 = T \cap B_0$,
\item $|S \cap B_i| = |T \cap B_i|$, for all $i > 0$,
\item $S \cap B = T \cap B$,
\end{itemize}
there exists a $\sigma \in \Sigma$ such that $\sigma(S) = T$. Observe that there are only $2^{k+1}n$ choices above to make \footnote{If all the items are in $B_0$ then the menu size will be $n$, otherwise if there is at least one item not in $B_0$ then there are at most $n$ choices to make regarding to which item (if any) to take from $B_0$.}. 
\end{proof}

\begin{proof}[Proof of Theorem~\ref{thm:apxsrev}]
Now, we define a bucket mechanism with $O(1/\varepsilon^3)$ buckets which has revenue at least $(1-\varepsilon)\srev(D)$. To begin, break the items into three categories. Below, let $q_i = \Pr[v_i \geq p_i]$, and $\srev(D):= \sum_i p_i q_i$:
\begin{itemize}
\item High: items for which $p_i \geq \srev(D)/\varepsilon^2$. We will set $B_0$ to be all high items.
\item Medium: items for which $p_i \in [\varepsilon^{3} \srev(D),\srev(D)/\varepsilon^2)$. All medium items will be in buckets $B_1,\ldots, B_k$.
\item Low: items for which $p_i \leq \varepsilon^{3} \srev(D)$. We will set $B$ to be all low items.
\end{itemize}

First, we observe that $\sum_{i \in B_0} q_i \leq \varepsilon^2$. Assume for contradiction otherwise. Then we could set price $\srev(D)/\varepsilon^2$ on each item separately and get revenue $>\srev(D)$, a contradiction. We therefore conclude that, conditioned on $v_i \geq p_i$, $v_j < p_j$ for all $j \in B_0 \setminus \{i\}$ with probability at least $1-\varepsilon^2$ (by union bound), and therefore the revenue generated from selling $B_0$ exclusively is at least $\sum_{i \in B_0} (1-\varepsilon^2) p_i q_i$.

Next, consider the random variable $X = \sum_{i \in B}p_i \cdot \mathbb{I}(v_i \geq p_i)$. Then $\mathbb{E}[X] = \sum_{i \in B} p_i q_i$, and $X$ is a sum of independent random variables, each supported on $[0,\varepsilon^3 \srev(D) ]$. If $\sum_{i \in B} p_i q_i \leq \varepsilon \srev(D)$, then it is safe to just ignore these items. Otherwise if $\sum_{i \in B} p_i q_i \geq \varepsilon\srev(D)$, we claim that pricing the grand bundle recovers almost all the revenue from these items. It follows from a basic Chernoff bound that the probability that $X$ exceeds $(1-\varepsilon/2)\sum_{i \in B} p_i q_i$ is at least $1-e^{-\varepsilon^2 \cdot (1/\varepsilon^3)/3} =1- o(\varepsilon)$. So either the revenue from selling $B$ is not negligible and selling all items together gets at least $(1-\varepsilon)\sum_{i \in B} p_i q_i$, or their contribution to the revenue is $< \varepsilon \srev(D)$ and the items can be discarded altogether.

Finally, consider the medium items. First, round all prices in this range down to the nearest power of $(1-\varepsilon)$. Observe that there are at most $\ln_{1-\varepsilon}(1/\varepsilon^5) = O(\ln(1/\varepsilon)/\varepsilon)$ such powers of $(1-\varepsilon)$ in this range. Next, we will place all items in a bucket so that either (a) the total sum of probabilities in the bucket is at most $\varepsilon$ or (b) there is at most one item in the bucket. Observe that once we have such a bucketing, it will again be the case that for all medium items $i$, the probability that the buyer is willing to purchase any other items in $i$'s bucket is at most $\varepsilon$ by union bound, and therefore the revenue of selling each of these buckets exclusively is at least $\sum_{i\notin B \cup B_0} (1-\varepsilon)p_iq_i$. 

Therefore, this scheme generates revenue at least $(1-\varepsilon)\sum_i p_i q_i = (1-\varepsilon) \srev(D)$. The only remaining detail is to count the number of buckets for the medium items. We'll first put all items with $q_i \geq \varepsilon/2$ into their own buckets. For the remaining items, greedily pack them into buckets with other items of the same price until the sum of probabilities within that bucket would exceed $\varepsilon$. Observe that now there are two kinds of buckets: \emph{heavy} buckets whose sum of probabilities exceeds $\varepsilon/2$ and (b) \emph{light} buckets whose sum of probabilities does not exceed $\varepsilon/2$. Observe that because each $p_i \geq \varepsilon^3\srev(D)$ that there can be at most $2/\varepsilon^3$ heavy buckets (otherwise we'd contradict the definition of $\srev(D)$). There is only at most one light bucket per level of the discretization as it is just collecting the leftovers from that level. Hence the number of light buckets is $O(\ln(1/\varepsilon)/\varepsilon)$ and therefore the total number of buckets is $O(1/\varepsilon^3)$.

So in conclusion, the above bucketing scheme has $O(1/\varepsilon^3)$ buckets, and has the property that the revenue is at least $(1-\varepsilon)\sum_i p_iq_i$, proving Theorem~\ref{thm:apxsrev}.
\end{proof}

\section{A Barrier Example for an Additive Buyer} 
\label{sec:approachLimits}
In this section we highlight an example of a $(1/\varepsilon)$-bounded distribution which is additive over independent items but serves as a barrier to proving good (symmetric) menu complexity bounds. In a formal sense, the known approaches for bounding the menu complexity of $(1-\varepsilon)$-approximately optimal mechanisms for a bounded distribution are:
\begin{itemize}
\item Argue that $\srev(D) \geq (1-\varepsilon)\VAL(D)$, perhaps because each $D_i$ is nearly a point-mass (recall that while selling separately does not itself have low menu complexity, this suffices by Theorem~\ref{thm:apxsrev}).
\item {Argue that $\BREV(D) \geq (1-\varepsilon) \VAL(D)$, perhaps because for all $i$, $v(\{i\})\leq \varepsilon^2\VAL(D)$ with probability one, and therefore $v([n])$ concentrates tightly around its expectation.}
\item (New, from Section~\ref{sec:symmetries}) Argue that $D$ is ``close'' to a highly symmetric distribution $D'$. Then use Theorem~\ref{thm:symmetries} to argue that $D'$ has a near-optimal mechanism of low symmetric menu complexity, followed by Proposition~\ref{prop:advancedNudge} to argue that this menu (with discounts) also suffices for $D$.
\end{itemize}

We provide an example for which all three of these approaches fail, highlighting the main challenge for future work. We overview the construction in Example~\ref{ex:badexample} below, and highlight its main features in Proposition~\ref{prop:features}, deferring a proof of Proposition~\ref{prop:features} to Appendix~\ref{app:approachLimits}.

\begin{example}\label{ex:badexample} For even $n$, an $\varepsilon < 1$ (one interesting choice discussed below is $\varepsilon = 1/9$), and $k = \Theta(\frac{\ln(n)}{\varepsilon})$, consider $n$ vectors in $\vec{r}_1,\ldots, \vec{r}_n \in \{0,1\}^k$ with $|\vec{r}_i|_1 = k/2$ for all $i$ (i.e., each has exactly $k/2$ ones and $k/2$ zeroes). Let it also be the case that for all $i,j\in [n]$, we have $|\{\ell ~:~ r_{i\ell} \neq r_{j\ell}\}| \geq k/6$.\footnote{We will later prove that such vectors exist, which follows by the probabilistic method.} Define $D_i$ so that:
\[  \Pr[v_i = x] = \left\{
\begin{array}{ll}
	\frac{\varepsilon}{n}\cdot \mathbb{I}(i \text{ is even}) & x = 1\\
	\frac{2 \varepsilon}{n}\cdot \mathbb{I}(i \text{ is odd}) & x = \frac{1}{2} \\
      \frac{(\ln n) (1-\varepsilon)^{-\ell}}{nk}\cdot \mathbb{I}(r_{i\ell}=1)  & x = \frac{\varepsilon(1-\varepsilon)^\ell}{\ln n}, \ell = 0,..., k-1\\
      1- \Pr[v_i > x] & x = 0 \\
\end{array} 
\right. \]
\end{example}

\begin{proposition}\label{prop:features} Any distribution $D$ satisfying the definition in Example~\ref{ex:badexample} has the following properties:
\begin{itemize}
\item $\VAL(D) =3 \varepsilon/2$.
\item $\srev(D) =\varepsilon$.
\item $\BREV(D) \leq \varepsilon$.
\item $v_i \leq 1$ for all $i$ with probability $1$. Therefore, $D$ is $1/\varepsilon$-bounded.
\item For all $\Sigma$ with partition size $s$, and all $D'$ such that $D'$ is symmetric with respect to $\Sigma$, the coupling distance $\delta(D, D') \geq \varepsilon^2 \cdot \frac{n-s}{6n}$. In particular, if $s = o(n)$, then $\delta(D, D') \geq \varepsilon^2/6 - o(\varepsilon^2)$.
\end{itemize}
\end{proposition}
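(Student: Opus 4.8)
The plan is to verify each of the five bullets of Proposition~\ref{prop:features} by direct computation, treating the distribution $D$ from Example~\ref{ex:badexample} as an explicit product distribution.

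\textbf{The first four bullets are routine calculations.} For $\VAL(D)$: since $D$ is additive over independent items, $\VAL(D) = \sum_i \mathbb{E}[v_i]$. Each even item contributes $1 \cdot \frac{\varepsilon}{n} + \sum_{\ell : r_{i\ell}=1} \frac{\varepsilon(1-\varepsilon)^\ell}{\ln n}\cdot \frac{(\ln n)(1-\varepsilon)^{-\ell}}{nk} = \frac{\varepsilon}{n} + \frac{\varepsilon}{nk}\cdot|\vec{r}_i|_1 = \frac{\varepsilon}{n} + \frac{\varepsilon}{2n}$, and each odd item contributes $\frac12\cdot\frac{2\varepsilon}{n} + \frac{\varepsilon}{2n} = \frac{\varepsilon}{n} + \frac{\varepsilon}{2n}$; summing over $n$ items gives $\frac{3\varepsilon}{2}$. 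For $\srev(D)$: one checks that each of the ``high'' price points ($x=1$ for even, $x=1/2$ for odd) and each of the ``low'' price points $x = \varepsilon(1-\varepsilon)^\ell/\ln n$ generates revenue exactly $\varepsilon/n$ per item it applies to, and that posting the best single price on item $i$ gives $\varepsilon/n$; since $\srev(D)$ is at most the sum of optimal single-item revenues (actually equal, for a single additive buyer $\srev$ is $\sum_i \max_p p\Pr[v_i\geq p]$) we get $\srev(D)=\sum_i \varepsilon/n = \varepsilon$, where one must double-check that all candidate prices yield the same $\varepsilon/n$ so the max is attained. For $\BREV(D)$: use the standard bound $\BREV(D)\le \VAL(D)$ together with a cleaner bound via $\BREV(D)\le \srev(D)\cdot(\text{something})$ --- actually the simplest route is that bundling revenue is at most $\SREV$-style arguments give $\BREV(D) \le \VAL(D)$, but to get $\le\varepsilon$ one argues that $v([n])$ is rarely much above $\varepsilon$: with high probability no item realizes a ``high'' value and the ``low'' contributions are small, so any bundle price $p$ has $p\Pr[v([n])\ge p]\le \varepsilon$. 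For $1/\varepsilon$-boundedness: the maximum value any $v_i$ takes is $1$ (attained only at even items, value $x=1$), and $\REV(D)=\Theta(\max\{\SREV,\BREV\})=\Theta(\varepsilon)$, so $v_i \le 1 = \frac{1}{\varepsilon}\cdot\varepsilon \le \frac1\varepsilon\REV(D)$ up to the constant --- more precisely one notes $\REV(D)\ge\srev(D)=\varepsilon$ so $v_i\le 1\le\REV(D)/\varepsilon$, giving $1/\varepsilon$-bounded exactly.

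\textbf{The fifth bullet is the main obstacle and the heart of the proposition.} Here I would fix any permutation group $\Sigma$ of partition size $s$ with parts $T_1,\dots,T_s$, and any $D'$ symmetric with respect to $\Sigma$. The key structural fact is that the $n$ marginals $D_1,\dots,D_n$ are ``pairwise far'': for $i\ne j$, because $|\{\ell: r_{i\ell}\ne r_{j\ell}\}|\ge k/6$, the marginals $D_i$ and $D_j$ differ --- roughly, $D_i$ and $D_j$ assign probability mass $\frac{(\ln n)(1-\varepsilon)^{-\ell}}{nk}\approx \frac{\ln n}{nk}$ (for small $\ell$) to disjoint sets of ``low'' value levels on at least $k/6$ coordinates $\ell$, and one computes that coupling $v_i$ (from $D_i$) with any fixed target distribution must pay, in expectation, the difference in where this mass sits. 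I would lower-bound $\delta(D,D')$ by restricting attention to the additive structure: $\delta(D,D')\ge \sum_i \mathbb{E}[|v_i - v_i'|]$ under the optimal coupling, and then argue that since $D'$ is $\Sigma$-symmetric, for each part $T_r$ the marginal $D_i'$ is the same distribution $\bar D_r$ for all $i\in T_r$; hence $\sum_{i\in T_r}\mathbb{E}[|v_i - v_i'|] \ge \sum_{i\in T_r} W_1(D_i, \bar D_r)$, and since the $D_i$ for $i\in T_r$ are pairwise $\Omega(\varepsilon^2)$-separated in $W_1$ (this separation bound is the calculation to nail down, using that the relevant value levels are $\approx\varepsilon/\ln n$ times geometric weights and the probabilities are $\approx \ln n/(nk)$, so each mismatched coordinate contributes $\approx \frac{\varepsilon}{\ln n}\cdot\frac{\ln n}{nk} = \frac{\varepsilon}{nk}$ and there are $\ge k/6$ of them, giving $\ge \frac{\varepsilon}{6n}$ per pair --- wait, I need $\varepsilon^2$, so I must be more careful: the relevant comparison is with the \emph{expected} value scale $\varepsilon/n$ per item, and the bound should come out to $\frac{\varepsilon^2}{6n}$ per item not in a singleton part), at most one $i\in T_r$ can be close to $\bar D_r$, so $\sum_{i\in T_r}W_1(D_i,\bar D_r)\ge (|T_r|-1)\cdot\Omega(\varepsilon^2/n)$. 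Summing over the $s$ parts gives $\delta(D,D')\ge \sum_r (|T_r|-1)\cdot\frac{\varepsilon^2}{6n} = (n-s)\cdot\frac{\varepsilon^2}{6n}$, which is exactly the claimed bound.

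\textbf{Where the difficulty really lies:} pinning down the constant $\frac{1}{6}$ and the exponent $\varepsilon^2$ in the per-item separation. I expect the cleanest argument is not to work with $W_1$ of the full marginal $D_i$, but rather to identify a single ``test functional'' or a small set of value-levels on which $D_i$ and $D_j$ provably disagree by a fixed amount, and use the Hamming distance $\ge k/6$ between $\vec r_i,\vec r_j$ to count $\ge k/6$ levels of disagreement, each carrying mass $\Theta(\ln n/(nk))$ at value $\Theta(\varepsilon/\ln n)$, and also exploit that the $\varepsilon(1-\varepsilon)^\ell$ weighting means the total mass per item at ``low'' levels is $\Theta(1/n)$ --- the product of (mass scale) $\times$ (value scale) over $k/6$ coordinates needs to land at $\Theta(\varepsilon^2/n)$, which forces a careful accounting of the geometric sums $\sum_\ell (1-\varepsilon)^\ell$ and the normalization. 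Once the per-pair (equivalently per-non-singleton-item) separation of $\Omega(\varepsilon^2/n)$ in coupling cost is established, the aggregation over parts via the triangle-inequality / ``at most one can be close'' argument is straightforward, and the final ``in particular'' clause ($s = o(n) \Rightarrow \delta(D,D')\ge \varepsilon^2/6 - o(\varepsilon^2)$) is immediate arithmetic.
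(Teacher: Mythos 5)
Your overall architecture matches the paper's: routine computations for the first four bullets, and for the fifth bullet a pairwise separation of the single-item marginals of order $\varepsilon^2/n$, followed by the triangle-inequality argument that within each part of the partition at most one item can be within $\varepsilon^2/(6n)$ of the common marginal, summing to $(n-s)\varepsilon^2/(6n)$. That aggregation step is exactly the paper's corollary and you have it right. The problem is that the one quantitative ingredient everything hinges on --- $\delta(D_i,D_j)\ge \varepsilon^2/(3n)$ for $i\ne j$ --- is not established in your proposal: you compute a per-disagreement-coordinate contribution of (value scale) $\times$ (mass scale) $\approx \frac{\varepsilon}{\ln n}\cdot\frac{\ln n}{nk}=\frac{\varepsilon}{nk}$, notice this gives $\varepsilon/(6n)$ rather than $\varepsilon^2/(6n)$, and defer the discrepancy as ``the calculation to nail down.'' This is not mere bookkeeping; it is the missing idea. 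The factor $\varepsilon$ you are off by is not slack to be recovered --- your naive estimate is an \emph{over}-estimate of what a lower bound can claim, because an adversarial coupling does not need to destroy the mass at a disagreement level $\ell$; it can shift it to the adjacent level $\ell\pm 1$ (which typically is present in $\vec r_j$), and the gap between adjacent levels $\frac{\varepsilon(1-\varepsilon)^{\ell-1}}{\ln n}$ and $\frac{\varepsilon(1-\varepsilon)^{\ell}}{\ln n}$ is only an $\varepsilon$-fraction of the level itself. The paper's Claim handles exactly this: coupling the mass of $D_i$ at level $\ell$ to $D_j$ at level $\ell-1$ costs at least (excess mass that cannot fit, $\frac{\ln n[(1-\varepsilon)^{-\ell}-(1-\varepsilon)^{-(\ell-1)}]}{nk}$, times the value $\frac{\varepsilon(1-\varepsilon)^\ell}{\ln n}$) plus (the moved mass times the level gap), each term equal to $\frac{\varepsilon^2}{nk}$, so $\frac{2\varepsilon^2}{nk}$ per disagreement and $\ge \frac{\varepsilon^2}{3n}$ over the $\ge k/6$ disagreements. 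Without an argument of this shape --- one that shows the cost is at least $\Omega(\varepsilon^2/(nk))$ per disagreement \emph{no matter where the mass is sent}, not the $\varepsilon/(nk)$ you would get if it had to move to $0$ --- the fifth bullet is unproven, and it is precisely the bullet the proposition exists for.

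Two smaller points on the first four bullets, which otherwise follow the paper. Your claim that each low price point $\frac{\varepsilon(1-\varepsilon)^\ell}{\ln n}$ yields revenue ``exactly $\varepsilon/n$'' is false: the paper shows these prices yield strictly less than $\varepsilon/n$ (roughly $\frac{1}{nk}$ plus a lower-order term), and it is the high values ($1$ on even items, $1/2$ on odd items) that attain $\varepsilon/n$; the conclusion $\srev(D)=\varepsilon$ stands, but the upper-bound verification you flagged is a real step, not a symmetry check. For $\BREV(D)\le\varepsilon$ your sketch (concentration of the low part conditioned on no high realization, union bound over high values, case analysis on the bundle price) is the paper's argument in outline, but it does require the explicit Chernoff bound and the price-range case analysis to rule out intermediate prices around $1/2$ and $1$; as written it is an intention rather than a proof.
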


Observe that Proposition~\ref{prop:features} rules out any of the known approaches achieving better than a $8/9$-approximation. { Indeed, in order to prove that either of $\srev(D)$ or $\BREV(D)$ beats a $2/3$-approximation, a better bound than $\VAL(D)$ on the optimal revenue would be necessary. This may indeed be the right approach, but because $D$ is $1/\varepsilon$-bounded, it is already well inside the range where techniques like those of our Section~\ref{sec:reduction} or~\cite{BabaioffGN17} can yield traction. This rules out the first two approaches. In addition, the partition size of $\Sigma$ appears in the \emph{exponent} of the symmetric menu complexity for optimal mechanisms on distributions that are symmetric with respect to $\Sigma$, so the final bullet asserts that a direct application of the approach in Section~\ref{sec:symmetries} cannot beat a $(1-\sqrt{\varepsilon}/8)$-approximation with subexponential symmetric menu complexity}.\footnote{This is not trivial, but not hard, to deduce. See Lemma~\ref{lem:nosymm} in Appendix~\ref{app:approachLimits}.} In particular, the construction is valid for any $\varepsilon <1$, so we can take $\varepsilon$ small enough to have $\sqrt{\varepsilon}/8 < 1/9$, which would rule out an $8/9$-approximation via any of the three known approaches.

Focusing on arguments for this example (and slight generalizations) should be illuminating for future progress. It seems that the missing ingredient is a near-optimal bound on the optimal revenue without relying on coupling with a symmetric distribution. The interesting feature of $D$ is that it is highly asymmetric, but the values in the support of the marginals which contribute to the asymmetry are small (much smaller than $\VAL(D)$). Normally, this would imply that the expected value for the grand bundle concentrates, but the point masses at $1$ (or $1/2$) ruin such a concentration.

\medskip
\noindent


\bibliographystyle{alpha}
\bibliography{MasterBib}

\appendix

\section{Omitted Proofs}
\subsection{Proof of Lemma~\ref{lem:triangle}}\label{app:triangle}
\begin{proof}[Proof of Lemma~\ref{lem:triangle}]
First, it is clear that $\delta(D,D') \geq \delta_M(D,D')\geq 0$ for all $D, D'$. It is also clear that $\delta(D,D) = \delta_M(D, D) = 0$ for all $D$ via the identity coupling. And it is also clear that $\delta(D, D') = \delta(D',D)$ as the text in both definitions is identical after swapping the role of $D, D'$. 

To confirm that both satisfy the triangle inequality, consider three distributions $D, D', D''$, and consider any coupling which draws $(v, v')$ from $(D, D')$ and another which draws $(v', v'')$ from $(D', D'')$. Consider now the coupled draws $(v,v',v'')$ from $(D, D', D'')$ defined in the natural way (couple draws $(v,v')$ and $(v', v'')$ so that the $v'$ component is equal, and then concatenate). 

Observe first that:
\begin{align*}
\max_{S \subseteq [n]}\{v(S) - v'(S)\} + \max_{S \subseteq [n]}\{v'(S) - v''(S)\} &\geq \max_{S \subseteq [n]} \{v(S) - v''(S)\}.\\
\max_{S \subseteq [n]}\{v'(S) - v(S)\} + \max_{S \subseteq [n]}\{v''(S) - v'(S)\} &\geq \max_{S \subseteq [n]} \{v''(S) - v(S)\}.\\
\Rightarrow\max_{S \subseteq [n]}\{v(S) - v'(S)\} + \max_{S \subseteq [n]}\{v'(S) - v''(S)\} +&\max_{S \subseteq [n]}\{v'(S) - v(S)\} + \max_{S \subseteq [n]}\{v''(S) - v'(S)\}\\
& \geq \max_{S \subseteq [n]} \{v(S) - v''(S)\}+\max_{S \subseteq [n]} \{v''(S) - v(S)\}.
\end{align*}

This therefore implies that:

\begin{align*}
\mathbb{E}_{(v, v') \leftarrow (D, D')} \big[\max_{S \subseteq [n]} \{v(S) - v'(S)\} + \max_{S \subseteq [n]}&\{v'(S) - v(S)\} \big] \\
+ \mathbb{E}_{(v', v'') \leftarrow (D', D'')} \big[\max_{S \subseteq [n]} \{v'(S) - v''(S)\} &+ \max_{S \subseteq [n]}\{v''(S) - v'(S)\} \big] \\
&\geq \mathbb{E}_{(v, v'') \leftarrow (D, D'')} \big[\max_{S \subseteq [n]} \{v(S) - v''(S)\} + \max_{S \subseteq [n]}\{v''(S) - v(S)\} \big].
\end{align*}
 Therefore, we have established that for any pair of couplings between $D, D'$ and $D', D''$, we can find a coupling for $D, D''$ such that he expected error is smaller. This also implies that the infimum over couplings between $D, D''$ is certainly smaller than the sum of infimums over couplings between $D, D'$ and $D', D''$, establishing that $\delta(D, D') + \delta(D', D'') \geq \delta(D, D'')$, and the triangle inequality.

A nearly-identical proof establishes the same for $\delta_M$. Consider the same coupling among $(D, D', D'')$, and consider any $S_M, S'_M$. Then:
\begin{align*}
v(S_M(v)) - v''(S_M(v)) &= v(S_M(v)) - v'(S_M(v)) + v'(S_M(v)) - v''(S_M(v)).\\
v''(S'_M(v'')) - v(S'_M(v'')) &= v''(S'_M(v'')) - v'(S'_M(v'')) + v'(S'_M(v'')) - v(S'_M(v'')).\\
\end{align*}
\begin{align*}
\Rightarrow &\mathbb{E}_{(v,v'')\leftarrow (D, D'')}[v(S_M(v)) - v''(S_M(v)) + v''(S'_M(v'')) - v(S'_M(v''))]\\
&= \mathbb{E}_{(v,v',v'') \leftarrow (D, D')}[v(S_M(v)) - v'(S_M(v)) + v'(S'_M(v'')) - v(S'_M(v''))] \\
&\quad \quad+ \mathbb{E}_{(v,v',v'') \leftarrow (D', D'')}[v'(S_M(v)) - v''(S_M(v)) + v''(S'_M(v'')) - v'(S'_M(v''))].\\
&\leq  \sup_{S_M(\cdot)}\{\mathbb{E}_{(v, v') \leftarrow (D, D')}[v(S_M(v)) - v'(S_M(v))]\} + \sup_{S_M(\cdot)}\{\mathbb{E}_{(v, v') \leftarrow (D, D')}[v'(S_M(v')) - v(S_M(v'))]\}\\
&\quad \quad +\sup_{S_M(\cdot)}\{\mathbb{E}_{(v', v'') \leftarrow (D', D'')}[v'(S_M(v')) - v''(S_M(v'))]\} + \sup_{S_M(\cdot)}\{\mathbb{E}_{(v', v'') \leftarrow (D', D'')}[v''(S_M(v'')) - v'(S_M(v''))]\}\\
&= \delta_M(D, D') + \delta_M(D', D'').
\end{align*}

Above, the first implication follows immediately from the top two equalities. The inequality follows by observing that one candidate for $S_M(\cdot)$ is to take as input $v'$, and output $S'_M(v'')$ (where $v''$ is drawn from $D''$ conditioned on being coupled with $v'$), or to take as input $v'$ and output $S_M(v)$ (where $v$ is drawn from $D$ conditioned on being coupled with $v$). 

Observe now that we have established that for \emph{any} pairs of mappings $S_M(\cdot)$, $S'_M(\cdot)$, that we must have $\mathbb{E}_{(v, v'') \leftarrow (D, D'')}[v(S_M(v)) - v''(S_M(v))] + \mathbb{E}_{(v, v'') \leftarrow (D, D'')}[v''(S_M(v'')) - v(S_M(v''))]$, and therefore this inequality holds for the supremum over pairs of mappings as well, establishing that $\delta_M(D, D'') \leq \delta_M(D,D') + \delta_M(D', D'')$.
\end{proof}

\subsection{Proof of Theorem~\ref{thm:subadd}}\label{app:subadd}

\begin{proof}[Proof of Theorem~\ref{thm:subadd}] 

By Theorem~\ref{thm:reduction}, it is without loss of generality that, we only need to consider bounded distributions $D$ such that $v\{i\} \leq \REV(D)/\varepsilon^4$ for all $i$. We will propose a coupled distribution $D'$ (that does not need to be subadditive over independent items) and show that the distance between these distributions is small. This, together with the fact that we have reduced the support to a bounded (and discrete) domain will easily imply a finite bound on the menu complexity. 

Consider the following distribution $D'$. First, draw a valuation $v \sim D$. If $v < \varepsilon^2 \REV(D)$, let $v' = 0$. Otherwise round $v'$ to the nearest multiple of $\varepsilon^2 \REV(D)$. Note that $D'$ is, by construction, stochastically dominated by $D$ so term the $\max_S \{v'(S)-v(S)\}$ in $\delta(D, D')$ is $0$ and we only need to bound $v(S) -v'(S)$. Note that for all $S \subseteq [n]$, $v(S) - v'(S) \leq \varepsilon^2 \REV(D)$. Therefore by Corollary~\ref{cor:expectedNudge} we get that $\REV(D') \geq \REV(D) - \delta(D,D') \geq (1-\varepsilon) \REV(D)$. 

Let $M$ be the optimal mechanism for $D'$, and let $M'$ be the menu with all the prices discounted by $(1-\varepsilon)$. By Proposition~\ref{prop:advancedNudge} we get that $\REV_M'(D)\geq (1-\varepsilon)\REV_M(D')$, and together with the above conclusion we get $\REV_{M'}(D)\geq(1-\varepsilon)\REV(D)$. Note that $M$ and $M'$ have the same menu complexity. 

We now show that $M$ has menu complexity at most $\left(\frac{n}{\epsilon^6}\right)^{2^n}$, by bounding the total number distinct valuations in the support of $D'$. In order to define a valuation $v'$ from the support of $D'$ we need to assign to each subset $S \subseteq 2^{[n]}$ one of $n/\varepsilon^6$ discretized values. This is because $v([n]) \leq n \REV(D)/\varepsilon^4$ (due to subadditivity) and we are discretizing by multiples of $\varepsilon^2 \REV(D)$. Therefore there are at most $n/\varepsilon^6$ possible values for a subset. This implies that the number of possible valuations is upper bounded by $\left(\frac{n}{\varepsilon^6}\right)^{2^n}$. Thus, the total number of distinct options a buyer from $D'$ buys is at most $\left(\frac{n}{\varepsilon^6}\right)^{2^n}$, which implies the menu complexity of $M$ (and $M'$) is at most $\left(\frac{n}{\varepsilon^6}\right)^{2^n}$.
\end{proof}
\subsection{Proofs from Section~\ref{sec:setup}}\label{app:setup}

\begin{proof}[Proof of Proposition~\ref{prop:carefulNudge}]
Observe that we have, for any $v$, and any other $(T, q) \in M$:
$$v(S) - p \geq v(T) - q.$$

For whatever option $(T', (1-\varepsilon)q)$ $v$ chooses to purchase in $M'$, we have:
$$v(T') - (1-\varepsilon)q \geq v(S') - (1-\varepsilon)p.$$

If we now let $T$ denote the original menu option from which $T'$ was derived, we know that $v(T) \geq v(T')$, we we can combine the inequalities to get:
$$q \geq p - \left(v(S) - v(S')\right).$$

Note that the payment in $M'$ is $(1-\varepsilon)q$. Taking an expectation over both sides, we get:
$$\REV_{M'}(D) = (1-\varepsilon)\mathbb{E}[q] \geq (1-\varepsilon)\mathbb{E}[p] - \mathbb{E}[v(S) - v(S')]/\varepsilon = (1-\varepsilon)\REV_M(D) - \mathbb{E}[v(S) - v(S')]/\varepsilon.$$
\end{proof}

\begin{proof}[Proof of Lemma~\ref{lem:modrev}]
Consider an instance of $\modrevmax_{\mathcal{D},\mathcal{W}}$ with input $D,\vec{w}$. Because $D$ is $c$-bounded, let's take $T:=\max\{1,c\REV(D)\}$, and we know that $v(\{i\}) \leq T$ with probability one when $v(\cdot) \leftarrow D$. Consider now the truncation $D(T,\vec{w})$, and any mechanism $M$ for $D$. Let $M'$ denote the same mechanism $M$ after adding the option to purchase any item $i$ exclusively for price $(n^2T^3)\ell_i(M)-nT$. 

\begin{itemize}
\item First, observe that only a buyer with value $n^2T^3$ for an item will purchase one of the new options. All other $v(\cdot)$ drawn from $D$ will make their same purchase from $M$.
\item So because $\sum_i w_i \leq T$, the total probability that $v(\{i\}) = n^2T^3$ for any $i$ is at most $\frac{1}{T^2n^2}$, and when this event doesn't occur, the valuation is just drawn from $D$. So $\REV_{M'}(D(T,\vec{w})\cdot \mathbb{I}(v(\{i\}) < n^2T^3,\ \forall i)) \geq (1-\frac{1}{T^2n^2})\REV_M(D)$.
\item Next, consider a buyer with value $v(\{i\}) = n^2T^3$. With probability at least $1-\frac{1}{T^2n^2}$ they have $v(\{j\}) \leq T$ for all $j \neq i$, and therefore value at most $(n-1)T+(1-\ell_i(M))n^2T^3$ for any option on menu $M$ (by definition of leftovers and subadditivity). Their utility for picking the exclusive option is exactly $n^2T^3 - (n^2T^3)\ell_i(M) + nT$, which is strictly larger. So this buyer will certainly purchase the exclusive option. Therefore: $\REV_{M'}(D(T,\vec{w}) \cdot \mathbb{I}(\exists\ i, v(\{i\}) = n^2T^3))\geq \sum_i (1-\frac{1}{n^2T^2}) \ell_i(M) \frac{w_i}{n^2T^3}(n^2T^3 - nT)\geq (1-\frac{2}{nT^2})\sum_i \ell_i(M) w_i$.
\end{itemize}

Combining the last two bullets together yields:
$$\REV_{M'}(D(T,\vec{w}))\geq \left (1-\frac{2}{n}\right) \left(\REV_M(D) + \sum_i w_i \ell_i(M) \right).$$

Similarly, consider any mechanism $M$ for $D(T,\vec{w})$, and consider $M'$ to be the same mechanism after keeping only options purchased when $v(\{i\}) \leq T$ for all $i$. Then clearly all buyers with $v(\{i\}) \leq T$ purchase the same option, and there are only more of these buyers in $D$ since we have removed those with $v(\{i\})=n^2T^3$. Now we wish to figure out how much revenue could have possibly been lost from the high buyers.

Observe that the maximum price possibly charged for an option on $M'$ is $nc\REV(D)$, and so there is some option priced $\leq nc\REV(D)$ which awards item $i$ with probability $\ell_i(M')$. Therefore, the maximum that a buyer would have possibly paid in $M$ when their value for item $i$ is $n^2T^3$ is $n^2T^3 \ell_i(M') + nc\REV(D)$. Therefore, even if buyers paid exactly this much for item $i$ whenever their value was large, the total revenue contributed would be at most:
$$\sum_i \ell_i(M') w_i + \frac{nc\REV(D)}{n^2T^3} \leq \sum_i \ell_i(M')w_i + \frac{\REV(D)}{n}.$$

So we have shown both that the optima for both problems are equivalent up to a multiplicative $(1\pm 1/n)$, and that in fact any (near)-optimal solution for $\revmax(D(T,\vec{w}))$ can be efficiently transformed into one for $\modrevmax(D,\vec{w})$. 

\end{proof}
\subsection{Results from Prior Work}
For completeness (and because sometimes notation changes over time and it can be challenging to find the exact result we cite), we repeat proofs of prior results below.
\subsubsection{Proof of Lemma~\ref{lem:RW15}}
For completeness, we repeat a proof of Lemma~\ref{lem:RW15}.

In the lemma below, recall that $D_S:= \times_{i \in S} D_i$ (that is, the distribution $D$ restricted to items in $S$. Lemma~\ref{lem:adaptedRW} is very similar to the ``approximate marginal mechanism'' in~\cite{RubinsteinW15} (which is exactly Lemma~\ref{lem:adaptedRW} applied to $Y = \text{support}(D)$). For completeness, we include proofs of some steps.

\begin{lemma}\label{lem:adaptedRW}
Let $D$ be subadditive over independent items, and let $S,\bar{S}$ partition $[n]$. Define $D^+$ to first sample $v(\cdot) \leftarrow D$, and then output the valuation function $v^+(\cdot)$ with $v^+(X):= v(S \cap X) + v(X \cap \bar{S})$. Then for any subdomain $Y \subseteq \text{support}(D)$, and $\varepsilon > 0$ 
$$\REV(D \cdot \mathbb{I}(v \in Y)) \leq \frac{\REV(D^+ \cdot \mathbb{I}(v \in Y))}{1-\varepsilon}+\frac{\VAL(D_S \cdot \mathbb{I}(v\in Y))}{\varepsilon}. $$
\end{lemma}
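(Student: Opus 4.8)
The plan is to imitate the ``marginal mechanism'' argument of~\cite{RubinsteinW15}, but carried out relative to an arbitrary subdomain $Y$. Fix the optimal (or near-optimal) mechanism $M$ for $D\cdot\mathbb{I}(v\in Y)$, presented as a menu of (randomized allocation, price) pairs. The idea is to build from $M$ a menu $M^+$ for $D^+\cdot\mathbb{I}(v\in Y)$ together with a ``side payment'' accounting on $S$ that together recover almost all of $\REV(D\cdot\mathbb{I}(v\in Y))$. Concretely, when a buyer of type $v^+$ arrives (drawn from $D^+\cdot\mathbb{I}(v\in Y)$, so $v$ itself lies in $Y$ and $v^+$ is its ``additive-across-the-$(S,\bar S)$-cut'' version), we have $v^+(X)\ge v(X)$ for all $X$ by subadditivity, so $v^+$ can afford every option $v$ could. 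The subtlety is that $v^+$ might now prefer a \emph{different}, more expensive option; this only helps revenue, so the direction we need is fine. We therefore get $\REV_{M}(D^+\cdot\mathbb{I}(v\in Y)) \ge \REV_M(D\cdot\mathbb{I}(v\in Y))$ trivially — but that's not quite what we want, since $\REV(D^+\cdot\mathbb{I}(v\in Y))$ on the right of the lemma is the \emph{optimal} revenue for $D^+$, and we want an \emph{upper} bound on $\REV(D\cdot\mathbb{I}(v\in Y))$. So the argument has to go the other way.

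So instead I would do the following: take the optimal $M$ for $D\cdot\mathbb{I}(v\in Y)$, and decompose its revenue into the payment it extracts ``from the $S$ coordinates'' and ``from the $\bar S$ coordinates.'' More precisely, run $M$ on a buyer whose true type is $v$, but charge the buyer only a $(1-\varepsilon)$ fraction of the menu price; the freed-up $\varepsilon$ fraction of utility, by a standard nudge-and-round style argument (Proposition~\ref{prop:carefulNudge} or its analogue), lets us pad with a separate single-agent mechanism that sells the $S$-bundle. Formally: the revenue of $M$ on $D\cdot\mathbb{I}(v\in Y)$ is at most $\frac{1}{1-\varepsilon}$ times the revenue of ``$M$ with all prices scaled by $(1-\varepsilon)$'' run on $D^+\cdot\mathbb{I}(v\in Y)$, \emph{plus} a correction term. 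The correction term arises because a buyer $v$ facing $M$ pays $p(v)$, whereas facing the same menu the buyer $v^+$ (who values everything weakly more) pays at least as much; the only gap is that $v$ might be ``charged for value it gets across the cut,'' and the total such value is bounded by $\VAL(D_S\cdot\mathbb{I}(v\in Y))$ — this is exactly the quantity $v(S)$ that $v^+$ has but $v$ also has, so no gap there. Let me instead invoke the cleanest route: apply Corollary~\ref{cor:expectedNudge} / Proposition~\ref{prop:advancedNudge} with the coupling $(v, v^+)$. Since $v^+(X)\ge v(X)$ for all $X$, we have $\max_X\{v(X)-v^+(X)\}=0$, and $\max_X\{v^+(X)-v(X)\} \le v(S)$ always (because $v^+(X)-v(X) = v(S\cap X)+v(\bar S\cap X)-v(X) \le v(S\cap X) \le v(S)$ by subadditivity and monotonicity). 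Hence $\delta(D\cdot\mathbb{I}(v\in Y), D^+\cdot\mathbb{I}(v\in Y)) \le \mathbb{E}_{v\leftarrow D\cdot\mathbb{I}(v\in Y)}[v(S)] = \VAL(D_S\cdot\mathbb{I}(v\in Y))$.

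With that bound on the coupling distance in hand, the rest is a one-line application of Corollary~\ref{cor:expectedNudge}: $\REV(D^+\cdot\mathbb{I}(v\in Y)) \ge (1-\varepsilon)\REV(D\cdot\mathbb{I}(v\in Y)) - \delta(\cdot,\cdot)/\varepsilon \ge (1-\varepsilon)\REV(D\cdot\mathbb{I}(v\in Y)) - \VAL(D_S\cdot\mathbb{I}(v\in Y))/\varepsilon$. Rearranging — dividing by $(1-\varepsilon)$ — gives exactly
\[
\REV(D\cdot\mathbb{I}(v\in Y)) \le \frac{\REV(D^+\cdot\mathbb{I}(v\in Y))}{1-\varepsilon} + \frac{\VAL(D_S\cdot\mathbb{I}(v\in Y))}{\varepsilon(1-\varepsilon)},
\]
and since $\varepsilon<1$ the second denominator is at least $\varepsilon\cdot$something; to land precisely on the stated bound one should instead apply the nudge argument to $D^+$ with the reversed roles, or absorb the extra $(1-\varepsilon)$ by a mild reparametrization of $\varepsilon$ — these are exactly the kind of routine constant-chasing steps the lemma statement tolerates. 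The main obstacle, and the only place real care is needed, is making sure the coupling inequality $\max_X\{v^+(X)-v(X)\}\le v(S)$ is valid \emph{for subadditive (not just additive) valuations} and that conditioning on $v\in Y$ does not break the coupling (it does not, since $Y$ is a subdomain of $\mathrm{support}(D)$ and the coupling is deterministic given $v$); everything else is bookkeeping with the already-established nudge-and-round machinery.
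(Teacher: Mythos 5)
Your final argument is essentially the paper's own proof: couple $v$ with $v^+$, bound $v^+(X)-v(X)\le v(S\cap X)\le v(S)$ by monotonicity and subadditivity so that the coupling error restricted to $Y$ is at most $\VAL(D_S\cdot\mathbb{I}(v\in Y))$, and apply the nudge-and-round to the optimal menu for $D\cdot\mathbb{I}(v\in Y)$ (the paper uses $\delta_M$ with Proposition~\ref{prop:advancedNudge} where you use $\delta$ with Corollary~\ref{cor:expectedNudge}, an immaterial distinction here since the same bound holds for both), and the $1/(\varepsilon(1-\varepsilon))$-versus-$1/\varepsilon$ discrepancy you flag is equally present in the paper's ``apply directly'' step and disappears once one uses the slightly stronger inequality that the proof of Proposition~\ref{prop:advancedNudge} actually establishes (the $(1-\varepsilon)$ discount multiplies the $\delta_M/\varepsilon$ error term as well). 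One small correction to your discarded first attempt: a buyer with pointwise-higher values need not pay more on the same menu (it may switch to a cheaper option), so $\REV_M(D^+\cdot\mathbb{I}(v\in Y))\ge\REV_M(D\cdot\mathbb{I}(v\in Y))$ is not ``trivial'' --- were it true, the lemma would follow with no $\varepsilon$-loss at all, and this failure is precisely why the price discount is needed --- but since you abandoned that route, it does not affect the correctness of your final argument.
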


\begin{proof} First, observe that because $v$ is subadditive that $v^+(X) \geq v(X)$ for all $X$, and therefore $D^+$ stochastically dominates $D$. Moreover, observe that for any $X$ that:
\begin{align*}
v(X) + v(X\cap S) \geq v(X\cap \bar{S}) + v(X \cap S)=v^+(X) \geq v(X)
\end{align*}
The first inequality follows from monotonicity, and the second follows from subadditivity. Therefore, $v^+(X) - v(X) \leq v(S)$ for all $X$. This immediately implies that for any randomized option $X$ that $v^+(X)-v(X) \leq v(S)\cdot \mathbb{I}(v \in Y)$, which in turn implies that for any menu $M$: $\delta_M(D, D') \leq \mathbb{E}[v(S) \cdot \mathbb{I}(v \in Y)] = \VAL(D_S \cdot \mathbb{I}(v \in Y)]$. 

We can now apply Proposition~\ref{prop:advancedNudge} directly, using $M$ as the optimal mechanism for $D\cdot \mathbb{I}(v \in Y)$. 
\end{proof}

The following corollary is proved in~\cite{RubinsteinW15} and follows from Lemma~\ref{lem:adaptedRW} very similarly to a comparable bound in~\cite{HartN17} for additive buyers from their ``marginal mechanism'' lemma (we do not include a proof of the Marginal Mechanism lemma, and refer the reader to~\cite{CaiH13, HartN17, RubinsteinW15}). 

\begin{lemma}[Marginal Mechanism on subdomain~\cite{CaiH13, HartN17}]\label{lem:marginal} For any $S \subseteq [n]$, and $D$ such that with probability one, when $v(\cdot) \leftarrow D$ we have $v(X) = v(X \cap S) + v(\bar{X} \cap S)$ (that is, $v(\cdot)$ is additive across $S$), and $Y \subseteq \text{support}(D)$: $\REV(D) \leq \REV(D_S) + \VAL(D_{\bar{S}})$. 
\end{lemma}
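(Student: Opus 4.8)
\textbf{Proof proposal for Lemma~\ref{lem:marginal}.} The plan is to deduce this as a quick corollary of Lemma~\ref{lem:adaptedRW}. The hypothesis is that $v(\cdot)$ is additive across the partition $S, \bar{S}$ with probability one, which means that the map $v \mapsto v^+$ appearing in Lemma~\ref{lem:adaptedRW} is the \emph{identity}: indeed $v^+(X) = v(X \cap S) + v(X \cap \bar{S}) = v(X)$ for every $X$ precisely by additivity across $S$. Hence $D^+ = D$, and Lemma~\ref{lem:adaptedRW} applied with this $S$, with $Y = \text{support}(D)$ (so the indicator $\mathbb{I}(v \in Y)$ is identically $1$), and with a parameter $\varepsilon' > 0$ gives
\[
\REV(D) \;\le\; \frac{\REV(D)}{1 - \varepsilon'} \;+\; \frac{\VAL(D_S)}{\varepsilon'},
\]
which is vacuous. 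So a direct application of Lemma~\ref{lem:adaptedRW} in that form is too lossy; instead I would invoke the classical ``Marginal Mechanism'' lemma in the form used by~\cite{CaiH13, HartN17}, which is the intended reference here (the statement explicitly cites those papers and the excerpt notes that no proof of the Marginal Mechanism lemma is included).

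Concretely, the argument is the standard one: fix the optimal (or near-optimal) menu $M$ for $D$. Because $v(\cdot)$ decomposes additively as $v(X) = v(X \cap S) + v(X \cap \bar{S})$, one can think of the buyer's type as a pair $(v_S, v_{\bar{S}})$ drawn independently from $D_S \times D_{\bar{S}}$ (independence of items is what we use here). First run $M$ pretending the buyer's value on $\bar{S}$ is fixed to its conditional expectation: more precisely, for each realization of $v_{\bar S}$, consider the induced single-parameter-of-type menu over $S$-values obtained from $M$ by averaging. The revenue of $M$ is at most the revenue obtainable by (i)~an optimal mechanism acting only on the $S$-coordinates, which contributes at most $\REV(D_S)$, plus (ii)~the ``cost'' of the $\bar{S}$-coordinates, which is bounded by the full welfare $\VAL(D_{\bar S})$ that could be extracted there. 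The usual way to make this rigorous is: take $M$, and build a mechanism for $D_S$ by, for each option $(T,p) \in M$, offering $(T \cap S, p - \mathbb{E}_{v_{\bar S}}[v(T \cap \bar S)])$, i.e. subtracting off the expected value the buyer gets from the $\bar S$ part. Incentive compatibility on $S$ is preserved because the subtracted term depends only on the option, not on the buyer's $S$-type; and the revenue of this $S$-mechanism is exactly $\REV_M(D)$ minus $\mathbb{E}[\text{value the buyer gets on }\bar S] \ge \REV_M(D) - \VAL(D_{\bar S})$. Rearranging gives $\REV_M(D) \le \REV(D_S) + \VAL(D_{\bar S})$, and taking a supremum over $M$ yields the claim.

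The only subtlety — and the one place I would be careful — is that prices in the constructed $S$-mechanism could go negative, so one must either truncate negative prices to $0$ (which only increases revenue and preserves IC, since raising a too-low price on an option a given type wasn't buying anyway cannot hurt, and one argues a type that \emph{was} buying it still has nonnegative utility after the shift) or restrict attention to the buyer's conditional-on-$v_{\bar S}$ behavior and average. Since the excerpt explicitly defers this lemma to prior work and states it will not reprove the Marginal Mechanism lemma, I would likewise cite~\cite{CaiH13, HartN17, RubinsteinW15} for the full details and present only the reduction above. The subdomain version (restricting to $Y \subseteq \text{support}(D)$) follows identically, applying the same construction to $D \cdot \mathbb{I}(v \in Y)$ and noting additivity across $S$ is inherited by the restricted distribution; this is why the lemma is phrased with $Y$ even though the displayed inequality suppresses it.
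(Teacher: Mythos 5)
First, a point of comparison: the paper itself does not prove Lemma~\ref{lem:marginal} at all --- it is stated with citations, and the surrounding text explicitly says the Marginal Mechanism lemma is not reproved and refers to~\cite{CaiH13, HartN17, RubinsteinW15} --- so your decision to ultimately defer to prior work matches the paper, and your observation that a literal application of Lemma~\ref{lem:adaptedRW} is vacuous here (additivity across $S$ forces $D^+=D$) is correct.

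However, the argument you do sketch has a genuine flaw in its central construction. You build the $S$-menu by replacing each $(T,p)\in M$ with $\big(T\cap S,\; p-\mathbb{E}_{v_{\bar S}}[v_{\bar S}(T\cap\bar S)]\big)$ and claim its revenue is ``exactly $\REV_M(D)$ minus the expected value the buyer gets on $\bar S$.'' This accounting is false: a buyer with type $v_S$ facing your menu behaves like the fictitious type $(v_S,\mathbb{E}[v_{\bar S}])$ facing $M$, not like the realized combined type $(v_S,v_{\bar S})$, so its purchases need not track the purchases that generate $\REV_M(D)$. Concretely, take $S=\{1\}$, $\bar S=\{2\}$, $v_1=1$ deterministically, $v_2=1$ with probability $1/2$ and $0$ otherwise, and let $M$ consist of the single option $(\{1,2\},\,2-\delta)$ for tiny $\delta$. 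Then $\REV_M(D)\approx 1$ and $\VAL(D_{\bar S})=1/2$, so the claimed bound requires an $S$-mechanism with revenue about $1/2$; but your menu is $\{(\{1\},\,3/2-\delta)\}$, which the buyer (with $v_1=1$) never purchases, so it earns $0$. The correct construction --- the one in the cited works --- subtracts the \emph{realized} $\bar S$-value rather than its expectation: for each realization $t$ of $v_{\bar S}$, form the menu $M'_t$ with options $\big(T\cap S,\; p - t(T\cap\bar S)\big)$, writing $t(\cdot)$ for the $\bar S$-valuation under that realization. A buyer $v_S$ facing $M'_t$ then chooses exactly as the combined type $(v_S,t)$ does in $M$, and pays $p_M(v_S,t)-t\big(T_M(v_S,t)\cap\bar S\big)$, where $(T_M(v_S,t),p_M(v_S,t))$ is the option that combined type purchases; averaging over $t\sim D_{\bar S}$ gives $\mathbb{E}_t\big[\REV_{M'_t}(D_S)\big]\ge \REV_M(D)-\VAL(D_{\bar S})$, so some fixed $t$ achieves the bound. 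Your closing parenthetical about ``conditional-on-$v_{\bar S}$ behavior and average'' gestures at exactly this, but it is the heart of the proof, not an optional alternative fix for negative prices; the truncation-at-zero issue you raise is real but secondary, and is handled roughly as you describe once the per-realization construction is in place.
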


\begin{proof}[Proof of Lemma~\ref{lem:RW15}]
Define $Y$ to be the subdomain where $v(S) \geq v(\bar{S})$, and $\bar{Y} := \text{support}(D)\setminus Y$. Then:
\begin{align*}
\REV(D) & \leq \REV(D  \cdot \mathbb{I}(v \in Y)) +\REV(D  \cdot \mathbb{I}(v \in \bar{Y})) \\ 
& \leq  \frac{3}{2}\left(\REV(D^+ \cdot \mathbb{I}(v \in Y))+\REV(D^+ \cdot \mathbb{I}(v \in \bar{Y}))\right)+3\left(\VAL(D_S \cdot \mathbb{I}(v\in \bar{Y}))+\VAL(D_{\bar{S}} \cdot \mathbb{I}(v\in Y))\right) \\ 
&\leq  \frac{3}{2}\left(\REV(D^+ \cdot \mathbb{I}(v \in Y))+\REV(D^+ \cdot \mathbb{I}(v \in \bar{Y})\right)+3\left(\REV(D_S) + \REV(D_{\bar{S}})\right) \\ 
& \leq \frac{3}{2}\left(\REV(D_S\cdot \mathbb{I}(v \in Y))+\VAL(D_{\bar{S}} \cdot \mathbb{I}(v \in Y))+\REV(D_{\bar{S}}\cdot \mathbb{I}(v \in \bar{Y}))+\VAL(D_S \cdot \mathbb{I}(v \in \bar{Y}))\right)\\
&\quad+3\left(\REV(D_S)+\REV(D_{\bar{S}})\right) \\ 
& \leq \frac{3}{2}\left(2\REV(D_S)+2\REV(D_{\bar{S}})\right)+3\left(\REV(D_S)+\REV(D_{\bar{S}})\right) \\
& = 3 \left(\REV(D_S)+\REV(D_{\bar{S}})\right) + 3\left(\REV(D_S)+\REV(D_{\bar{S}})\right) = 6\left(\REV(D_S)+\REV(D_{\bar{S}})\right).
\end{align*}

The first inequality follows by Sub-Domain Stitching (Lemma~\ref{lem:sds}). The second inequality follows from Lemma~\ref{lem:adaptedRW} on each of the two terms with $\varepsilon=1/3$.

For the third inequality, observe that one way to sell items only in $S$ is to sample (independently of the buyers' values) a draw from $v(\bar{S})\leftarrow D_{\bar{S}}$, and set price $v(\bar{S})$ to receive all items in $S$. If we treat the variables defining $v(\bar{S})$, and separately those defining $v(S)$ as a full valuation function, then the buyer will purchase set $S$ \emph{exactly when $v \in Y$}, and the expected revenue of this mechanism is therefore $\VAL(D_{\bar{S}} \cdot \mathbb{I}(v \in Y))$. We therefore conclude that:
$$\REV(D_S) \geq \VAL(D_{\bar{S}} \cdot \mathbb{I}(v \in Y)).$$
$$\REV(D_{\bar{S}}) \geq \VAL(D_S \cdot \mathbb{I}(v \in \bar{Y}))).$$

The fourth inequality follows from a direct application of subdomain stitching, together with the observation that $D_S$ is exactly the distribution $D^+_S$ ($D^+$ restricted to items $S$), and that $D_{\bar{S}}$ is $D^+_{\bar{S}}$.

The final inequality observes that $\REV(D\cdot \mathbb{I}(E)) \leq \REV(D)$ for all $D$ and events $E$, and also uses the same two inequalities above upper bounding $\VAL(D_{\bar{S}}\cdot \mathbb{I}(v \in Y)$ and $\VAL(D_S \cdot \mathbb{I}(v \in \bar{Y}))$.
\end{proof}


\subsubsection{Proof of Lemma~\ref{lem:RZ}}

\begin{proof}[Proof of Lemma~\ref{lem:RZ}]

Let $(S,p)$ dominate $(T,q)$. We argue that no buyer will ever purchase $(T,q)$. Consider any buyer with value $v_i \geq q/\Pr[i \in T]$ who prefers $(T,q)$ to $(\emptyset,0)$. The difference in utility for a buyer with value $v_i$ for item $i$ between receiving $(S,p)$ and $(T,q)$ is:
$$v_i \cdot \Pr[i \in S] - p - v_i \cdot \Pr[i \in T] + q.$$

We claim this is always $> 0$. Indeed:
\begin{align*}
v_i \cdot \Pr[i \in S] - p - v_i \cdot \Pr[i \in T] + q = (v_i - p/\Pr[i \in S]) \cdot \Pr[i \in S] - (v_i - q/\Pr[i \in T])\cdot \Pr[i \in T] > 0.
\end{align*}
The last inequality follows as the definition of dominance guarantees that $v_i - p/\Pr[i \in S] \geq v_i - q/\Pr[i \in T]$, and $\Pr[i \in S] \geq \Pr[i \in T]$. Therefore, the first term is the product of two non-negative terms which are each larger than the corresponding non-negative terms in the second product, and the product is at least as large. Strict inequality follows as if both terms are equal, then $(S,p) = (T,q)$. Finally, observe that the term $(v_i - q/\Pr[i \in T])$ is indeed non-negative for any valuation $v$ that would prefer $(T,q)$ to $(\emptyset, 0)$. Therefore, any dominated option is never purchased, and we can safely remove it from the menu without affecting revenue at all.
\end{proof}


\section{Formal Representation of a Symmetric Menu for a $k$-Demand Buyer}

Here, we formally establish that a $k$-demand buyer can find their favorite option on menu $M$ using time $\poly(\wsmc(M))$. 

\begin{lemma}\label{lem:polytimemenu}
Let $v(\cdot)$ be $k$-demand over independent items, and let $M:=\cup_j M_j$ where each $M_j$ is symmetric with respect to an item permutation $\Sigma_j$ and $\sum_j \ssmc(M_j) = C$. Then $\arg\max_{(S,p) \in M} \{v(S) - p\}$ can be found in time $\poly(n,C)$.
\end{lemma}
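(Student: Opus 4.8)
The plan is to first handle a single strongly symmetric block $M_j$, and then take the maximum over the at most $C$ blocks. Within a block $M_j = \bigcup_{i \in \mathcal{I}_j} \{(S_i, p_i, \Sigma_j)\}$ (with $|\mathcal{I}_j| = \ssmc(M_j)$), the task reduces to: given a representative randomized allocation $S_i$, a price $p_i$, and an item permutation group $\Sigma_j$ with partition $T_1 \sqcup \cdots \sqcup T_r$, compute $\max_{\sigma \in \Sigma_j}\{v(\sigma(S_i)) - p_i\}$ in polynomial time. Since the price $p_i$ is fixed across the whole orbit, this is just $\max_{\sigma \in \Sigma_j} v(\sigma(S_i)) - p_i$, so the real work is the allocation-side optimization $\max_{\sigma} v(\sigma(S_i))$.

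First I would exploit that $v$ is $k$-demand over independent items, so $v(\sigma(S_i))$ depends on $\sigma(S_i)$ only through the marginal probabilities $x_\ell := \Pr[\ell \in \sigma(S_i)]$ (using that $v(\sigma(S_i)) = \mathbb{E}_{X \leftarrow \sigma(S_i)}[v(X)]$ and that w.l.o.g.\ $S_i$ can be taken to offer at most $k$ items, so the realized set always has size $\le k$ and $v(X) = \sum_{\ell \in X} v_\ell$). Because $\sigma$ only permutes items within each part $T_t$, the multiset of marginal probabilities $\{\Pr[\ell \in S_i] : \ell \in T_t\}$ is invariant, and $\sigma$ merely reassigns these values among the items of $T_t$. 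Hence, within each part $T_t$, maximizing $\sum_{\ell \in T_t} v_\ell \Pr[\ell \in \sigma(S_i)]$ over the allowed reassignments is exactly a sorting/rearrangement step: by the rearrangement inequality, one should match the largest marginal probabilities in $T_t$ with the items of largest $v_\ell$ in $T_t$. Doing this independently for each of the $r \le n$ parts, and summing, gives $\max_\sigma v(\sigma(S_i))$ in time $\poly(n)$ (sorting within each part). One subtlety: the representative $(S_i,p_i)$ is a \emph{randomized} allocation, so I should note that the buyer only needs its marginals $\Pr[\ell \in S_i]$, which are part of the menu's (implicit) description, and that after identifying the optimal $\sigma$ the buyer can output the concrete option $(\sigma(S_i), p_i)$.

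Second, I would iterate over all $i \in \mathcal{I}_j$ (at most $\ssmc(M_j)$ of them) and all blocks $j$, keeping the running best $(\sigma(S_i), p_i)$; the total number of (representative, block) pairs is $\sum_j \ssmc(M_j) = C$, each handled in $\poly(n)$ time, so the whole procedure runs in $\poly(n, C)$. I should also include the default option $(\emptyset, 0)$ in the comparison. This establishes the lemma.

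\textbf{Main obstacle.} The only place requiring care is the per-part optimization $\max_\sigma v(\sigma(S_i))$: one must argue cleanly that for a $k$-demand valuation the objective is linear in the marginal probabilities (which needs the standard reduction to allocations offering $\le k$ items with probability one, so that no realized bundle exceeds size $k$ and additivity kicks in), and then that the rearrangement inequality applies part-by-part because $\Sigma_j$ acts as the full symmetric group on each $T_t$. For a genuinely $k$-demand (rather than additive) valuation where $S_i$ might nominally offer more than $k$ items, I would invoke the fact already used in the proof of Theorem~\ref{thm:symmetries2} that such an option can be replaced by one with the same marginals offering at most $k$ items with probability one — so this is not a new difficulty. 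Everything else is bookkeeping over the $C$ representatives.
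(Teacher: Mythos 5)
Your proposal is correct and follows essentially the same route as the paper's proof: reduce w.l.o.g.\ to options awarding at most $k$ items with probability one so that utility is linear in the marginals $\Pr[\ell \in S_i]$, then for each of the $C$ representatives sort/match values and marginal probabilities within each part of the item-permutation partition (the rearrangement step the paper phrases as matching the relative orderings), and take the best over all representatives together with $(\emptyset,0)$. No gaps worth noting.
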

\begin{proof}
First, observe that it is w.l.o.g. to consider only menus $M$ such that each $(S,p) \in M$ has $|S| \leq k$ with probability one. Now, the only property that matters to a buyer with valuation $v(\cdot)$ is $\Pr[i \in S]$ for all $i,S$, because the utility $v$ derives from $(S,p)$ is exactly $\sum_i v_i \Pr[i \in S] - p$. When presented with a menu $M = \cup_j M_j$, where each $\ssmc(M_j) = d_j$, we can process each $(S,p,\Sigma_j)$ one at a time. Because $\Sigma_j$ is an item permutation, we can look at how it partitions the items into $T_1,\ldots, T_\ell$. Within each $T_x$, there is an option to permute the items in $T_x$ so that the relative ordering of $v_i$ and $\Pr[i \in S]$ match within $T_x$. The permutation which causes this for all $x$ is clearly the buyer's favorite option among all those in $(S,p,\Sigma_j)$. The buyer's value for this option is simply a dot product. The permutation and dot product together can be computed in time $O(n)$. By definition, there are $\sum_j d_j = C$ representatives to check, so the total time taken is $\poly(n,C)$.
\end{proof}

\section{Details for the Barrier Example}\label{app:approachLimits}
Here we provide the missing details for our barrier example of Section~\ref{sec:approachLimits}.

\begin{lemma}\label{lem:nosymm}
Let $D$ satisfy the conclusions of Proposition~\ref{prop:features}. Then an application of Corollary~\ref{cor:expectedNudge} cannot guarantee that $\REV(D') \geq (1-\sqrt{\varepsilon}/8)\REV(D) $.
\end{lemma}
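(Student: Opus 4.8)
The statement to prove is a negative/limitation result: using only the coupling-based nudge-and-round bound of Corollary~\ref{cor:expectedNudge} (together with symmetry exploitation via Theorem~\ref{thm:symmetries}), one cannot certify a $(1-\sqrt{\varepsilon}/8)$-approximation. The plan is to track exactly what Corollary~\ref{cor:expectedNudge} buys us when we couple $D$ with a symmetric $D'$, and then optimize the resulting bound over the choice of $D'$ and over the partition size $s$ of the symmetry group. Concretely, Corollary~\ref{cor:expectedNudge} gives $\REV(D') \geq (1-\varepsilon')\REV(D) - \delta(D,D')/\varepsilon'$ for any parameter $\varepsilon'$; to then conclude something about $\REV(D)$ from a good mechanism on $D'$ (via Proposition~\ref{prop:advancedNudge} in the other direction), one incurs a loss of roughly $\delta(D,D')/\varepsilon'$ on \emph{both} sides, so the net certified fraction is at best something like $1 - O(\sqrt{\delta(D,D')/\REV(D)})$ after optimizing $\varepsilon'$ (the $\varepsilon' \REV(D)$ and $\delta(D,D')/\varepsilon'$ terms are balanced at $\varepsilon' = \sqrt{\delta(D,D')/\REV(D)}$). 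So the entire argument reduces to lower-bounding $\delta(D,D')/\REV(D)$ over all symmetric $D'$ with subexponential-usable partition size.

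The key input is the last bullet of Proposition~\ref{prop:features}: for any $\Sigma$ of partition size $s$ and any $D'$ symmetric w.r.t.\ $\Sigma$, we have $\delta(D,D') \geq \varepsilon^2 \cdot \frac{n-s}{6n}$. For the symmetric-mechanism approach to yield subexponential symmetric menu complexity, Theorem~\ref{thm:symmetries} forces $s$ to appear in the exponent ($n^{cs}$), so we need $s$ to be (at worst) polylogarithmic, certainly $s = o(n)$; hence $\delta(D,D') \geq \varepsilon^2/6 - o(\varepsilon^2) \geq \varepsilon^2/7$ for large $n$. Combined with $\REV(D) \leq \VAL(D) = 3\varepsilon/2$ (also from Proposition~\ref{prop:features}), we get $\delta(D,D')/\REV(D) \geq (\varepsilon^2/7)/(3\varepsilon/2) = 2\varepsilon/21 \geq \varepsilon/11$. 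Then the best certified fraction is $1 - O(\sqrt{\varepsilon/11})$; I would chase the constants so that the $O(\cdot)$ resolves to something strictly larger than $\sqrt{\varepsilon}/8$, which forces the certified fraction strictly below $1 - \sqrt{\varepsilon}/8$. (The precise constant bookkeeping: balancing in Corollary~\ref{cor:expectedNudge} alone already costs $2\sqrt{\delta \cdot \REV(D)}$ absolute revenue, i.e.\ a multiplicative factor $1 - 2\sqrt{\delta/\REV(D)} \leq 1 - 2\sqrt{\varepsilon/11} < 1 - \sqrt{\varepsilon}/8$ since $2/\sqrt{11} > 1/8$ comfortably.)

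The main obstacle is stating the limitation cleanly enough to be a genuine theorem rather than a heuristic: "an application of Corollary~\ref{cor:expectedNudge} cannot guarantee" needs a crisp formalization of what a generic proof via that corollary looks like. I would formalize it as: any bound obtained by (i) choosing a symmetric $D'$, (ii) invoking Corollary~\ref{cor:expectedNudge} to relate $\REV(D')$ and $\REV(D)$, and (iii) using the exact optimal revenue of $D'$ as the benchmark, is of the form $\REV(D) \geq (\text{best symmetric benchmark}) \cdot (1 - \varepsilon') - \delta(D,D')/\varepsilon'$ — but since we are \emph{upper}-bounding what the approach can certify, it suffices to note that for \emph{every} symmetric $D'$, even if $\REV(D') $ were as large as $\REV(D)$ itself, the loss term $\delta(D,D')/\varepsilon'$ alone, minimized over $\varepsilon'$, already exceeds $(\sqrt{\varepsilon}/8)\REV(D)$. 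That makes the argument a pure inequality chain and sidesteps needing to know $\REV(D')$ at all.

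\textbf{Steps, in order.} (1) Recall $\REV(D) \leq \VAL(D) = 3\varepsilon/2$ from Proposition~\ref{prop:features}. (2) Fix any symmetric $D'$ with partition size $s = o(n)$; invoke Proposition~\ref{prop:features}'s last bullet to get $\delta(D,D') \geq \varepsilon^2/6 - o(\varepsilon^2)$. (3) Observe that any use of Corollary~\ref{cor:expectedNudge} to pass from $D'$ back to $D$ yields, for the free parameter $\varepsilon' > 0$, a certified lower bound on $\REV(D)$ no better than $\REV(D') - \varepsilon'\REV(D') - \delta(D,D')/\varepsilon'$ (or the symmetric statement via Proposition~\ref{prop:advancedNudge}); since $\REV(D') \leq $ some quantity $\leq \REV(D) + \delta(D,D')$, the multiplicative loss relative to $\REV(D)$ is at least $\min_{\varepsilon'}\{\varepsilon' + \delta(D,D')/(\varepsilon' \REV(D))\} = 2\sqrt{\delta(D,D')/\REV(D)}$. (4) Plug in: $2\sqrt{\delta(D,D')/\REV(D)} \geq 2\sqrt{(\varepsilon^2/7)/(3\varepsilon/2)} = 2\sqrt{2\varepsilon/21} > \sqrt{\varepsilon}/8$ for all $\varepsilon < 1$ (indeed $2\sqrt{2/21} \approx 0.617 \gg 1/8$), so the certified fraction is at most $1 - 2\sqrt{2\varepsilon/21} < 1 - \sqrt{\varepsilon}/8$. (5) Since this holds for every admissible $D'$, no application of Corollary~\ref{cor:expectedNudge} in this pipeline certifies $\REV(D') \geq (1-\sqrt{\varepsilon}/8)\REV(D)$, completing the proof. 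The only real care needed is in step (3)–(4): making sure the direction of the inequalities and the role of $\varepsilon'$ are pinned down, and that the constant $2\sqrt{2/21}$ indeed dominates $1/8$ — which it does with enormous room, so even a lossy formalization of "an application of Corollary~\ref{cor:expectedNudge}" will go through.
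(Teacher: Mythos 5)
Your argument is correct and takes essentially the same route as the paper: invoke the last bullet of Proposition~\ref{prop:features} (partition size $o(n)$ forces $\delta(D,D')\geq \varepsilon^2/6-o(\varepsilon^2)$), use $\REV(D)\leq \VAL(D)=3\varepsilon/2$, and note that the loss term of Corollary~\ref{cor:expectedNudge}, minimized over its free parameter via AM--GM, already exceeds a $\sqrt{\varepsilon}/8$ fraction of $\REV(D)$. The only (immaterial) difference is bookkeeping: the paper lower-bounds the absolute loss first using $\REV(D)\geq\srev(D)=\varepsilon$ and then divides by $3\varepsilon/2$ to get a $2\sqrt{\varepsilon}/9$ fraction, whereas you normalize before optimizing and obtain a slightly larger constant.
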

\begin{proof}
To see this, observe that an application of Corollary~\ref{cor:expectedNudge} requires a choice of parameter $\delta$. For this $\delta$, we lose $\delta \REV(D) + \delta(D, D')/\delta$. In order to possibly get subexponential symmetric menu complexity from Theorem~\ref{thm:symmetries}, $D'$ must be symmetric with respect to some $\Sigma$ of partition size $o(n)$, and therefore $\delta(D, D') \geq \varepsilon^2/6 - o(\varepsilon^2)$ (note that $n$ is the parameter inside the little-oh, so this term approaches $0$ independent of $\varepsilon$ as $n \rightarrow \infty$). Therefore, as $\REV(D) \geq \srev(D) = \varepsilon$, for any $\delta$, the lost revenue is at least $\delta \varepsilon + \varepsilon^2/(6\delta) - o(\varepsilon^2/\delta)$, which for any choice of $\delta$ is at least $\varepsilon^{3/2}/3-o(\varepsilon^{3/2})$. 

As $\REV(D) \leq \VAL(D)$, this means that the lost revenue in the bound is at least a $\frac{\varepsilon^{3/2}/3 - o(\varepsilon^{3/2})}{3\varepsilon/2} = 2\sqrt{\varepsilon}/9 - o(\sqrt{\varepsilon})$ fraction, which exceeds a $\sqrt{\varepsilon}/8$ fraction as $n \rightarrow \infty$. 
\end{proof}

\begin{proof}[Proof of Proposition~\ref{prop:features}]
Let  $k = \frac{\ln(n /\ln(n))}{\varepsilon}+1$. We first argue that $n$ vectors $\vec{r}_1,\ldots, \vec{r}_n \in \{0,1\}^k$ exist, each with exactly $k/2$ ones, and such that for all $i,j$ there are at least $k/6$ indices where one $\vec{r}_i$ has a 1 but $\vec{r}_j$ does not.

\begin{claim} For $k =  \frac{\ln(n /\ln(n))}{\varepsilon}+1$, there exists $\vec{r}_1,\ldots, \vec{r}_n$ satisfying the hypotheses of Example~\ref{ex:badexample}.
\end{claim}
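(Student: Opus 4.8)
The plan is to prove existence of the vectors $\vec r_1,\dots,\vec r_n \in \{0,1\}^k$ via the probabilistic method. I would sample each $\vec r_i$ independently and uniformly from the set of all length-$k$ binary strings with exactly $k/2$ ones, and then show that with positive probability every pair $i\neq j$ satisfies $|\{\ell : r_{i\ell}=1,\ r_{j\ell}=0\}| \geq k/6$. The key point is that there are only $\binom{n}{2} < n^2$ pairs, so a union bound will succeed as long as the failure probability for a single pair is $o(n^{-2})$, which given $k = \Theta(\ln(n)/\varepsilon)$ should hold comfortably.

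The main step is the single-pair tail bound. Fix $i\neq j$ and condition on $\vec r_j$; let $X = |\{\ell : r_{i\ell}=1,\ r_{j\ell}=0\}|$. Since $\vec r_j$ has exactly $k/2$ zero-coordinates, and $\vec r_i$ places its $k/2$ ones uniformly among the $k$ coordinates, $X$ follows a hypergeometric distribution with $\mathbb{E}[X] = (k/2)\cdot(k/2)/k = k/4$. I would then apply a standard hypergeometric tail inequality (e.g.\ the Hoeffding/Chvátal bound for the hypergeometric, or symmetrization to the binomial) to conclude $\Pr[X < k/6] = \Pr[X < \mathbb{E}[X] - k/12] \leq e^{-\Omega(k)}$. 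Plugging in $k = \Theta(\ln(n)/\varepsilon) = \Omega(\ln n)$ gives $e^{-\Omega(k)} = n^{-\Omega(1/\varepsilon)} = o(n^{-2})$ for $n$ large (or for all $n$ after adjusting the constant in $k$). A union bound over all $< n^2$ ordered pairs then leaves positive probability that all constraints hold simultaneously, establishing existence.

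The one slightly delicate point — and the main obstacle, though a minor one — is bookkeeping the constants so that the stated $k = \Theta(\ln n/\varepsilon)$ (the excerpt also writes $k = \frac{\ln(n/\ln n)}{\varepsilon}+1$) genuinely makes the tail $o(n^{-2})$; this requires the constant hidden in the $\Theta$ to be large enough relative to the constant lost in the hypergeometric concentration exponent. I would handle this by being explicit: the hypergeometric tail gives $\Pr[X < k/6] \leq \exp(-ck)$ for an absolute constant $c>0$ (e.g.\ $c = 2(1/12)^2$ from Hoeffding), so it suffices that $ck \geq 3\ln n$, i.e.\ $k \geq (3/c)\ln n$, which is satisfied by taking the constant in $k = \Theta(\ln n/\varepsilon)$ at least $3/c$ (recalling $\varepsilon < 1$). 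Since the rest of Example~\ref{ex:badexample} only needs $k = \Theta(\ln n/\varepsilon)$, this flexibility in the constant is harmless. I would also note in passing that each $\vec r_i$ having exactly $k/2$ ones is built into the sampling space, so no extra argument is needed for the $|\vec r_i|_1 = k/2$ requirement, and that the constraint in the example is symmetric in $i,j$ once we note $|\{\ell : r_{i\ell}\neq r_{j\ell}\}| = 2X$ when both vectors have weight $k/2$, so $X \geq k/6$ on both ordered pairs already gives the Hamming-distance lower bound $k/6$ (in fact $k/3$).
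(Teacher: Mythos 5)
Your proposal is correct and follows essentially the same route as the paper: sample each $\vec r_i$ uniformly among weight-$k/2$ vectors, show that for each pair the count of coordinates where $r_{i\ell}=1, r_{j\ell}=0$ is at least $k/6$ except with probability $e^{-\Omega(k)}$, and union bound over the $O(n^2)$ pairs. The only difference is cosmetic: you get the tail bound by conditioning on $\vec r_j$ and applying Hoeffding's inequality for the hypergeometric distribution, whereas the paper invokes a Chernoff bound for negatively correlated indicators; both yield the same $e^{-\Omega(k)} = n^{-\Omega(1/\varepsilon)}$ per-pair failure probability.
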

\begin{proof}
We use the probabilistic method, and let each $\vec{r}_i$ be a random element of $\{0,1\}^k$ conditioned on having exactly $k/2$ ones. Observe that for each index $\ell$, the probability that $r_{i\ell} = r_{j\ell}+1$ is exactly $1/4$. Also observe that the events for all $\ell$ are negatively correlated (more precisely, the probability that this fails for all $\ell \in S$ is at most $(3/4)^{|S|}$), and therefore Chernoff bounds with negative correlation (e.g.~\cite{ImpagliazzoK10}) imply that except with probability $e^{-\Omega(k)} = n^{-\Omega(1/\varepsilon)}$, there are at least $k/6$ such $\ell$. Taking a union bound over all $k^2$ pairs proves the claim.
\end{proof}

Now we want to prove the bullets.

\begin{claim}$\VAL(D) = 3\varepsilon/2$.
\end{claim}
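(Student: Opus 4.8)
The plan is to directly compute the expectation $\VAL(D) = \mathbb{E}_{v \leftarrow D}[v([n])]$, using that $D$ is additive over independent items, so $v([n]) = \sum_{i=1}^n v_i$ and hence $\VAL(D) = \sum_{i=1}^n \mathbb{E}[v_i]$. First I would split each marginal's expectation into the contribution from the ``point-mass part'' (the value $1$ for even $i$, or $1/2$ for odd $i$) and the contribution from the ``small-value part'' (the values $\frac{\varepsilon(1-\varepsilon)^\ell}{\ln n}$ for $\ell = 0,\dots,k-1$, each present when $r_{i\ell}=1$).

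For the point-mass part: each even $i$ contributes $1 \cdot \frac{\varepsilon}{n} = \frac{\varepsilon}{n}$ and each odd $i$ contributes $\frac{1}{2}\cdot\frac{2\varepsilon}{n} = \frac{\varepsilon}{n}$. Since there are $n$ items total, summing gives $n \cdot \frac{\varepsilon}{n} = \varepsilon$. For the small-value part: item $i$ contributes $\sum_{\ell : r_{i\ell}=1} \frac{\varepsilon(1-\varepsilon)^\ell}{\ln n}\cdot \frac{(\ln n)(1-\varepsilon)^{-\ell}}{nk} = \sum_{\ell : r_{i\ell}=1} \frac{\varepsilon}{nk}$, and since each $\vec{r}_i$ has exactly $k/2$ ones this equals $\frac{k}{2}\cdot\frac{\varepsilon}{nk} = \frac{\varepsilon}{2n}$. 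Summing over all $n$ items gives $n\cdot\frac{\varepsilon}{2n} = \frac{\varepsilon}{2}$. Adding the two parts yields $\VAL(D) = \varepsilon + \frac{\varepsilon}{2} = \frac{3\varepsilon}{2}$.

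The only subtlety I would be careful about is checking that the definition of $D_i$ is consistent, i.e.\ that the probabilities assigned to the positive values sum to at most $1$ so the residual mass at $0$ (the line ``$1 - \Pr[v_i > x]$ at $x=0$'') is well-defined and nonnegative. The point-mass probability is $O(\varepsilon/n)$ and the sum of the small-value probabilities is $\sum_{\ell:r_{i\ell}=1}\frac{(\ln n)(1-\varepsilon)^{-\ell}}{nk} \le \frac{k/2 \cdot (\ln n)(1-\varepsilon)^{-(k-1)}}{nk}$; plugging in $k = \Theta(\ln(n/\ln n)/\varepsilon)$ makes $(1-\varepsilon)^{-(k-1)} = \Theta(n/\ln n)$, so this bound is $O(1)$ — in fact I would verify it is $o(1)$ for the relevant parameter regime, which is immediate from the choice of $k$ made just above in the excerpt. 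This is a routine check and not a real obstacle; the computation of $\VAL(D)$ itself is the content and it is a short arithmetic verification.
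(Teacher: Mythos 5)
Your computation is correct and matches the paper's own proof: both simply evaluate each marginal's expectation, with the point mass contributing $\varepsilon/n$ and the $k/2$ small values contributing $\varepsilon/(nk)$ each, giving $3\varepsilon/(2n)$ per item and $3\varepsilon/2$ in total. The well-definedness check you mention is handled in the paper as a separate claim, so nothing is missing.
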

\begin{proof}
This follows simply by observing that the large value in the support contributes exactly $\varepsilon/n$ to $\VAL(D_i)$, and there are $k/2$ other values in the support which each contribute exactly $\varepsilon/(nk)$ to $\VAL(D_i)$, for a total of $\VAL(D_i) = 3\varepsilon/(2n)$. 
\end{proof}

\begin{claim}$\srev(D) = \varepsilon$. 
\end{claim}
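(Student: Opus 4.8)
The plan is to show that $\srev(D) = \varepsilon$ by exhibiting a pricing scheme that achieves revenue $\varepsilon$ and then arguing no scheme can do better. For the lower bound, I would price each item $i$ at exactly its ``big'' support value, i.e. set $p_i = 1$ for even $i$ and $p_i = 1/2$ for odd $i$. Then item $i$ sells with probability exactly $\varepsilon/n$ (the mass at $x=1$ for even $i$, or $2\varepsilon/n$ at $x=1/2$ for odd $i$, contributing $\tfrac12 \cdot \tfrac{2\varepsilon}{n}$), so the expected revenue is $\sum_i p_i q_i = \sum_i \varepsilon/n = \varepsilon$. (Here selling separately degenerates to selling exclusively because the chosen prices exceed all the ``small'' support values, so a buyer buys at most one item — but even treating it as selling separately, the revenue is the same since values are independent and exceeding multiple prices simultaneously only adds revenue; one must double-check this does not \emph{increase} $\srev$ beyond $\varepsilon$, which is handled by the upper bound below.)

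For the upper bound $\srev(D) \le \varepsilon$, I would use the standard observation that $\srev(D) = \sum_i \srev(D_i)$ when items are independent (selling separately decomposes), so it suffices to show $\srev(D_i) \le \varepsilon/n$ for each $i$. Each $D_i$ is a single-dimensional distribution, so $\srev(D_i) = \REV(D_i) = \max_r r \cdot \Pr[v_i \ge r]$, the optimal Myerson reserve price revenue. I would then check each candidate reserve price in the support: pricing at the big value gives $\varepsilon/n$ (even) or $\tfrac12 \cdot \tfrac{2\varepsilon}{n} = \varepsilon/n$ (odd); pricing at the $\ell$-th small value $\tfrac{\varepsilon(1-\varepsilon)^\ell}{\ln n}$ (when $r_{i\ell}=1$) captures all mass at that value and above, and the revenue is $\tfrac{\varepsilon(1-\varepsilon)^\ell}{\ln n} \cdot \big(\text{mass} \ge \text{that value}\big)$; the mass at small values $\ell' \le \ell$ with $r_{i\ell'}=1$ is $\sum_{\ell': r_{i\ell'}=1, \ell' \le \ell} \tfrac{(\ln n)(1-\varepsilon)^{-\ell'}}{nk}$ plus the big-value mass $\varepsilon/n$ (or $2\varepsilon/n$), and one checks that multiplying by $\tfrac{\varepsilon(1-\varepsilon)^\ell}{\ln n}$ never exceeds $\varepsilon/n$. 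The geometric structure is designed so that each small value contributes exactly $\tfrac{\varepsilon}{nk}$ to $\VAL(D_i)$ and the revenue from pricing at any single small level is at most this small contribution times a bounded factor, which stays below $\varepsilon/n$; I would verify the telescoping sum $\tfrac{\varepsilon(1-\varepsilon)^\ell}{\ln n}\sum_{\ell' \le \ell}\tfrac{(\ln n)(1-\varepsilon)^{-\ell'}}{nk} \le \tfrac{\varepsilon}{nk}\cdot\tfrac{1}{\varepsilon} = \tfrac{1}{nk}$ roughly, which is $o(\varepsilon/n)$, so the big value dominates.

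The main obstacle I anticipate is the upper bound computation: one must be careful that pricing at a \emph{small} value does not accidentally beat $\varepsilon/n$, since although each small value has tiny probability mass, there are $k/2 = \Theta(\ln(n)/\varepsilon)$ of them and the geometrically decreasing prices are paired with geometrically increasing masses, so the product could in principle be large. The resolution is that the total mass on small values above level $\ell$ is a geometric sum dominated by its largest term $\tfrac{(\ln n)(1-\varepsilon)^{-\ell}}{nk}$ (up to a $1/\varepsilon$ factor from the geometric series), and multiplying by the price $\tfrac{\varepsilon(1-\varepsilon)^\ell}{\ln n}$ cancels the $(1-\varepsilon)^{\pm\ell}$ and the $\ln n$, leaving roughly $\tfrac{1}{nk}$, which is far below $\varepsilon/n$ since $k \gg 1/\varepsilon$. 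Combined with the big-value mass contributing at most $\tfrac{\varepsilon}{n}\cdot\tfrac{\varepsilon(1-\varepsilon)^\ell}{\ln n} = o(\varepsilon/n)$, pricing at a small value is strictly suboptimal, so the optimal single-item reserve is the big value, giving $\srev(D_i) = \varepsilon/n$ and hence $\srev(D) = \sum_i \varepsilon/n = \varepsilon$.
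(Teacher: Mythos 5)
Your proposal is correct and follows essentially the same route as the paper: the lower bound via prices $1$ (even items) and $1/2$ (odd items), and the upper bound by checking each candidate per-item price, where pricing at the $\ell$-th small level yields at most $\frac{1-(1-\varepsilon)^{\ell+1}}{nk} + \frac{2\varepsilon^2}{n\ln n} < \frac{\varepsilon}{n}$ thanks to the geometric cancellation, exactly the computation in the paper. Your explicit use of the per-item decomposition $\srev(D)=\sum_i \REV(D_i)$ (valid since the buyer is additive over independent items) is only implicit in the paper, and your aside about "selling exclusively" is unnecessary but harmless since the revenue is $\sum_i p_i\Pr[v_i\ge p_i]$ in any case.
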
  
\begin{proof}
To see that $\srev(D) \geq \varepsilon$, consider setting price $1$ on all even items and $1/2$ on all odd items.

To see the upper bound, consider any price $\frac{\varepsilon (1-\varepsilon)^j}{\ln(n)}$ for some $j \in \{0, 1, ... , k-1\}$. Then the item sells with probability at most $\frac{2\varepsilon}{n} + \sum_{h=0}^{j} \frac{\ln(n) (1-\varepsilon)^{-h}}{nk} \leq \frac{2\varepsilon}{n}+\frac{\ln(n)((1-\varepsilon)^{-(j+1)} -1)}{nk}\cdot\frac{1-\varepsilon}{\varepsilon}$. Therefore the revenue of setting price $\frac{\varepsilon (1-\varepsilon)^j}{\ln n}$ is at most $(1-\varepsilon)^j \frac{2\varepsilon^2}{n \ln n} + \frac{(1-(1-\varepsilon)^{j+1})}{nk} < \frac{\varepsilon}{n}$.
\end{proof}

\begin{claim}$\BREV(D) \leq \varepsilon-\varepsilon^2$. 

\end{claim}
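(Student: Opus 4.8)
The plan is to bound $\BREV(D)$ by considering an arbitrary take-it-or-leave-it price $p$ on the grand bundle $[n]$ and showing that $p\cdot\Pr[v([n])\geq p]\leq \varepsilon-\varepsilon^2$. The key observation is that $v([n])=\sum_i v_i$ is a sum of independent random variables, each of which is $0$ with high probability. I would split the analysis according to the magnitude of $p$ relative to $\VAL(D)=3\varepsilon/2$: for ``small'' prices $p$ (say $p\leq \varepsilon/2$), the crude bound $\Pr[v([n])\geq p]\leq 1$ is not enough, so I would instead use $p\cdot\Pr[v([n])\geq p]\leq \mathbb{E}[v([n])]=3\varepsilon/2$ — still not good enough alone — and combine this with a tighter tail estimate. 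For ``large'' prices $p$, I would use a concentration/tail bound to show $\Pr[v([n])\geq p]$ decays fast enough that $p\cdot\Pr[v([n])\geq p]$ stays below $\varepsilon-\varepsilon^2$.

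More concretely, first I would compute $\mathbb{E}[v([n])]$ exactly (it is $3\varepsilon/2$ as in the previous claim) and also compute $\mathrm{Var}(v([n]))$, which will be dominated by the contribution of the point masses at $1$ and $1/2$: each even item contributes variance $\approx \frac{\varepsilon}{n}(1-\frac{\varepsilon}{n})\approx \frac{\varepsilon}{n}$, each odd item contributes $\approx \frac{1}{4}\cdot\frac{2\varepsilon}{n}=\frac{\varepsilon}{2n}$, and the small-value part contributes a lower-order term of size $O(\varepsilon^2/(n\ln n))$ per item. So $\mathrm{Var}(v([n]))=\Theta(\varepsilon)$. Then I would split into the contribution from items taking value in $\{1,1/2\}$ (call this $Y$, a sum of independent Bernoulli-type variables with $\mathbb{E}[Y]=\varepsilon$) and the contribution from the small values (call this $Z$, with $\mathbb{E}[Z]=\varepsilon/2$ and each summand bounded by $\varepsilon/\ln n$, so $Z$ concentrates around $\varepsilon/2$ by a Chernoff/Hoeffding bound — say $\Pr[Z\geq \varepsilon/2+\varepsilon^{10}]$ is negligible). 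Since $v([n])=Y+Z$, any buyer who pays $p$ has $Y\geq p-\varepsilon/2-\varepsilon^{10}$ except with negligible probability.

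The main work is then to bound $\sup_p\{p\cdot\Pr[Y\geq p-\varepsilon/2-\varepsilon^{10}]\}$. Here I would use that $Y$ is a sum of $n$ independent nonnegative variables each at most $1$ with $\mathbb{E}[Y]=\varepsilon\ll 1$, so Markov gives $\Pr[Y\geq y]\leq \varepsilon/y$, which yields $p\cdot\Pr[Y\geq p-\varepsilon/2-\varepsilon^{10}]\leq p\varepsilon/(p-\varepsilon/2-\varepsilon^{10})$; this is at most roughly $2\varepsilon$ for $p\geq \varepsilon$, which is not quite sharp enough, so I would sharpen using that $Y$ is actually integer-or-half-integer valued and that $\Pr[Y\geq 1/2]$ is bounded by $\sum_i\Pr[v_i\in\{1/2,1\}]=\varepsilon$ by a union bound, giving $p\cdot\Pr[Y\geq 1/2]\leq p\varepsilon$ for the price regime that sells only when $Y\geq 1/2$; combined with the negligible-$Z$ correction this gives the clean bound. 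I would then check the low-price regime $p<\varepsilon$ separately: there the bundle sells essentially whenever $Y\geq 1$ fails to be forced, but actually the revenue $p\cdot\Pr[v([n])\geq p]$ with $p<\varepsilon$ is at most $\varepsilon\cdot\Pr[v([n])>0]$, and $\Pr[v([n])>0]\leq \sum_i\Pr[v_i>0]=\sum_i(\frac{3\varepsilon}{2n}+O(\tfrac{\varepsilon^2}{n\ln n}))<1$, in fact $\Pr[v([n])>0]\leq 3\varepsilon/2 + o(\varepsilon) < 1-\varepsilon$ for small $\varepsilon$, so this regime also gives at most $\varepsilon(1-\varepsilon)$.

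The main obstacle I anticipate is getting the constant exactly right — proving $\BREV(D)\leq \varepsilon-\varepsilon^2$ rather than merely $\BREV(D)\leq C\varepsilon$ for some constant $C>1$ — since crude Markov/union bounds lose factors. The cleanest route is probably: for any price $p$, if $p\leq \varepsilon/2+\varepsilon^{10}$ then revenue $\leq p\leq \varepsilon/2+\varepsilon^{10}$; if $p>\varepsilon/2+\varepsilon^{10}$ then selling requires $Y>0$, i.e.\ $Y\geq 1/2$ (since $Y$ takes values in $\tfrac12\mathbb{Z}_{\geq 0}$ and $Z\leq \varepsilon/2+\varepsilon^{10}$ w.h.p.), so revenue $\leq p\cdot\Pr[Y\geq 1/2]+\text{(negligible)}$; and then I would argue $p\cdot\Pr[Y\geq 1/2]$ is maximized with the right tradeoff bounded by $\varepsilon-\varepsilon^2$ using the exact structure of the distribution (the contribution $\Pr[Y\geq 1/2]$ can be computed or bounded tightly via inclusion–exclusion, and $p$ cannot exceed the largest value in the support of $v([n])$ that still sells with enough probability). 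Keeping track of the lower-order $\varepsilon^{10}$ and $o(\varepsilon^2)$ terms and confirming they do not push the bound above $\varepsilon-\varepsilon^2$ (which holds comfortably for $\varepsilon$ bounded away from $1$ and $n$ large) is the finicky part, but it is routine once the decomposition $v([n])=Y+Z$ with $Z$ concentrated is in place.
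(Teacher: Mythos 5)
Your overall strategy is the same as the paper's: split $v([n])$ into the point-mass part ($Y$, the values in $\{1,1/2\}$) and the small-value part ($Z$), use a multiplicative Chernoff bound to pin $Z$ near its mean $\varepsilon/2$ (the paper does this by conditioning on the event $A$ that no item takes a high value), and then do a case analysis over price ranges using union bounds over the point masses. However, as written the proposal has three concrete problems.

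First, you have misread the small-value probabilities: $\Pr[v_i = \varepsilon(1-\varepsilon)^\ell/\ln n] = \frac{(\ln n)(1-\varepsilon)^{-\ell}}{nk}$ grows geometrically in $\ell$, and for $\ell$ near $k-1$ (with $k=\Theta(\ln(n)/\varepsilon)$) it is $\Theta(1/k)$, so $\Pr[v_i>0]=\Theta(1/\ln n)$ per item, not $\frac{3\varepsilon}{2n}+O(\varepsilon^2/(n\ln n))$. Consequently $\Pr[v([n])>0]=1-o(1)$, and your low-price argument ``revenue $\leq \varepsilon\cdot\Pr[v([n])>0]\leq \varepsilon(1-\varepsilon)$'' is false. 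Your final-paragraph replacement (for $p\leq \varepsilon/2+\varepsilon^{10}$, revenue $\leq p$) is fine and is what you should keep, but the earlier claim signals a misunderstanding of the construction. Second, the union bound is $\Pr[Y\geq 1/2]\leq\sum_i\Pr[v_i\in\{1/2,1\}]=\varepsilon/2+\varepsilon=3\varepsilon/2$, not $\varepsilon$ (you conflated it with $\mathbb{E}[Y]=\varepsilon$). So in the regime where selling only forces $Y\geq 1/2$ (prices up to about $1/2+\varepsilon/2$) the bound is $\approx 3\varepsilon/4$, exactly what the paper gets, and the stated conclusion $\leq\varepsilon-\varepsilon^2$ then needs $\varepsilon$ sufficiently small (as it does in the paper's own proof); your ``clean bound'' $p\varepsilon$ is an artifact of the miscount. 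Third, the plan does not actually cover prices beyond a constant: $p\cdot\Pr[Y\geq 1/2]$ grows linearly in $p$, so you must pass to thresholds $Y\geq 1, 3/2,\ldots$ (with probabilities $\varepsilon/2+O(\varepsilon^2)$, $O(\varepsilon^2)$, \ldots, as the paper does for $p\in(1/2+2\varepsilon,1+2\varepsilon)$ and for prices requiring $\approx k$ high values), and your additive ``negligible-$Z$'' correction with the fixed threshold $\varepsilon^{10}$ only gives failure probability $n^{-\Theta(\varepsilon^{18})}$, which multiplied by a price that can be polynomial in $n$ (the support of $v([n])$ reaches $\Theta(n)$) is not negligible. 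The paper avoids this by letting the Chernoff deviation scale with $p$ (for $p>\ln n$ the conditional sale probability is $O(e^{-\delta})$ with $\delta=\Omega(p)$); you need that, or a separate argument that large prices require many point masses, to close the argument.
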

\begin{proof}
Let $p$ be a price we consider for the grand bundle. Clearly if $p < \varepsilon - \varepsilon^2$ then the revenue will be at most $\varepsilon-\varepsilon^2$ since the bundle sells with probability at most $1$. Let $A$ be the event that none of the $x_i$ draws a value of $1$ or $1/2$. By the union bound $\Pr(A) \geq (1-2\varepsilon)$ and $\Pr(\lnot A) \leq 2\varepsilon$. Conditioned on $A$, the expected value of the bundle is $\mu \leq \frac{\varepsilon}{2(1-\frac{2\varepsilon}{n})}$. Let $\delta > 0$ be such that $p = \frac{1+\delta}{1-2\varepsilon/n} \frac{\varepsilon}{2}$. Note that conditioned on $A$ the range of $x_i$ is $[0, \frac{\varepsilon}{\ln n}]$. We blow up the values by $\frac{\ln n}{\varepsilon}$ in order to apply the following Chernoff bound for $[0, 1]$ continuous random variables: 
$$\Pr(\sum x_i > p \land A) \quad \leq \quad \Pr( \sum x_i > p | A) \Pr(A) \quad \leq \quad (1-\varepsilon) e^{\frac{-\delta^2 \ln n}{4(1-\varepsilon)(1+\delta/3)}}.$$

Note that if $p < \ln (n)$ then the second term in the upper bound above becomes $\text{poly}(\frac{1}{n})$ so the revenue is at most $(1-\varepsilon) \frac{\ln n}{n} < (1-\varepsilon) \varepsilon$. If $p > \ln n$ then since $\delta = \Omega(p)$ the probability of sell becomes $O(e^{-\delta})$. Therefore the revenue is $(1-\varepsilon) \delta O(e^{-\delta}) < (1-\varepsilon) \varepsilon$ for sufficiently large $n$. 

In the event that $A$ does not occur at least one of the values will be large. By the work above, conditioned on $A$, the contribution from the small items to the bundle value will be at most $2\varepsilon$ with high probability. 

So consider setting price $p \in (\varepsilon-\varepsilon^2,1/2+2\varepsilon)$. Then such a price will sell with low probability conditioned on $A$, or when event $A$ does not occur. But the probability of $\lnot A$ is at most $3\varepsilon/2$ by the union bound, and therefore generates revenue at most $3\varepsilon/4 + o(\varepsilon)$. 

Consider instead setting price $p \in (1/2+2\varepsilon,1+2\varepsilon)$. Then the grand bundle will sell with low probability conditioned on $A$, or when two of the odd items have value $1/2$, or when an even item has value $1$. By the union bound, this happens with probability at most $\varepsilon/2 + \varepsilon^2/n$, generating revenue at most $\varepsilon/2 + o(\varepsilon)$.

Finally, consider any $p \in (k+2\varepsilon,k+1+2\varepsilon)$. Then the grand bundle only sells when at least $k$ items take their high value. The probability that this occurs by the union bound is $O(\varepsilon^k)$, and therefore no large prices can yield good revenue either.
\end{proof}

The fourth bullet is trivial, but we do quickly confirm that each $D_i$ is indeed a valid distribution.

\begin{claim}For all $i$, we have $\Pr[v_i =0] \geq 0$.
\end{claim}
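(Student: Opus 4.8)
The plan is to verify directly that each $D_i$ defined in Example~\ref{ex:badexample} assigns non-negative probability to the atom at $0$, which amounts to showing that the total probability mass placed on the strictly positive atoms is at most $1$. First I would sum up the masses on the positive support of $D_i$: there is mass at most $\frac{2\varepsilon}{n}$ coming from the atom at $1$ or at $\frac12$ (exactly one of these is present depending on the parity of $i$), plus, for each $\ell \in \{0,\dots,k-1\}$ with $r_{i\ell}=1$, the mass $\frac{(\ln n)(1-\varepsilon)^{-\ell}}{nk}$. Since $|\vec r_i|_1 = k/2$, there are exactly $k/2$ such terms, and each is at most $\frac{(\ln n)(1-\varepsilon)^{-(k-1)}}{nk}$.

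Next I would bound this geometric-type sum. We have
\[
\sum_{\ell=0}^{k-1} \frac{(\ln n)(1-\varepsilon)^{-\ell}}{nk} = \frac{\ln n}{nk}\cdot \frac{(1-\varepsilon)^{-k}-1}{(1-\varepsilon)^{-1}-1} = \frac{\ln n}{nk}\cdot \frac{1-\varepsilon}{\varepsilon}\big((1-\varepsilon)^{-k}-1\big).
\]
Recalling that $k = \frac{\ln(n/\ln n)}{\varepsilon}+1$, we get $(1-\varepsilon)^{-k} \le e^{\varepsilon k/(1-\varepsilon)}$; a slightly cleaner route is to use $(1-\varepsilon)^{-1/\varepsilon} \le e^{1/(1-\varepsilon)}$ or, for the regime of interest, simply $(1-\varepsilon)^{-k}\approx n/\ln n$ up to constants absorbed by the choice of the hidden constant in $k=\Theta(\ln n/\varepsilon)$. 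Plugging this in, the whole sum is $O\!\big(\frac{\ln n}{nk}\cdot\frac1\varepsilon\cdot\frac{n}{\ln n}\big) = O\big(\frac{1}{\varepsilon k}\big) = O(1/\ln n)$, since $k = \Theta(\ln n/\varepsilon)$. Hence the total positive mass is $\frac{2\varepsilon}{n} + O(1/\ln n)$, which is $o(1)$ and in particular $<1$ for all sufficiently large $n$; therefore $\Pr[v_i=0] = 1 - \Pr[v_i>0] \ge 0$, as claimed. (In fact it is close to $1$.)

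I do not expect any genuine obstacle here: this is a routine calculation, and the only mild care needed is in choosing the constant hidden in $k = \Theta(\ln n/\varepsilon)$ large enough (as is already done implicitly in the proof of the existence of the vectors $\vec r_i$, where $k = \frac{\ln(n/\ln n)}{\varepsilon}+1$) so that $(1-\varepsilon)^{-k}$ does not overshoot $n/\ln n$ by more than a constant factor. One should also note the trivial monotonicity/ordering of the exponent: $(1-\varepsilon)^{-\ell}$ is increasing in $\ell$, so the crude bound replacing every term by the largest one, multiplied by $k/2$, already suffices and avoids even evaluating the geometric sum exactly. Either way the conclusion $\Pr[v_i=0]\ge 0$ follows immediately, completing the verification that $D$ is a well-defined distribution and hence finishing the proof of Proposition~\ref{prop:features}.
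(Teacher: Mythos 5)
Your proposal is correct and follows essentially the same route as the paper: sum the positive atoms, evaluate the geometric series $\sum_{\ell<k}(1-\varepsilon)^{-\ell}$, and use the choice $k=\Theta(\ln n/\varepsilon)$ to conclude the total mass is $\frac{2\varepsilon}{n}+O(1/(\varepsilon k))<1$, so the atom at $0$ has non-negative mass. Both arguments lean on the same benign approximation $(1-\varepsilon)^{-k}\approx n/\ln n$ (exactly, it is $(n/\ln n)^{-\ln(1-\varepsilon)/\varepsilon}$), and your remark that the hidden constant in $k$ should be tuned to absorb this is, if anything, slightly more careful than the paper's ``some arithmetic shows'' step.
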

\begin{proof}
The mass at $x \neq 0$ is non-negative by construction, and by definition the probabilities sum to $1$, so we just need to make sure that the mass at $0$ is non-negative. Showing that $\frac{2\varepsilon}{n}+\left( \sum_{i=0}^{k-1} \frac{(\ln n) (1-\varepsilon)^{-i}}{nk} \right) < 1$ suffices since the density will add up to $1$ by construction. Some arithmetic shows that 

\begin{align*}
\frac{2\varepsilon}{n} + \sum_{i=0}^{k-1} \frac{(1-\varepsilon)^{-i} \ln n}{nk} & \leq \frac{2\varepsilon}{n} +  \frac{\ln n}{nk} \left(\frac{(1-\varepsilon)^{-k+1}-(1-\varepsilon)}{\varepsilon} \right) \\
 &\leq \frac{2\varepsilon}{n} + \frac{1}{\varepsilon k} < 1,
\end{align*} 
where the last line follows by noting that $\varepsilon$ is a constant and taking sufficiently large values of $n$.
\end{proof}
\begin{claim}
For any $i \neq j$, $\delta(D_i,D_j) \geq \varepsilon^2/(3n)$.
\end{claim}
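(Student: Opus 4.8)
The plan is to show that the two single-dimensional marginals $D_i$ and $D_j$, which differ only through the pattern of point masses indexed by $\vec{r}_i$ versus $\vec{r}_j$, cannot be coupled cheaply because the probabilistic-method construction guarantees many coordinates $\ell$ where $D_i$ places mass $\tfrac{(\ln n)(1-\varepsilon)^{-\ell}}{nk}$ at the value $\tfrac{\varepsilon(1-\varepsilon)^\ell}{\ln n}$ while $D_j$ does not (and vice versa). Since $\delta(\cdot,\cdot)$ for single-dimensional distributions is just the $L^1$-Wasserstein distance (with the symmetrized metric $d(v,v') = |v - v'|$, since for scalars $\max_S\{v(S)-v'(S)\} + \max_S\{v'(S)-v(S)\} = 2|v-v'|$ for the relevant sets, or more precisely the definition reduces to transporting mass), any coupling must move the discrepancy mass at level $\ell$ somewhere, and the cheapest place to move it is to an adjacent level or to $0$.

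First I would isolate, using the construction in Example~\ref{ex:badexample}, a set $L_{ij} \subseteq \{0,\dots,k-1\}$ of size at least $k/6$ of indices $\ell$ such that $r_{i\ell} = 1$ and $r_{j\ell} = 0$ (this is exactly the guarantee $|\{\ell : r_{i\ell} \neq r_{j\ell}\}| \geq k/6$ combined with $|\vec{r}_i|_1 = |\vec{r}_j|_1 = k/2$, which forces the disagreements to split evenly into $k/6$ coordinates of each type — actually at least $k/12$ of each, so one can take $|L_{ij}| \geq k/12$ and absorb the constant). For each such $\ell$, the mass $q_\ell := \tfrac{(\ln n)(1-\varepsilon)^{-\ell}}{nk}$ sits in $D_i$ at location $x_\ell := \tfrac{\varepsilon(1-\varepsilon)^\ell}{\ln n}$ but is absent in $D_j$. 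In any coupling, this $q_\ell$ units of mass in $D_i$ must be transported to mass in $D_j$ located at a \emph{different} point; the nearest support points of $D_j$ are at $x_{\ell-1}$, $x_{\ell+1}$ (if present in $D_j$), or $0$, and in all cases the transport distance is $\Omega(x_\ell)$ — e.g. $|x_\ell - x_{\ell+1}| = \varepsilon x_{\ell+1} = \Theta(\varepsilon x_\ell)$, or the distance to $0$ is $x_\ell$ itself, so the cost per unit is at least $\varepsilon x_\ell \cdot \Omega(1)$ in the worst case, and one can also just lower bound crudely by noting total mass $q_\ell$ must move by at least the minimum gap. Summing the transport cost over $\ell \in L_{ij}$ gives a lower bound of the form $\sum_{\ell \in L_{ij}} q_\ell \cdot \Omega(\varepsilon x_\ell)$; since $q_\ell x_\ell = \tfrac{\varepsilon}{nk}$ independent of $\ell$, this is $\Omega(|L_{ij}| \cdot \varepsilon^2/(nk)) = \Omega(k \cdot \varepsilon^2/(nk)) = \Omega(\varepsilon^2/n)$, matching the claimed bound $\varepsilon^2/(3n)$ up to constants.

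The main obstacle I anticipate is handling the \emph{double counting} / \emph{cancellation} carefully: the same mass at level $\ell$ in $D_j$ (when $r_{j\ell} = 1, r_{i\ell} = 0$, i.e. $\ell$ in the reverse discrepancy set) could in principle be used to ``receive'' mass transported from $D_i$, so I cannot naively sum independent contributions from the forward and reverse discrepancy sets. The clean way around this is to lower-bound the Wasserstein distance via a dual (Kantorovich) witness: exhibit a $1$-Lipschitz function $g$ with $\mathbb{E}_{D_i}[g] - \mathbb{E}_{D_j}[g] \geq \varepsilon^2/(3n)$. A natural choice is a ``sawtooth'' Lipschitz function that is increasing on each interval $[x_{\ell+1}, x_\ell]$ precisely when $\ell \in L_{ij}$ and decreasing (or flat) otherwise, scaled to be $1$-Lipschitz; then $\mathbb{E}_{D_i}[g] - \mathbb{E}_{D_j}[g]$ picks up a positive contribution $\gtrsim q_\ell \cdot (x_\ell - x_{\ell+1}) = q_\ell \cdot \varepsilon x_{\ell+1}$ from each $\ell \in L_{ij}$ and the reverse-discrepancy levels contribute with a sign we control by the choice of $g$, so no cancellation occurs. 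Alternatively — and this is probably the shortest route for a paper appendix — one observes that the CDFs $F_i, F_j$ satisfy $\int_0^\infty |F_i(t) - F_j(t)|\,dt \geq \sum_{\ell \in L_{ij}} q_\ell \cdot (\text{width over which } F_i, F_j \text{ differ by} \geq q_\ell)$, since below level $x_\ell$ the two CDFs differ by at least $q_\ell$ in absolute value only if the forward and reverse discrepancies don't perfectly cancel below that point — but they cannot systematically cancel at every level because $|L_{ij}| \geq k/12$ forces a net discrepancy accumulating over a range of levels of total $x$-width $\Theta(\varepsilon^{-1} \cdot \text{gap})$. I would write it with the Kantorovich-dual sawtooth witness, as that makes the non-cancellation completely explicit and the arithmetic is a one-line geometric-series estimate using $q_\ell x_\ell = \varepsilon/(nk)$.
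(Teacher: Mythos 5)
Your proposal is, at its core, the same argument the paper gives: reduce $\delta(D_i,D_j)$ to a one-dimensional transportation cost, and lower bound it level by level over the coordinates where $\vec{r}_i$ and $\vec{r}_j$ disagree, using that the mass $q_\ell=\frac{(\ln n)(1-\varepsilon)^{-\ell}}{nk}$ sitting at $x_\ell=\frac{\varepsilon(1-\varepsilon)^\ell}{\ln n}$ must travel at least to an adjacent level (distance $\approx \varepsilon x_\ell$) and that $q_\ell x_\ell=\varepsilon/(nk)$ is level-independent. Two remarks. First, the Kantorovich-dual sawtooth witness you add to rule out cancellation is unnecessary: for the one-sided set $L_{ij}$ the relevant chunks of \emph{source} mass of $D_i$ sit at distinct points $x_\ell$ that are outside the support of $D_j$, and transport cost decomposes over source mass, so summing the per-level charges $q_\ell\cdot \varepsilon x_\ell$ is already a valid lower bound for every coupling; double counting would only threaten if you additionally charged the reverse discrepancy set, which neither you nor the paper does. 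Second, and this is the real shortfall, your accounting only yields the claim up to a constant: charging $q_\ell\cdot\varepsilon x_\ell=\varepsilon^2/(nk)$ per level and using $|L_{ij}|\ge k/6$ (the one-sided guarantee the paper's probabilistic-method claim actually provides) gives $\varepsilon^2/(6n)$, and with your conservative $k/12$ count only $\varepsilon^2/(12n)$, whereas the statement (and its downstream use with the $\varepsilon^2/(6n)$ threshold in the corollary) asserts $\varepsilon^2/(3n)$. The paper closes this gap with a finer per-coordinate charge: it couples as much of $q_\ell$ as possible to the adjacent level's smaller mass $q_{\ell-1}$, charging that matched mass the gap $\varepsilon x_{\ell-1}$ \emph{and} charging the leftover excess $q_\ell-q_{\ell-1}=\varepsilon q_\ell$ its full displacement $x_\ell$, which gives $2\varepsilon^2/(nk)$ per disagreement and hence $\varepsilon^2/(3n)$ after multiplying by $k/6$. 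Finally, a small slip worth fixing: for single-item marginals the two max terms in the definition of $\delta$ sum to $|v-v'|$, not $2|v-v'|$ (harmless for your asymptotic bound, but it matters if you try to recover the exact constant).
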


\begin{proof}
We say $D_i, D_j$ \emph{disagree} on a coordinate $\ell$ if $\vec{r_i\ell} \neq \vec{r_j\ell}$. Suppose $D_i, D_j$ disagree on coordinate $\ell$ (and say $\vec{r_{i\ell}} = 1$). In the best case, either $\vec{r}_{j(\ell-1)} = 1$ or $\vec{r}_{j(\ell+1)} = 1$. In the former case, we couple as much mass as we can from $D_i$ at $\frac{\varepsilon(1-\varepsilon)^\ell }{\ln n}$ to $D_j$ at $\frac{\varepsilon(1-\varepsilon)^{\ell-1}}{\ln n}$. The contribution to the distance metric we care about from this displacement will be at least  

$$  \left( \frac{\ln n (1-\varepsilon)^{-\ell}-\ln n (1-\varepsilon)^{-(\ell-1)}}{nk}\right) \frac{ \varepsilon(1-\varepsilon)^\ell }{\ln n} + \frac{\ln n (1-\varepsilon)^{-(\ell-1)}}{nk} \left(\frac{\varepsilon ((1-\varepsilon)^{\ell-1}-(1-\varepsilon)^\ell) }{\ln n} \right) ~~\geq~~ \frac{2\varepsilon^2}{nk}. $$


The calculations for the latter case are nearly identical. Importantly, observe that each disagreement contributes error $2\varepsilon^2/(nk)$, and there are at least $k/6$ disagreements. So the total contribution is at least $\varepsilon^2/(3n)$.
\end{proof}

\begin{corollary} Let $D'$ be symmetric with respect to $\Sigma$, and let $\Sigma$ have partition size $s$. Then $\delta(D,D') \geq \varepsilon^2\frac{n-s}{6n}$.
\end{corollary}
\begin{proof}
Consider any $\Sigma$ which has partition size $s$. Then there are at most $s$ distinct marginals of $D'$. Note further that if there are $x$ items in a partition, the total coupling error contributed by that partition must be at least $(x-1)\varepsilon^2/(6n)$ (because if any item in that partition is within $\varepsilon^2/(6n)$ of the marginal, all of the rest must be strictly further by the above claim). Because there are at most $s$ distinct marginals in $D'$, we get that the sum over all partitions of their size minus one is at least $n-s$, proving the corollary.
\end{proof}
This completes the proof of Proposition~\ref{prop:features}.
\end{proof}
\IGNORE{
}

\end{document}